%
%
%
%
\documentclass[reqno]{amsart}

\usepackage{amsfonts,latexsym,enumerate}
\usepackage{amsmath}
\usepackage{amscd}
\usepackage{float,amsmath,amssymb,mathrsfs,bm,multirow,graphics}
\usepackage[dvips]{graphicx}
\usepackage[percent]{overpic}
\usepackage{amsaddr}
\usepackage[numbers,sort&compress]{natbib}
\usepackage{tikz}
\usepackage{epstopdf}
\usepackage{subfigure}

\addtolength{\topmargin}{-10ex}
\addtolength{\oddsidemargin}{-3em}
\addtolength{\evensidemargin}{-3em}
\addtolength{\textheight}{15ex}
\addtolength{\textwidth}{4em}

\newcommand{\R}{{\Bbb R}}

\newcommand{\C}{{\Bbb C}}

\newtheorem{theorem}{Theorem}[section]
\newtheorem{proposition}[theorem]{Proposition}
\newtheorem{lemma}[theorem]{Lemma}

\newtheorem{remark}[theorem]{Remark}

\numberwithin{equation}{section}



\begin{document}

\title[Long-time asymptotics of the SK and KK equations]
{Long-time asymptotics of the Sawada-Kotera equation and Kaup-Kupershmidt equation on the line}

   \author{Deng-Shan Wang,~Xiaodong Zhu}
\address{Laboratory of Mathematics and Complex Systems (Ministry of Education), School of Mathematical Sciences, Beijing Normal University, Beijing 100875, China}
\email{xdzhu@mail.bnu.edu.cn}

\subjclass[2010]{Primary 37K40, 35Q15, 37K10}

\date{July 17, 2023.}


\keywords{Inverse scattering transform, Lax pair, Sawada-Kotera equation, Kaup-Kupershmidt equation, Riemann-Hilbert problem}

\begin{abstract}
Both Sawada-Kotera (SK) equation and Kaup-Kupershmidt (KK) equation are integrable systems with third-order Lax operator. Moreover, they are related with the same modified nonlinear equation (called modified SK-KK equation) by Miura transformations. This work first constructs the Riemann-Hilbert problem associated with the SK equation, KK equation and modified SK-KK equation by direct and inverse scattering transforms. Then the long-time asymptotics of these equations are studied based on Deift-Zhou steepest-descent method for Riemann-Hilbert problem. Finally, it is shown that the asymptotic solutions match very well with the results of direct numerical simulations.

\end{abstract}

\maketitle

\section{\bf Introduction}
In 1974, Sawada and Kotera \cite{Sawada-Kotera-1974} proposed the so-called Sawada-Kotera (SK) equation
\begin{equation}\label{SK}
	u_t+u_{x x x x x}+30\left(u u_{x x x}+u_x u_{x x}\right)+180 u^2 u_x=0,
\end{equation}
which is also named the Caudrey-Dodd-Gibbon equation given by Caudrey, Dodd and Gibbon \cite{Caudrey-Dodd-1976} independently. Subsequently, Kaup \cite{Kaup-1980} and Kupershmidt \cite{Kupershmidt-1984} gave the Kaup-Kupershmidt (KK) equation
\begin{equation}\label{KK}
	v_t+v_{xx x x x}+30(v v_{x x x}+\frac{5}{2} v_x v_{x x})+180 v^2 v_x=0.
\end{equation}
Both SK equation (\ref{SK}) and KK equation (\ref{KK}) are completely integrable systems with third-order Lax operator of the form $\psi_{xxx}+6Q\psi_x+6R\psi=k^3 \psi$ studied by Kaup \cite{Kaup-1980}, in which $Q=u$ and $R=0$ correspond to the SK equation, while $Q=u$ and $R=u_x/2$ correspond to the KK equation.
The Lax pair of the SK equation (\ref{SK}) in matrix form is
\begin{equation}\label{SK-lax-pair}
\left\{\begin{array}{l}
	\Phi_x=L \Phi, \\
	\Phi_t=Z \Phi,
\end{array}\right.
\end{equation}
where
\begin{equation}\label{SKlaxspace}
	L=\left(\begin{array}{ccc}
		0 & 1 & 0 \\
		0 & 0 & 1 \\
		k^3 & -6 u & 0
	\end{array}\right),\\	
\end{equation}
\begin{equation}\label{SKlaxtime}
Z=\left(\begin{array}{ccc}
	36 k^3 u & 6 u_{x x}-36 u^2 & 9 k^3-18 u_x \\
	18 k^3 u_x+9 k^6 & 6 u_{x x x}-18 k^3 u+36 u u_x & -12 u_{x x}-36 u^2 \\
	6 k^3 u_{x x}-36 u^2 k^3 & Z_{32} & -6 u_{x x x}-18 k^3 u-36 u u_x
\end{array}\right),
\end{equation}
with spectral parameter $k$ and $Z_{32}=36 u_x{ }^2+108 u u_{x x}+9 k^6+216 u^3+6 u_{x x x x}$.
\par
The Lax pair of the KK equation (\ref{KK}) in matrix form is
\begin{equation}\label{KK-lax-pair}
\left\{\begin{array}{l}
	\Phi_x=\tilde { L} \Phi, \\
	\Phi_t=\tilde Z \Phi,
\end{array}\right.
\end{equation}
where
\begin{equation}\label{KKlaxspace}
\tilde L=\left(\begin{array}{ccc}
	0 & 1 & 0 \\
	0 & 0 & 1 \\
	k^3-3 u_x & -6 u & 0
\end{array}\right),
\end{equation}
\begin{equation}\label{KKlaxtime}
\tilde Z=\left(\begin{array}{ccc}
	3 u_{x x x}+36 k^3 u+72 u u_x & -3 u_{x x}-36 u^2 & 9 k^3 \\
	\tilde Z_{21} & -18 k^3 u & -3 u_{x x}-36 u^2 \\
	\tilde Z_{31} &\tilde  Z_{32} & -72 u u_x-18 k^3 u-3 u_{x x x}
\end{array}\right),
\end{equation}
with $\tilde Z_{21}=9 k^3 u_x+9 k^6+3 u_{x x x x}+72 u_x^2+72 u u_{x x}$,  $\tilde Z_{31}=3 u_{x x x x x}+225 u_x u_{x x}$$+ 72 u u_{x x x}$$+108 u^2 u_x-36 u^2 k^3+6 k^3 u_{x x}$ and $\tilde Z_{32}=9 k^6-9 k^3 u_x+3 u_{x x x x}+72 u_x^2+$$90 u u_{x x}+216 u^3$.
\par
Notice that the space parts of the operator Lax pairs for the SK and KK equations have the similar form, i.e., $\mathscr{L}\phi=(\partial_{xxx}+6u\partial)\phi=k^3\phi$ for the SK equation, and $\tilde{\mathscr{L}}\phi=(\partial_{xxx}+6u\partial_x+3u_{x})\phi=k^3\phi$ the KK equation. Thus it is convenient to consider a more general Lax operator of the form
\begin{equation}\label{3-lax-operator-0}
\mathscr{L}\phi=(\partial_{xxx}+p\partial_x+q)
\end{equation}
which corresponds to the matrix form Lax pair
\begin{equation}\label{3-lax-matrix}
\Phi_x=L \Phi
\end{equation}
with
$$
L=\left(\begin{array}{ccc}
	0 & 1 & 0\\
	0 & 0 & 1\\
	k^3-q & -p & 0
\end{array}\right).
$$
It is obvious that $q=0, p=6u$ for the SK equation (\ref{SK}) and $q=3v_x, p=6v$ for the KK equation (\ref{KK}).
\par
In addition, both the SK equation (\ref{SK}) and the KK equation (\ref{KK}) are related with the modified SK-KK equation (mSK-KK)
\begin{equation}\label{msk-equation}
	w_t+w_{x x x x x}-(5 w_x w_{x x}+5 w w_x^2+5 w^2 w_{x x}-w^5)_x=0
\end{equation}
through the Miura transformations
\begin{equation}\label{miura-SK}
	u=\frac{1}{6}(w_x-w^2)\quad {\rm and} \quad v=\frac{1}{3}(w_x-\frac{w^2}{2}).
\end{equation}
It is noted that there doesn't exist singularity in the spectral problem of the mSK-KK equation (\ref{msk-equation}), thus it is practicable to study the long-time asymptotics of the equations (\ref{SK})-(\ref{KK}) by the examining the asymptotic behavior of the mSK-KK equation (\ref{msk-equation}). \par
In what follows, the direct and inverse scattering transforms \cite{GGKM-1967}-\cite{Shchesnovich-Yang} are performed to derive the Riemman-Hilbert problem of the SK and KK equations (\ref{SK})-(\ref{KK}) and the mSK-KK equation (\ref{msk-equation}).

\section{\bf The Riemman-Hilbert problem}

Introduce the gauge transformation
\begin{equation}\label{gauge-matrix}
\Phi=G\Psi \quad {\rm with}\quad G=\left(
	\begin{array}{ccc}
		{\alpha} & {\alpha^2} & {1} \\
		{\alpha^2}{ k} & {\alpha}{ k} & {k} \\
		k^2 & k^2 & k^2 \\
	\end{array}
	\right),\quad \alpha=e^{\frac{2\pi i}{3}},
\end{equation}
then the spectral problem (\ref{3-lax-matrix}) becomes
\begin{equation}\label{lax-pair-new-x-part}
\Psi_x=\mathcal{L} \Psi,
\end{equation}
where $\mathcal{L}=G^{-1}LG=k\Lambda+Q(x,t;k)$ with
$$
Q(x,t;k)=\frac{Q_{(1)}(x,t;k)}{k}+\frac{Q_{(2)}(x,t;k)}{k^2},
$$
and
$$
\Lambda=\left(\begin{array}{ccc}
	\alpha & 0 & 0\\
	0 & \alpha^2 & 0\\
	0 & 0 & 1\\
\end{array}
\right),\quad \\
Q_{(1)}=-\frac{p}{3}\left(\begin{array}{ccc}
	\alpha^2 & \alpha & 1\\
	\alpha^2 & \alpha & 1\\
	\alpha^2 & \alpha & 1\\
\end{array}
\right),\quad
Q_{(2)}=-\frac{q}{3}\left(\begin{array}{ccc}
	\alpha & \alpha^2 & 1\\
	\alpha & \alpha^2 & 1\\
	\alpha & \alpha^2 & 1\\
\end{array}
\right).
$$
\par
Following the same procedure, the gauge transformation (\ref{gauge-matrix}) maps the temporal part of the Lax pairs (\ref{SK-lax-pair}) and (\ref{KK-lax-pair}) into
\begin{equation}\label{lax-pair-new-t-part}
\Psi_t=\mathcal{Z} \Psi \quad {\rm and} \quad \Psi_t=\tilde {\mathcal{Z}} \Psi,
\end{equation}
respectively, where
$$
\mathcal{Z}=G^{-1}ZG=9k^5\Lambda^2+P(x,t;k) \quad {\rm and} \quad
\tilde {\mathcal{Z}}=G^{-1}\tilde ZG=9k^5\Lambda^2+\tilde P(x,t;k),
$$
where $P(x,t;k)$ and $\tilde P(x,t;k)\to0$ as $|x|\to \infty$.
\par
Thus the gauge transformation (\ref{gauge-matrix}) converts the Lax pairs (\ref{SK-lax-pair}) and (\ref{KK-lax-pair}) into
\begin{equation}\label{New-Lax-SK-KK}
\left\{\begin{array}{l}
	\Psi_x=(k\Lambda+Q) \Psi, \\
	\Psi_t=(k^5\Lambda^2+P)\Psi,
\end{array}\right.
\quad
\left\{\begin{array}{l}
	\Psi_x=(k\Lambda+Q) \Psi, \\
	\Psi_t=(k^5\Lambda^2+\tilde P)\Psi.
\end{array}\right.
\end{equation}
Furthermore, taking $\Psi=J e^{(k\Lambda x+k^5\Lambda^2 t)}$ yields
\begin{equation}\label{Lax-equation}
	\left\{\begin{array}{l}
	J_x-[k\Lambda,J]=Q J, \\
	J_t-[k^5\Lambda^2,J]=P J, \\
\end{array}\right. \quad
	\left\{\begin{array}{l}
	J_x-[k\Lambda,J]=Q J, \\
	J_t-[k^5\Lambda^2,J]=\tilde P J. \\
\end{array}\right.
\end{equation}
\par
In what follows, we only focus on the $x$-variable and take $t$-variable as a dump variable. Moreover, according to the equation $J_x-[k\Lambda,J]=Q J$, one can get the Volterra integral equation of the Jost functions $J_{+}(x, k)$ and $J_{-}(x, k)$ below
\begin{equation}\label{Jost-functions}
	\begin{aligned}
	& J_{+}(x, k)=I-\int_x^{\infty} e^{(x-y) \widehat{k\Lambda}}\left(Q(y, k) J_{+}(y, k)\right) d y, \\
	& J_{-}(x, k)=I+\int_{-\infty}^x e^{(x-y) \widehat{k\Lambda}}\left(Q(y, k) J_{-}(y, k)\right) d y,
\end{aligned}
\end{equation}
which indicates the singular set
$$
\Sigma:=\{k\in\mathbb{C}|{\rm Re}(\alpha^nk)={\rm Re}(\alpha^m k),\quad 0\le n<m<3\},
$$
then $\Sigma$ divided the complex plane into six regions, specifically
$$
\Omega_n:=\{k\in\mathbb{C}|\frac{(n-1)\pi}{3}<{\rm arg}(k)<\frac{n\pi}{3},n=1,\cdots,6\}.
$$
\par
The following way to construct the Riemann-Hilbert problem \cite{Boutet-de-Monvel-1}-\cite{Lenells-2018}
is standard, so we omit the process of proof, see \cite{Charlier-Lenells-2021} for details.

\subsection{Basic properties of the Jost functions}

\begin{proposition}
Suppose the initial potential functions $p_0(x),q_0(x)\in\mathcal{S}(x)$, then the matrix-valued Jost functions $J_+(x,k)$ and $J_-(x,k)$ have the following properties:

(1). $J_+(x,k)$ is well defined in the closure of  $(\omega^2S,\omega S,S)\setminus\{0\}$, and $J_-(x,k)$ is well defined in the closure of  $(-\omega^2S,-\omega S,-S)\setminus\{0\}$, where $S=\Omega_3\cup\Omega_4$. Moreover, the determinant of $J_{\pm}$ are always equal to $1$.

(2). $J_+(\cdot,k)$ and $J_-(\cdot,k)$ are smooth and rapidly decay in the closure of their domains (except for $\{0\}$).

(3). $J_+(x,\cdot)$ and $J_-(x,\cdot)$ are analytic in the interior of  their domains, but any order partial derivative of $k$ can be continuous to the closure of their domains (except for $\{0\}$).

(4). $J_+(x,k)$  and $J_-(x,k)$ satisfied the following symmetries:
$$
\begin{aligned}
	& J_+(x, k)=\mathcal{A} J_+(x, \omega k) \mathcal{A}^{-1}=\mathcal{B} {J_+^*(x, {k}^*)} \mathcal{B},\\
	& J_-(x, k)=\mathcal{A} J_-(x, \omega k) \mathcal{A}^{-1}=\mathcal{B} {J_-^*(x, {k}^*)} \mathcal{B},
\end{aligned}
$$
where $k$ is in their domains and $\mathcal{A}, \mathcal{B}$ are
$$
	\mathcal{A}:=\left(\begin{array}{ccc}
		0 & 0 & 1 \\
		1 & 0 & 0 \\
		0 & 1 & 0
	\end{array}\right) \quad \text { and } \quad \mathcal{B}:=\left(\begin{array}{lll}
		0 & 1 & 0 \\
		1 & 0 & 0 \\
		0 & 0 & 1
	\end{array}\right)
	$$

(5). When $p_0$ and $q_0$ are compact support, $J_+$ and $J_-$ are defined and analytic for $k$ on $\C\setminus\{0\}$ . 	
\end{proposition}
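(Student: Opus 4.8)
The plan is to derive all five properties from the Volterra integral equations \eqref{Jost-functions} by the method of successive approximations, together with the algebraic symmetries of the coefficient matrix $\mathcal{L}=k\Lambda+Q$. Writing $\omega=\alpha=e^{2\pi i/3}$ and $\lambda_1=\alpha,\ \lambda_2=\alpha^2,\ \lambda_3=1$ for the diagonal entries of $\Lambda$, the $(i,j)$ entry of the kernel $e^{(x-y)\widehat{k\Lambda}}$ is $e^{(x-y)k(\lambda_i-\lambda_j)}$. For the $j$-th column of $J_+$ the integration runs over $y\ge x$, so each such exponential stays bounded precisely when $\re\big(k(\lambda_i-\lambda_j)\big)\ge0$ for all $i$, i.e. when $\re(k\lambda_j)=\min_i\re(k\lambda_i)$. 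A direct computation of $\re(k\lambda_j)=|k|\cos(\arg k+\arg\lambda_j)$ shows that this minimum is attained on the $j$-th of the three sectors $\omega^2 S,\ \omega S,\ S$ with $S=\Omega_3\cup\Omega_4$; the analogous maximum condition $\re(k\lambda_j)=\max_i\re(k\lambda_i)$ for the columns of $J_-$ (where $y\le x$) singles out $-\omega^2 S,\ -\omega S,\ -S$. On each such open sector the kernel is bounded by $1$, so the Neumann series $\sum_n J^{(n)}$ with $J^{(0)}=I$ and $J^{(n+1)}$ obtained by feeding $J^{(n)}$ into the integral operator of \eqref{Jost-functions} is dominated on the ordered simplex by $\big(\int|Q|\big)^n/n!$ and converges absolutely and uniformly on compacts; since $p_0,q_0\in\mathcal{S}$ guarantees $\int|Q|<\infty$, this proves existence, and because every term is analytic in $k$ on the open sector, so is the sum. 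This settles the domains in (1) and the interior analyticity in (3); property (5) follows identically, for when $p_0,q_0$ have compact support the $y$-integral is over a finite interval, the kernel is bounded for every $k\in\C\setminus\{0\}$, and the series then converges on all of $\C\setminus\{0\}$, the origin being excluded only by the poles $1/k,\,1/k^2$ in $Q$.

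For the determinant in (1), I would invoke Abel's formula: since $\Psi=Je^{k\Lambda x+k^5\Lambda^2 t}$ solves $\Psi_x=\mathcal{L}\Psi$, one has $(\det\Psi)_x=\tr(\mathcal{L})\det\Psi$. Here $\tr\Lambda=\alpha+\alpha^2+1=0$, and each of $Q_{(1)},Q_{(2)}$ has vanishing trace for the same reason $1+\alpha+\alpha^2=0$, whence $\tr\mathcal{L}=0$ and $\det\Psi$ is independent of $x$. As $\tr\Lambda=\tr\Lambda^2=0$ gives $\det e^{k\Lambda x+k^5\Lambda^2 t}=1$, we get $\det\Psi=\det J$, and letting $x\to+\infty$ (resp. $x\to-\infty$) where $J_+\to I$ (resp. $J_-\to I$) forces $\det J_\pm\equiv1$. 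Property (2), the smoothness and rapid decay in $x$, is then obtained by differentiating the integral equations under the integral sign: each $x$-derivative lands either on the kernel, producing a bounded factor times $k\Lambda$, or on $Q$, and the Schwartz decay of $p_0,q_0$, propagated through the geometric-series bound, yields decay of $J_\pm-I$ faster than any power of $x$ throughout the closure of the domains away from $k=0$.

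The more delicate analytic point, and the one I expect to be the main obstacle, is the continuity of $J_\pm$ and of all their $k$-derivatives up to the boundary $\Sigma$ claimed in (3). There the kernel $e^{(x-y)k(\lambda_i-\lambda_j)}$ ceases to decay and is merely oscillatory, so the naive $\int|Q|$ bound no longer controls differentiation in $k$. The remedy is to integrate by parts in $y$, transferring the oscillation onto derivatives of $Q$; since $p_0,q_0\in\mathcal{S}$, each integration by parts gains decay and the resulting boundary integrals converge, yielding continuous, indeed $C^\infty$, extensions of $J_\pm$ and of their $k$-derivatives to $\overline{\Omega}\setminus\{0\}$. This is precisely the mechanism of \cite{Charlier-Lenells-2021}, which I would follow.

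Finally, the symmetries in (4) I would deduce from two exact symmetries of $\mathcal{L}$. Using $\mathcal{A}\Lambda\mathcal{A}^{-1}=\omega^2\Lambda$ one checks $\mathcal{A}\,\mathcal{L}(x,\omega k)\,\mathcal{A}^{-1}=\mathcal{L}(x,k)$: indeed $\mathcal{A}(\omega k\Lambda)\mathcal{A}^{-1}=\omega^3 k\Lambda=k\Lambda$, while the factors $\omega^{-1},\omega^{-2}$ produced by $Q(x,\omega k)$ are compensated exactly by the cyclic relabelling of the entries $(\alpha^2,\alpha,1)$ and $(\alpha,\alpha^2,1)$ of $Q_{(1)},Q_{(2)}$ under conjugation by $\mathcal{A}$, so that $\mathcal{A}Q(x,\omega k)\mathcal{A}^{-1}=Q(x,k)$. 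Likewise, since $p_0,q_0$ are real and $\alpha^*=\alpha^2$, one has $\mathcal{B}\Lambda^*\mathcal{B}=\Lambda$ and $\mathcal{B}Q^*(x,k^*)\mathcal{B}=Q(x,k)$, hence $\mathcal{B}\,\mathcal{L}^*(x,k^*)\,\mathcal{B}=\mathcal{L}(x,k)$. Because conjugation by $\mathcal{A}$ (resp. $\mathcal{B}$) combined with $k\mapsto\omega k$ (resp. $k\mapsto k^*$) maps the kernel $e^{(x-y)\widehat{k\Lambda}}$, the normalization $I$, and the limits of integration in \eqref{Jost-functions} to themselves, the transformed functions $\mathcal{A}J_\pm(x,\omega k)\mathcal{A}^{-1}$ and $\mathcal{B}J_\pm^*(x,k^*)\mathcal{B}$ satisfy the very same Volterra equations as $J_\pm(x,k)$; uniqueness of that solution then yields the two stated symmetries.
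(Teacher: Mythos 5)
The paper itself contains no proof of this proposition: immediately before stating it, the authors declare the construction standard, ``omit the process of proof,'' and defer to \cite{Charlier-Lenells-2021}. Your argument is essentially the standard one that that reference carries out, and I find it correct: the column-wise domains $(\omega^2S,\omega S,S)$ and $(-\omega^2S,-\omega S,-S)$ are exactly where the exponentials $e^{(x-y)k(\lambda_i-\lambda_j)}$ appearing in the $j$-th column of the Volterra kernel have modulus at most $1$ (your min/max criterion is the right one, and it reproduces $S=\Omega_3\cup\Omega_4$ for the third column); the ordered-simplex bound $\bigl(\int|Q|\bigr)^n/n!$ gives convergence of the Neumann series and interior analyticity; Liouville's formula with $\tr\mathcal{L}=0$ and $\tr\Lambda=\tr\Lambda^2=0$ gives $\det J_\pm\equiv1$ after sending $x\to\pm\infty$; the exact symmetries $\mathcal{A}\,\mathcal{L}(x,\omega k)\,\mathcal{A}^{-1}=\mathcal{L}(x,k)$ and $\mathcal{B}\,\mathcal{L}^*(x,k^*)\,\mathcal{B}=\mathcal{L}(x,k)$ (both of which I checked entrywise against $Q_{(1)},Q_{(2)}$) combined with uniqueness of the Volterra solution give (4); and compact support makes the $y$-integration finite, giving (5).

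One point worth tightening: for the boundary regularity in (3), integration by parts in $y$ is the tool for gaining decay in $|k|$, not for continuity of $\partial_k^n J_\pm$ up to $\Sigma$. What actually controls the polynomially growing factors $(x-y)^m$ created when $\partial_k^m$ hits the oscillatory kernel is the finiteness of the moments $\int_{\R}(1+|y|)^m\,|Q(y,k)|\,dy$ on compact subsets of the closed sectors minus the origin, which is exactly what $p_0,q_0\in\mathcal{S}(\R)$ provides; with these moment bounds the differentiated Neumann series converges uniformly on such compact subsets and dominated convergence yields the continuous extension of every $k$-derivative to the boundary. With that substitution, your sketch is a complete and faithful reconstruction of the proof the paper omits.
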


\subsection{The behavior of Jost functions for $k\to\infty$.}

Suppose the WKB expansion of the Jost functions $J_{\pm}$ to be
$$
J_{\pm}=I+\frac{J_{\pm}^{(1)}}{k}+\frac{J_{\pm}^{(2)}}{k^2}+\cdots
$$
Taking into account of the equation ({\ref{Lax-equation}}), one has
\begin{equation}\label{WkB-infty}
	\left\{\begin{array}{l}
	{\left[\Lambda, J_{\pm}^{(n+1)}\right]=(\partial_x J_{\pm}^{(n)})^{(o)}-\left(\mathrm{Q}_1 J_{\pm}^{(n-1)}\right)^{(o)}-\left(\mathrm{Q}_2 J_{\pm}^{(n-2)}\right)^{(o)},} \\
	(\partial_x J_{\pm}^{(n+1)})^{(d)}=\left(\mathrm{Q}_1 J_{\pm}^{(n)}\right)^{(d)}+\left(\mathrm{Q}_2 J_{\pm}^{(n-1)}\right)^{(d)}.
\end{array}\right.
\end{equation}
\par
Furthermore, we have
$$
J_+^{(1)}=\int_x^{\infty}\frac{p}{3}dy\left(\begin{array}{ccc}
	\alpha^2  & 0 & 0\\
	0 & \alpha& 0\\
	0 & 0 & 1\\
\end{array}
\right),
$$
\begin{equation}\label{J1-infty}
J_+^{(2)}=\int_x^{\infty}\frac{q+p(J_1)_{33}}{3}dy\left(\begin{array}{ccc}
	\alpha  & 0 & 0\\
	0 & \alpha^2 & 0\\
	0 & 0 & 1\\
\end{array}
\right)+\frac{p}{3(1-\alpha)}\left(\begin{array}{ccc}
	0  & 1 & -1\\
	-\alpha & 0 & \alpha\\
	\alpha^2 & -\alpha^2 & 0\\
\end{array}
\right).
\end{equation}

\begin{proposition}
Suppose $\{q_0,p_0\}\in\mathcal{S}(\R)$, there exist bounded smooth functions $f_{\pm}$, which rapidly decay as $x\to\infty$ and $x\to-\infty$, respectively. Let $m\ge 0$ be an integer and for each integer $n\ge 0$, then
$$
\left|\frac{\partial^n}{\partial_{k^n}}\left[ J_{\pm}-\left(I+\frac{J_{\pm}^{(1)}}{k}+\cdots+\frac{J_{\pm}^{(m)}}{k^m}\right)\right]\right|\le \frac{f_{\pm}(x)}{k^{m+1}},
$$
where $k$ is in the domain of $J_{\pm}$, respectively and large enough.
\end{proposition}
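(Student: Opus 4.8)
The plan is to read the claimed estimate as a statement about the remainder in the asymptotic expansion of the Volterra solution (\ref{Jost-functions}), and to prove it by an induction that runs simultaneously over the expansion order $m$ and the number $n$ of $k$-derivatives. I focus on $J_+$; the argument for $J_-$ is identical after reflecting the domain of integration from $\int_x^\infty$ to $\int_{-\infty}^x$.

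First I would record the structural facts that make the WKB coefficients usable. Because $Q=Q_{(1)}/k+Q_{(2)}/k^2$ with $Q_{(1)},Q_{(2)}$ proportional to the Schwartz potentials $p,q$, the recurrence (\ref{WkB-infty}) determines each $J_+^{(j)}$ by splitting it into its off-diagonal part, fixed algebraically through $[\Lambda,\cdot]$, and its diagonal part, fixed by an $x$-integration from $\infty$; the explicit formulas (\ref{J1-infty}) for $j=1,2$ exhibit the pattern. An easy induction then shows that every $J_+^{(j)}$ is smooth in $x$ and rapidly decaying as $x\to+\infty$, since it is built from products and $x$-antiderivatives of Schwartz functions.

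Next, set $N_m=I+\sum_{j=1}^m J_+^{(j)}/k^j$ and $R_m=J_+-N_m$. By construction the residual $\mathcal{E}_m:=(N_m)_x-[k\Lambda,N_m]-QN_m$ has, after expansion in $1/k$, all coefficients through order $k^{-m}$ cancelled by (\ref{WkB-infty}); hence $\mathcal{E}_m=O(k^{-(m+1)})$ with an $x$-dependence that is again a Schwartz combination of $p,q$ and the $J_+^{(j)}$. Passing from (\ref{Lax-equation}) to the integral form, $R_m$ satisfies a Volterra equation
$$
R_m(x,k)=-\int_x^{\infty}e^{(x-y)\widehat{k\Lambda}}\bigl(Q(y,k)R_m(y,k)+\mathcal{E}_m(y,k)\bigr)\,dy .
$$
On the closure of the domain of $J_+$ the kernel $e^{(x-y)\widehat{k\Lambda}}$ stays bounded for $y\ge x$, so a Neumann series (or Gronwall) argument using $\|Q(\cdot,k)\|_{L^1}=O(1/k)$ together with the size of $\mathcal{E}_m$ yields $|R_m|\le f_+(x)/k^{m+1}$ for large $k$, with $f_+$ a fixed majorant that decays rapidly as $x\to+\infty$ because it is a tail integral of Schwartz data.

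Finally, to reach the $k$-derivative bounds I would differentiate this Volterra equation $n$ times in $k$. Two kinds of terms appear: derivatives landing on $Q$ and on $\mathcal{E}_m$, which only improve the decay in $k$, and derivatives landing on the kernel, which by $\partial_k e^{(x-y)\widehat{k\Lambda}}=(x-y)\,\widehat{\Lambda}\,e^{(x-y)\widehat{k\Lambda}}$ produce polynomial factors $(x-y)^n$. I expect this last point to be the main obstacle: on the open sector the exponential supplies genuine decay in $y-x$ that dominates these factors, but on the boundary rays of the domain the kernel is merely bounded, so the growth $(x-y)^n$ must be absorbed entirely by the Schwartz decay of the potentials in the integrand. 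This is precisely where the hypothesis $\{p_0,q_0\}\in\mathcal{S}(\R)$ is essential, and closing the induction on $n$ against these polynomial weights while keeping the majorant $f_+$ independent of $k$ is the delicate step; the resulting estimates then give the stated bound for every $m,n\ge 0$.
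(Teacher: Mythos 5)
You should first be aware that the paper offers no proof of this proposition at all: every statement in Section 2 is given without proof, with the remark that the construction ``is standard'' and a citation to Charlier--Lenells for details. So the only meaningful benchmark is the standard argument in that reference, and your skeleton is indeed the same one: determine the coefficients $J_+^{(j)}$ from the recurrence (\ref{WkB-infty}) (off-diagonal part algebraic via $[\Lambda,\cdot]$, diagonal part by integration from $+\infty$), show they are smooth and rapidly decaying, write a Volterra equation for the remainder $R_m=J_+-N_m$, bound it by a Neumann series using the boundedness of $e^{(x-y)\widehat{k\Lambda}}$ on the (column-wise) closed domains and $\|Q(\cdot,k)\|_{L^1}=O(1/k)$, and treat $\partial_k^n$ by differentiating the integral equation, absorbing the polynomial factors $(x-y)^n$ into the Schwartz decay of $p,q$.

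There is, however, one concrete error that breaks the key estimate as written. You claim that the residual $\mathcal{E}_m=(N_m)_x-[k\Lambda,N_m]-QN_m$ is $O(k^{-(m+1)})$ because all coefficients through order $k^{-m}$ cancel. They do not: the commutator term that would cancel the order-$k^{-m}$ off-diagonal contribution is $[k\Lambda,J_+^{(m+1)}/k^{m+1}]$, which is absent from $N_m$. Hence the $k^{-m}$ coefficient of $\mathcal{E}_m$ has off-diagonal part equal to $[\Lambda,J_+^{(m+1)}]$, which is nonzero in general --- already for $m=1$, since by (\ref{J1-infty}) the matrix $J_+^{(2)}$ has a nontrivial off-diagonal part whenever $p\not\equiv0$. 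So $\mathcal{E}_m=O(k^{-m})$ only, and your Gronwall/Neumann step then yields $|R_m|\le f_+(x)/k^{m}$, one power short of the proposition. The repair is standard but must be stated: run the Volterra argument for $R_{m+1}$ (whose residual is genuinely $O(k^{-(m+1)})$, its own defect sitting one order further down), and then write $J_+-N_m=R_{m+1}+J_+^{(m+1)}/k^{m+1}$, where the last term trivially satisfies the desired bound because $J_+^{(m+1)}$ is $k$-independent and rapidly decaying. Separately, your treatment of the derivatives $n\ge1$ is announced rather than executed: to close it you should set up the induction in $n$ explicitly, observing that $\partial_k^nR$ obeys a Volterra equation of the same type whose forcing involves only $\partial_k^jR$ with $j<n$, each multiplied by factors of the form $(x-y)^{i}Q(y,k)$ or $\partial_k$-derivatives of $\mathcal{E}$, all of which are $L^1$ in $y$ on $[x,\infty)$ with rapidly decaying norm in $x$; with that spelled out, the majorant $f_+$ stays $k$-independent and the induction closes.
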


\subsection{The behavior of Jost functions for  $k\to0$.}

Since the kernel matrix function $Q(x;k)$ has double poles at $k=0$, this illustrates the asymptotics of $J_{\pm}$ as $k\to 0$.

\begin{proposition}
	Suppose $\{q_0,p_0\}\in\mathcal{S}(\R)$, there exist bounded smooth functions $f_{\pm}$, which are rapidly decay as $x\to\infty$ and $x\to-\infty$, respectively. Let $m\ge 0$ be an integer and for each integer $n\ge 0$, the Jost function $J_{\pm}$ has the following expansion:
$$
\left|\frac{\partial^n}{\partial_{k^n}}\left[ J_{\pm}(x,k)-\left(\frac{\mathcal{J}_{\pm}^{(-2)}}{k^2}+\frac{\mathcal{J}_{\pm}^{(-1)}}{k}+I+{\mathcal{J}_{\pm}^{(1)}}{k}+\cdots+{J_{\pm}^{(m)}}{k^m}\right)\right]\right|\le {g_{\pm}(x)}{k^{m+1}}
$$
where $k$ is small enough.
Furthermore, the leading term $\mathcal{J}_{\pm}^{(-2)}$  has the form:
$$
\mathcal{J}_{\pm}^{(-2)}(x)=a_{\pm}(x)\left(\begin{array}{ccc}
	\alpha & \alpha^2 & 1 \\
	\alpha & \alpha^2 & 1 \\
	\alpha & \alpha^2 & 1
\end{array}\right)
$$
where $a_{\pm}(x)$ is a real valued function and is dominated by $g_{\pm}(x)$ with rapid decay as $x\to \infty$ and $x\to-\infty$, respectively.
\end{proposition}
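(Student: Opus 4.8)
The plan is to carry out a singular-perturbation analysis of the Volterra equation (\ref{Jost-functions}) as $k\to0$, running closely parallel to the $k\to\infty$ argument behind the preceding proposition but with one extra algebraic ingredient that caps the order of the pole. Write $Q=Q_{(1)}/k+Q_{(2)}/k^2$ and record the rank-one factorizations $Q_{(1)}=-\tfrac{p}{3}\,\mathbf 1\,\mathbf r_1$ and $Q_{(2)}=-\tfrac{q}{3}\,\mathbf 1\,\mathbf r_2$, where $\mathbf 1=(1,1,1)^{T}$, $\mathbf r_1=(\alpha^2,\alpha,1)$ and $\mathbf r_2=(\alpha,\alpha^2,1)$. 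The single identity $1+\alpha+\alpha^2=0$ produces the key lemma $\mathbf r_a\Lambda^{m}\mathbf 1=0$ for $0\le m<a$ (while $\mathbf r_a\Lambda^{a}\mathbf 1=3$), for $a\in\{1,2\}$. In particular $Q_{(i)}Q_{(j)}=0$ for all $i,j$ and $Q_{(2)}\Lambda Q_{(j)}=0$, so in any sandwich $Q(y)\,e^{(y-z)k\Lambda}Q(z)$ the two most singular powers $k^{-4}$ and $k^{-3}$ cancel, leaving order $k^{-2}$; this is the mechanism that will keep $k=0$ a pole of order exactly two.

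First I would insert the ansatz $J_\pm=\mathcal J^{(-2)}/k^2+\mathcal J^{(-1)}/k+I+\mathcal J^{(1)}k+\cdots$ into $J_x-[k\Lambda,J]=QJ$ and match powers of $k$, as in (\ref{WkB-infty}). The orders $k^{-4}$ and $k^{-3}$ return the constraints $Q_{(2)}\mathcal J^{(-2)}=0$ and $Q_{(1)}\mathcal J^{(-2)}+Q_{(2)}\mathcal J^{(-1)}=0$, which are automatically consistent with a rank-one coefficient because $\mathbf r_a\mathbf 1=0$, and the order $k^{-2}$ returns $\partial_x\mathcal J^{(-2)}=Q_{(1)}\mathcal J^{(-1)}+Q_{(2)}\mathcal J^{(0)}$. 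Since $Q_{(1)}$ and $Q_{(2)}$ both have column space $\mathbf 1$, the right-hand side is of the form $\mathbf 1\cdot(\text{row})$; integrating in $x$ against the normalization $J_+\to I$ at $+\infty$ (whose singular part is $0$, so $\mathcal J^{(-2)}_+(+\infty)=0$) yields $\mathcal J^{(-2)}_+(x)=\mathbf 1\,\mathbf b(x)$ with a row vector $\mathbf b$ assembled from Schwartz data, hence rapidly decaying; the analogous computation at $-\infty$ handles $J_-$. This already delivers rank one with the correct column direction, and the remaining $\mathcal J^{(j)}$ are fixed recursively.

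The analytic heart of the proof, and the step I expect to be the main obstacle, is the uniform remainder bound. Iterating (\ref{Jost-functions}), the $j$-th Neumann term is $e^{(x-y_1)k\Lambda}Q(y_1)e^{(y_1-y_2)k\Lambda}\cdots Q(y_j)e^{-(x-y_j)k\Lambda}$; because each $Q_{(a)}$ is rank one the matrix factor telescopes into a scalar product $\prod_{i<j}(\mathbf r_{a_i}\Lambda^{m_i}\mathbf 1)$, and the lemma forces $m_i\ge a_i$ in every surviving monomial (here $k^{-a_i}$ is the order of the $i$-th factor $Q$ and $k^{m_i}$ comes from the $i$-th interior exponential). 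Thus all singularities except that of the last factor cancel and the term is $O(k^{-2})$ uniformly in $j$, certifying the double pole. Setting $R_m=J_\pm-(\mathcal J^{(-2)}/k^2+\cdots+J^{(m)}k^m)$, the recursion arranges that $R_m$ solves a Volterra equation whose inhomogeneity is $O(k^{m+1})$; boundedness of the Volterra resolvent on the admissible rays then gives $|R_m|\le g_\pm(x)\,k^{m+1}$, with the $n$-th $k$-derivative controlled by differentiating the integral equation (the genuinely regular object being $k^2Q=kQ_{(1)}+Q_{(2)}$) and repeating the estimate. The Schwartz hypothesis on $p_0,q_0$ furnishes the rapidly decaying envelopes $g_\pm$.

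Finally I would pin down the exact form of $\mathcal J^{(-2)}$ from the symmetries of item (4). Extracting the $k^{-2}$ coefficient of $J(x,k)=\mathcal A\,J(x,\omega k)\,\mathcal A^{-1}$ gives $\mathcal J^{(-2)}=\omega\,\mathcal A\,\mathcal J^{(-2)}\mathcal A^{-1}$; writing $\mathcal J^{(-2)}=\mathbf 1\,\mathbf b$ and using $\mathcal A\mathbf 1=\mathbf 1$ forces $\mathbf b$ to be the relevant left eigenvector of the cyclic shift $\mathcal A$, namely $\mathbf b=a(x)\,\mathbf r_2$, whence $\mathcal J^{(-2)}=a(x)\,\mathbf 1\,\mathbf r_2$, exactly the displayed matrix. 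Extracting the $k^{-2}$ coefficient of $J(x,k)=\mathcal B\,J^*(x,k^*)\,\mathcal B$ gives $\mathcal J^{(-2)}=\mathcal B\,\overline{\mathcal J^{(-2)}}\,\mathcal B$; since $\overline{\mathbf r_2}=\mathbf r_1$ and $\mathbf r_1\mathcal B=\mathbf r_2$ while $\mathcal B\mathbf 1=\mathbf 1$, this collapses to $a=\overline a$, so $a_\pm$ is real, its rapid decay inherited from $\mathbf b$. As an independent check, note that $L$ and the scalar operator $\partial_{xxx}+p\partial_x+q$ are regular at $k=0$, so the pole is an artifact of the gauge $G=\mathrm{diag}(1,k,k^2)\,V$ with $V$ constant; then $G^{-1}=V^{-1}\mathrm{diag}(1,k^{-1},k^{-2})$ isolates the $k^{-2}$ column of $V^{-1}$, which equals $\tfrac13\mathbf 1$, reproducing the column direction $\mathbf 1$ of $\mathcal J^{(-2)}$.
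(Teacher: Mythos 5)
The first thing to say is that the paper offers no proof of this proposition to compare against: all of Section~2 is stated without argument, with the remark that the construction is ``standard'' and a pointer to the Charlier--Lenells treatment of the good Boussinesq equation, whose third-order Lax operator produces exactly the same kernel $Q=Q_{(1)}/k+Q_{(2)}/k^2$. Judged on its own terms, your proposal is essentially correct, and its mechanisms are the ones that make the cited analysis work. The two pillars check out concretely. First, the rank-one factorizations $Q_{(1)}=-\tfrac{p}{3}\,\mathbf{1}\,\mathbf{r}_1$, $Q_{(2)}=-\tfrac{q}{3}\,\mathbf{1}\,\mathbf{r}_2$ with $\mathbf{r}_1=(\alpha^2,\alpha,1)$, $\mathbf{r}_2=(\alpha,\alpha^2,1)$, together with your lemma $\mathbf{r}_a\Lambda^m\mathbf{1}=0$ for $0\le m<a$ (a two-line consequence of $1+\alpha+\alpha^2=0$), do force every interior singular factor in the Neumann series to telescope away, since $\mathbf{r}_a e^{sk\Lambda}\mathbf{1}=O(k^a)$; only the last factor's $k^{-2}$ survives, which is precisely why $k=0$ is a pole of order exactly two. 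Second, the symmetry pinning is exact: the $k^{-2}$ coefficient of $J=\mathcal{A}J(x,\omega k)\mathcal{A}^{-1}$ gives $\mathbf{b}\mathcal{A}=\omega\mathbf{b}$ for $\mathcal{J}^{(-2)}=\mathbf{1}\mathbf{b}$, and that eigenspace of the cyclic shift is spanned by $\mathbf{r}_2$, reproducing the displayed matrix; the $\mathcal{B}$-symmetry gives $a=\bar a$ because $\overline{\mathbf{r}_2}\,\mathcal{B}=\mathbf{r}_1\mathcal{B}=\mathbf{r}_2$ and $\mathcal{B}\mathbf{1}=\mathbf{1}$. Your closing observation that $G=\mathrm{diag}(1,k,k^2)V$ with $V$ constant, and that the third column of $V^{-1}$ is $\tfrac13\mathbf{1}$, is also correct and is the cleanest structural explanation of both the existence of the double pole and the column direction of its leading coefficient.

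The one step you should repair is the remainder bookkeeping. With $S_m=\sum_{j=-2}^{m}\mathcal{J}^{(j)}k^j$, the residual $\partial_x S_m-k[\Lambda,S_m]-QS_m$ is \emph{not} $O(k^{m+1})$: the order-$k^j$ equation of the recursion couples $\mathcal{J}^{(j)}$ to $\mathcal{J}^{(j+1)}$ (through the commutator and $Q_{(1)}/k$) and to $\mathcal{J}^{(j+2)}$ (through $Q_{(2)}/k^2$), so truncating at $\mathcal{J}^{(m)}$ leaves uncancelled forcing at orders $k^{m-1}$ and $k^{m}$ — two orders short of what you claim for the Volterra inhomogeneity. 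The fix is standard and costs nothing: construct the expansion up to $\mathcal{J}^{(m+2)}$, run your resolvent estimate to get $J_\pm-S_{m+2}=O(g_\pm(x)k^{m+1+2-2})$, and then absorb $\mathcal{J}^{(m+1)}k^{m+1}+\mathcal{J}^{(m+2)}k^{m+2}=O(k^{m+1})$ into the error; the same device (expanding $n$ extra orders) is what actually delivers the bound on $\partial^n_k$ of the remainder, which you assert but do not derive. With that adjustment the proof is complete; without it, the sentence ``the recursion arranges that $R_m$ solves a Volterra equation whose inhomogeneity is $O(k^{m+1})$'' is false as written.
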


\subsection{The scattering matrix}

Define the scattering matrix as
\begin{equation}\label{Delta-Scaterting}
	\Delta(k)=I-\int_{\mathbb{R}} e^{-x \widehat{k\Lambda}}(Q J)(x, k) d x.
\end{equation}
\par
When the initial potential functions $p_0$ and $q_0$ are compact support, the scattering matrix $\Delta(k)$ satisfies
$$
J_+(x, k)=J_-(x, k) e^{x \widehat{k\Lambda}} \Delta(k), \quad k \in \mathbb{C}\setminus\{0\}.
$$
\begin{proposition}
	Suppose $\{q_0,p_0\}\in\mathcal{S}(\R)$, then the scattering function $\Delta(k)$ defined in (\ref{Delta-Scaterting}) has the following properties:

(a) The domain of $\Delta(k)$:
$$
\Delta(k) \in \left(\begin{array}{ccc}
	\omega^{2}\overline{S} & \mathbb{R}_{+} & \omega \mathbb{R}_{+} \\
	\mathbb{R}_{+} & \omega \overline{S} & \omega^{2} \mathbb{R}_{+} \\
	\omega \mathbb{R}_{+} & \omega^{2} \mathbb{R}_{+} & \overline{S}
\end{array}\right)\setminus\{0\}.
$$
Here $\overline{S}$ means the closure of $S$ and $\Delta(k)$ is continuous to the boundary of domain but is analytic in the interior of its domain.here $\overline{S}$ means the closure of $S$.

(b) The matrix-valued function $\Delta(k)$ has the following Laurents expansions as $k\to\infty$ and $k\to 0$, respectively.
$$
\Delta(k)=I-\sum_{j=1}^{N} \frac{\Delta_{j}}{k^{j}}+O(\frac{1}{k^{N+1}}),\quad k \rightarrow \infty,
$$
and
$$
\Delta(k)=\frac{\Delta^{(-2)}}{k^2}+\frac{\Delta^{(-1)}}{k}+\Delta^{(0)}+\Delta^{(1)} k+\cdots, \quad k \rightarrow 0.
$$
\par
(c) The matrix-valued function $\Delta(k)$ satisfies the symmetries:
$$
\Delta(k)=\mathcal{A} \Delta(\omega k) \mathcal{A}^{-1}=\mathcal{B} \Delta^*({k}^*) \mathcal{B}.
$$
\end{proposition}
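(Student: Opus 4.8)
The plan is to feed the analytic and asymptotic information about the Jost functions established in the preceding propositions into the defining integral \eqref{Delta-Scaterting} and to read off each assertion entrywise. Writing $\Lambda=\diag(\lambda_1,\lambda_2,\lambda_3)$ with $(\lambda_1,\lambda_2,\lambda_3)=(\alpha,\alpha^2,1)$ and recalling that the hat denotes the commutator action $e^{-x\widehat{k\Lambda}}M=e^{-xk\Lambda}Me^{xk\Lambda}$, the $(i,j)$ entry of the integrand in \eqref{Delta-Scaterting} is $e^{-xk(\lambda_i-\lambda_j)}(QJ)_{ij}(x,k)$. First I would record that the $J$ appearing in \eqref{Delta-Scaterting} is the Jost solution $J_+$ (this is exactly the content of the scattering relation $J_+=J_-e^{x\widehat{k\Lambda}}\Delta$ together with the Volterra equations \eqref{Jost-functions}), so that column $j$ of the integrand inherits from the basic-properties proposition the sectorial domain of column $j$ of $J_+$, namely $\omega^2S,\ \omega S,\ S$ for $j=1,2,3$ (here $\omega=\alpha=e^{2\pi i/3}$).

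For part (a) I would split into diagonal and off-diagonal entries. Since $p_0,q_0\in\mathcal{S}(\R)$ the matrix $Q$ is Schwartz in $x$, and since $J_+$ is bounded, smooth, and rapidly decaying on the closure of its domain, each $(QJ_+)_{ij}$ decays faster than any polynomial in $x$. For a diagonal entry $\lambda_i-\lambda_j=0$, the exponential factor is absent, so the $x$-integral converges for every $k$ in the sector of column $i$; differentiating under the integral sign gives analyticity in the interior and continuity up to the boundary, producing the sectors $\omega^2\overline{S},\ \omega\overline{S},\ \overline{S}$ on the diagonal. For an off-diagonal entry $\lambda_i\neq\lambda_j$ the factor $e^{-xk(\lambda_i-\lambda_j)}$ grows exponentially at one end of $\R$ unless $\re\bigl(k(\lambda_i-\lambda_j)\bigr)=0$; because a Schwartz profile need not decay exponentially, convergence forces $k$ onto the ray $\{\re(k(\lambda_i-\lambda_j))=0\}$ lying in the domain of column $j$. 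A short computation using $\lambda_1-\lambda_2=i\sqrt3$, $\lambda_1-\lambda_3=\alpha-1$, $\lambda_2-\lambda_3=\alpha^2-1$ and their negatives identifies these rays as precisely the rotations $\R_+,\ \omega\R_+,\ \omega^2\R_+$ recorded in the domain matrix (each being a boundary ray of $\Sigma$ bounding the corresponding sector).

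For part (b), as $k\to\infty$ I would substitute the large-$k$ WKB expansion of $J_+$ (with leading corrections $J_+^{(1)},J_+^{(2)}$ as in \eqref{J1-infty}) together with $Q=Q_{(1)}/k+Q_{(2)}/k^2$ into \eqref{Delta-Scaterting} and integrate term by term; on the off-diagonal the oscillatory factor $e^{-xk(\lambda_i-\lambda_j)}$ acts on a Schwartz function, so repeated integration by parts (Riemann–Lebesgue) makes those entries vanish to all orders, leaving a purely diagonal Laurent tail $I-\sum_{j=1}^{N}\Delta_j/k^{j}+O(k^{-N-1})$. As $k\to0$ I would instead insert the small-$k$ expansion of $J_+$, whose leading coefficient $\mathcal{J}_{\pm}^{(-2)}$ and the singular part $Q_{(2)}/k^2$ of $Q$ are both scalar multiples of the rank-one matrix $M=\left(\begin{smallmatrix}\alpha&\alpha^2&1\\ \alpha&\alpha^2&1\\ \alpha&\alpha^2&1\end{smallmatrix}\right)$ (and $Q_{(1)}/k$ of the analogous matrix with row $(\alpha^2,\alpha,1)$). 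The crucial algebraic point is that these matrices have identical rows with entry-sum $1+\alpha+\alpha^2=0$, so any product of two of them vanishes, in particular $M^2=0$; this collapse forces the naively $O(k^{-4})$ and $O(k^{-3})$ contributions to cancel and leaves a pole of order exactly two, giving $\Delta(k)=\Delta^{(-2)}/k^2+\Delta^{(-1)}/k+\cdots$.

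For part (c) the symmetries are transferred directly from those of $J_+$ via the equivariance of the exponential factor. Using $\mathcal{A}\Lambda\mathcal{A}^{-1}=\omega^{-1}\Lambda$ one obtains $\mathcal{A}\,e^{-x\widehat{\omega k\Lambda}}(\cdot)\,\mathcal{A}^{-1}=e^{-x\widehat{k\Lambda}}\,\mathcal{A}(\cdot)\mathcal{A}^{-1}$, and combined with $\mathcal{A}Q(x,\omega k)\mathcal{A}^{-1}=Q(x,k)$ and $J_+(x,k)=\mathcal{A}J_+(x,\omega k)\mathcal{A}^{-1}$ this yields $\Delta(k)=\mathcal{A}\Delta(\omega k)\mathcal{A}^{-1}$ straight from \eqref{Delta-Scaterting}; the conjugation symmetry follows identically from $\mathcal{B}\overline{\Lambda}\mathcal{B}=\Lambda$, $\mathcal{B}Q^*(x,k^*)\mathcal{B}=Q(x,k)$ and the matching symmetry of $J_+$. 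The step I expect to be the main obstacle is part (b) near $k=0$: obtaining the correct pole order requires tracking precisely how the double poles of $Q$ and of $J_\pm$ interact, and it is exactly the nilpotent structure of the constant-row matrices ($M^2=0$ and the vanishing of all their pairwise products) that must be exploited to see the cancellations. The secondary delicate point is establishing continuity of the off-diagonal entries up to the boundary rays of $\Sigma$, where the defining integrals are only conditionally convergent Fourier-type transforms.
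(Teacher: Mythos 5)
Your overall strategy---identifying the $J$ in (\ref{Delta-Scaterting}) as $J_+$ via the relation $\Delta(k)=\lim_{x\to-\infty}e^{-x\widehat{k\Lambda}}J_+(x,k)$ and then reading off each claim entrywise---is the natural one, and it is consistent with the reference the paper defers to; note that the paper itself supplies no proof of this proposition (it declares the construction standard and cites \cite{Charlier-Lenells-2021}), so there is no internal argument to compare against. Your part (a) is correct: diagonal entries carry no exponential, so their domain is the closed column sector of $J_+$, while off-diagonal entries are confined to the rays $\operatorname{Re}\bigl(k(\lambda_i-\lambda_j)\bigr)=0$ intersected with the column sector, which reproduces the stated domain matrix. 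Part (c) is correct, resting on $\mathcal{A}(\omega k\Lambda)\mathcal{A}^{-1}=k\Lambda$, $\mathcal{A}Q(x,\omega k)\mathcal{A}^{-1}=Q(x,k)$, $\mathcal{B}\overline{\Lambda}\mathcal{B}=\Lambda$, $\mathcal{B}Q^*(x,k^*)\mathcal{B}=Q(x,k)$ together with the corresponding symmetries of $J_+$. The $k\to\infty$ half of (b) (termwise integration of the WKB expansion, rapid decay of the off-diagonal Fourier-type entries) is also sound.

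The genuine gap is in the $k\to 0$ half of (b). Writing $Q_{(2)}\propto M=\mathbf{1}(\alpha,\alpha^2,1)$, $Q_{(1)}\propto M'=\mathbf{1}(\alpha^2,\alpha,1)$ and $\mathcal{J}_+^{(-2)}=a_+M$, your nilpotency observation ($MM=M'M=MM'=0$) does kill the $k^{-4}$ coefficient $Q_{(2)}\mathcal{J}_+^{(-2)}$ and the piece $Q_{(1)}\mathcal{J}_+^{(-2)}$ of the $k^{-3}$ coefficient, but the full $k^{-3}$ coefficient of $QJ_+$ is $Q_{(1)}\mathcal{J}_+^{(-2)}+Q_{(2)}\mathcal{J}_+^{(-1)}$, and the proposition on the $k\to0$ behavior of the Jost functions gives no structural information about $\mathcal{J}_+^{(-1)}$; the pairwise-product argument cannot touch $Q_{(2)}\mathcal{J}_+^{(-1)}$, so as written you have not excluded a third-order pole of $\Delta$. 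The clean fix bypasses the rank-one algebra entirely: since $J_+$ solves $\partial_xJ_+-[k\Lambda,J_+]=QJ_+$, one has $QJ_+=\partial_xJ_+-k[\Lambda,J_+]$; because $J_+$ has at most a double pole at $k=0$ (uniformly in $x$), $\partial_xJ_+$ has at most a double pole and $k[\Lambda,J_+]$ at most a simple pole, hence $QJ_+$---and therefore $\Delta(k)=I-\int_{\R}e^{-x\widehat{k\Lambda}}(QJ_+)\,dx$, the conjugating exponential being entire in $k$---has at most a double pole. Equivalently, matching Laurent coefficients of the ODE at orders $k^{-4}$ and $k^{-3}$ yields the identities $Q_{(2)}\mathcal{J}_+^{(-2)}=0$ and $Q_{(1)}\mathcal{J}_+^{(-2)}+Q_{(2)}\mathcal{J}_+^{(-1)}=0$, the second being exactly the cancellation you asserted but did not prove. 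A minor further remark: on the rays the off-diagonal integrals converge absolutely (Schwartz $Q$ times bounded $J_+$), so the ``conditionally convergent'' boundary issue you flag as a secondary obstacle does not actually arise.
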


\subsection{The cofactor Jost functions}

Define $M^{A}=(M^{-1})^T$, then the adjoint equation associated with the equation $J_x-[k\Lambda,J]=Q J$ is
\begin{equation}\label{cofactor-equation}
	\left(J^{A}\right)_{x}+\left[k\Lambda, J^{A}\right]=-Q^{T}J^{A}.
\end{equation}

In the same procedure, one can also get the cofactor Jost functions $J_{\pm}^A$ and cofactor scattering matrix $\Delta^A(k)$. Furthermore, the properties of $J_{\pm}^A$ and $\Delta^A(k)$ are similar.

\subsection{The eigenfunctions $M_n$ }

Define the eigenfunctions for the equation (\ref{Lax-equation}) in each $k\in \Omega_k\setminus\{0\}$ by the following Fredholm integral

\begin{equation}\label{M_n-eigenfunction}
	\left(M_{n}\right)_{i j}(x, k)=\delta_{i j}+\int_{\gamma_{i j}^{n}}\left(e^{\left(x-y\right) \widehat{k\Lambda}}\left(Q M_{n}\right)\left(y, k\right)\right)_{i j} d y, \quad i, j=1,2,3,
\end{equation}
where $\gamma_{ij}=(x,\infty)\ \text{or}\ (-\infty,x)$, which is determined by the exponential part. Notice that there are zeros of Fredholm determinants in the complex plane denoted by $\mathcal{Z}$, but in a proper assumptions, we can extend Fredholm solutions in (\ref{M_n-eigenfunction}) on $\mathcal{Z}$.

\begin{proposition}
	Suppose $\{q_0,p_0\}\in\mathcal{S}(\R)$, then the equation (\ref{M_n-eigenfunction}) uniquely defines six $3 \times 3$ matrix-valued solutions $\left\{M_{n}\right\}_{1}^{6}$ of (\ref{Lax-equation}) with the following properties:

(a) The eigenfunctions $M_{n}(x, k)$ are defined for $x \in \mathbb{R}$ and $k \in \bar{\Omega}_{n} \backslash{(\mathcal{Z}\cup\{0\})}$. Moreover, $M_{n}(x, k)$ is bounded except for $k\in{\mathcal{Z}\cup\{0\}}$ and smooth about $x$  and continuous to $k \in \bar{\Omega}_{n} \backslash{(\mathcal{Z}\cup\{0\})}$ but analytic in the interior of its domain.

(b) The eigenfunctions $M_{n}(x, k)$ satisfied the symmetries
$$
M_n(x, k)=\mathcal{A} M_n(x, \alpha k) \mathcal{A}^{-1}={\mathcal{B} M_n^*(x, k^*)} \mathcal{B},
$$
where $k\in\bar\Omega_k\setminus{\mathcal{Z}\cup\{0\}}$.

(c) The determinant of  eigenfunctions $M_{n}(x, k)$ identically equal to one for each $k\in\bar\Omega_k\setminus{\mathcal{Z}\cup\{0\}}$.
\end{proposition}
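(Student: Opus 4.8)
The plan is to read (\ref{M_n-eigenfunction}) as a closed system of linear integral equations whose mixed Volterra structure is governed by the ordering of the three real numbers $\re(\alpha^{l}k)$, $l=1,2,3$, an ordering that is constant on each open sector $\Omega_n$ and degenerates only on $\Sigma$. First I would fix $n$ and $k\in\Omega_n\setminus\{0\}$ and record that, in this sector, each $\re\big((\alpha^{i}-\alpha^{j})k\big)$ has a definite sign for $i\neq j$; the contour $\gamma_{ij}^{n}$ is then taken to be $(x,\infty)$ when the factor $e^{(x-y)(\alpha^{i}-\alpha^{j})k}$ stays bounded for $y\ge x$ and $(-\infty,x)$ otherwise, so that the kernel of the integral operator $\mathcal{K}_k$ in (\ref{M_n-eigenfunction}) is uniformly bounded in $(x,y)$. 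Since $\{p_0,q_0\}\in\mathcal{S}(\R)$ and $Q=Q_{(1)}/k+Q_{(2)}/k^{2}$, this kernel decays rapidly in $y$ for $k$ bounded away from $0$, so $\mathcal{K}_k$ is a compact (Hilbert--Schmidt) operator on bounded continuous matrix functions. The Fredholm alternative then yields a unique solution $M_n(\cdot,k)$ precisely when $I-\mathcal{K}_k$ is invertible; here the exceptional set $\mathcal{Z}$ is exactly the zero locus of the associated Fredholm determinant $\det(I-\mathcal{K}_k)$, and off $\mathcal{Z}\cup\{0\}$ both uniqueness and the asserted boundedness of $M_n$ follow from the resolvent bound $\|(I-\mathcal{K}_k)^{-1}\|<\infty$.

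The regularity claims in (a) I would establish next. Analyticity of $M_n$ in the interior of $\Omega_n$ follows because the kernel of $\mathcal{K}_k$ is holomorphic in $k$ there---each exponential is strictly decaying along its contour, so one may differentiate under the integral sign (or apply Morera's theorem)---while $\det(I-\mathcal{K}_k)$ is holomorphic, so $(I-\mathcal{K}_k)^{-1}$ is meromorphic with poles confined to $\mathcal{Z}$. Continuity up to $\bar\Omega_n\setminus(\mathcal{Z}\cup\{0\})$ is the delicate point: on $\Sigma$ two of the $\re(\alpha^{l}k)$ coincide, so one exponential becomes purely oscillatory and its decay is lost; here the rapid Schwartz decay of $Q$ still forces absolute convergence of the integrals and continuity of $k\mapsto\mathcal{K}_k$, whereas the double pole of $Q$ at $k=0$ is the reason that point must be excluded. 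Smoothness in $x$ and the decay of $M_n-I$ then follow by differentiating (\ref{M_n-eigenfunction}) and bootstrapping, exactly as in the estimates already obtained for the Jost functions $J_\pm$.

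For the symmetries in (b) I would argue by uniqueness rather than by direct manipulation of the integrals. A short computation from the explicit forms of $\Lambda$, $Q_{(1)}$ and $Q_{(2)}$ gives $\mathcal{A}\Lambda\mathcal{A}^{-1}=\alpha^{-1}\Lambda$ together with $\mathcal{A}\,Q(x,\alpha k)\,\mathcal{A}^{-1}=Q(x,k)$, and likewise $\mathcal{B}\Lambda\mathcal{B}=\Lambda^{*}$ with $\mathcal{B}\,Q^{*}(x,k^{*})\,\mathcal{B}=Q(x,k)$ (all four identities reduce to $1+\alpha+\alpha^{2}=0$ and the rank-one structure of $Q_{(1)},Q_{(2)}$). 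Because the map $k\mapsto\alpha k$ permutes the sectors $\Omega_n$ and relabels the contours $\gamma_{ij}^{n}$ consistently, substituting these identities into (\ref{M_n-eigenfunction}) shows that both $\mathcal{A}M_n(x,\alpha k)\mathcal{A}^{-1}$ and $\mathcal{B}M_n^{*}(x,k^{*})\mathcal{B}$ satisfy the very same integral equation as $M_n(x,k)$; by the uniqueness proved in (a) they must equal $M_n(x,k)$, which is precisely (b).

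Finally, (c) I would obtain from Abel's identity. Writing the differential form of (\ref{Lax-equation}) as $\partial_x M_n=(k\Lambda+Q)M_n-M_n\,k\Lambda$, one computes
$$
\partial_x\det M_n=\det M_n\cdot\tr\!\big(M_n^{-1}\partial_x M_n\big)=\det M_n\cdot\tr Q,
$$
and since the diagonals of $Q_{(1)}$ and $Q_{(2)}$ sum to $\tfrac{p}{3}(\alpha^{2}+\alpha+1)=0$ and $\tfrac{q}{3}(\alpha+\alpha^{2}+1)=0$ respectively, one has $\tr Q\equiv0$, so $\det M_n$ is independent of $x$. Letting $x\to+\infty$, the entries whose contour is $(x,\infty)$ tend to $\delta_{ij}$ while the surviving off-diagonal entries lie strictly below the diagonal in the ordering induced by $\re(\alpha^{l}k)$; hence $M_n$ approaches a triangular matrix with unit diagonal, whose determinant is $1$, giving $\det M_n\equiv1$. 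The main obstacle in the whole argument is part (a): pinning down $\mathcal{Z}$ as the zero set of the Fredholm determinant and, above all, proving continuity of $M_n$ up to the boundary $\Sigma$, where the loss of exponential decay must be compensated entirely by the Schwartz decay of the potential.
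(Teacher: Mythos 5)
The paper never actually proves this proposition: immediately before Subsection 2.1 it states that the whole construction of Section 2 is standard and defers the proofs to the cited literature (see \cite{Charlier-Lenells-2021}), and your proposal reconstructs precisely that standard argument --- analytic Fredholm theory for existence, uniqueness, boundedness and analyticity of $M_n$ off the zero set $\mathcal{Z}$ of the Fredholm determinant, with continuity up to $\bar\Omega_n\setminus(\mathcal{Z}\cup\{0\})$ supplied by the Schwartz decay of $Q$; uniqueness combined with the algebraic identities $\mathcal{A}\Lambda\mathcal{A}^{-1}=\alpha^{-1}\Lambda$, $\mathcal{A}Q(x,\alpha k)\mathcal{A}^{-1}=Q(x,k)$, $\mathcal{B}\Lambda^{*}\mathcal{B}=\Lambda$, $\mathcal{B}Q^{*}(x,k^{*})\mathcal{B}=Q(x,k)$ (all of which I checked) for part (b); and the trace-free Liouville identity together with the $x\to+\infty$ limit for part (c). Your route is therefore essentially the one the paper relies on; the only cosmetic repairs are to justify compactness of $\mathcal{K}_k$ via Arzel\`a--Ascoli on bounded continuous matrix functions rather than calling it ``Hilbert--Schmidt'', and to compute $\partial_x\det M_n$ with the adjugate (so that $\partial_x\det M_n=\tr(Q)\det M_n$ holds without presupposing invertibility of $M_n$), noting also that the diagonal contours $\gamma_{ii}^{n}$ must be taken as $(x,\infty)$ for the triangular-limit step to give $\det M_n\to1$.
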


\subsection{The properties of $M_n$ as $k\to\infty$.}
\begin{proposition}
	Suppose $q_0,p_0\in\mathcal{S}(\R)$ and $q_0,p_0$ are not identically equal to zero. Given an integer $m\ge 1$ and $k$ is large enough in its domain,  $M_n$ can be approached by the expansion of $J_{+}$ as
\begin{equation}\label{M_n-infty}
	\left| M_{n}-\left(I+\frac{J_{+}^{(1)}}{k}+\cdots+\frac{J_{+}^{(m)}}{k^m}\right)\right|\le \frac{C}{k^{m+1}}.
\end{equation}
\end{proposition}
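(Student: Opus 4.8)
The plan is to prove the estimate by comparing $M_n$ directly with the truncated WKB series built from $J_+$ and then running a resolvent estimate on the Fredholm equation (\ref{M_n-eigenfunction}). Set $W_m := I + J_+^{(1)}/k + \cdots + J_+^{(m)}/k^m$, with the coefficients $J_+^{(j)}$ produced by the recursion (\ref{WkB-infty}). The first point to exploit is that this recursion is purely local in $x$: its first line fixes the off-diagonal part of each $J_+^{(j+1)}$ algebraically, by inverting $[\Lambda,\cdot\,]$ (which is invertible on off-diagonal matrices since $\Lambda$ has distinct eigenvalues), while its second line fixes the diagonal part by an $x$-integration normalized at $+\infty$. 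Because $p_0,q_0\in\mathcal{S}(\R)$, every $J_+^{(j)}$ is smooth, with off-diagonal part Schwartz in $x$ and diagonal part bounded. Crucially these coefficients depend only on the formal series solving $J_x-[k\Lambda,J]=QJ$ in (\ref{Lax-equation}), not on the choice of integration contour, so the same $W_m$ is the natural candidate expansion for $M_n$ as well.

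Next I would compute the defect of $W_m$ in the Lax equation, $\mathcal{E}_m := (W_m)_x - [k\Lambda,W_m] - Q W_m$. Collecting powers of $1/k$ and applying the two relations in (\ref{WkB-infty}), all coefficients of $k^0,\dots,k^{-(m-1)}$ cancel identically, and a short bookkeeping gives
$$\mathcal{E}_m = \frac{[\Lambda,J_+^{(m+1)}]}{k^{m}} - \frac{Q_{(1)} J_+^{(m)} + Q_{(2)} J_+^{(m-1)}}{k^{m+1}} - \frac{Q_{(2)} J_+^{(m)}}{k^{m+2}},$$
where $Q=Q_{(1)}/k+Q_{(2)}/k^2$. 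The decisive structural observation is that the leading term $[\Lambda,J_+^{(m+1)}]/k^{m}$ is \emph{off-diagonal} and Schwartz in $x$, while the remaining two terms are already $O(k^{-(m+1)})$; in particular the diagonal part of $\mathcal{E}_m$ starts only at order $k^{-(m+1)}$.

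I would then set $R := M_n - W_m$ and convert the problem into an integral equation. Since $M_n$ solves (\ref{M_n-eigenfunction}) and $W_m$ solves the Lax equation up to $\mathcal{E}_m$, the remainder $R$ obeys a Fredholm equation with the \emph{same} kernel $e^{(x-y)\widehat{k\Lambda}}Q(y,k)$ along the contours $\gamma_{ij}^n$, with inhomogeneous term $\int_{\gamma_{ij}^n}\big(e^{(x-y)\widehat{k\Lambda}}\mathcal{E}_m(y,k)\big)_{ij}\,dy$. Here the off-diagonal character of the leading defect is what saves the day: on an off-diagonal entry $(i,j)$ the exponent $k(\lambda_i-\lambda_j)$ is nonzero, so integrating the $k^{-m}$ term by parts against $e^{(x-y)k(\lambda_i-\lambda_j)}$ produces an extra factor $1/\big(k(\lambda_i-\lambda_j)\big)$ (the boundary contribution at the finite endpoint $y=x$ and the decay of $J_+^{(m+1)}$ at the infinite endpoint being under control), upgrading the $O(k^{-m})$ forcing to $O(k^{-(m+1)})$. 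On the diagonal the exponent vanishes, but there the forcing is already $O(k^{-(m+1)})$, so the inhomogeneous term is $O(k^{-(m+1)})$ uniformly in $x$ in either case.

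Finally I would close the estimate by inverting $I-\mathcal{K}$, where $\mathcal{K}$ is the Fredholm operator in (\ref{M_n-eigenfunction}). Using $Q=Q_{(1)}/k+Q_{(2)}/k^2$ with $Q_{(1)},Q_{(2)}$ proportional to the Schwartz functions $p,q$, together with the boundedness of the exponential factors along the contours $\gamma_{ij}^n$ in $\bar\Omega_n$ (guaranteed by the very definition of these contours and by the boundedness of $M_n$ from the preceding proposition), the operator norm of $\mathcal{K}$ is controlled uniformly for large $k$, so its Neumann series converges and preserves the order $O(k^{-(m+1)})$; this yields $|R|\le C/k^{m+1}$. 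I expect the main obstacle to be precisely the integration-by-parts step: extracting the extra power of $1/k$ from the off-diagonal leading defect uniformly in $x$ along the entry-dependent contours $\gamma_{ij}^n$, and the accompanying uniform control of the resolvent $(I-\mathcal{K})^{-1}$. The remainder of the argument runs parallel to the Volterra estimate already used for the asymptotics of $J_+$, with the single contour $(x,\infty)$ replaced by the mixed contours $\gamma_{ij}^n$.
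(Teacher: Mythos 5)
The paper itself contains no proof of this proposition: it is one of the Section~2 statements whose proofs are declared standard and deferred to the Boussinesq paper of Charlier and Lenells. Your proposal reconstructs essentially that standard argument, and its core is correct: the defect formula for $\mathcal{E}_m$ is exactly right (orders $k^0,\dots,k^{-(m-1)}$ cancel by (\ref{WkB-infty}), the $k^{-m}$ coefficient is $[\Lambda,J_+^{(m+1)}]$ and is purely off-diagonal, the rest is $O(k^{-(m+1)})$ with Schwartz $x$-dependence); feeding this into the Fredholm equation (\ref{M_n-eigenfunction}) for $R=M_n-W_m$, gaining one power of $k^{-1}$ on the off-diagonal forcing by integration by parts against $e^{(x-y)(l_i-l_j)k}$, and closing with a Neumann series for $(I-\mathcal{K})^{-1}$ (whose norm is in fact $O(1/k)$, since $Q=Q_{(1)}/k+Q_{(2)}/k^2$ with $Q_{(1)},Q_{(2)}$ Schwartz) is precisely how this estimate is proved in the cited literature.

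There is, however, one incorrect justification in your write-up, and it sits exactly at the only delicate point of the argument. You claim the coefficients ``depend only on the formal series solving $J_x-[k\Lambda,J]=QJ$, not on the choice of integration contour.'' That is false: the recursion (\ref{WkB-infty}) determines the off-diagonal part of each $J^{(n+1)}$ algebraically, but the diagonal part only up to a constant of integration, and $J_+^{(j)}$ is singled out by the normalization $(J_+^{(j)})^{(d)}\to 0$ as $x\to+\infty$. This matters when you convert the equation for $R$ into an integral equation: the inhomogeneous term is $\int_{\gamma_{ij}^n}\bigl(e^{(x-y)\widehat{k\Lambda}}\mathcal{E}_m\bigr)_{ij}dy$ only if the boundary contribution $\lim_{y\to\mathrm{endpoint}}\bigl(e^{(x-y)\widehat{k\Lambda}}W_m(y)\bigr)_{ij}$ equals $\delta_{ij}$. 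For off-diagonal entries this follows from the Schwartz decay of $(J_+^{(l)})^{(o)}$, but for diagonal entries the exponential is identically $1$, and the limit equals $1$ only when the diagonal contour $\gamma_{ii}^n$ runs to $+\infty$, where the $J_+$ normalization applies. If a diagonal contour ran to $-\infty$, the limit would be $1+\sum_l k^{-l}\int_{\R}(\cdots)\,dy$, injecting a forcing of size $O(1/k)$ rather than $O(k^{-(m+1)})$, and both your estimate and the proposition as stated (with $J_+$ coefficients) would fail for those entries. So you must invoke the convention (implicit in the paper's phrase ``determined by the exponential part,'' and explicit in Charlier--Lenells) that $\gamma_{ii}^n=(x,+\infty)$, and verify the diagonal boundary values accordingly, in place of the contour-independence claim. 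With that repair, your proof is complete and matches the standard one.
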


Now, assuming  $q_0,p_0\in\mathcal{S}(\R)$ are compact support, then one can get the relation between $M_n$ and $J_{\pm}$ by
$$
\begin{aligned}
	M_{n}(x, k) &=J_-(x, k) e^{x \widehat{\mathcal{L}(k})} S_{n}(k) \\
	&=J_+(x, k) e^{x \widehat{\mathcal{L}(k})} T_{n}(k), \quad x \in \mathbb{R}, k \in \bar{\Omega}_{n} \backslash \mathcal{Z},\quad n=1, 2, \ldots, 6.
\end{aligned}
$$
Combining the relationship between $J_+$ and $J_-$, the $S_n$ and $T_n$ can be linked by
$$
\Delta(k)=S_{n}(k) T_{n}^{-1}(k), \quad k \in \bar{\Omega}_{n} \backslash (\mathcal{Z}\cup\{0\}) .
$$
Since the Schwartz functions with compact support are dense in $\mathcal{S}(\R)$ with respect to the absolutely norm, one can asymptotically express the functions $M_n$, $J_{\pm}$ and $\Delta(k)$ by Schwartz initial potential functions.

\subsection{The jump matrix $v_n(x,k)$ }

\begin{lemma}
	Suppose $q_0,p_0\in\mathcal{S}(\R)$, then the matrix-valued functions $M_n(x,k)$ satisfies the boundary condition
$$
M_{+}(x, k)=M_{-}(x, k) v(x,  k), \quad k \in \Sigma \backslash(\mathcal{Z}\cup\{0\})
$$
where $v(x,  k)$ is jump matrix to be determined.
\par
In particular, when $q_0,p_0\in\mathcal{S}(\R)$ are compact support, there exists a matrix $\nu_1(k)$ such that
$$
M_1(x, k)=M_6(x, k) e^{x \widehat{k\Lambda}} \nu_1(k)
$$
and $M_n(x, k)=e^{x \widehat{\mathcal{L}(k)}} S_n(k)$ when $x$ is out of the compact support of $q_0,p_0$. Hence, we have
$$
\nu_1(k)=S_6(k)^{-1} S_1(k).
$$
\end{lemma}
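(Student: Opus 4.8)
The plan is to exploit a single structural fact: for each fixed $k$ every eigenfunction $M_n$ gives rise to a fundamental matrix solution of one and the same linear $x$-system, so any two of them can differ only by right multiplication by a factor that is independent of $x$. Concretely, if $M_n$ solves the Lax equation $\partial_x M_n-[k\Lambda,M_n]=QM_n$ on $\Omega_n$, then $\Psi_n(x,k):=M_n(x,k)e^{k\Lambda x}$ satisfies the ordinary differential equation $\partial_x\Psi_n=(k\Lambda+Q)\Psi_n=\mathcal{L}\Psi_n$. This reduces the whole statement to the elementary uniqueness theory for linear systems, together with the regularity of the $M_n$ recorded in the previous propositions.

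First I would establish the general jump relation. By the proposition on the eigenfunctions, each $M_n$ extends continuously to $\bar\Omega_n\setminus(\mathcal{Z}\cup\{0\})$ and has determinant identically $1$. For a ray $\ell\subset\Sigma$ separating two adjacent sectors $\Omega_n,\Omega_{n'}$, the two boundary values $M_+:=\lim M_n$ and $M_-:=\lim M_{n'}$ are therefore both invertible and both solve $\partial_x\Psi=\mathcal{L}\Psi$ on $\ell\setminus(\mathcal{Z}\cup\{0\})$ after multiplication by $e^{k\Lambda x}$. Since two fundamental solutions of a first-order linear system in $x$ differ by a right factor that does not depend on $x$, there is a matrix $\nu(k)$ with $M_+e^{k\Lambda x}=M_-e^{k\Lambda x}\nu(k)$; conjugating the constant factor through the exponential yields $M_+=M_-\,e^{x\widehat{k\Lambda}}\nu(k)$, which is exactly $M_+=M_-v(x,k)$ with $v(x,k)=e^{x\widehat{k\Lambda}}\nu(k)$.

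To identify $\nu_1(k)$ in the compactly supported case I would specialize this to the boundary $\mathbb{R}_+=\bar\Omega_1\cap\bar\Omega_6$ and evaluate it where the potentials vanish. For $x$ to the left of the support one has $J_-\equiv I$ and $\mathcal{L}=k\Lambda$, so the identity $M_n=J_-e^{x\widehat{\mathcal{L}(k)}}S_n$ collapses to $M_n(x,k)=e^{x\widehat{k\Lambda}}S_n(k)=e^{k\Lambda x}S_n(k)e^{-k\Lambda x}$ for $n\in\{1,6\}$. Substituting both expressions into $M_1=M_6\,e^{x\widehat{k\Lambda}}\nu_1$ and cancelling the common factors $e^{\pm k\Lambda x}$ reduces the matrix identity to $S_6(k)\nu_1(k)=S_1(k)$. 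Because $\det M_n=1$ forces $\det S_n\neq0$, the factor $S_6$ is invertible and $\nu_1(k)=S_6(k)^{-1}S_1(k)$, as asserted. The passage from compactly supported to arbitrary $\{q_0,p_0\}\in\mathcal{S}(\R)$ then follows from the density of compactly supported Schwartz potentials together with the continuous dependence of $M_n$, $J_\pm$ and $\Delta$ on the data.

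The algebra — cancelling exponentials and solving $S_6\nu_1=S_1$ — is routine; the care lies in the boundary passage. The main obstacle I expect is controlling the limits of $M_n$ uniformly up to the rays of $\Sigma$ while excising the discrete zero set $\mathcal{Z}$ of the Fredholm determinants and the puncture at $k=0$, and in verifying that $M_\pm$ stay invertible on $\Sigma\setminus(\mathcal{Z}\cup\{0\})$ so that $\nu(k)$ is genuinely well defined there. Once the boundary values are known to exist and to be invertible, the form of $v$ and the explicit value $\nu_1=S_6^{-1}S_1$ are forced by the uniqueness of fundamental solutions up to a constant right factor.
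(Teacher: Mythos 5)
Your proposal is correct and takes essentially the same route as the paper, which omits the detailed argument as standard but sets up exactly the ingredients you use: the relation $M_n=J_-e^{x\widehat{\mathcal{L}(k)}}S_n$ for compactly supported data, invertibility from $\det M_n\equiv 1$, evaluation of the $x$-independent right factor outside the support to get $\nu_1=S_6^{-1}S_1$, and density of compactly supported potentials in $\mathcal{S}(\R)$ to pass to the general case. Your ODE-uniqueness formulation (two fundamental solutions of $\partial_x\Psi=(k\Lambda+Q)\Psi$ differ by a constant right factor) is precisely the mechanism behind the paper's asserted jump relation, so there is no genuine divergence in method.
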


In conclusion, this completes the jump function $v_n$.

 \begin{lemma}
 	Let $q_{0}, p_{0} \in \mathcal{S}(\mathbb{R})$, the eigenfunctions $M_{1}$ can be expressed in terms of the entries of $J_{\pm}, J_{\pm}^{A}, \Delta$, and $\Delta^{A}$ as follows:
$$
M_{1}=\left(\begin{array}{clc}
	J^{+}_{11} & \frac{(J^{-}_{31})^{A} (J^{+}_{23})^{A}-(J^{-}_{21})^{A} (J^{+}_{33})^{A}}{\delta_{11}} & \frac{J^{-}_{13}}{\delta_{33}^{A}} \\
	J^{+}_{21} & \frac{(J^{-}_{11})^{A} (J^{+}_{33})^{A}-(J^{-}_{31})^{A} (J^{+}_{13})^{A}}{\delta_{11}} & \frac{J^{-}_{23}}{\delta_{33}^{A}} \\
	J^{+}_{31} & \frac{(J^{-}_{21})^{A} (J^{+}_{13})^{A}-(J^{-}_{11})^{A} (J^{+}_{23})^{A}}{\delta_{11}} & \frac{J^{-}_{33}}{\delta_{33}^{A}}
\end{array}\right).
$$
Furthermore, for the $|k|$ small enough, the following property holds
$$
 C\left|M_{n}(x, k)-\sum_{l=-2}^{p} {M}_{n}^{(l)}(x) k^{l}\right| \leq|k|^{p+1}, \quad k \in \bar{\Omega}_{n}.
$$

 \end{lemma}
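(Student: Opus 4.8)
The plan is to treat the two assertions in turn, first pinning down the explicit form of $M_1$ and then inserting it into the $k\to 0$ analysis. For the representation I would work column by column, starting from the Fredholm characterization (\ref{M_n-eigenfunction}) together with the relations $M_1=J_-e^{x\widehat{\mathcal L}}S_1=J_+e^{x\widehat{\mathcal L}}T_1$ and $\Delta=S_1T_1^{-1}$ recorded above. Throughout $\bar\Omega_1$ the three real parts are strictly ordered, ${\rm Re}(\alpha k)<{\rm Re}(\alpha^2 k)<{\rm Re}(k)$, so each contour $\gamma^1_{ij}$ in (\ref{M_n-eigenfunction}) is forced to be $(-\infty,x)$ or $(x,\infty)$ according to the sign of ${\rm Re}(\alpha^i k-\alpha^j k)$. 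Reading off the resulting integral equations, the first column of $M_1$ is built from $+\infty$ and therefore coincides with the first column $(J^+_{11},J^+_{21},J^+_{31})^T$ of $J_+$, while the third column is built from $-\infty$ and is a scalar multiple of the third column of $J_-$; the normalizing scalar is the transmission-type coefficient $(\Delta^A)_{33}=\delta^A_{33}$ dictated by the triangular structure of $S_1,T_1$, which in turn is forced by the analyticity domain of $\Delta$.

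The middle column is the only genuinely new ingredient. Since $\det M_1\equiv1$ by the preceding Proposition, the columns $c_1,c_2,c_3$ of $M_1$ are recovered from the columns $d_1,d_2,d_3$ of the cofactor matrix $M_1^{A}=(M_1^{-1})^{T}$ through the cross-product identities $c_1=d_2\times d_3$, $c_2=d_3\times d_1$, $c_3=d_1\times d_2$. The adjoint construction governed by (\ref{cofactor-equation}), which parallels (\ref{M_n-eigenfunction}) verbatim, identifies $d_1$ with the first column of $(J_-)^{A}$ and $d_3$ with the third column of $(J_+)^{A}$, up to the scattering normalization $\delta_{11}=\Delta_{11}$. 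Expanding $c_2=d_3\times d_1$ componentwise then reproduces exactly the displayed $2\times2$ cofactor combinations: its first entry is $((J^-_{31})^A(J^+_{23})^A-(J^-_{21})^A(J^+_{33})^A)/\delta_{11}$, and the other two entries follow in the same way.

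For the second assertion I would substitute the $k\to0$ Laurent expansions of the four constituents—$J_\pm$ and $J_\pm^{A}$ from the preceding Proposition on the behavior of the Jost functions as $k\to0$ (and its cofactor analogue), together with the expansion of $\Delta,\Delta^{A}$ near $k=0$—into the representation just derived. Each entry of $M_1$ then becomes a finite algebraic combination, involving only products and the two quotients by $\delta_{11}$ and $\delta^A_{33}$, of Laurent series beginning at order $k^{-2}$; the remainder bounds furnished by those propositions propagate through the Cauchy products to give $M_1=\sum_{l\ge-2}M_1^{(l)}(x)k^l$ with the stated error $C|k|^{p+1}$. The claim for arbitrary $n$ then follows from the rotational symmetry $M_n(x,k)=\mathcal{A}M_n(x,\alpha k)\mathcal{A}^{-1}$, which carries $\bar\Omega_1$ onto $\bar\Omega_n$ and merely permutes the Laurent coefficients.

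I expect the main obstacle to be the two divisions. One must check that the denominators $\delta_{11}$ and $\delta^A_{33}$, being themselves Laurent series that may be singular at $k=0$, have leading coefficients which combine with the numerators so that $M_1$ develops \emph{at most} a double pole rather than a pole of higher order. This hinges on the rank-one form of the leading coefficient $\mathcal J^{(-2)}_\pm=a_\pm(x)(\ldots)$ supplied by the preceding Proposition and on the symmetry relations of $\Delta$; once one verifies that the relevant leading scattering coefficients do not vanish near $k=0$, the quotient and product estimates reduce to routine Neumann-type bookkeeping.
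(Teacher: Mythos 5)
The paper never proves this lemma: it is stated as part of the ``standard'' construction of the Riemann--Hilbert problem, with all details deferred to the cited reference \cite{Charlier-Lenells-2021}, so there is no in-paper proof to compare against. Your outline reconstructs precisely that standard argument --- contour selection in (\ref{M_n-eigenfunction}) from the ordering ${\rm Re}(\alpha k)\le{\rm Re}(\alpha^2k)\le{\rm Re}(k)$ on $\bar\Omega_1$, Volterra uniqueness identifying the first column with that of $J_+$, a normalized multiple of the third column of $J_-$ for the last column, the cofactor cross-product identity $c_2=d_3\times d_1$ (valid since $\det M_1\equiv1$) for the middle column, and substitution of the $k\to0$ Laurent expansions for the error bound --- so in approach it is sound and consistent with the formula as stated.

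Three points, however, are asserted exactly where the computational content of the lemma lies. First, the normalizations: the phrases ``up to the scattering normalization'' and ``dictated by the triangular structure of $S_1,T_1$'' conceal the derivation that actually produces $\delta_{11}$ and $\delta_{33}^A$; one must either exploit the triangularity of $S_1,T_1$ together with $\Delta=S_1T_1^{-1}$, or match $x\to\pm\infty$ asymptotics via $J_+=J_-e^{x\widehat{k\Lambda}}\Delta$. Related to this, your statement that the third column is ``built from $-\infty$'' is loose: in the standard convention the diagonal contour $\gamma^1_{33}$ is $(x,\infty)$, not $(-\infty,x)$, and it is exactly this mixed normalization that forces the factor $1/\delta^A_{33}$ (were all contours from $-\infty$, the column would equal that of $J_-$ with no denominator). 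Second, the rotational symmetry $k\mapsto\omega k$ carries $\bar\Omega_1$ only onto $\bar\Omega_3$ and $\bar\Omega_5$; to obtain the small-$k$ expansion for $M_2,M_4,M_6$ you also need the conjugation symmetry $M_n(x,k)=\mathcal{B}M_n^*(x,k^*)\mathcal{B}$ from Proposition 2.5, or a separate representation in the even sectors. Third, both divisions require $\delta_{11}$ and $\delta^A_{33}$ to be nonvanishing on all of $\bar\Omega_1\setminus\{0\}$ (absence of the Fredholm zero set $\mathcal{Z}$, i.e.\ the generic, solitonless assumption), not only near $k=0$ where you flag it. None of these is a wrong turn, but as written they are gaps a complete proof would have to fill.
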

\par

\subsection{Construction of the Riemann-Hilbert problem}

Define the reflection coefficients $r_1(k)$ and $r_2(k)$ as
$$
\begin{cases}r_{1}(k)=\frac{\Delta_{12}(k)}{\Delta_{11}(k)}, & k \in(0, \infty), \\ r_{2}(k)=\frac{\Delta^{A}_{12}(k)}{\Delta^{A}_{11}(k)}, & k \in(-\infty, 0).\end{cases}
$$
\begin{proposition}
Suppose $q_{0}, p_{0} \in \mathcal{S}(\mathbb{R})$, then  $r_{1}:(0, \infty) \rightarrow \mathbb{C}$ and $r_{2}:(-\infty, 0) \rightarrow \mathbb{C}$ have the following properties:
$r_1(k)$ and $r_2(k)$ are smooth functions with rapidly decay as $|k|\to\infty $ on their domain and can be extended to $0$ as follows
$$
r_{1}(k)=r_{1}(0)+r_{1}^{\prime}(0) k+\frac{1}{2} r_{1}^{\prime \prime}(0) k^{2}+\cdots, \quad k \rightarrow 0,\quad k>0,
$$
and
$$
r_{2}(k)=r_{2}(0)+r_{2}^{\prime}(0) k+\frac{1}{2} r_{2}^{\prime \prime}(0) k^{2}+\cdots, \quad k \rightarrow 0,\quad k<0,
$$
where $r_1(0)=\omega,r_2(0)=1$.
\end{proposition}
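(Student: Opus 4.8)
The plan is to read off all four assertions from the Laurent data for $\Delta$ and $\Delta^A$ recorded in Proposition 2.4 and its cofactor analogue, treating $r_1=\Delta_{12}/\Delta_{11}$ and $r_2=\Delta^A_{12}/\Delta^A_{11}$ as quotients of known scattering quantities. For smoothness and analyticity, note that by Proposition 2.4(a) each entry $\Delta_{ij}$ is continuous on its closed domain and analytic in the interior, and that $(0,\infty)$ lies in the domains of both $\Delta_{11}$ (namely $\omega^2\overline S$, whose boundary contains the ray $\arg k=0$) and $\Delta_{12}$ (namely $\R_+$). Hence $r_1$ is smooth wherever $\Delta_{11}\neq0$, and likewise $r_2$ where $\Delta^A_{11}\neq0$. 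I would therefore first record that $\Delta_{11}$ and $\Delta^A_{11}$ have no zeros on the open half-lines, which under the generic Schwartz hypothesis is the statement that there are no spectral singularities on $\Sigma$; this reduces smoothness to quotients of smooth nonvanishing functions.

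For the rapid decay as $|k|\to\infty$ I would use that Schwartz potentials produce Schwartz scattering data. In the representation $\Delta(k)=I-\int_{\R}e^{-x\widehat{k\Lambda}}(QJ)(x,k)\,dx$, the $(1,2)$ entry carries the factor $e^{-xk(\alpha-\alpha^2)}=e^{-i\sqrt3\,kx}$ on the positive axis, so $\Delta_{12}(k)$ is an oscillatory integral in $x$ with frequency proportional to $k$ of the $\mathcal S(\R)$-function $(QJ)_{12}$ (smooth and rapidly decaying in $x$ by Proposition 2.1(2)); repeated integration by parts, controlled by the uniform bounds of Proposition 2.2, then gives decay faster than any power of $k$. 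Since $\Delta_{11}\to1$ by Proposition 2.4(b), $r_1$ inherits the rapid decay, and the same argument on $\Delta^A$ handles $r_2$.

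For the behaviour at $k=0$, the Laurent expansions of Proposition 2.4(b) give $\Delta_{11},\Delta_{12}=O(k^{-2})$, so after cancelling the common $k^{-2}$ the quotient $r_1(k)=(\Delta^{(-2)}_{12}+k\Delta^{(-1)}_{12}+\cdots)/(\Delta^{(-2)}_{11}+k\Delta^{(-1)}_{11}+\cdots)$ is analytic at $0$ provided $\Delta^{(-2)}_{11}\neq0$, which furnishes the Taylor series; the identical reduction for $\Delta^A$ gives the expansion of $r_2$. It then remains to evaluate the constant terms. Writing $D$ for the rank-one matrix $\big(\begin{smallmatrix}\alpha&\alpha^2&1\\ \alpha&\alpha^2&1\\ \alpha&\alpha^2&1\end{smallmatrix}\big)$, I would insert the leading coefficient $\mathcal J^{(-2)}_\pm=a_\pm(x)D$ from Proposition 2.3 into the scattering integral; using the identities $1+\alpha+\alpha^2=0$, $D^2=0$ (so that the would-be $k^{-4}$ and $k^{-3}$ poles cancel, consistent with $\Delta$ having only a double pole) one finds $\Delta^{(-2)}$ proportional to $D$, whence $r_1(0)=\Delta^{(-2)}_{12}/\Delta^{(-2)}_{11}=\alpha^2/\alpha=\omega$ (recall $\alpha=\omega=e^{2\pi i/3}$).

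The genuinely delicate point, and the step I expect to be the main obstacle, is $r_2(0)=1$. Here $\Delta^A=(\Delta^{-1})^T=\operatorname{cof}\Delta$ (using $\det\Delta\equiv1$), so each entry of $\Delta^A$ is bilinear in the entries of $\Delta$ and a priori has a pole of order four at $0$; but since $\Delta^{(-2)}\propto D$ is rank one, all its $2\times2$ minors vanish, the $k^{-4}$ term drops out, and $\Delta^A$ has leading order $k^{-3}$ with coefficient $\Delta^{A,(-3)}$. The symmetry $\Delta^A(k)=\mathcal A\,\Delta^A(\omega k)\,\mathcal A^{-1}$ (inherited from $\Delta$) forces $\Delta^{A,(-3)}=\mathcal A\,\Delta^{A,(-3)}\,\mathcal A^{-1}$, i.e.\ circulant; $\Delta^A(k)=\mathcal B\,(\Delta^A)^*(k^*)\,\mathcal B$ makes its diagonal real and its two off-diagonal bands complex conjugate; and $\det\Delta^A\equiv1$ forces $\det\Delta^{A,(-3)}=0$. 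Combining these constraints with the explicit $k^{-3}$ coefficient of $\operatorname{cof}\Delta$ (the bilinear pairing of $\Delta^{(-2)}$ with $\Delta^{(-1)}$) should pin $\Delta^{A,(-3)}$ to a multiple of the all-ones matrix, giving $r_2(0)=\Delta^A_{12}/\Delta^A_{11}\to1$. The cleanest way to organise this is probably to prove the $k\to0$ analogue of Proposition 2.3 for the cofactor Jost functions $J^A_\pm$ directly, so that the leading structure of $\Delta^A$ near $0$ arises by the same mechanism that produced $\Delta^{(-2)}$, rather than by differentiating the cofactor expansion entry by entry.
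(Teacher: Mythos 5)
First, a point of reference: the paper never actually proves this proposition. All of Section~2 is stated without proof ("the following way to construct the Riemann-Hilbert problem \ldots is standard, so we omit the process of proof"), with details deferred to the Charlier--Lenells good-Boussinesq papers. So your outline can only be measured against that standard route, whose architecture it does follow: smoothness of $r_1=\Delta_{12}/\Delta_{11}$, $r_2=\Delta^A_{12}/\Delta^A_{11}$ away from zeros of the denominators (your explicit genericity caveat $\Delta_{11}\neq0$, $\Delta^A_{11}\neq0$ is genuinely needed and only implicit in the paper), rapid decay from the large-$k$ data, and the Taylor expansions at $k=0$ as quotients of Laurent series. Two of your steps are acceptable as sketches, with one caveat worth flagging: "repeated integration by parts" in $x$ does not close by itself, since $\partial_xJ_+=k[\Lambda,J_+]+QJ_+$ brings down positive powers of $k$; the working version is the other half of your own sentence, namely inserting the Proposition~2.2 expansion $J_+=I+\sum_{m}J_+^{(m)}k^{-m}+O(k^{-M-1})$ so that each term $\int_{\R}e^{-i\sqrt3\,kx}(QJ_+^{(m)})_{12}\,dx$ is the Fourier transform of a fixed Schwartz function of $x$.

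The genuine gap is the evaluation of the two constants, which is the only nontrivial content of the proposition, and your main argument for $r_2(0)=1$ would fail twice over. First, the symmetry constraints you impose on the putative $k^{-3}$ coefficient of $\Delta^A=\operatorname{cof}\Delta$ (circulant under $\mathcal{A}$-conjugation, $\mathcal{B}$-reality, $\det=0$) do not pin it to a multiple of the all-ones matrix: they allow every circulant matrix with first row $(a,b,\bar b)$, $a$ real, subject only to $a^3+2\operatorname{Re}(b^3)-3a|b|^2=0$ --- a two-parameter family containing, e.g., $a=0$, $b=e^{i\pi/6}$, for which the ratio $\Delta^{A}_{12}/\Delta^{A}_{11}$ is not even finite. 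Second, the premise itself is false: $\Delta^A$ has only a double pole at $k=0$ (this is part of the "similar properties" the paper asserts for the cofactor data), i.e.\ the mixed $k^{-3}$ coefficient of $\operatorname{cof}\Delta$ vanishes as well, so your formula for $r_2(0)$ would read $0/0$. The repair is exactly the route you relegate to your closing sentence, and it simultaneously completes the $r_1(0)$ computation that you only assert ("one finds $\Delta^{(-2)}\propto D$"): since $J_+$ solves $J_x-k[\Lambda,J]=QJ$ and has only a double pole, matching Laurent coefficients forces $(QJ_+)^{(-4)}=(QJ_+)^{(-3)}=0$ and $(QJ_+)^{(-2)}=\partial_x\mathcal{J}^{(-2)}_+$; because $QJ_+$ then has no orders below $k^{-2}$, the conjugation $e^{-x\widehat{k\Lambda}}$ cannot feed lower orders into the $k^{-2}$ coefficient of the scattering integral, which therefore telescopes, $\Delta^{(-2)}=-\int_{\R}\partial_x\mathcal{J}^{(-2)}_+\,dx=\bigl(\lim_{x\to-\infty}a_+(x)\bigr)D$, giving $r_1(0)=D_{12}/D_{11}=\alpha^2/\alpha=\omega$ whenever that limit is nonzero (again a genericity assumption). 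Running the identical argument on the adjoint equation $(J^A)_x+[k\Lambda,J^A]=-Q^TJ^A$, whose singular coefficient $\frac{q}{3}D^T$ has the transposed rank-one structure, yields $\mathcal{J}^{A,(-2)}_{\pm}\propto D^T$ and hence $\Delta^{A,(-2)}\propto D^T$; the rows of $D^T$ are constant, so $r_2(0)=(D^T)_{12}/(D^T)_{11}=\alpha/\alpha=1$. Until this computation is actually carried out, the values $r_1(0)=\omega$ and $r_2(0)=1$ remain unproved in your proposal.
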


\begin{remark}
The SK equation only has the term $q$ but without $p$ in the space Lax pair, which means the $Q_{2}$ is vanishing. In this case the behavior of $J_{\pm},\Delta,J^{A}_{\pm},\Delta^{A}$ as $k\to0$ only is a simple pole, moreover, the behavior of reflection coefficients $r_1$ and $r_2$ are also has  different value with $r_1(0)=\omega^2,r_2(0)=1$.
\end{remark}
Define the jump matrix $v_n(x,t;k)$ for $k\in\Sigma$ as
\begin{equation}\label{vn-jump-matrix}
\begin{aligned}
	&v_{1}=\left(\begin{array}{ccc}
		1 & -r_{1}(k) e^{-\theta_{21}} & 0 \\
		r_{1}^{*}(k) e^{\theta_{21}} & 1-\left|r_{1}(k)\right|^{2} & 0 \\
		0 & 0 & 1
	\end{array}\right),\\
	&v_{2}=\left(\begin{array}{ccc}
		1 & 0 & 0 \\
		0 & 1-r_{2}(\omega k) r_{2}^{*}(\omega k) & -r_{2}^{*}(\omega k) e^{-\theta_{32}} \\
		0 & r_{2}(\omega k) e^{\theta_{32}} & 1
	\end{array}\right),\\
	&v_{3}=\left(\begin{array}{ccc}
		1-r_{1}\left(\omega^{2} k\right) r_{1}^{*}\left(\omega^{2} k\right) & 0 & r_{1}^{*}\left(\omega^{2} k\right) e^{-\theta_{31}} \\
		0 & 1 & 0 \\
		-r_{1}\left(\omega^{2} k\right) e^{\theta_{31}} & 0 & 1
	\end{array}\right),\\
	&v_{4}=\left(\begin{array}{ccc}
		1-\left|r_{2}(k)\right|^{2} & -r_{2}^{*}(k) e^{-\theta_{21}} & 0 \\
		r_{2}(k) e^{\theta_{21}} & 1 & 0 \\
		0 & 0 & 1
	\end{array}\right),\\
	&v_{5}=\left(\begin{array}{ccc}
		1 & 0 & 0 \\
		0 & 1 & -r_{1}(\omega k) e^{-\theta_{32}} \\
		0 & r_{1}^{*}(\omega k) e^{\theta_{32}} & 1-r_{1}(\omega k) r_{1}^{*}(\omega k)
	\end{array}\right),\\
	&v_{6}=\left(\begin{array}{ccc}
		1 & 0 & r_{2}\left(\omega^{2} k\right) e^{-\theta_{31}} \\
		0 & 1 & 0 \\
		-r_{2}^{*}\left(\omega^{2} k\right) e^{\theta_{31}} & 0 & 1-r_{2}\left(\omega^{2} k\right) r_{2}^{*}\left(\omega^{2} k\right),
	\end{array}\right)
\end{aligned}
\end{equation}
where the terms $\theta_{i j}=\theta_{i j}(x, t, k)(1 \leq i \neq j \leq 3)$ are defined by $\theta_{i j}(x, t, k)=\left(l_i-l_j\right) x+$ $\left(z_i-z_j\right) t$ with $l_1(k)= \omega k,l_2=\omega^2k,l_3=k$ and $z_1(k)=9 \omega^2 k^5,z_2(k)=9 \omega k^5,z_1(k)=9  k^5$.

\subsection{Riemann-Hilbert problem} \label{Riemann-Hilbert-Problem}

Given $r_{1}(k)$ and $r_{2}(k)$, find a $3 \times 3$ matrix-valued function $M_n(x, t, k)$ with the following properties:

(a) $M_n(x, t, k): \mathbb{C} \backslash \Sigma \rightarrow \mathbb{C}^{3 \times 3}$ is analytic for $k\in \mathbb{C} \backslash \Sigma$.

(b) The limits of $ M_n(x, t, k)$ as $k$ approaches $\Sigma  $ from the left (+) and right (-) exist, are continuous on $\Sigma  $, and are related by
$$
M_{+}(x, t, k)=M_{-}(x, t, k) v(x, t, k), \quad k \in \Sigma,
$$
where $v$ is defined in terms of $r_{1}(k)$ and $r_{2}(k)$ by $(\ref{vn-jump-matrix})$.

(c) $M_n(x, t, k)=I+O\left(k^{-1}\right)$ as $k \rightarrow \infty,~ k \notin \Sigma $.

(d) $M_n(x, t, k)=\sum_{l=-2}^{p} {M}_{n}^{(l)}(x) k^{l}+O(k^{p+1})$ as $k \rightarrow 0$.

The reconstruction formula for the potential function $p(x, t)$ is
$$
p(x, t)=-3 \frac{\partial}{\partial x}\lim _{k \rightarrow \infty}k (M(x, t, k)_{33}-1).
$$
\par
Alternatively, the reconstruction formula for the solutions of the SK equation and KK equation can be expressed by
\begin{equation}\label{recover-formula}
u(x, t)= -\frac{1}{2}\frac{\partial}{\partial x}\lim _{k \rightarrow \infty}k (M(x, t, k)_{33}-1).
\end{equation}

\par	
\par	
\section{\bf Long-time asymptotics}

This section investigates the long-time asymptotics of the SK and KK equations (\ref{SK})-(\ref{KK}) by  Deift-Zhou steepest-descent method \cite{Deift-Zhou-1993}. Firstly, compute the critical points by standard way:
\begin{equation}
	\begin{split}
		\begin{aligned}
			\theta_{21}&=(\alpha^2-\alpha)kx+(\alpha-\alpha^2)9k^5t\\
			&=t[(\alpha^2-\alpha)k\xi+(\alpha-\alpha^2)9k^5]\\
			&:=t\Phi_{21}(k),\\
			\partial_{k}\Phi_{21}&=(\alpha^2-\alpha)\xi+(\alpha-\alpha^2)45k^4=0,\\
			k^4&=\frac{\xi}{45}.	
		\end{aligned}
	\end{split}	
\end{equation}
The same procedure shows that
\begin{equation}
	\begin{split}
		\begin{aligned}
       	\theta_{31}&=(1-\alpha)kx+(1-\alpha^2)9k^5t\\
    	&=t[(1-\alpha)k\xi+(1-\alpha^2)9k^5]\\
    	&:=t\Phi_{31}(k),\\
    	\partial_k\Phi_{31}&=(1-\alpha)\xi+(1-\alpha^2)9k^4=0,\\
    	k^4&=\frac{x}{45t}\alpha,
        \end{aligned}
	\end{split}
\end{equation}
and
\begin{align*}
	\theta_{32}&=(1-\alpha^2)kx+(1-\alpha)9k^5t\\
	&=t[(1-\alpha^2)k\xi+(1-\alpha)9k^5]\\
	&:=t\Phi_{32},\\
	\partial_k\Phi_{21}&=(1-\alpha^2)\xi+(1-\alpha)9k^4=0,\\
	k^4&=\frac{x}{45t}\alpha^2.
\end{align*}
\par
Denote $k_0=\sqrt[4]{\frac{|x|}{45t}}$ and in particular, for $x>0$ let
$$
k_0=\sqrt[4]{\frac{x}{45t}},
$$
where $\sqrt[4]{*}$ denotes the quartic root of non-negative real and $(*)^{\frac{1}{4}}$ is complex. Thus we have

for the $\Phi_{21}$, $k=({\frac{\xi}{45}})^{\frac{1}{4}}=\sqrt[4]{\frac{|\xi|}{45}}e^{\frac{n\pi i}{2}}$ , where $n=0,1,2,3$;

for the $\Phi_{31}$, $k=({\frac{|\xi|}{45}}\omega)^{\frac{1}{4}}=\sqrt[4]{\frac{|\xi|}{45}}e^{\frac{\pi i}{6}+\frac{n\pi i}{2}}$, where $n=0,1,2,3$;

for the $\Phi_{32}$, $k=({\frac{|\xi|}{45}}\omega^2)^{\frac{1}{4}}=\sqrt[4]{\frac{|\xi|}{45}}e^{\frac{\pi i}{3}+\frac{n\pi i}{2}}$ , where $n=0,1,2,3$.
\par
The critical points on $\Sigma$ are illustrated in Figure \ref{Critical-points}. In fact, the jump matrices $v(x,t,k)$ on each cut have the following symmetries:
$$
v(x,t,k)=\mathcal{A}v(x,t,\omega k)\mathcal{A}^{-1}=\mathcal{B}\overline {v(x,t,\bar k)}^{-1}\mathcal{B},\ \ k\in \Sigma.
$$
Thus the following transformations focus only on the points $\pm k_0$, furthermore, the other points can be obtained by the above symmetry.

\begin{figure}[!h]
	\centering
	\begin{overpic}[width=.65\textwidth]{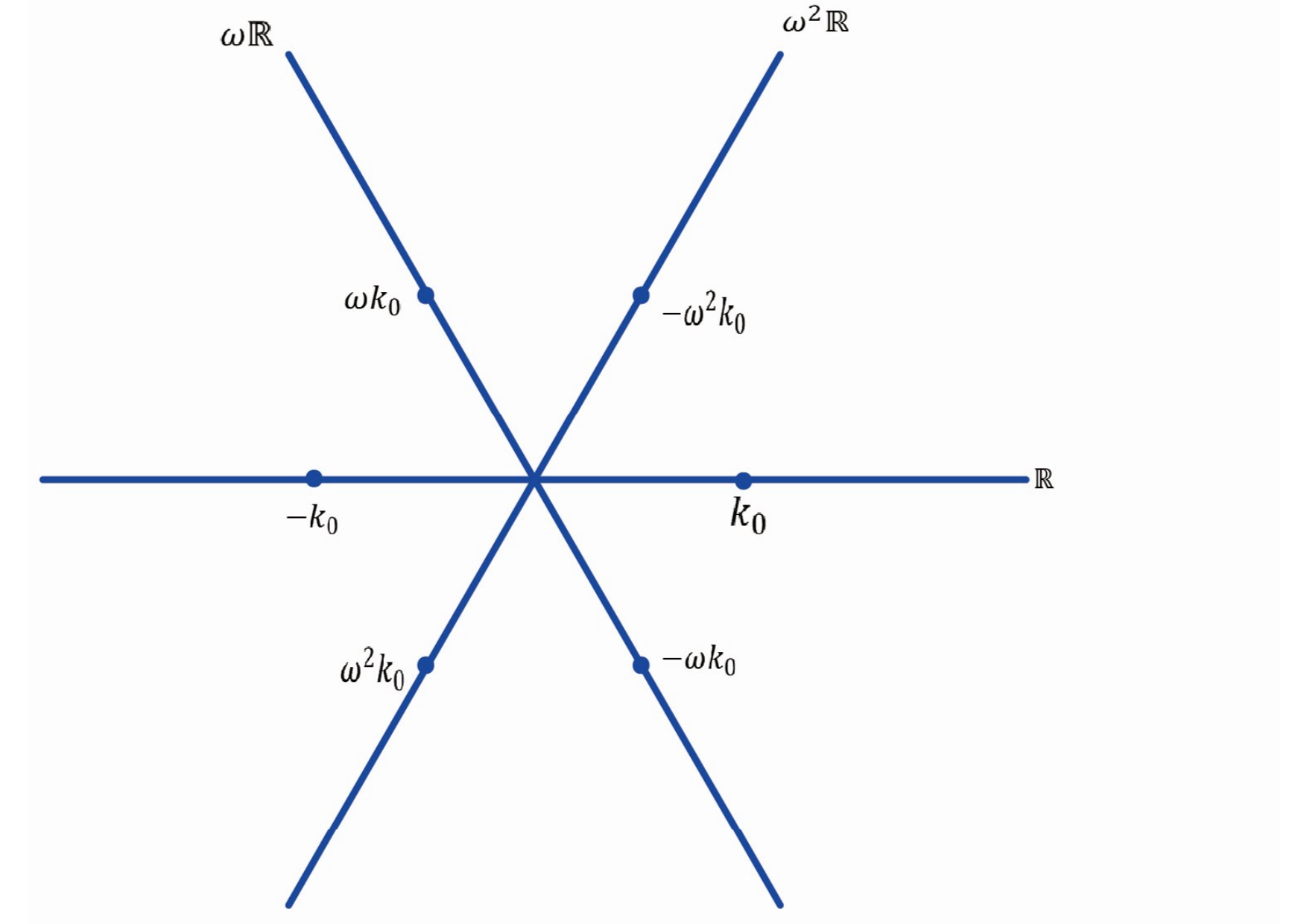}
	\end{overpic}
	\caption{{\protect\small
The jump contour $\Sigma$ and the critical points $\{\pm k_0,\pm \omega k_0,\pm \omega^2 k_0\}$.}}
    \label{Critical-points}
\end{figure}

\subsection{First transformation}

The jump matrices on $\R^+$ and $\R^-$ with direction away from the origin are
$$
v_{1}=\left(\begin{array}{ccc}
	1 & -r_{1}(k) e^{-\theta_{21}} & 0 \\
	r_{1}^{*}(k) e^{\theta_{21}} & 1-\left|r_{1}(k)\right|^{2} & 0 \\
	0 & 0 & 1
\end{array}\right),\\
$$
and
$$
v_{4}=\left(\begin{array}{ccc}
	1-\left|r_{2}(k)\right|^{2} & -r_{2}^{*}(k) e^{-\theta_{21}} & 0 \\
	r_{2}(k) e^{\theta_{21}} & 1 & 0 \\
	0 & 0 & 1
\end{array}\right).
$$
The sign diagram of ${\rm Re}(\theta_{21})$ is shown in Fig. \ref{sign-Figure}, where the shaded region denotes ${\rm Re}(\theta_{21})<0$, while the white region means ${\rm Re}(\theta_{21})>0$.
\begin{figure}
	\centering
	\begin{overpic}[width=.45\textwidth]{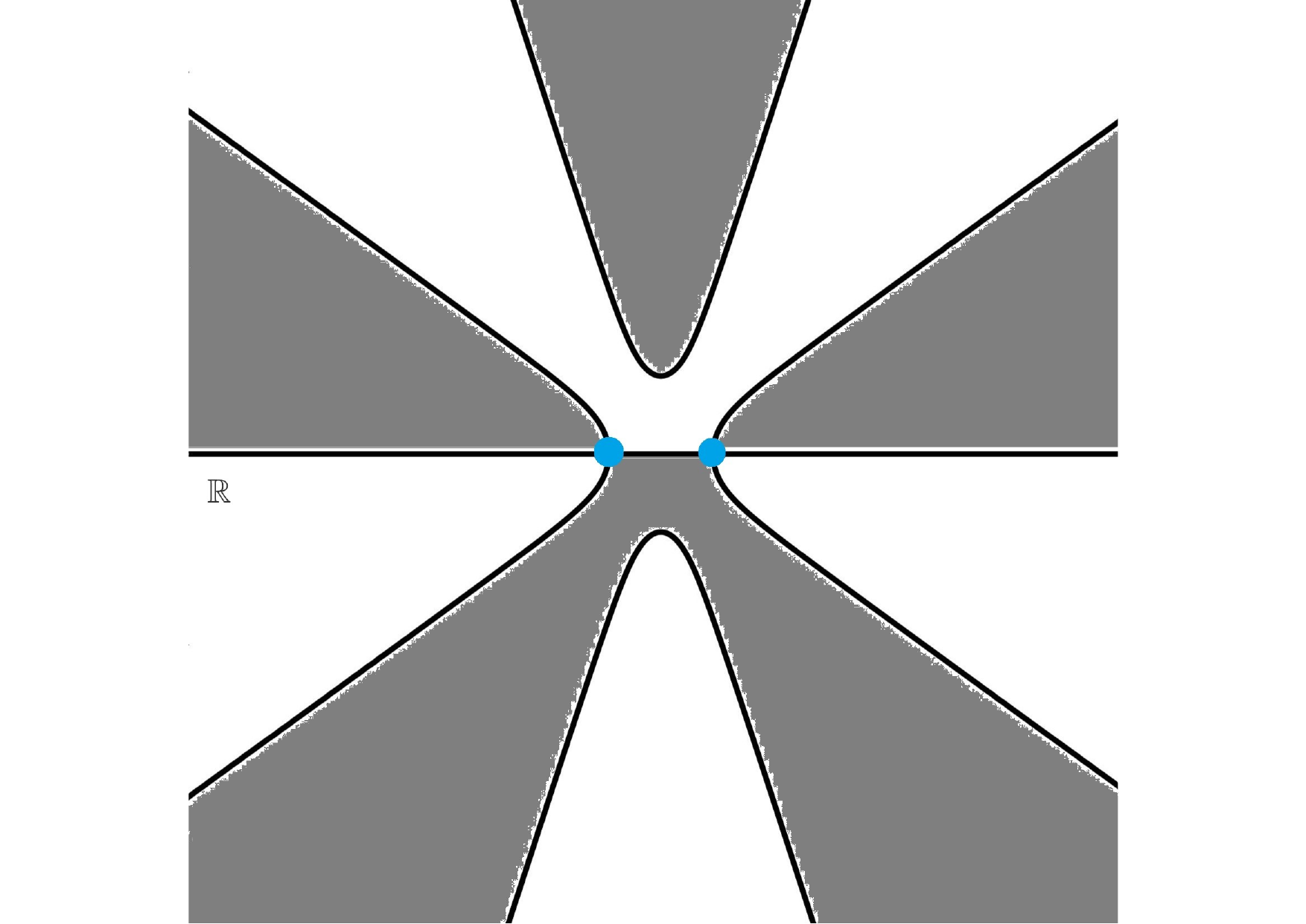}
		\put(58.6,30.5){\small $k_0$}
		\put(36.3,30.5){\small $-k_0$}
	\end{overpic}
	\caption{{\protect\small
		The sign diagram of ${\rm Re}(\theta_{21})$: the shaded region is ${\rm Re}(\theta_{21})<0$ and white is ${\rm Re}(\theta_{21})>0$.}}
	\label{sign-Figure}
\end{figure}
So decompose the jump matrices $v_1$ and $v_4$ on $0<|k|\le k_0$, respectively
$$
v_{1}=\left(\begin{array}{ccc}
	1 & -r_{1}(k) e^{-\theta_{21}} & 0 \\
	r_{1}^{*}(k) e^{\theta_{21}} & 1-\left|r_{1}(k)\right|^{2} & 0 \\
	0 & 0 & 1
\end{array}\right)=
\left(\begin{array}{ccc}
	1 & 0 & 0 \\
	r_{1}^{*}(k) e^{\theta_{21}} & 1& 0 \\
	0 & 0 & 1
\end{array}\right)
\left(\begin{array}{ccc}
	1 & -r_{1}(k) e^{-\theta_{21}} & 0 \\
	0 & 1& 0 \\
	0 & 0 & 1
\end{array}\right),
$$
and
$$
v_{4}=\left(\begin{array}{ccc}
	1-\left|r_{2}(k)\right|^{2} & -r_{2}^{*}(k) e^{-\theta_{21}} & 0 \\
	r_{2}(k) e^{\theta_{21}} & 1 & 0 \\
	0 & 0 & 1
\end{array}\right)=
\left(\begin{array}{ccc}
	1 & -r_{2}^{*}(k) e^{-\theta_{21}} & 0 \\
	0 & 1& 0 \\
	0 & 0 & 1
\end{array}\right)
\left(\begin{array}{ccc}
	1 & 0 & 0 \\
	r_{2}(k) e^{\theta_{21}} & 1& 0 \\
	0 & 0 & 1
\end{array}\right).
$$
\par
However, for $|k|\ge k_0$, a matrix-values function $\Theta$ should be introduced to open lenses naturally.
Since the reflect coefficients $r_1(k)$ and $r_2(k)$ rapidly decay, there exits $\mathcal{C}>0$ such that
$$
|r_j(k)|<1,\quad j=1,2,\quad k\in[\mathcal{C},\infty).
$$
Now, let $\delta_1(k),\delta_4(k)$ be the solution of the scalar RH problem
$$
\left\{\begin{aligned}
	\delta_{1+}(k) & =\delta_{1-}(k)\left(1-|r_1(k)|^2\right), & & k\in\Sigma_1^{(1)}, \\
	& =\delta_{1-}(z), & & k\in\C\setminus\Sigma_1^{(1)}, \\
	\delta_1(k) & \rightarrow 1 & & \text { as } k \rightarrow \infty,
\end{aligned}\right.
$$
and
$$
\left\{\begin{aligned}
	\delta_{4+}(k) & =\delta_{4-}(k)\left(1-|r_1(k)|^2\right), & & k\in\Sigma_4^{(1)}, \\
	& =\delta_{4-}(z), & & k\in\C\setminus\Sigma_4^{(1)}, \\
	\delta_4(k) & \rightarrow 1 & & \text { as } k \rightarrow \infty.
\end{aligned}\right.
$$
Hence, using the Plemelj formula, it follows that
$$
\delta_1(k)=\exp \left\{\frac{1}{2 \pi i} \int_{\left[k_0, \infty\right)} \frac{\ln \left(1-\left|r_1(s)\right|^2\right)}{s-k} d s\right\}, \quad k \in \mathbb{C} \backslash \Sigma_1^{(1)},
$$
and
$$
\delta_4(k)=\exp \left\{\frac{1}{2 \pi i} \int_{[-k_0,-\infty)} \frac{\ln \left(1-\left|r_2(s)\right|^2\right)}{s-k} d s\right\}, \quad k \in \mathbb{C} \backslash \Sigma_4^{(1)}.
$$

Let $\log_{\theta}(k)$ denote the logarithm of $k$ with branch cut along $\arg k=\theta$, i.e.,
$$
\log_0(k)=\ln|k|+\arg_0(k),  \arg_0(k)\in(0,2\pi);\\
\log_{\pi}(k)=\ln|k|+\arg_{\pi}(k),  \arg_{\pi}(k)\in(-\pi,\pi).
$$

\begin{proposition}
	The basic properties of $\delta$ functions are given below:
\par	
(1). If choosing the branch cut of $\ln$ with $\arg k\in(0,2\pi)$, the $\delta(k)$ can be rewritten as
$$
\delta_1( k)=e^{-i \nu_1 \log_0\left(k-k_0\right)} e^{-\chi_1( k)}
$$
where
$$
\nu_1=-\frac{1}{2 \pi} \ln \left(1-\left|r_1\left(k_0\right)\right|^2\right),
$$
and
$$
\chi_1( k)=\frac{1}{2 \pi i} \int_{k_0}^{\infty} \log_0(k-s) d \ln \left(1-\left|r_1(s)\right|^2\right).
$$
Moreover, we have
$$
\delta_4( k)=e^{-i \nu_4 \log_{\pi}\left(k+k_0\right)} e^{-\chi_4( k)}
$$
with
$$
\nu_4=-\frac{1}{2 \pi} \ln \left(1-\left|r_2\left(-k_0\right)\right|^2\right),
$$
and
$$
\chi_4( k)=\frac{1}{2 \pi i} \int_{-k_0}^{-\infty} \log_{\pi}(k-s) d \ln \left(1-\left|r_2(s)\right|^2\right).
$$
\par
(2). The $\delta(k)$ satisfies the conjugate symmetry and are bounded in $|k|>k_0$ such that
$$
\delta_1( k)={(\overline{\delta_1( \bar{k})})^{-1}}, \quad  k \in \mathbb{C} \backslash \Sigma_1^{(1)},\\
\delta_4( k)={(\overline{\delta_4( \bar{k})})^{-1}}, \quad  k \in \mathbb{C} \backslash \Sigma_4^{(1)},
$$
and
$$
|\delta_{1\pm}(k)|<\infty,\quad {\rm for}~~ k>k_0;\quad
|\delta_{4\pm}(k)|<\infty,\quad {\rm for}~~ k<-k_0.
$$

(3). As $k\to \pm k_0$ along with $k>k_0$ and $k<-k_0$, respectively, we have
$$
\begin{aligned}
	& \left|\chi_1( k)-\chi_1\left( k_0\right)\right| \leq C\left|k-k_0\right|\left(1+|\ln | k-k_0||\right), \\
	& \left|\partial_x\left(\chi_1( k)-\chi_1\left( k_0\right)\right)\right| \leq \frac{C}{t}\left(1+|\ln | k-k_0||\right),
\end{aligned}
$$
and
$$
\begin{aligned}
	& \left|\chi_4( k)-\chi_4\left( -k_0\right)\right| \leq C\left|k+k_0\right|\left(1+|\ln | k+k_0||\right), \\
	& \left|\partial_x\left(\chi_4( k)-\chi_4\left( k_0\right)\right)\right| \leq \frac{C}{t}\left(1+|\ln | k+k_0||\right).
\end{aligned}
$$

Moreover, we have
$$
\begin{gathered}
	\left|\partial_x \chi_1\left( k_0\right)\right|\leq \frac{C}{t} ,\quad
	\partial_x\left(\delta_1( k)^{\pm 1}\right)=\frac{\pm i \nu}{180 tk_0^3\left(k-k_0\right)} \delta_1( k)^{\pm 1}.
\end{gathered}
$$
\end{proposition}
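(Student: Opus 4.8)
The plan is to derive all three items from the explicit Cauchy-type integral representations of $\delta_1$ and $\delta_4$ furnished by the Plemelj formula. By the structural symmetry of the two scalar RH problems it suffices to carry out every argument for $\delta_1$ and then transcribe it for $\delta_4$ under $r_1\mapsto r_2$, $k_0\mapsto-k_0$ and the branch change $\arg_0\mapsto\arg_\pi$. Throughout I abbreviate $\rho(s)=\ln(1-|r_1(s)|^2)$, which is real on $[k_0,\infty)$ and, under the standing assumption that $1-|r_1|^2>0$ there, is bounded with rapidly decaying derivative since $r_1\in\mathcal{S}$.

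For item (1) I would integrate by parts in $\log\delta_1(k)=\frac{1}{2\pi i}\int_{k_0}^\infty\frac{\rho(s)}{s-k}\,ds$, taking $\log_0(k-s)$ as antiderivative of $(s-k)^{-1}$ because $\partial_s\log_0(k-s)=(s-k)^{-1}$. The rapid decay of $\rho$ kills the endpoint at infinity, the lower endpoint contributes $-\rho(k_0)\log_0(k-k_0)$, and the residual integral is precisely $-\chi_1(k)$. Matching constants via $\frac{1}{2\pi i}=-\frac{i}{2\pi}$ gives $-\frac{1}{2\pi i}\rho(k_0)=-i\nu_1$ with $\nu_1=-\frac{1}{2\pi}\ln(1-|r_1(k_0)|^2)$, which assembles into $\delta_1(k)=e^{-i\nu_1\log_0(k-k_0)}e^{-\chi_1(k)}$. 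The delicate point is the branch bookkeeping: one must verify that $\log_0$ (with $\arg_0\in(0,2\pi)$) is a single-valued antiderivative of $(s-k)^{-1}$ along the ray $[k_0,\infty)$ for $k\notin\Sigma_1^{(1)}$ and that the chosen branch is the one for which the infinite endpoint genuinely drops out.

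Item (2) is then immediate. Since $\rho$ is real, conjugating the representation of $\delta_1(\bar k)$ only flips the sign of $\frac{1}{2\pi i}$, so $\overline{\delta_1(\bar k)}=\exp\{-\frac{1}{2\pi i}\int_{k_0}^\infty\rho(s)(s-k)^{-1}ds\}=\delta_1(k)^{-1}$, the stated conjugate symmetry. For boundedness on $|k|>k_0$ I would estimate $|\delta_1(k)|=\exp(\re\,\frac{1}{2\pi i}\int\rho/(s-k)\,ds)$: boundedness of $\rho$ and integrability against the Cauchy kernel keep $\re$ of the exponent finite up to the boundary values $\delta_{1\pm}$ on $k>k_0$, while the factor $e^{\nu_1\arg_0(k-k_0)}$ is controlled because $\arg_0$ takes values in the bounded set $(0,2\pi)$.

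Item (3) splits into the H\"older-type bounds and the exact $\partial_x$ identity. For the former I would write $\chi_1(k)-\chi_1(k_0)=\frac{1}{2\pi i}\int_{k_0}^\infty[\log_0(k-s)-\log_0(k_0-s)]\,\rho'(s)\,ds$ and split the contour at distance $2|k-k_0|$ from $k_0$; the near part is controlled by the smoothness of $\rho'$ and the local integrability of the logarithm, the far part by $|\log_0(k-s)-\log_0(k_0-s)|\le C|k-k_0|/|s-k_0|$, which together yield the factor $|k-k_0|(1+|\ln|k-k_0||)$. The cleanest route to the final identity is direct rather than through $\chi_1$: the only $x$-dependence of $\log\delta_1$ is through the lower limit $k_0=k_0(x,t)$, so Leibniz differentiation gives $\partial_x\log\delta_1=\frac{1}{2\pi i}\frac{\rho(k_0)}{k-k_0}\partial_x k_0$; inserting $\partial_x k_0=\frac{1}{180tk_0^3}$ (from $k_0^4=\frac{x}{45t}$) and $\frac{1}{2\pi i}\rho(k_0)=i\nu_1$ produces exactly $\partial_x(\delta_1(k)^{\pm1})=\frac{\pm i\nu}{180tk_0^3(k-k_0)}\delta_1(k)^{\pm1}$, while $|\partial_x\chi_1(k_0)|\le C/t$ follows by differentiating $\chi_1$ in its $k_0$-dependence and again using $\partial_x k_0=O(1/t)$. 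I expect the main obstacle to be this item (3): controlling the difference of logarithms uniformly as $k\to k_0$, where the integrand is genuinely logarithmically singular at the stationary point, is the one place where the splitting must be done carefully and the Schwartz regularity of $r_1$ used in earnest rather than formally.
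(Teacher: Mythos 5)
Your proposal is correct, and for items (1) and (3) it is essentially the paper's own argument: integration by parts in the Plemelj representation with the branch $\arg_0(k-k_0)\in(0,2\pi)$ (the paper performs exactly this computation, with the same boundary-term bookkeeping killing the endpoint at infinity by rapid decay of $\ln(1-|r_1|^2)$), a near/far splitting of the contour to produce the bound $C|k-k_0|\left(1+|\ln|k-k_0||\right)$ (the paper splits at $k_0+\epsilon$, you at distance $2|k-k_0|$; both halves are estimated for the same reasons, local integrability of the logarithm near $k_0$ and the Lipschitz-type bound on the difference of logarithms away from it), and differentiation through the $x$-dependence of the lower limit $k_0$ using $\partial_x k_0=\frac{1}{180tk_0^3}$. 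In fact your Leibniz derivation of the exact identity $\partial_x\left(\delta_1(k)^{\pm1}\right)=\frac{\pm i\nu_1}{180tk_0^3(k-k_0)}\delta_1(k)^{\pm1}$ is spelled out more fully than in the paper, which essentially asserts it after the $\chi_1$ estimates. The one place you take a genuinely different route is item (2): the paper proves $\delta_1(k)=(\overline{\delta_1(\bar k)})^{-1}$ by uniqueness of the scalar RH problem, and then gets boundedness from the jump relation, which yields $|\delta_{1+}(k)|^2=1-|r_1(k)|^2$, combined with the maximum principle; you instead read both facts off the Cauchy integral directly, since conjugation flips the sign of $\frac{1}{2\pi i}$ because $\ln(1-|r_1|^2)$ is real, and the real part of the exponent is a Poisson-type integral bounded by $\frac{1}{2}\|\ln(1-|r_1|^2)\|_{L^\infty}$ uniformly in $k$. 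Your version is more elementary (no appeal to RH uniqueness or the maximum principle) and suffices for everything that follows; the paper's version has the minor advantage of producing the exact boundary modulus $|\delta_{1\pm}|$, though only boundedness is used later. I see no gaps.
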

\begin{proof}
	 We focus on the properties of $\delta_1$ and the properties of $\delta_2$ are similar.

(1). Recall that

$$
\delta_1(k)=\exp \left\{\frac{1}{2 \pi i} \int_{\left[k_0, \infty\right)} \frac{\ln \left(1-\left|r_1(s)\right|^2\right)}{s-k} d s\right\}, \quad k \in \mathbb{C} \backslash \Sigma_1^{(1)}.
$$

Using the technique of integrations by parts, rewrite the formula of $\delta_1$ as follows

\begin{align*}
	\delta_1(k)&=\exp \left\{\frac{1}{2 \pi i} \int_{\left[k_0, \infty\right)} \frac{\ln \left(1-\left|r_1(s)\right|^2\right)}{s-k} d s\right\}
	=\exp \left\{-\frac{1}{2 \pi i} \int_{\left[k_0, \infty\right)} \frac{\ln \left(1-\left|r_1(s)\right|^2\right)}{k-s} d s\right\}\\
	&=\exp \left\{\frac{1}{2 \pi i} \int_{\left[k_0, \infty\right)} {\ln \left(1-\left|r_1(s)\right|^2\right)} d\ (\ln(k-s)) \right\}\\
	&=\exp \left\{\frac{1}{2 \pi i} \left( {\ln \left(1-\left|r_1(s)\right|^2\right)} \ln(k-s)|_{k_0}^{\infty}-\int_{\left[k_0, \infty\right)}\ln(k-s) d\ {\ln \left(1-\left|r_1(s)\right|^2\right)}\right) \right\}\\
	&=\exp \left\{\frac{1}{2 \pi i} \left( -{\ln \left(1-\left|r_1(k_0)\right|^2\right)} \log_0(k-k_0)-\int_{\left[k_0, \infty\right)}\log_0(k-s) d\ {\ln \left(1-\left|r_1(s)\right|^2\right)}\right) \right\}\\
	&=e^{-i\nu_1\log_0(k-k_0)}e^{-\chi_1(k)},
\end{align*}
where we have chosen the branch cut $0<\arg(k-k_0)<2\pi$, since the jump contour is $k>k_0$.
\par
(2). By using the uniqueness of RH problem of $\delta_1(k)$, it is found that $\overline{\delta_1(\bar k)}^{-1}$ satisfies the same RH problem, so that $\delta_1(k)=\overline{\delta_1(\bar k)}^{-1}$.  When $k_0$ large enough, the reflection coefficients is strictly small than $1$ and inserting the symmetry of $\delta_1(k)$ yields
$$
\begin{aligned}
	\delta_{1+}(k)  &=\delta_{1-}(k)\left(1-|r_1(k)|^2\right),\\
	&=\overline{\delta_{1-}(\bar k)}^{-1}\left(1-|r_1(k)|^2\right),\\
	&=\overline{\delta_{1+}( k)}^{-1}\left(1-|r_1(k)|^2\right),\\
	|\delta_{1+}(k)|^2&=\left(1-|r_1(k)|^2\right),
\end{aligned}
$$
and the same procedure leads to
$$
|\delta_{1-}|^2\le\left(1-\sup_{k_0\in[c_0,M]}|r_1(k_0)|^2\right)^{-1}.
$$
In conclusion, $|\delta_{1\pm}|$ is bounded and hence, by the maximum principle $|\delta_1(k)|$ is bounded. Finally, using the fact of symmetry, it is concluded that $|\delta_1^{-1}(k)|$ is bounded.
\par
(3). The facts that $r_1(k)$ rapidly decays and the integral of $\log_0(k)$ is well defined near the zero indicate that for any $\epsilon>0$ one can estimate $\chi_1(k_0)$ as
$$
\begin{aligned}
	|\chi_1( k)-\chi_1( k_0)|&=\left|\frac{1}{2 \pi i} \int_{k_0}^{\infty} \log_0(k-s)-\log_0(k_0-s) d \ln \left(1-\left|r_1(s)\right|^2\right) \right|\\
	&\leq C \int_{k_0}^{k_0+\epsilon} \left|\log_0(k-s)-\log_0(k_0-s)\right| |d \ln \left(1-\left|r_1(s)\right|^2\right)|\\
	&+\int_{k_0+\epsilon}^{\infty} \left|\log_0(k-s)-\log_0(k_0-s)\right| |d \ln \left(1-\left|r_1(s)\right|^2\right)|,
\end{aligned}
$$
where for the first term, we have
$$
\begin{aligned}
	&\int_{k_0}^{k_0+\epsilon} \left|\log_0(k-s)-\log_0(k_0-s)\right| |d \ln \left(1-\left|r_1(s)\right|^2\right)|\\
	&=\int_{k_0}^{k_0+\epsilon} \left|\log_0(k-s)-\log_0(k_0-s)\right|\frac{\partial_s|r_1(s)|}{1-\left|r_1(s)\right|^2} ds\\
	&\le\|\frac{\partial_s|r_1(s)|}{1-\left|r_1(s)\right|^2}\|_{L^{\infty}(k_0,k_0+\epsilon)}\int_{k_0}^{k_0+\epsilon} \left|\log_0(k-s)-\log_0(k_0-s)\right| ds\\
	&\leq C |\left((s-k_0)\log_0(s-k_0)+(k_0-s)|_{k_0}^{k_0+\epsilon}\right)-\left((s-k)\log_0(s-k)+(k-s)|_{k_0}^{k_0+\epsilon}\right)|\\
	&=C|\epsilon log_0(\epsilon)-\epsilon-(k_0-k+\epsilon)\log_0(k_0-k+\epsilon)+(k_0-k+\epsilon)+(k_0-k)\log_0(k_0-k)-(k_0-k)|\\
	&\le|k_0-k|(1+|ln(k-k_0)|).
\end{aligned}
$$
and for the second term
$$
\begin{aligned}
	&\int_{k_0+\epsilon}^{\infty} \left|\log_0(k-s)-\log_0(k_0-s)\right| |d \ln \left(1-\left|r_1(s)\right|^2\right)\\
	&\le |k-k_0|\int_{k_0+\epsilon}^{\infty} |\frac{1}{\eta-s}|d \ln \left(1-\left|r_1(s)\right|^2\right) \le|k-k_0|(1+\ln(k_0-k)).
\end{aligned}
$$
Furthermore, for the critical point $k_0=\sqrt[4]{\frac{x}{45t}}$, by the estimates above and taking the derivative with respect to $x$, one can get the estimate for $\partial_x\chi(k)$.
\par
Using the symmetry properties, and considering $\delta_1$ and $\delta_4$, define $\delta_j~(j=2,3,5,6)$ as follows
$$
\begin{array}{ll}
	\delta_3( k)=\delta_1\left( \omega^2 k\right), & k \in \mathbb{C} \backslash \Sigma_3^{(1)}, \\
	\delta_5( k)=\delta_1( \omega k), & k \in \mathbb{C} \backslash \Sigma_5^{(1)},\\
	\delta_2( k)=\delta_4\left( \omega^2 k\right), & k \in \mathbb{C} \backslash \Sigma_2^{(1)}, \\
	\delta_6( k)=\delta_4( \omega k), & k \in \mathbb{C} \backslash \Sigma_6^{(1)},
\end{array}
$$
which satisfy the jump conditions
$$
\begin{array}{ll}
	\delta_{3+}( k)=\delta_{3-}( k)\left(1-\left|r_1\left(\omega^2 k\right)\right|^2\right), & k \in \Sigma_3^{(1)}, \\
	\delta_{5+}( k)=\delta_{5-}( k)\left(1-\left|r_1(\omega k)\right|^2\right), & k \in \Sigma_5^{(1)},\\
	\delta_{2+}( k)=\delta_{2-}( k)\left(1-\left|r_2\left(\omega^2 k\right)\right|^2\right), & k \in \Sigma_2^{(1)}, \\
	\delta_{6+}( k)=\delta_{6-}( k)\left(1-\left|r_2(\omega k)\right|^2\right), & k \in \Sigma_6^{(1)}.
\end{array}
$$
\end{proof}

\begin{remark}
	 The function $\ln(k)$ in $\delta_n(k)$ selects the branch cut $\frac{(n-1)\pi}{3}<\arg(k)<2\pi+\frac{(n-1)\pi}{3}$ for $n=1,2,3$ and the branch cut $-\frac{(7-n)\pi}{3}<\arg(k)<2\pi-\frac{(7-n)\pi}{3}$ for $n=4,5,6$.
\end{remark}

Thus define the matrix function $\Theta(k)$ to factorize the RH problem. In particular,
$$
M^{(1)}(x, t, k)=M(x, t, k) \Theta(k),
$$
where $\Theta(k)$ is
$$
\Theta(k)=\left(\begin{array}{ccc}
	\frac{\delta_1( k)\delta_6( k)}{\delta_3( k)\delta_4( k)} & 0 & 0 \\
	0 & \frac{\delta_5( k)\delta_4( k)}{\delta_1( k)\delta_2( k)} & 0 \\
	0 & 0 & \frac{\delta_3( k)\delta_2( k)}{\delta_5( k)\delta_6( k)}
\end{array}\right)=
\left(\begin{array}{ccc}
	\Theta_1(k) & 0 & 0 \\
	0 & \Theta_2(k) & 0 \\
	0 & 0 & \Theta_3(k)
\end{array}\right)
$$
with $\Theta( k)=I+O\left(k^{-1}\right) \text {as} k \rightarrow \infty$.
\par
The jump matrix is $v^{(1)}(x,t;k)=\Theta^{-1}_-v(x,t;k)\Theta_+$, and one can get for $|k|>k_0$ as follows
\begin{align*}
v_1^{(1)}&=\left(\begin{array}{ccc}
	\frac{\Theta_{1+}}{\Theta_{1-}} & -\frac{\Theta_{2+} }{\Theta_{1-}} r_1(k) e^{-t \Phi_{21}} & 0 \\
	\frac{\Theta_{1+} }{\Theta_{2-}} r_1^*(k) e^{t \Phi_{21}} & \frac{\Theta_2-}{\Theta_{2+}}\left(1-r_1(k) r_1^*(k)\right) & 0 \\
	0 & 0 & 1
\end{array}\right)
=\left(\begin{array}{ccc}
	\frac{\delta_{1+}}{\delta_{1-}} & -\frac{\tilde\delta_{v_1}}{\delta_{1-}\delta_{1+}} r_1(k) e^{-t \Phi_{21}} & 0 \\
	\frac{\delta_{1+}\delta_{1-} }{\tilde\delta_{v_1}} r_1^*(k) e^{t \Phi_{21}} & \frac{\delta_{1-}}{\delta_{1+}}\left(1-r_1(k) r_1^*(k)\right) & 0 \\
	0 & 0 & 1
\end{array}\right)\\
&=\left(\begin{array}{ccc}
	1-|r_1(k)|^2 & -\frac{\tilde\delta_{v_1}}{\delta_{1-}^2}\frac{r_1(k)}{1-|r_1(k)|^2} e^{-t \Phi_{21}} & 0 \\
	\frac{\delta_{1+}^2 }{\tilde\delta_{v_1}} \frac{r_1^*(k)}{1-|r_1(k)|^2} e^{t \Phi_{21}} & 1 & 0 \\
	0 & 0 & 1
\end{array}\right),
\end{align*}
and
$$
v_{4}^{(1)}=\left(\begin{array}{ccc}
	1 & -\frac{\delta_{4+}^2}{\tilde\delta_{v_4}}\frac{r_{2}^{*}(k)}{1-|r_2(k)|^2} e^{-t\Phi_{21}} & 0 \\
	\frac{\tilde\delta_{v_4}}{\delta_{4-}^2}\frac{r_{2}(k)}{1-|r_2(k)|^2} e^{t\Phi_{21}} & 1-\left|r_{2}(k)\right|^{2} & 0 \\
	0 & 0 & 1
\end{array}\right),
$$
where $\tilde\delta_{v_1}=\frac{\delta_3\delta^2_4\delta_5}{\delta_6\delta_2}$ and $\tilde\delta_{v_4}=\frac{\delta_1^2\delta_2\delta_6}{\delta_5\delta_3}$.

Furthermore, the other jump matrices can be obtained by the symmetry or direct calculations:

\begin{align*}
	&v_{3}^{(1)}=\left(\begin{array}{ccc}
	1 & 0 & \frac{\delta_{3+}^2}{\tilde\delta_{v_3}}\frac{r_{1}^{*}\left(\omega^{2} k\right)}{1-r_{1}\left(\omega^{2} k\right) r_{1}^{*}\left(\omega^{2} k\right)} e^{-\theta_{31}} \\
	0 & 1 & 0 \\
	-\frac{\tilde\delta_{v_3}}{\delta_{3-}^2}\frac{r_{1}\left(\omega^{2} k\right)}{1-r_{1}\left(\omega^{2} k\right) r_{1}^{*}\left(\omega^{2} k\right)} e^{\theta_{31}} & 0 & 1-r_{1}\left(\omega^{2} k\right) r_{1}^{*}\left(\omega^{2} k\right)
\end{array}\right),\\
&v_{5}^{(1)}=\left(\begin{array}{ccc}
	1 & 0 & 0 \\
	0 & 1-r_{1}(\omega k) r_{1}^{*}(\omega k) & -\frac{\tilde\delta_{v_5}}{\delta_{5-}^2}\frac{r_{1}(\omega k)}{1-r_{1}(\omega k) r_{1}^{*}(\omega k)} e^{-\theta_{32}} \\
	0 & \frac{\delta_{5+}^2}{\tilde\delta_{v_5}}\frac{r_{1}^{*}(\omega k)}{1-r_{1}(\omega k) r_{1}^{*}(\omega k)} e^{\theta_{32}} & 1
\end{array}\right),
\end{align*}

and

\begin{align*}
	&v_{2}^{(1)}=\left(\begin{array}{ccc}
	1 & 0 & 0 \\
	0 & 1 & -\frac{\delta_{2+}^2}{\tilde\delta_{v_2}}\frac{r_{2}^{*}(\omega k)}{1-r_{2}(\omega k) r_{2}^{*}(\omega k)} e^{-\theta_{32}} \\
	0 & \frac{\tilde\delta_{v_2}}{\delta_{2-}^2}\frac{r_{2}(\omega k)}{1-r_{2}(\omega k) r_{2}^{*}(\omega k)} e^{\theta_{32}} & 1-r_{2}(\omega k) r_{2}^{*}(\omega k)
\end{array}\right),\\
&v_{6}^{(1)}=\left(\begin{array}{ccc}
	1-r_{2}\left(\omega^{2} k\right) r_{2}^{*}\left(\omega^{2} k\right) & 0 & \frac{\tilde\delta_{v_6}}{\delta_{6-}^2}\frac{r_{2}\left(\omega^{2} k\right)}{1-r_{2}\left(\omega^{2} k\right) r_{2}^{*}\left(\omega^{2} k\right)} e^{-\theta_{31}} \\
	0 & 1 & 0 \\
	-\frac{\delta_{6+}^2}{\tilde\delta_{v_6}}\frac{r_{2}^{*}\left(\omega^{2} k\right)}{1-r_{2}\left(\omega^{2} k\right) r_{2}^{*}\left(\omega^{2} k\right)} e^{\theta_{31}} & 0 & 1
\end{array}\right).
\end{align*}

On the other hand, since on the $\Sigma_{j=7,8,\cdots,12}$ the $\delta_j$ has no jump, the jump matrix $v_{\{j=7,8,\cdots, 12\}}$ are just as follows
$$
v_7^{(1)}=\left(\begin{array}{ccc}
	\frac{\Theta_{1+}}{\Theta_{1-}} & -\frac{\Theta_{2+} }{\Theta_{1-}} r_1(k) e^{-t \Phi_{21}} & 0 \\
	\frac{\Theta_{1+} }{\Theta_{2-}} r_1^*(k) e^{t \Phi_{21}} & \frac{\Theta_2-}{\Theta_{2+}}\left(1-r_1(k) r_1^*(k)\right) & 0 \\
	0 & 0 & 1
\end{array}\right)\\
=\left(\begin{array}{ccc}
	1 & -\frac{\tilde\delta_{v1} }{\delta_{1}^2} r_1(k) e^{-t \Phi_{21}} & 0 \\
	\frac{\delta_{1}^2 }{\tilde\delta_{v1}} r_1^*(k) e^{t \Phi_{21}} & 1-r_1(k) r_1^*(k)& 0 \\
	0 & 0 & 1
\end{array}\right),\\
$$
and
$$
v_9^{(1)}=\left(\begin{array}{ccc}
	1-r_{1}\left(\omega^{2} k\right) r_{1}^{*}\left(\omega^{2} k\right) & 0 & \frac{\delta_{3}^2}{\tilde\delta_{v_3}}r_{1}^{*}\left(\omega^{2} k\right) e^{-\theta_{31}} \\
	0 & 1 & 0 \\
	-\frac{\tilde\delta_{v_3}}{\delta_{3}^2}r_{1}\left(\omega^{2} k\right) e^{\theta_{31}} & 0 & 1
\end{array}\right)\\
v_{11}^{(1)}=\left(\begin{array}{ccc}
	1 & 0 & 0 \\
	0 & 1 & -\frac{\tilde\delta_{v_5}}{\delta_{5}^2}{r_{1}(\omega k)} e^{-\theta_{32}} \\
	0 & \frac{\delta_{5}^2}{\tilde\delta_{v_5}}{r_{1}^{*}(\omega k)} e^{\theta_{32}} & 1-r_{1}(\omega k) r_{1}^{*}(\omega k)
\end{array}\right).
$$
Moreover, one has
$$
v_{8}^{(1)}=\left(\begin{array}{ccc}
	1 & 0 & 0 \\
	0 & 1-r_{2}(\omega k) r_{2}^{*}(\omega k) & -\frac{\delta_{2}^2}{\tilde\delta_{v_2}}{r_{2}^{*}(\omega k)} e^{-\theta_{32}} \\
	0 & \frac{\tilde\delta_{v_2}}{\delta_{2}^2}{r_{2}(\omega k)} e^{\theta_{32}} & 1
\end{array}\right),\\
v_{10}^{(1)}=\left(\begin{array}{ccc}
	1-\left|r_{2}(k)\right|^{2} & -\frac{\delta_{4}^2}{\tilde\delta_{v_4}}{r_{2}^{*}(k)} e^{-t\Phi_{21}} & 0 \\
	\frac{\tilde\delta_{v_4}}{\delta_{4}^2}{r_{2}(k)} e^{t\Phi_{21}} & 1 & 0 \\
	0 & 0 & 1
\end{array}\right),
$$
$$
v_{12}^{(1)}=\left(\begin{array}{ccc}
	1 & 0 & \frac{\tilde\delta_{v_6}}{\delta_{6}^2}{r_{2}\left(\omega^{2} k\right)} \\
	0 & 1 & 0 \\
	-\frac{\delta_{6}^2}{\tilde\delta_{v_6}}{r_{2}^{*}\left(\omega^{2} k\right)} e^{\theta_{31}} & 0 & 1-\left|r_{2}(k)\right|^{2}
\end{array}\right).
$$
\par
Next, suppose
$$
\rho_1(k)=\frac{r_1(k)}{1-r_1(k)r_1^{*}(k)},\quad \rho_2(k)=\frac{r_2(k)}{1-r_2(k)r_2^{*}(k)}.
$$
It is necessary to decompose the functions $r_j, r_j^*$ and $\rho_j(k)$ for analytic extension.

\begin{lemma}
	 The functions $r_j$ and $\rho_j~(j=1,2)$ have the following decompositions
$$
\begin{array}{ll}
	r_1(k)=r_{1, a}(x, t, k)+r_{1, r}(x, t, k), & k \in\left[0, k_0\right], \\
	r_2^*(k)=r_{2, a}^*(x, t, k)+r_{2, r}^*(x, t, k), & k \in[-k_0,0], \\
	\rho_1(k)=\rho_{1, a}(x, t, k)+\rho_{1, r}(x, t, k), & k \in\left[k_0, \infty\right),\\
	\rho_2^*(k)=\rho_{2, a}^*(x, t, k)+\rho_{2, r}^*(x, t, k), & k \in\left( -\infty,k_0\right].
\end{array}
$$
Fig. \ref{regionU.pdf} shows the sign diagram of ${\rm Re}(\Phi_{21})$, where $U_1\cup U_3\cup U_5=\{k: {\rm Re}(\Phi_{21})<0\}$ and $U_2\cup U_4\cup U_6=\{k: {\rm Re}(\Phi_{21})>0\}$.
\end{lemma}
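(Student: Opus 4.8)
The plan is to follow the standard Deift--Zhou analytic/non-analytic splitting, adapted to the quartic phase $\Phi_{21}$ and to the sixfold symmetric geometry of Figure~\ref{regionU.pdf}. Because $r_1,r_2$ are smooth, rapidly decaying on their half-lines and extend smoothly to $k=0$, and because the four asserted decompositions are interchanged by the reflection $k\mapsto -k$ together with Schwarz conjugation $r\mapsto r^\ast$, I would first reduce the statement to just two model decompositions: that of $r_1$ on the bounded interval $[0,k_0]$ and that of $\rho_1=r_1/(1-r_1 r_1^\ast)$ on the unbounded interval $[k_0,\infty)$. The cases of $r_2^\ast$ on $[-k_0,0]$ and $\rho_2^\ast$ on the corresponding unbounded ray then follow from the same construction after applying $k\mapsto -k$ and conjugation, noting that $\rho_1$ is well defined and smooth on $[k_0,\infty)$ as soon as $k_0\ge\mathcal C$, since $|r_1|<1$ there.

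For the bounded piece I would Taylor-expand $r_1$ at the stationary point $k_0$ to order $m$ (with $m$ fixed later in terms of the decay rate required), declare the analytic part $r_{1,a}$ to be this Taylor polynomial damped by a rational factor of the form $\left(\frac{k-i}{k_0-i}\right)^{-N}$ so that the continuation has controlled growth and introduces no spurious singularities off $\mathbb R$, and set $r_{1,r}=r_1-r_{1,a}$. By construction $r_{1,r}$ is smooth and vanishes to order $m+1$ at $k_0$. Since on $[0,k_0]$ the sign of $\mathrm{Re}\,\Phi_{21}$, read off from Figure~\ref{regionU.pdf}, keeps $e^{-t\Phi_{21}}$ bounded, repeated integration by parts against the oscillatory factor—in which the high-order vanishing at $k_0$ kills the stationary-phase contribution—would give $\|r_{1,r}\|_{L^1\cap L^2\cap L^\infty}=O(t^{-\ell})$ for the prescribed $\ell$. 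The analyticity check is then that $r_{1,a}$ extends into the $U$-sector where $\mathrm{Re}(t\Phi_{21})$ has the sign rendering $r_{1,a}e^{-t\Phi_{21}}$ exponentially small away from $k_0$, which is exactly the region shaded in the figure.

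For the unbounded piece the same scheme applies to $\rho_1$, except that its analytic part must in addition decay as $k\to\infty$ along the deformed contour, since it is now paired with $e^{t\Phi_{21}}$, which is controlled only inside the $U$-sectors. I would therefore choose the exponent $N$ in the rational damping factor large enough to beat the polynomial growth of the Taylor part and to supply the $1/(1+|k|^2)$-type decay needed for the later small-norm estimates of the Riemann--Hilbert problem, while the smooth remainder $\rho_{1,r}$ again vanishes to high order at $k_0$ and inherits the rapid decay of $\rho_1$ at infinity.

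The main obstacle I anticipate lies precisely in this unbounded interval: one must simultaneously keep $\rho_{1,a}$ analytic and decaying at infinity inside the correct sector of Figure~\ref{regionU.pdf}, and obtain a $t$-uniform, rapidly decaying bound on $\rho_{1,r}$ across the entire half-line rather than only near $k_0$. Reconciling the local stationary-phase gain at $k_0$ with the global decay at infinity—so that the splitting holds uniformly for $\xi=x/t$ in a compact set and, in particular, as $k_0$ grows—is where the careful choice of $m$ and $N$ together with the precise $L^1\cap L^2\cap L^\infty$ bookkeeping will be needed.
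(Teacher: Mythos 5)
Your proposal reproduces the outer architecture of the paper's proof (symmetry reduction to the two model cases, Taylor expansion at $k_0$, a rational factor $(k+i)^{-N}$ to handle decay of $\rho_1$ at infinity), but the core mechanism is missing, and the step you replace it with is false. You define $r_{1,a}$ as the Taylor polynomial at $k_0$ times a $t$-independent damping factor and set $r_{1,r}=r_1-r_{1,a}$; this remainder does not depend on $(x,t)$. On the real contour, however, the phase is purely imaginary: $\Phi_{21}(k)=(\alpha^2-\alpha)k\xi+(\alpha-\alpha^2)9k^5=i\sqrt{3}\,(9k^5-k\xi)$ for real $k$, so $|e^{-t\Phi_{21}(k)}|\equiv 1$ on $(0,k_0)$ and hence $\|r_{1,r}e^{-t\Phi_{21}}\|_{L^p(0,k_0)}=\|r_{1,r}\|_{L^p(0,k_0)}$ is a constant, independent of $t$. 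No choice of a $t$-independent remainder can therefore satisfy the required bound $\|\partial_x^l r_{1,r}e^{-t\Phi_{21}}\|_{L^p(0,k_0)}\leq c\,t^{-3/2}$. Repeated integration by parts cannot rescue this: oscillation produces decay of \emph{integrals} of $r_{1,r}e^{-t\Phi_{21}}$ against smooth functions, not decay of the $L^p$ size of the function itself, and the high-order vanishing of $r_{1,r}$ at $k_0$ is irrelevant to this obstruction.

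The missing idea, which is exactly what the paper's proof supplies, is that the decomposition itself must be $(x,t)$-dependent. After subtracting the Taylor polynomial $R(k)$ and normalizing by $\beta(k)=(k-k_0)^q$ (on the unbounded ray, $\beta(k)=(k-k_0)^q/(k+i)^{q+2}$ together with pre-multiplying $\rho_1$ by $(k+i)^{n+5}$ — this is the correct role of your rational factor), the paper views $h=r_1-R$ as a function of the phase variable $\theta$, proves $h/\beta\in H^j(\mathbb{R})$ in that variable, and splits its Fourier representation at the dual point $s=t$: writing $h(k)=\beta(k)\int_{t}^{\infty}e^{is\theta(k)}\widehat{(h/\beta)}(s)\,ds+\beta(k)\int_{-\infty}^{t}e^{is\theta(k)}\widehat{(h/\beta)}(s)\,ds=:h_1(k)+h_2(k)$. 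The tail piece $h_1$ is the remainder $r_{1,r}$; by Cauchy--Schwarz and Plancherel it is $O(t^{-(j-\frac{1}{2})})$ uniformly in $k$, the decay in $t$ coming from the integration range $s\geq t$, not from oscillation on the contour. The piece $h_2$ extends off the axis because for $s\leq t$ the factor $e^{i(s-t)\theta(k)}$ is bounded precisely in the sector $\bar U_2$ where $e^{-2it\theta}$ is controlled, and the analytic part is $r_{1,a}=R+h_2$, not the Taylor polynomial alone. Without this Fourier-splitting step (or an alternative such as recasting the problem as a $\bar\partial$-RH problem, which your proposal does not do since it keeps a remainder on the contour), the estimates in the lemma cannot be obtained.
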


\begin{figure}[!h]
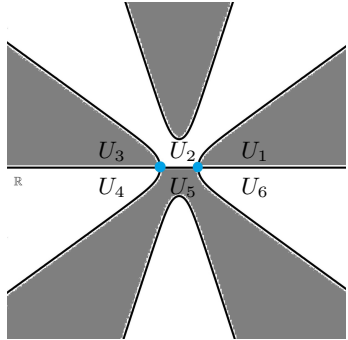

	\centering
	\begin{overpic}[width=.45\textwidth]{sign.pdf}
		\put(63.6,37.9){\small $U_1$}
		\put(48.5,37.9){\small $U_2$}
		\put(33.5,37.9){\small $U_3$}
		\put(63.6,30.3){\small $U_6$}
		\put(48.5,30.3){\small $U_5$}
		\put(33.5,30.3){\small $U_4$}
	\end{overpic}
	\caption{{\protect\small
			The shaded region is ${\rm Re}(\Phi_{21})<0$ and white is ${\rm Re}(\Phi_{21})>0$.}}
	\label{regionU.pdf}
\end{figure}

Furthermore, the decomposition functions have the following properties:

(1). For $0<k_0<M$, $r_{1,a}$ and $r_{2,a}^*$ are defined and continuous for $k\in \bar U_2$  and analytic for $k\in U_2$, but the domain for $r_{1,a},r_{2,a}^*$ are restricted by $0\le\operatorname{Re}(k)\le k_0$ and $-k_0\le\operatorname{Re}(k)\le 0$, respectively. The function $\rho_{1,a}^*$ is defined and continuous for $k\in \bar U_1$  and analytic for $k\in U_1$, $\rho_{2,a}$ are defined and continuous for $k\in \bar U_3$  and analytic for $k\in U_3$.

(2). For $0<k_0<M$, the functions $r_{1,a},r_{2,a}^*$ and $\rho_{1,a}^*,\rho_{2,a}$ satisfy the following estimates:

$$
\left|\partial_x^l(r_{1, a}(x, t, k)-r_1(0))\right| \leq C|k|{e^{-t\operatorname{Re} \Phi_{21}( k)/2}},\quad k\in \bar U_2~ \text{and}~ {\rm Re}(k)\in[0,k_0];\\
$$
$$
\left|\partial_x^l(r_{1, a}(x, t, k)-r_1(k_0))\right| \leq C|k-k_0|{e^{-t\operatorname{Re} \Phi_{21}( k)/2}},\quad k\in \bar U_2\ \text{and}~ {\rm Re}(k)\in[0,k_0];\\
$$
and
$$
\left|\partial_x^l(r_{2, a}^*(x, t, k)-r_2^*(0))\right| \leq C|k|{e^{-t\operatorname{Re} \Phi_{21}( k)/2}},\quad k\in \bar U_2 ~\text{and}~\operatorname{Re}(k)\in[-k_0,0];\\
$$
$$
\left|\partial_x^l(r_{2, a}^*(x, t, k)-r_2^*(-k_0))\right| \leq C|k+k_0|{e^{-t\operatorname{Re} \Phi_{21}( k)/2}},\quad k\in \bar U_2\ \text{and}\ \operatorname{Re}(k)\in[-k_0,0].\\
$$
Furthermore,
$$
\left|\partial_x \rho_{1, a}(x, t, k)\right| \leq \frac{Ce^{-t\operatorname{Re} \Phi_{21}( k)/2}}{1+|k|}\quad k\in \bar U_6,\\
\left|\partial_x(\rho_{1, a}(x, t, k)-\rho_1(k_0))\right| \leq {C|k-k_0|e^{-t\operatorname{Re} \Phi_{21}( k)/2}} \quad k \in \bar U_6.
$$
and
$$
\left|\partial_x \rho_{2, a}(x, t, k)\right| \leq \frac{Ce^{t\operatorname{Re} \Phi_{21}( k)/2}}{1+|k|}\quad k\in \bar U_3,\\
\left|\partial_x(\rho_{2, a}(x, t, k)-\rho_2(-k_0))\right| \leq {C|k+k_0|e^{t\operatorname{Re} \Phi_{21}( k)/2}} \quad k \in \bar U_3.
$$

(3). For $0<k_0<M$ and $1\le p\le\infty$,  the functions  $r_{j,r} $ and $\rho_{j,r}$ satisfy

$$
\left\|\partial_x^l r_{1, r}(x, t, k)e^{-t\Phi_{21}}\right\|_{L^p{(0,k_0)}} \leq \frac{c}{t^{3/2}}\quad 0<k<k_0,\\
\left\|\partial_x^l r_{2, r}^*(x, t, k)e^{-t\Phi_{21}}\right\|_{L^p(-k_0,0)} \leq \frac{c}{t^{3/2}}\quad -k_0<k<0,
$$

and
$$
\left\|(1+|\cdot|)\partial_x^l\rho_{1, r}(x, t, \cdot)e^{-t\Phi_{21}}\right\|_{L^p[k_0,\infty)} \leq \frac{c}{t^{3/2}},\\
\left\|(1+|\cdot|)\partial_x^l\rho_{2, r}(x, t, \cdot)e^{t\Phi_{21}}\right\|_{L^p(-\infty,-k_0]} \leq \frac{c}{t^{3/2}}.
$$
Proof.

Recall that the reflection coefficient $r_1(k)$ is smooth for $k\in[0,\infty)$, so that we can take Taylor expansion about $r_1(k)$ near $k_0$ point:
$$
r_1(k)=r_1(k_0)+\partial_kr_1|_{k=k_0}(k-k_0)+\cdots+\frac{1}{n!}\int_{k_0}^k \partial_k^{(n+1)}r_1(\tau)(k-\tau)^nd\tau.
$$
Let $R(k)=\sum_{i=0}^n\partial_k^ir_1|_{k=k_0}(k-k_0)^i$  and $h(k):=r_1(k)-R(k)$, now we split the $h(k)$ into $h(k)=h_1(k)+h_2(k)$.

Set
$$
\beta(k)=(k-k_0)^q.
$$
Rewrite $\theta_{21}$ as
$$
\theta_{21}=2it\left(\frac{9\sqrt{3}k^5-\sqrt{3}k\xi}{2}\right):=2it\theta(\xi,k)
$$
For $k\in[0,k_0]$, As $k\to\theta(\xi,k)$ is one-to-one, we define
$$
\left\{
\begin{aligned}
	\frac{h}{\beta}(\theta)&:=\frac{h(\theta(k))}{\beta(\theta(k))}, &&-18\sqrt{3}k_0^5=\theta(k_0)<\theta<0,\\
	&:=0,&&\text{Otherwise}.
\end{aligned}
\right.
$$
And rewrite $h(k)$ as
$$
\begin{aligned}
	h(k)&=\frac{1}{n!}\int_{k_0}^k \partial_k^{(n+1)}r_1(\tau)(k-\tau)^nd\tau\\
	&=\frac{(k-k_0)^{n+1}}{n!}\int_{0}^1 \partial_k^{(n+1)}r_1(k_0+\mu(k-k_0))(1-\mu)^nd\mu
\end{aligned}
$$
then
$$
\frac{h}{\beta}(\theta)=(k-k_0)^{n-q+1}g(k_0,k),\quad \theta(k_0)<\theta<0,
$$
where
$$
g(k_0,k)=\frac{1}{n!}\int_{0}^1 \partial_k^{(n+1)}r_1(k_0+\mu(k-k_0))(1-\mu)^nd\mu.
$$
Since the reflect coefficient $r_1$ belong to Schwartz space, so that $|\partial_kg(k_0,k)|\leq\infty$, and as $\frac{dk}{d\theta}=\frac{2}{45\sqrt{3}(k^4-k_0^4)}$ and $0<k_0<M$, we can estimate that
$$
\begin{aligned}
	\int_{0}^{k_0}\left| \left(\frac{d}{d\theta}\right)^j\frac{h}{\beta}(k)\right|^2|\bar d\theta(k)|
	&\leq C\int_{0}^{k_0}\left| \left(\frac{d}{dk}\frac{1}{|k^4-k_0^4|}\right)^j\frac{h}{\beta}(k)\right|^2|k^4-k_0^4|\bar dk\\
	&= C\int_{0}^{k_0}\left| \left(\frac{1}{|k^4-k_0^4|}\frac{d}{dk}\right)^j(k-k_0)^{n-q+1}g(k_0,k)\right|^2|k^4-k_0^4|\bar dk\\
	&\leq C\int_{0}^{k_0}\left| (k-k_0)^{2n-2q-4j+3}\right|\bar dk<\infty,
\end{aligned}
$$
for the $0\leq j \le\frac{2n-2q+4}{4}$.
\par
Thus one can conclude that $\frac{h}{\beta}(\theta)\in H^j(\R)$ for $0\leq j \le\frac{2n-2q+4}{4}$, and one can define the Fourier inversion transform
$$
\frac{h}{\beta}(k)=\int_{-\infty}^{\infty}e^{is\theta(k)}\widehat{\left(\frac{h}{\beta}\right)}(s)\bar ds,\quad k\in [0,k_0],
$$
where
$$
\widehat{\left(\frac{h}{\beta}\right)}(s)=-\int_0^{k_0}e^{-is\theta(k)}\left(\frac{h}{\beta}\right)(k)\bar{d}\theta(k),\quad s\in \R.
$$
Because $\frac{h}{\beta}(\theta)\in H^j(\R)$ and by the Plancherel formula, we have that
$$
\int_{-\infty}^{\infty}(1+s^2)^j\left|\widehat{\left(\frac{h}{\beta}\right)}(s)\right|^2ds\le C.
$$
Now split the $h(k)$ as
$$
{h}(k)={\beta}(k)\int_{t}^{\infty}e^{is\theta(k)}\widehat{\left(\frac{h}{\beta}\right)}(s)\bar ds+{\beta}(k)\int_{-\infty}^{t}e^{is\theta(k)}\widehat{\left(\frac{h}{\beta}\right)}(s)\bar ds\\
:=h_1(k)+h_2(k),\quad k\in [0,k_0].
$$
Then for $k\in[0,k_0]$, one has

\begin{equation}
	\begin{split}
		\begin{aligned} |e^{-2it\theta(k)}h_1(k)|&=\left|e^{-2it\theta(k)}{\beta}(k)\int_{t}^{\infty}e^{is\theta(k)}\widehat{\left(\frac{h}{\beta}\right)}(s)\bar ds\right|\\
	&\leq \left|e^{-it\theta(k)}{\beta}(k)\right|\int_{t}^{\infty}\left|e^{i(s-t)\theta(k)}\widehat{\left(\frac{h}{\beta}\right)}(s)\right|\bar ds\\
	&=|\beta(k)|\int_{t}^{\infty}\left|\widehat{\left(\frac{h}{\beta}\right)}(s)\right|\bar ds\\
&\leq C\left(\int_{t}^{\infty}\left|(1+s^2)^j\widehat{\left(\frac{h}{\beta}\right)}(s)\right|^2 ds\right)^{\frac{1}{2}}\left(\int_{t}^{\infty}\left|(1+s^2)^{-j}\right|^2 ds\right)^{\frac{1}{2}}\\
	&\leq \frac{C}{t^{j-\frac{1}{2}}}.	
   \end{aligned}
	\end{split}
\end{equation}

On the other hand, for $k\in U_2$ and $\operatorname{Re}(k)\in[0,k_0]$, we have
$$
\begin{aligned}
	|e^{-2it\theta(k)}h_2(k)|&=\left|e^{-2it\theta(k)}{\beta}(k)\int_{-\infty}^{t}e^{is\theta(k)}\widehat{\left(\frac{h}{\beta}\right)}(s)\bar ds\right|\\
	&=\left|e^{-it\theta(k)}{\beta}(k)\int_{-\infty}^{t}e^{i(s-t)\theta(k)}\widehat{\left(\frac{h}{\beta}\right)}(s)\bar ds\right|\\
	&\le C\left|e^{-it\theta(k)}{\beta}(k)\right|\int_{-\infty}^{t}e^{(s-t)\operatorname{Re}(i\theta(k))}\left|\widehat{\left(\frac{h}{\beta}\right)}(s)\right| ds\\
	&\le Ce^{-t\operatorname{Re}(i\theta(k))}.
\end{aligned}
$$
Moreover, by directly computation one can get  $|e^{-2it\theta(k)}R(k)|\le Ce^{-2t\operatorname{Re}(i\theta(k))}$, for $k\in U_2$ and $\operatorname{Re}(k)\in[0,k_0]$.
\par
Finally, we can define $r_{1,a}(k):=h_2(k)+R(k)$ which can be analytically extent to the $U_2$ with $\operatorname{Re}(k)\in[0,k_0]$ and $r_{1,r}(k):=h_1(k)$ define on $0\le k\le k_0$.
\par
Take $n=2,q=1$ for the estimate above, then we can get that
$$
R(k)=r_1(k_0)+\partial_kr_1(k_0)(k-k_0)+\partial_k^2r_1(k_0)(k-k_0)^2,\\
h_2(k)=(k-k_0)\int_{-\infty}^{t}e^{is\theta(k)}\widehat{\left(\frac{h}{\beta}\right)}(s)\bar ds,
$$
moreover,
$$
|r_{1,a}(k)-r_1(k_0)|\le C|k-k_0|e^{-\operatorname{Re}\theta_{21}(k)}.
$$
By the directly computation, one can get
$$
|\partial_x(r_{1,a}(k)-r_1(0))|\leq C|k-k_0|e^{-\operatorname{Re}\theta_{21}(k)}.
$$
The estimate between $r_{1,a}(k)$ and $r_1(0)$ need us to make Taylor expansion near $k=0$, the procedure is similar, so we omitted it.

Again, by the Taylor's expansion, we have that
$$
(k+i)^{n+5}\rho_1^*(k)=\gamma_0+\gamma_1(k-k_0)+\cdots+\gamma_n(k-k_0)^n+\frac{1}{n!}
\int_{k_0}^k((\cdot+i)\rho_1(\cdot))^{(n+1)}(\mu)(k-\mu)^nd\mu,
$$
and define
$$
R(k)=\frac{\sum_{i=0}^{n}\gamma_i(k-k_0)^i}{(k+i)^{n+5}},h(k)=\rho_1^*(k)-R(k).
$$
Set
$$
\beta(k)=\frac{(k-k_0)^q}{(k+i)^{q+2}}.
$$
By the Fourier transform
$$
\frac{h}{\beta}=\int_{-\infty}^{\infty}e^{is\theta(k)}\widehat{\left(\frac{h}{\beta}\right)}(s)\bar ds,
$$
where
$$
\widehat{\frac{h}{\beta}}(s)=\int_{k_0}^{\infty}e^{-is\theta(k)}{\left(\frac{h}{\beta}\right)}(k)\bar d\theta(k).
$$
As the same procedure, we estimate that
$$
\frac{h}{\beta}=\frac{(k-k_0)^{n-q+1}}{(k+i)^{n-q+3}}g(k_0;k).
$$
Here
$$
g(k_0,k)=\frac{1}{n!}\int_{0}^1 ((\cdot+i)^{n+5}\rho_1(\cdot))^{(n+1)}(k_0+\mu(k-k_0))(1-\mu)^nd\mu
$$
and $|\partial_kg(k_0,k)|<\infty$.

Now, we obtain that
$$
\begin{aligned}
	\int_{k_0}^{\infty}\left| \left(\frac{d}{d\theta}\right)^j\frac{h}{\beta}(k)\right|^2|\bar d\theta(k)|
	&\leq C\int_{k_0}^{\infty}\left| \left(\frac{d}{dk}\frac{1}{|k^4-k_0^4|}\right)^j\frac{h}{\beta}(k)\right|^2|k^4-k_0^4|\bar dk\\
	&= C\int_{k_0}^{\infty}\left| \left(\frac{1}{|k^4-k_0^4|}\frac{d}{dk}\right)^j\frac{(k-k_0)^{n-q+1}}{(k+i)^{n-q+3}}g(k_0,k)\right|^2|k^4-k_0^4|\bar dk\\
	&\leq C\int_{k_0}^{\infty}\left| \frac{(k-k_0)^{2n-2q-4j+3}}{(k+i)^{2n-2q+6}}\right|\bar dk<\infty,
\end{aligned}
$$
for $j\le\frac{n-q+2}{2}$, and we can conclude that $\frac{h}{\beta}\in H^j(k_0,\infty]$.

By the Plancherel formula, we can get that
$$
\int_{-\infty}^{\infty}(1+s^2)^j\left|\widehat{\left(\frac{h}{\beta}\right)}(s)\right|^2ds\le C.
$$
As the before process, split $h$ into two part
$$
{h}(k)={\beta}(k)\int_{t}^{\infty}e^{is\theta(k)}\widehat{\left(\frac{h}{\beta}\right)}(s)\bar ds+{\beta}(k)\int_{-\infty}^{t}e^{is\theta(k)}\widehat{\left(\frac{h}{\beta}\right)}(s)\bar ds\\
:=h_1(k)+h_2(k),\quad k\in [k_0,\infty).
$$
For $k\ge k_0$, we have that
$$
|e^{-it\theta(k)}h_1(k)|\le\frac{C}{|k+i|^2t^{j-\frac{1}{2}}},
$$
on the other hand, for $k\in U_6$,
$$
|e^{-2it\theta(k)}h_2(k)|\le\frac{|k-k_0|^q e^{\operatorname{Re}(-it\theta(k))}}{|k+i|^{q+2}}.
$$
Let $\rho_{1,a}(x,t;k)=R(k)+h_2(k)$ and $\rho_{1,r}=h_1(k)$. Compare with the same procedure, we can obtain the estimate in the lemma.
\par
\begin{figure}[!h]
	\centering
	\includegraphics[scale=0.4]{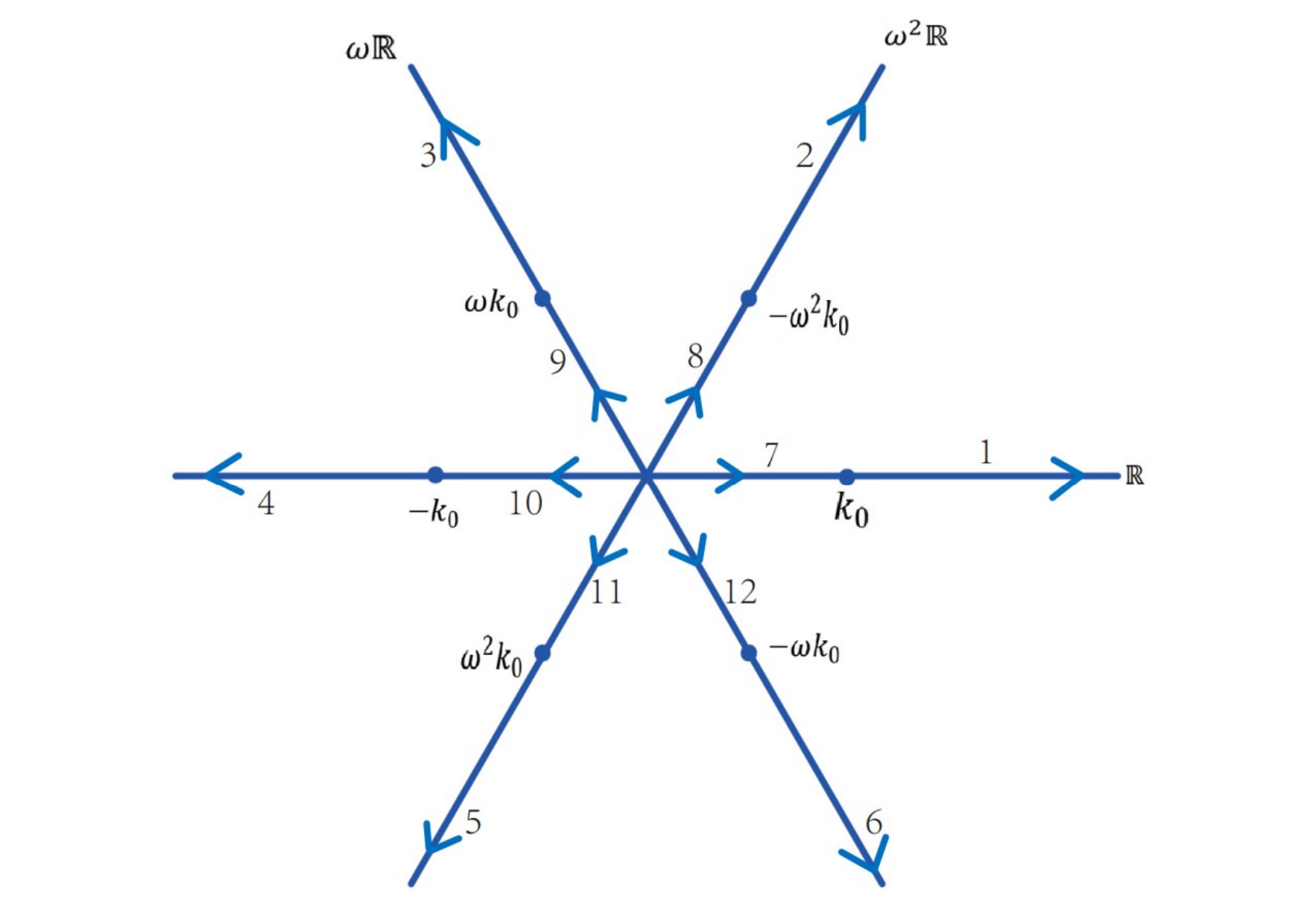}
	\caption{{\protect\small
			The first transformation of RH problem.}}
	\label{first-transform.pdf}
\end{figure}

\subsection{The second transform.}

Now, we focus on splitting the jump at the critical points $\pm k_0$ by recalling that
$$
v_1^{(1)}(x,t;k)=\left(\begin{array}{ccc}
	1-|r_1(k)|^2 & -\frac{\tilde\delta_{v_1}}{\delta_{1-}^2}\frac{r_1(k)}{1-|r_1(k)|^2} e^{-t \Phi_{21}} & 0 \\
	\frac{\delta_{1+}^2 }{\tilde\delta_{v_1}} \frac{r_1^*(k)}{1-|r_1(k)|^2} e^{t \Phi_{21}} & 1 & 0 \\
	0 & 0 & 1
\end{array}\right)=v^{(1)}_{1,lower}\ v_{1,r}^{(1)}\ v^{(1)}_{1,upper},
$$
where
$$
v^{(1)}_{1,lower}=\left(\begin{array}{ccc}
	1 & -\frac{\tilde\delta_{v_1}}{\delta_{1-}^2}\rho_{1,a} e^{-t \Phi_{21}} & 0 \\
	0 & 1 & 0 \\
	0 & 0 & 1
\end{array}\right),\quad
v^{(1)}_{1,upper}=\left(\begin{array}{ccc}
	1 & 0 & 0 \\
	\frac{\delta_{1+}^2 }{\tilde\delta_{v_1}} \rho^*_{1,a} e^{t \Phi_{21}} & 1 & 0 \\
	0 & 0 & 1
\end{array}\right),
$$
and
$$
v_{1,r}^{(1)}(x,t;k)=\left(\begin{array}{ccc}
	1-\frac{\delta_{+}^2}{\delta_{1-}^2}\rho_{1,r}(k)\rho^*_{1,r}(k) & -\frac{\tilde\delta_{v_1}}{\delta_{1-}^2}\rho_{1,r} e^{-t \Phi_{21}} & 0 \\
	\frac{\delta_{1+}^2 }{\tilde\delta_{v_1}} \rho_{1,r}^*  e^{t \Phi_{21}} & 1 & 0 \\
	0 & 0 & 1
\end{array}\right).
$$
Moreover, we have
$$
v_{7}^{(1)}=\left(\begin{array}{ccc}
	1 & -\frac{\tilde\delta_{v1} }{\delta_{1}^2} r_1(k) e^{-t \Phi_{21}} & 0 \\
	\frac{\delta_{1}^2 }{\tilde\delta_{v1}} r_1^*(k) e^{t \Phi_{21}} & 1-r_1(k) r_1^*(k)& 0 \\
	0 & 0 & 1
\end{array}\right)=v^{(1)}_{7,lower}\ v_{7,r}^{(1)}\ v^{(1)}_{7,upper},
$$
where
$$
v^{(1)}_{7,lower}=\left(\begin{array}{ccc}
	1 & 0 & 0 \\
	\frac{\delta_{1}^2 }{\tilde\delta_{v1}}r^*_{1,a} e^{t \Phi_{21}} & 1 & 0 \\
	0 & 0 & 1
\end{array}\right),\quad
v^{(1)}_{7,upper}=\left(\begin{array}{ccc}
	1 & -\frac{\tilde\delta_{v1} }{\delta_{1}^2}r_{1,a} e^{-t \Phi_{21}} & 0 \\
	0 & 1 & 0 \\
	0 & 0 & 1
\end{array}\right),
$$
and
$$
v_{7,r}^{(1)}=\left(\begin{array}{ccc}
	1 & -\frac{\tilde\delta_{v1} }{\delta_{1}^2} r_{1,r}(k) e^{-t\Phi_{21}} & 0 \\
	\frac{\delta_{1}^2 }{\tilde\delta_{v1}}r_{1,r}^{*}(k) e^{t\Phi_{21}} & 1-r_{1,r}(k)r^*_{1,r}(k) & 0 \\
	0 & 0 & 1
\end{array}\right).
$$
The same procedure yields
$$
v_{4}^{(1)}=\left(\begin{array}{ccc}
	1 & -\frac{\delta_{4+}^2}{\tilde\delta_{v_4}}\frac{r_{2}^{*}(k)}{1-|r_2(k)|^2} e^{-t\Phi_{21}} & 0 \\
	\frac{\tilde\delta_{v_4}}{\delta_{4-}^2}\frac{r_{2}(k)}{1-|r_2(k)|^2} e^{t\Phi_{21}} & 1-\left|r_{2}(k)\right|^{2} & 0 \\
	0 & 0 & 1
\end{array}\right)=v^{(1)}_{4,upper}\ v_{4,r}^{(1)}\ v^{(1)}_{4,lower},
$$
where
$$
v^{(1)}_{4,upper}=\left(\begin{array}{ccc}
	1 & 0 & 0 \\
	\frac{\tilde\delta_{v_4}}{\delta_{4-}^2}\rho_{2,a} e^{t\Phi_{21}} & 1 & 0 \\
	0 & 0 & 1
\end{array}\right),\quad
v^{(1)}_{4,lower}=\left(\begin{array}{ccc}
	1 & -\frac{\delta_{4+}^2}{\tilde\delta_{v_4}}\rho_{2,a}^{*} e^{-t\Phi_{21}} & 0 \\
	0 & 1 & 0 \\
	0 & 0 & 1
\end{array}\right),
$$
and
$$
v_{4,r}^{(1)}=\left(\begin{array}{ccc}
	1 & -\frac{\delta_{4+}^2}{\tilde\delta_{v_4}}\rho^*_{2,r} e^{-t\Phi_{21}} & 0 \\
	\frac{\tilde\delta_{v_4}}{\delta_{4-}^2}\rho_{2,r} e^{t\Phi_{21}} & 1-\frac{\delta_{4+}^2}{\delta_{4-}^2}\rho_{2,r}\rho^*_{2,r} & 0 \\
	0 & 0 & 1
\end{array}\right).
$$
On the other hand, one has
$$
v_{10}^{(1)}=\left(\begin{array}{ccc}
	1-\left|r_{2}(k)\right|^{2} & -\frac{\delta_{4}^2}{\tilde\delta_{v_4}}r_{2}^{*}(k) e^{-t\Phi_{21}} & 0 \\
	\frac{\tilde\delta_{v_4}}{\delta_{4}^2}r_{2}(k) e^{t\Phi_{21}} & 1 & 0 \\
	0 & 0 & 1
\end{array}\right)=v^{(1)}_{10,upper}\ v_{10,r}^{(1)}\ v^{(1)}_{10,lower},
$$
with
$$
v^{(1)}_{10,upper}=\left(\begin{array}{ccc}
	1 & -\frac{\delta_{4}^2}{\tilde\delta_{v_4}}r_{2,a}^* e^{-t \Phi_{21}} & 0 \\
	0 & 1 & 0 \\
	0 & 0 & 1
\end{array}\right),\quad
v^{(10)}_{1,lower}=\left(\begin{array}{ccc}
	1 & 0 & 0 \\
	\frac{\tilde\delta_{v_4}}{\delta_{4}^2}r_{2,a} e^{t \Phi_{21}} & 1 & 0 \\
	0 & 0 & 1
\end{array}\right),
$$
and
$$
v_{10,r}^{(1)}=\left(\begin{array}{ccc}
	1-r_{2,r}(k)r^*_{2,r}(k) & -\frac{\delta_{4}^2}{\tilde\delta_{v_4}}r_{2,r}^{*}(k) e^{-t\Phi_{21}} & 0 \\
	\frac{\tilde\delta_{v_4}}{\delta_{4}^2}r_{2,r}(k) e^{t\Phi_{21}} & 1 & 0 \\
	0 & 0 & 1
\end{array}\right).
$$

The $\Sigma_2$ is just shown in Fig. \ref{second-transform.pdf}.
\begin{figure}[!h]
	\centering
	\includegraphics[scale=0.75]{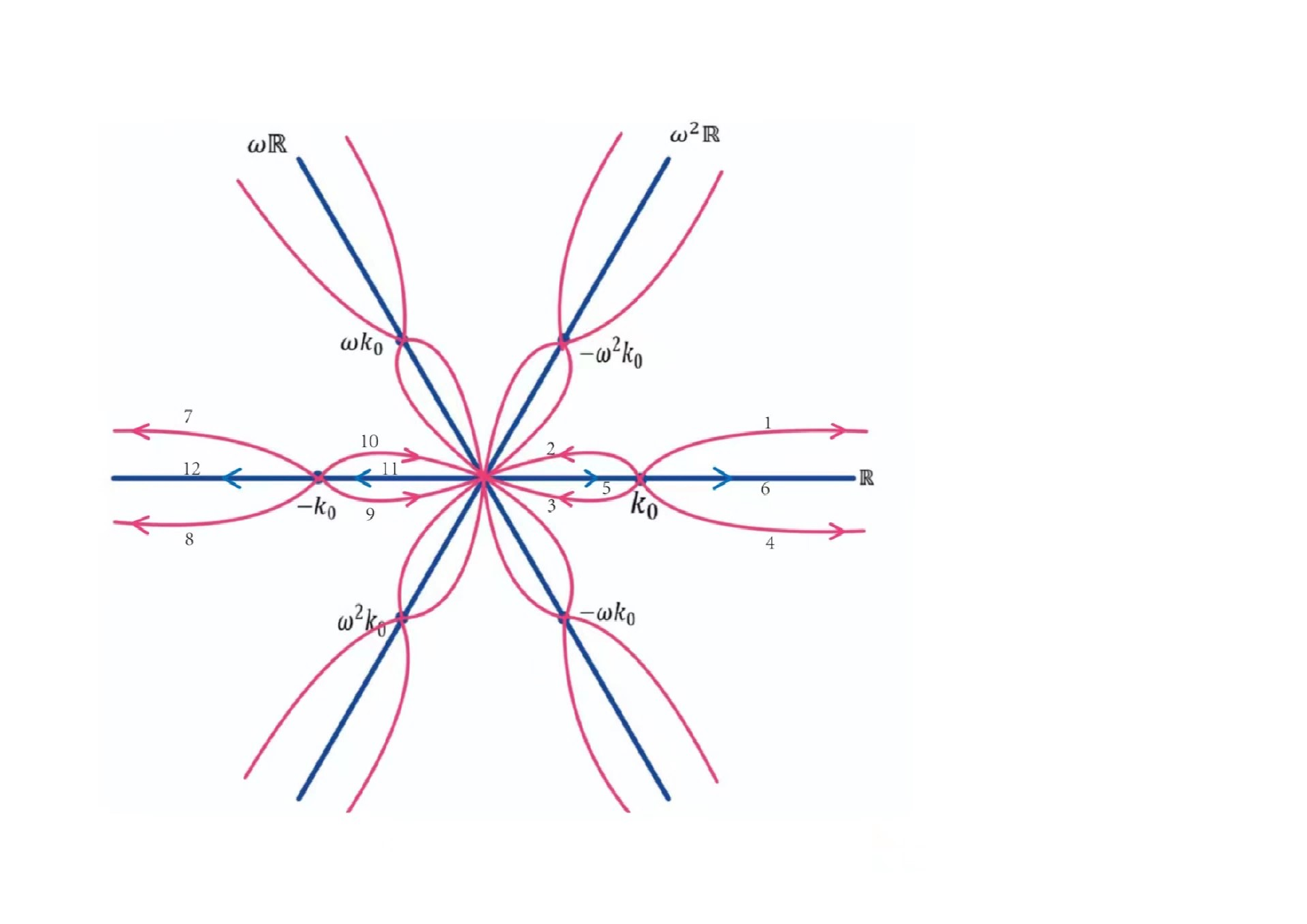}
	\caption{{\protect\small
			The second transformation of RH problem.}}
	\label{second-transform.pdf}
\end{figure}

The sets $V_j$ here are defined as in Fig.\ref{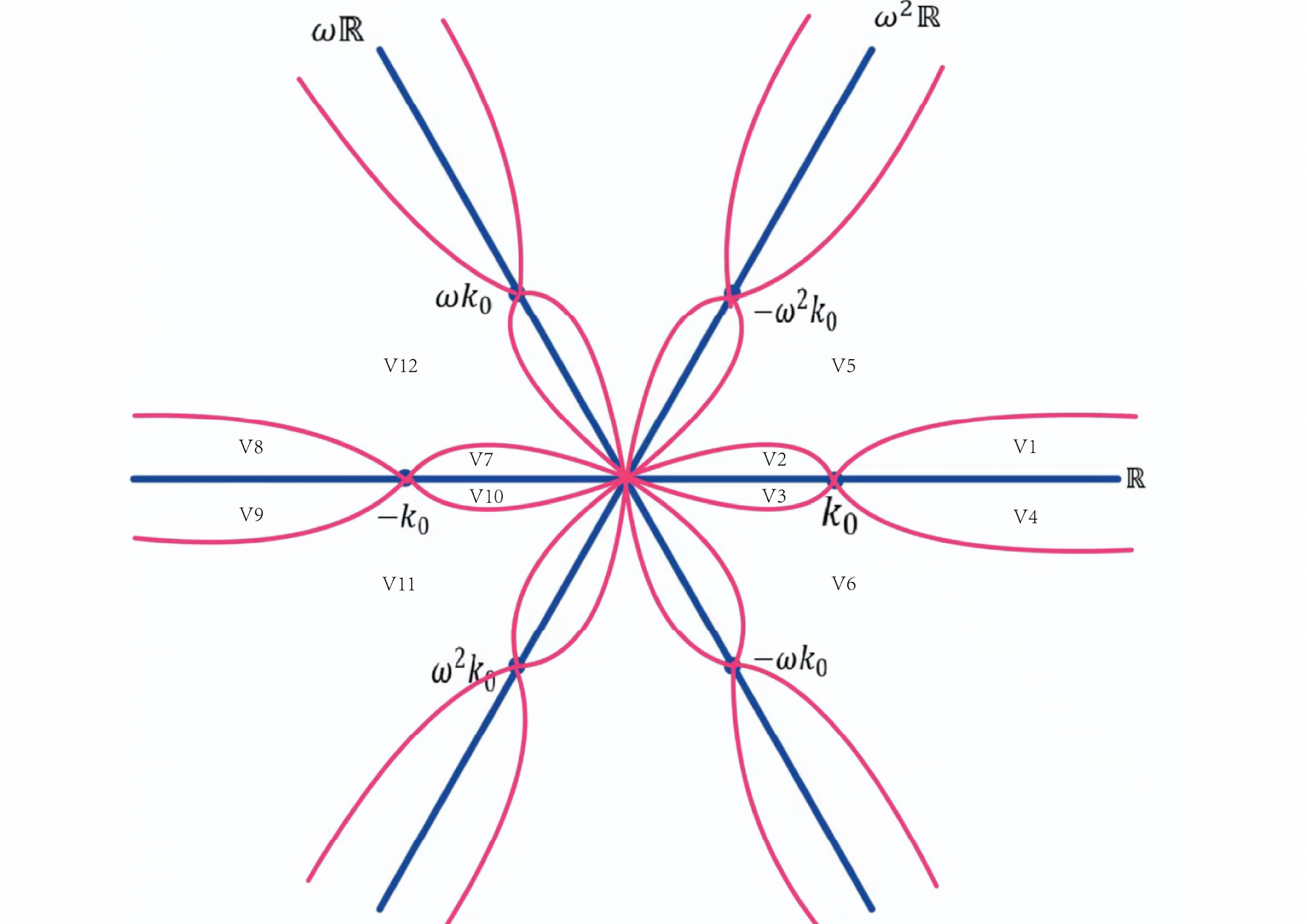}.
\begin{figure}[!h]
	\centering
	\includegraphics[scale=0.50]{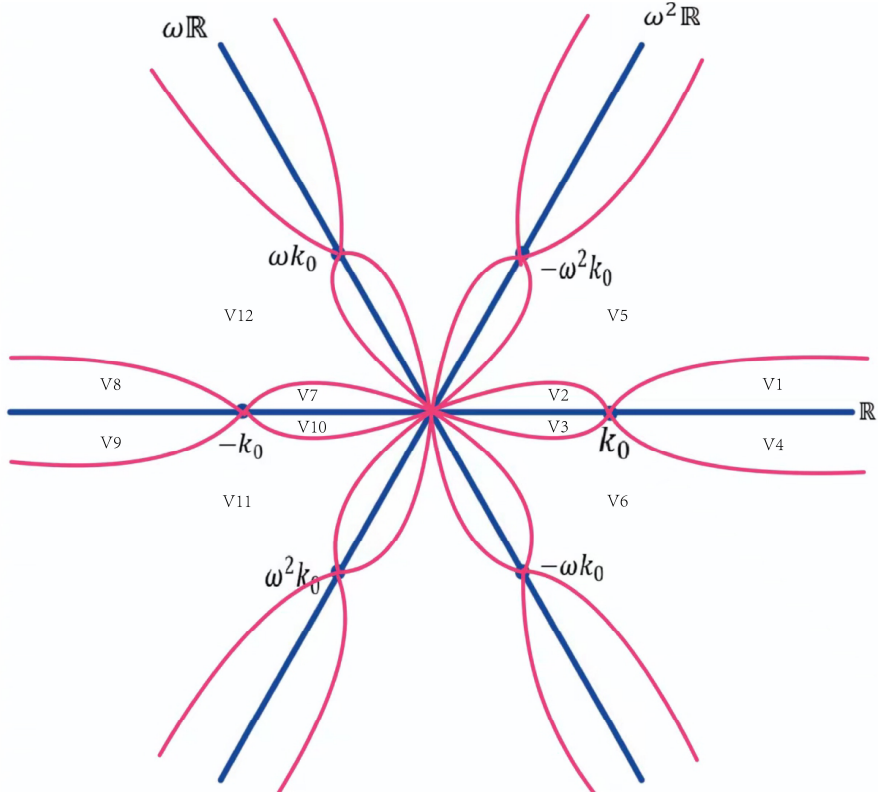}
	\caption{{\protect\small
			The sets $V_j$ in the complex plane.}}
	\label{Vsets.pdf}
\end{figure}

Now, introducing the intermediate function $G_j$ to factorize the RH problem for $M^{(1)}$ in to RH problem for $M^{(2)}$. In particular, define
$$
G_1(x, t, k)= \begin{cases}\left(v_{1,upper}^{(1) }\right)^{-1}, & k \in V_1, \\ \left(v_{7,upper}^{(1) }\right)^{-1}, & k \in V_2, \\ v_{7,lower}^{(1) }, & k \in V_3, \\ v_{1,lower}^{(1) }, & k \in V_4, \\ I, & k\in V_5\cup V_6,\end{cases}
$$
and
$$
G_2(x, t, k)= \begin{cases}v_{10,upper}^{(1) }, & k \in V_7, \\ v_{4,upper}^{(1) }, & k \in V_8, \\ \left(v_{4,lower}^{(1) }\right)^{-1}, & k \in V_9, \\ \left(v_{10,lower}^{(1) }\right)^{-1}, & k \in V_{10}, \\ I, & k\in V_{11}\cup V_{12}.\end{cases}
$$
Furthermore, the RH problems $M^{(1)}$ and $M^{(2)}$ can be related by
$$
M^{(2)}=M^{(1)}G(x,t;k),
$$
where the other $G_j$ can be defined by the symmetries.
\par
The jump matrix $v^{(2)}$ associated with the RH problems for $M^{(2)}$ is given by
$$
\begin{aligned}
	v^{(2)}_1&=v^{(1)}_{1,upper}=\left(\begin{array}{ccc}
		1 & 0 & 0 \\
		\frac{\delta_{1+}^2 }{\tilde\delta_{v_1}} \rho^*_{1,a} e^{t \Phi_{21}} & 1 & 0 \\
		0 & 0 & 1
	\end{array}\right),\\
	v^{(2)}_2&=\left(v^{(1)}_{7,upper}\right)^{-1}=\left(\begin{array}{ccc}
		1 & \frac{\tilde\delta_{v1} }{\delta_{1}^2}r_{1,a} e^{-t \Phi_{21}} & 0 \\
		0 & 1 & 0 \\
		0 & 0 & 1
	\end{array}\right),\\
	v^{(2)}_3&=\left(v^{(1)}_{7,lower}\right)^{-1}=\left(\begin{array}{ccc}
		1 & 0 & 0 \\
		-\frac{\delta_{1}^2 }{\tilde\delta_{v1}}r^*_{1,a} e^{t \Phi_{21}} & 1 & 0 \\
		0 & 0 & 1
	\end{array}\right),\\
	v^{(2)}_4&=v^{(1)}_{1,lower}=\left(\begin{array}{ccc}
		1 & -\frac{\tilde\delta_{v_1}}{\delta_{1-}^2}\rho_{1,a} e^{-t \Phi_{21}} & 0 \\
		0 & 1 & 0 \\
		0 & 0 & 1
	\end{array}\right),\\
	v^{(2)}_5&=v^{(1)}_{7,r}=\left(\begin{array}{ccc}
		1 & -\frac{\tilde\delta_{v1} }{\delta_{1}^2} r_{1,r}(k) e^{-t\Phi_{21}} & 0 \\
		\frac{\delta_{1}^2 }{\tilde\delta_{v1}}r_{1,r}^{*}(k) e^{t\Phi_{21}} & 1-r_{1,r}(k)r^*_{1,r}(k) & 0 \\
		0 & 0 & 1
	\end{array}\right),\\
	v^{(2)}_6&=v^{(1)}_{1,r}=\left(\begin{array}{ccc}
		1-\frac{\delta_{+}^2}{\delta_{1-}^2}\rho_{1,r}(k)\rho^*_{1,r}(k) & -\frac{\tilde\delta_{v_1}}{\delta_{1-}^2}\rho_{1,r} e^{-t \Phi_{21}} & 0 \\
		\frac{\delta_{1+}^2 }{\tilde\delta_{v_1}} \rho_{1,r}^*  e^{t \Phi_{21}} & 1 & 0 \\
		0 & 0 & 1
	\end{array}\right).\\
\end{aligned}
$$
Moreover,
$$
\begin{aligned}
	v^{(2)}_7&=v^{(1)}_{4,upper}=\left(\begin{array}{ccc}
		1 & 0 & 0 \\
		\frac{\tilde\delta_{v_4}}{\delta_{4-}^2}\rho_{2,a} e^{t\Phi_{21}} & 1 & 0 \\
		0 & 0 & 1
	\end{array}\right),\\
	v^{(2)}_8&=v^{(1)}_{4,lower}=\left(\begin{array}{ccc}
		1 & -\frac{\delta_{4+}^2}{\tilde\delta_{v_4}}\rho_{2,a}^{*} e^{-t\Phi_{21}} & 0 \\
		0 & 1 & 0 \\
		0 & 0 & 1
	\end{array}\right),\\
	v^{(2)}_9&=\left(v^{(1)}_{10,lower}\right)^{-1}=\left(\begin{array}{ccc}
		1 & 0 & 0 \\
		-\frac{\tilde\delta_{v_4}}{\delta_{4}^2}r_{2,a} e^{t \Phi_{21}} & 1 & 0 \\
		0 & 0 & 1
	\end{array}\right),\\
	v^{(2)}_{10}&=\left(v^{(1)}_{10,upper}\right)^{-1}=\left(\begin{array}{ccc}
		1 & \frac{\delta_{4}^2}{\tilde\delta_{v_4}}r_{2,a}^* e^{-t \Phi_{21}} & 0 \\
		0 & 1 & 0 \\
		0 & 0 & 1
	\end{array}\right),\\
	v^{(2)}_{11}&=v^{(1)}_{10,r}=\left(\begin{array}{ccc}
		1-r_{2,r}(k)r^*_{2,r}(k) & -\frac{\delta_{4}^2}{\tilde\delta_{v_4}}r_{2,r}^{*}(k) e^{-t\Phi_{21}} & 0 \\
		\frac{\tilde\delta_{v_4}}{\delta_{4}^2}r_{2,r}(k) e^{t\Phi_{21}} & 1 & 0 \\
		0 & 0 & 1
	\end{array}\right),\\
	v^{(2)}_{12}&=v^{(1)}_{4,r}=\left(\begin{array}{ccc}
		1 & -\frac{\delta_{4+}^2}{\tilde\delta_{v_4}}\rho^*_{2,r} e^{-t\Phi_{21}} & 0 \\
		\frac{\tilde\delta_{v_4}}{\delta_{4-}^2}\rho_{2,r} e^{t\Phi_{21}} & 1-\frac{\delta_{4+}^2}{\delta_{4-}^2}\rho_{2,r}\rho^*_{2,r} & 0 \\
		0 & 0 & 1
	\end{array}\right).\\
\end{aligned}
$$

In addition, the other jump matrices can be computed by the symmetry.

\begin{lemma}
The function $G_j$ is uniformly bounded for $k\in\C\setminus\Sigma_j$, and $G_j=I+O(\frac{1}{k})$ as $k\to\infty$.
\end{lemma}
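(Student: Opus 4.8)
The plan is to exploit the piecewise, unipotent-triangular structure of $G_j$. On each of its pieces $G_j$ equals either the identity (on $V_5\cup V_6$, respectively $V_{11}\cup V_{12}$) or one of the factors $v^{(1)}_{\bullet,\mathrm{upper}}$, $v^{(1)}_{\bullet,\mathrm{lower}}$, or the inverse of such a factor. Each of these factors is unipotent and triangular with a single nonzero off-diagonal entry, and its inverse is obtained simply by changing the sign of that entry. Consequently, proving that $G_j$ is uniformly bounded and that $G_j=I+O(k^{-1})$ as $k\to\infty$ reduces to establishing these two facts for each off-diagonal entry. Every such entry has the shape
$$
\frac{\delta_{a\pm}^{2}}{\tilde\delta_{v_i}}\,\big(\text{continuation}\big)\,e^{\pm t\Phi_{21}},
$$
where the continuation is one of $r_{j,a},\,r_{j,a}^{*},\,\rho_{j,a},\,\rho_{j,a}^{*}$, so I would treat the three factors separately.

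For the $\delta$-factors, the Proposition on the $\delta$ functions shows that $\delta_1,\delta_4$ and their inverses are bounded on $\C\setminus\Sigma_1^{(1)}$, $\C\setminus\Sigma_4^{(1)}$. By the symmetry definitions $\delta_3(k)=\delta_1(\omega^2k)$, $\delta_5(k)=\delta_1(\omega k)$, $\delta_2(k)=\delta_4(\omega^2 k)$, $\delta_6(k)=\delta_4(\omega k)$, every $\delta_n$ and every $\delta_n^{-1}$ is bounded as well. Hence each combination $\tilde\delta_{v_i}=\delta_3\delta_4^2\delta_5/(\delta_6\delta_2)$, together with its reciprocal, is uniformly bounded, and so is each ratio $\delta_{a\pm}^2/\tilde\delta_{v_i}$. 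Thus the $\delta$-factors contribute only a bounded multiplicative constant, uniformly in $k$.

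The decisive point is the product $(\text{continuation})\times e^{\pm t\Phi_{21}}$, and here the geometry of the lenses is exactly what makes it work. Each $V_j$ lies in one of the sectors $U_1,\dots,U_6$ of the sign diagram in Fig.~\ref{regionU.pdf}, and is opened so that the exponential appearing in the corresponding factor \emph{decays} on $V_j$: on $V_1$ (where $\mathrm{Re}\,\Phi_{21}<0$) one meets $e^{t\Phi_{21}}$, on $V_4$ (where $\mathrm{Re}\,\Phi_{21}>0$) one meets $e^{-t\Phi_{21}}$, and so on. The decomposition lemma controls the analytic continuations by $|r_{j,a}|,|\rho_{j,a}|\lesssim(\text{bounded})\cdot e^{-t\,\mathrm{Re}\,\Phi_{21}(k)/2}$, the conjugated quantities satisfying the mirror estimate via the symmetry $r_{j,a}^{*}(k)=\overline{r_{j,a}(\bar k)}$. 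Multiplying gives $|(\text{continuation})\,e^{\pm t\Phi_{21}}|\lesssim e^{\pm t\,\mathrm{Re}\,\Phi_{21}/2}$, and in each $V_j$ the sign of $\mathrm{Re}\,\Phi_{21}$ is precisely the one rendering this exponent nonpositive, so the product is $\le C$. The remaining pieces, and the $G_j$ for the indices obtained by symmetry, follow the same pattern after applying $k\mapsto\omega k$ and $k\mapsto\bar k$, under which $\Phi_{21}$ and the sign diagram transform consistently.

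Finally, for $k\to\infty$ it suffices that every off-diagonal entry tends to $0$: the lens factors carrying $r_{j,a}$ are supported in a bounded neighbourhood of $\pm k_0$ (there $\mathrm{Re}(k)\in[0,k_0]$), so they are absent for large $|k|$, while on the unbounded pieces carrying $\rho_{j,a}$ the exponential $e^{\pm t\Phi_{21}}$ decays super-polynomially as $|k|\to\infty$, since $\Phi_{21}\sim 9(\alpha-\alpha^2)k^5$ dominates. In either case $G_j$ differs from $I$ only by entries vanishing faster than $k^{-1}$, giving $G_j=I+O(k^{-1})$. The main obstacle is not any single estimate but the bookkeeping in the third step: one must verify region by region that each $V_j$ was opened on the correct side so that the exponential paired with each continuation decays, and that the crude $e^{-t\,\mathrm{Re}\,\Phi_{21}/2}$ growth of the continuation is always dominated by the full exponential — this is exactly what dictates the assignment of the upper/lower factors to the sets $V_1,\dots,V_{12}$.
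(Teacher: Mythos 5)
Your boundedness argument is essentially the paper's own proof: the paper likewise reduces to the single nontrivial off-diagonal entry of each unipotent triangular factor, bounds the $\delta$-ratios via the $\delta$-proposition (extended to $\delta_2,\delta_3,\delta_5,\delta_6$ through the symmetry definitions), and invokes the decomposition lemma to control $(\text{continuation})\cdot e^{\pm t\Phi_{21}}$ on the unbounded lens regions; the only cosmetic difference is that for the compact pieces ($V_2$, $V_3$) the paper argues by continuity plus compactness, while you re-use the decomposition estimates, which is fine and if anything more uniform.

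The genuine gap is in your final step, the claim $G_j=I+O(k^{-1})$. You obtain decay of the entries on the unbounded pieces $V_1,V_4$ (and their rotations) from the assertion that $e^{\pm t\Phi_{21}}$ decays super-polynomially as $|k|\to\infty$ because $\Phi_{21}\sim 9(\alpha-\alpha^2)k^5$. This is not true uniformly on those regions: $V_1$ and $V_4$ have the half-line $(k_0,\infty)$ as part of their boundary, and for real $k$ one has $\Phi_{21}(k)=i\sqrt{3}\,(9k^5-k\xi)$, so $\mathrm{Re}\,\Phi_{21}\equiv 0$ there. Consequently, at every large $|k|$ there are points of $V_1$ arbitrarily close to the real axis at which $|e^{t\Phi_{21}}|$ is arbitrarily close to $1$; the exponential decays only along rays making a fixed positive angle with the axis, and no $O(k^{-1})$ bound can be extracted from it alone. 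The decay must come from $\rho_{1,a}$ itself: the decomposition lemma's estimate carries the factor $\frac{1}{1+|k|}$ --- exactly the factor you dropped when you quoted the bound as $|\rho_{j,a}|\lesssim e^{-t\,\mathrm{Re}\,\Phi_{21}/2}$ --- and this rational decay is built into the construction through $R(k)=\sum_i\gamma_i(k-k_0)^i/(k+i)^{n+5}$ and $\beta(k)=(k-k_0)^q/(k+i)^{q+2}$. With that factor restored one gets $\bigl|\rho_{1,a}^{*}\,e^{t\Phi_{21}}\bigr|\le C/(1+|k|)$ on $\bar U_1$, and the asymptotic normalization follows; note that the paper's own proof is silent on this point and only establishes boundedness.
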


\begin{proof}
	 We focus on the region $V_1$ and $V_2$. In the region $V_2$, $G_1=\left(v_{1,upper}^{(1)}\right)^{-1}$ so that it  suffices to show that  $\frac{\delta_{1+}^2 }{\tilde\delta_{v_1}} \rho^*_{1,a} e^{t \Phi_{21}}$ is bounded in $V_1$. Recall that $\delta_j$ is bounded in $\C\setminus\Sigma_j$ and $|\rho_{1,a} e^{t\Phi_{21}}|$ satisfied the lemma before, then $\frac{\delta_{1+}^2 }{\tilde\delta_{v_1}} \rho^*_{1,a} e^{t \Phi_{21}}$ is uniformly bounded. Nevertheless, the region $V_2$ is compact and $G_1$ is continuous, so $V_2$ is uniformly bounded.
\end{proof}

Again, recall the reconstruction formula that
\begin{align*}
u(x,t)&=-\frac{1}{2}\frac{\partial}{\partial x}\lim _{k \rightarrow \infty}k [M^{(1)}(x, t, k)-I]_{33}\\
&=-\frac{1}{2}\frac{\partial}{\partial x}\lim _{k \rightarrow \infty}k [M^{(1)}(x, t, k)G(x,t;k)-I]_{33}\\
&=-\frac{1}{2}\frac{\partial}{\partial x}\lim _{k \rightarrow \infty}k [M^{(2)}(x, t, k)-I]_{33}.
\end{align*}

\begin{lemma}
	For $0<k_0<M$ and any $\epsilon>0$, the jump functions $v^{(2)}$ converges uniformly to $I$ as $t\to\infty$ and $\partial_xv^{(2)}$ uniformly converges to the zero matrix expect for the points near the saddle points, i.e., $\left\{\pm k_0,\pm \omega k_0,\pm \omega^2k_0\right\}$. In particular, the jump matrix $v^{(2)}$ on $\Sigma_{5,6}$ have the following estimates:
$$
\|(1+|\cdot|)\partial_x^l(v^{(2)}-I)\|_{(L^1\cap L^{\infty})(\Sigma_{5,6}^{(2)})}\le Ct^{-\frac{3}{2}}.
$$
Moreover, using the fact of symmetric about jump matrix, we can get the similar estimate on the other $\Sigma_j^{(2)}$.
\end{lemma}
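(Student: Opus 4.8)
The plan is to organize the twelve jump matrices $v_j^{(2)}$, $j=1,\dots,12$, according to the two distinct mechanisms that drive them to the identity, and to reduce everything to the pointwise and $L^p$ bounds already supplied by the preceding decomposition lemma together with the uniform boundedness of the $\delta$-functions established in the first proposition of this section. Accordingly I would split the contours into the \emph{analytic} rays $\Sigma_{1,2,3,4,7,8,9,10}^{(2)}$, whose jump matrices differ from $I$ only through off-diagonal entries of the form $(\text{bounded }\delta)\cdot r_{j,a}e^{\mp t\Phi_{21}}$ or $(\text{bounded }\delta)\cdot\rho_{j,a}e^{\pm t\Phi_{21}}$, and the \emph{remainder} rays $\Sigma_{5,6,11,12}^{(2)}$, whose entries involve $r_{j,r}e^{\mp t\Phi_{21}}$, $\rho_{j,r}e^{\pm t\Phi_{21}}$ and the products $r_{j,r}r_{j,r}^{*}$, $\rho_{j,r}\rho_{j,r}^{*}$.

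For the analytic rays I would use that, by construction, each such contour lies in a lens region on which the sign diagram of $\operatorname{Re}\Phi_{21}$ (Fig.\,\ref{regionU.pdf}) makes the exponential $e^{\pm t\Phi_{21}}$ present in $v_j^{(2)}-I$ decaying, i.e.\ $\pm\operatorname{Re}\Phi_{21}\le 0$ there. Combining this with the decomposition-lemma bounds $|\partial_x^l(\rho_{1,a}-\rho_1(k_0))|\le C|k-k_0|e^{-t\operatorname{Re}\Phi_{21}(k)/2}$ on $\bar U_6$ (and the analogous estimates for $r_{1,a}$, $r_{2,a}^{*}$, $\rho_{2,a}$ on $\bar U_2,\bar U_3$) produces a net factor $e^{-t|\operatorname{Re}\Phi_{21}|/2}$ in every entry of $v_j^{(2)}-I$ and of $\partial_x(v_j^{(2)}-I)$. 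Off any fixed neighbourhood of the saddle points $\{\pm k_0,\pm\omega k_0,\pm\omega^2 k_0\}$ one has $|\operatorname{Re}\Phi_{21}(k)|\ge c>0$ along the ray, so these entries decay faster than any power of $t^{-1}$ uniformly; as the saddle points are approached $\operatorname{Re}\Phi_{21}\to 0$ and the $t$-decay degrades, but the prefactor $|k\mp k_0|$ then vanishes, which is precisely the assertion that $v^{(2)}\to I$ and $\partial_x v^{(2)}\to 0$ uniformly except near the saddle points.

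For $\Sigma_{5,6}^{(2)}$ I would read the quantitative bound directly off the decomposition lemma. Since the $\delta$-coefficients $\tilde\delta_{v_1}/\delta_{1\pm}^2$ and their reciprocals are uniformly bounded, the off-diagonal entries of $v_5^{(2)}-I$ and $v_6^{(2)}-I$ are controlled by $\partial_x^l r_{1,r}e^{-t\Phi_{21}}$ on $(0,k_0)$ and by $(1+|\cdot|)\partial_x^l\rho_{1,r}e^{-t\Phi_{21}}$ on $[k_0,\infty)$, for which the lemma gives $\le ct^{-3/2}$ in every $L^p$, $1\le p\le\infty$; taking $p=1$ and $p=\infty$ gives the $(L^1\cap L^\infty)$ bound. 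The diagonal correction terms $r_{1,r}r_{1,r}^{*}$ and $\rho_{1,r}\rho_{1,r}^{*}$ carry no net exponential but are quadratic in remainder factors, so Cauchy--Schwarz (using $p=2$) bounds them by $ct^{-3}$, which is absorbed. Altogether,
$$
\|(1+|\cdot|)\partial_x^l(v^{(2)}-I)\|_{(L^1\cap L^\infty)(\Sigma_{5,6}^{(2)})}\le Ct^{-3/2}.
$$

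Finally, I would transport all of these estimates to the remaining contours by the jump symmetries $v(x,t,k)=\mathcal{A}v(x,t,\omega k)\mathcal{A}^{-1}=\mathcal{B}\overline{v(x,t,\bar k)}^{-1}\mathcal{B}$: since $\mathcal{A},\mathcal{B}$ are constant permutation matrices and $k\mapsto\omega k$ maps one sextant isometrically onto another, both the uniform (exponential) estimates and the $t^{-3/2}$ bound are preserved. The main obstacle is not any single inequality---those come ready-made from the decomposition lemma---but the bookkeeping that verifies, matrix entry by matrix entry across the twelve $v_j^{(2)}$, that the exponent sign is matched to the analytic-versus-remainder splitting so that no growing exponential $e^{+t|\operatorname{Re}\Phi_{21}|}$ ever survives, and that the vanishing prefactor $|k\mp k_0|$ appears exactly on the rays where $\operatorname{Re}\Phi_{21}$ degenerates at the saddle. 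This matching is what isolates the non-uniformity to the six saddle points and licenses the subsequent replacement of the problem near each by a local parabolic-cylinder parametrix.
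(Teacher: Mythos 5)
Your proposal is correct and follows essentially the same route as the paper's proof: the sign structure of $\operatorname{Re}\Phi_{21}$ combined with the decomposition-lemma bounds on $r_{j,a},\rho_{j,a}$ and the boundedness of the $\delta$-functions handles the lens contours, the $t^{-3/2}$ bounds on $r_{j,r},\rho_{j,r}$ give the $(L^1\cap L^\infty)$ estimate on $\Sigma_{5,6}^{(2)}$ with the diagonal entries absorbed as products of off-diagonal ones, and the $\mathcal{A}$, $\mathcal{B}$ symmetries transport everything to the remaining contours. Your treatment is in fact somewhat more explicit than the paper's (the entry-by-entry sign bookkeeping and the Cauchy--Schwarz bound $t^{-3}$ for the quadratic diagonal terms), but the underlying argument is the same.
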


\begin{proof}
	We focus on the jump matrix on $\Sigma_{1,\cdots,6}^{(2)}$ and since for $k\in\Sigma_{1,2,3,4}^{(2)}$ the exponential part $\operatorname{Re}{t\Phi_{21}}$ is strictly less than $0$ for $\Sigma_{1,3}^{(2)}$ and strictly bigger than $0$ for $\Sigma_{2,4}^{(2)}$, except for the points near the saddle point $k_0$ (since $\operatorname{Re}t\Phi_{21}(k_0)=0$). Using the lemma about the properties of $r_{1,a},\rho_{1,a}$ and the bound of $\delta$ functions , we can conclude that $v^{(2)}_{1,2,3,4}$ ($\partial_x v_{1,2,3,4}^{(2)}$) converges to $I$ (resp. to the $0$ matrix) as $t\to\infty$.

Finally, we have
$$
(v_5^{(2)}-I)_{12}=-\frac{\tilde\delta_{v1} }{\delta_{1}^2} r_{1,r}(k) e^{-t\Phi_{21}},\quad
(v_6^{(2)}-I)_{12}=-\frac{\tilde\delta_{v_1}}{\delta_{1-}^2}\rho_{1,r} e^{-t \Phi_{21}},
$$
the lemma of $\delta$ and $r_{j,r},\rho_{j,r}$ implies that
$$
|(v_5^{(2)}-I)_{12}|\le Ct^{-\frac{3}{2}},\quad |(v_6^{(2)}-I)_{12}|\le Ct^{-\frac{3}{2}}.
$$
Moreover, the $(v_5^{(2)}-I)_{22}$ and $(v_6^{(2)}-I)_{22}$ can be written as a time of two part  $(v^{(2)}_{j}-I)_{12}$ and $(v^{(2)}_{j}-I)_{21}$, so the estimate is smaller than off-diagonal term.

By directly computation, we can conclude that
$$
\|(1+|\cdot|)\partial_x^l(v^{(2)}-I)\|_{(L^1\cap L^{\infty})(\Sigma_{5,6}^{(2)})}\le Ct^{-\frac{3}{2}}.
$$
\end{proof}

\subsection{The third transformation  }

In the transformations above, we have a new RH problem with the properties $v\to I$ as $t\to\infty$ for $k\in \{\R,\omega\R,\omega^2\R\},\ 0<k_0<M$ and other jump matrix on $\Sigma_j$ toward to $I$ as $t\to \infty$, but except for $k\in B(\pm k_0,{\epsilon})\cup B(\pm \omega k_0,{\epsilon})\cup B(\pm\omega^2 k_0,{\epsilon})$.

In order to factorize the RH problem for $M^{(2)}$ into model problem, we focus on the $\Sigma_A$ and $\Sigma_B$ where
$$
\Sigma_A=\Sigma_{\{1,2,3,4\}}^{(2)}\cap B_{\epsilon}(k_0),\ \Sigma_B=\Sigma_{\{7,8,9,10\}}^{(2)}\cap B_{\epsilon}(-k_0)
$$
Observe that the exponential part in the jump matrix on $\Sigma_A$ and $\Sigma_B$ are $\pm t\Phi_{21}$ and on the left contour $\Sigma_A$, we expand $t\Phi_{21}$ at $k_0$ just as follows
$$
\begin{aligned}
	t\Phi_{21}(k)&=t[(\alpha^2-\alpha)k\zeta+(\alpha-\alpha^2)9k^5]
	=9t(\alpha-\alpha^2)(k^5-5kk_0^4)\\
	&=9\sqrt{3}it[(k-k_0)^5+5k_0(k-k_0)^4+10k_0^2(k-k_0)^3+10k_0^3(k-k_0)^2-4k_0^5].
\end{aligned}
$$
Suppose $z_1={3^{\frac{5}{4}}2\sqrt{5t}k_0^{\frac{3}{2}}}(k-k_0)$, then rewrite $t\Phi_{21}$ as

$$
\begin{aligned}
	t\Phi_{21}(k)
	&=9\sqrt{3}ita^3[a^2z^5+5ak_0z^4+10k_0^2z^3]+\frac{iz^2}{2}+t\Phi_{21}(k_0)\\
	&=t\Phi_{21}^{0}(k_0,z)+\frac{iz_1^2}{2}+t\Phi_{21}(k_0),
\end{aligned}
$$
where $a=\frac{1}{{3^{\frac{5}{4}}2\sqrt{5t}k_0^{\frac{3}{2}}}}$.
\par
The other part of jump matrix functions on $\Sigma_A$ involves the $\delta$ function as
$$
\delta_1( k)=e^{-i \nu_1 \log_0\left(k-k_0\right)} e^{-\chi_1( k)},\quad k\in\C\setminus[k_0,\infty),
$$
where
$$
\nu_1=-\frac{1}{2 \pi} \ln \left(1-\left|r_1\left(k_0\right)\right|^2\right),
$$
and
$$
\chi_1( k)=\frac{1}{2 \pi i} \int_{k_0}^{\infty} \log_0(k-s) d \ln \left(1-\left|r_1(s)\right|^2\right) .
$$
Again, rewrite it as
$$
\begin{aligned}
	\frac{\delta_{1+}^{2}(k)}{\delta_{\tilde v_1}(k)}	
	&=e^{-2i \nu_1 \log_0\left(z\right)}\frac{a^{-2i\nu_1}e^{-2\chi_1(k_0)}}{\tilde\delta_{ v_1}(k_0)}
	\frac{e^{2\chi_1(k_0)-2\chi_1(k)}\tilde\delta_{v_1}(k_0)}{\tilde\delta_{ v_1}(k)}\\
	&:=e^{-2i \nu_1 \log_0\left(z\right)}\delta_A^0\delta_A^1,
\end{aligned}
$$
where $\delta_A^0=\frac{a^{-2i\nu}e^{-2\chi_1(k_0)}}{\delta_{\tilde v_1}(k_0)}$ and $\delta_A^1=\frac{e^{2\chi_1(k_0)-2\chi_1(k)}\delta_{\tilde v_1}(k_0)}{\delta_{\tilde v_1}(k)}$.

In the other hand, on the jump contour $\Sigma_B$, expand $t\Phi_{21}$ at $-k_0$ just as follows
$$
\begin{aligned}
	t\Phi_{21}(k)&=t[(\alpha^2-\alpha)k\zeta+(\alpha-\alpha^2)9k^5]
	=9t(\alpha-\alpha^2)(k^5-5kk_0^4)\\
	&=9\sqrt{3}it[(k+k_0)^5-5k_0(k+k_0)^4+10k_0^2(k+k_0)^3-10k_0^3(k+k_0)^2+4k_0^5].
\end{aligned}
$$
Suppose $z_2={3^{\frac{5}{4}}2\sqrt{5t}k_0^{\frac{3}{2}}}(k+k_0)$, then we can rewrite $t\Phi_{21}$ as
$$
\begin{aligned}
	t\Phi_{21}(k)
	&=9\sqrt{3}ita^3[a^2z^5_2-5ak_0z^4_2+10k_0^2z^3_2]-\frac{iz^2_2}{2}+t\Phi_{21}(-k_0)\\
	&=t\Phi_{21}^{0}(-k_0,z)-\frac{iz_2^2}{2}+t\Phi_{21}(-k_0).
\end{aligned}
$$
Moreover, the $\delta$ functions on the $\Sigma_B$ involves $\delta_4$, and
$$
\delta_4( k)=e^{-i \nu_4 \log_{\pi}\left(k+k_0\right)} e^{-\chi_4( k)},\quad k\in\C\setminus(-\infty,-k_0],
$$
with
$$
\nu_4=-\frac{1}{2 \pi} \ln \left(1-\left|r_2\left(-k_0\right)\right|^2\right),
$$
and
$$
\chi_4( k)=\frac{1}{2 \pi i} \int_{-k_0}^{-\infty} \log_{\pi}(k-s) d \ln \left(1-\left|r_2(s)\right|^2\right).
$$
In addition, we have
$$
\begin{aligned}
	\frac{\tilde\delta_{v_4}}{\delta_{4}^2}&=e^{2i \nu_4 \log_{\pi}\left(z_2\right)} \frac{\tilde\delta_{v_4}(-k_0)}{a^{-2i\nu_4}e^{-2\chi_4( -k_0)}}\frac{\tilde\delta_{ v_4}(k)}{e^{2\chi_4(-k_0)-2\chi_4(k)}\tilde\delta_{ v_4}(-k_0)}\\
	&:=e^{2i \nu_4 \log_{\pi}\left(z_2\right)}\left(\delta_B^0\right)^{-1}\left(\delta_B^1\right)^{-1},
\end{aligned}
$$
here $\delta_B^0=\frac{a^{-2i\nu_4}e^{-2\chi_4(-k_0)}}{\delta_{\tilde v_4}(-k_0)}$ and $\delta_B^1=\frac{e^{2\chi_4(-k_0)-2\chi_4(k)}\delta_{ v_4}(-k_0)}{\delta_{ \tilde v_4}(k)}$.

Now, define $H(\pm k_0,t)$ and transform the RH problem for $M^{(2)}$ of the form
$$
M^{(3,\epsilon)}=M^{(2)}(x,t;k)H(\pm k_0,t),\quad k\in B_{\epsilon}(\pm k_0),
$$
where
$$
H(k_0,t)=\left(\begin{array}{ccc}
	\left(\delta_A^{0}\right)^{-\frac{1}{2}} e^{-\frac{t}{2} \Phi_{21}\left( k_0\right)} & 0 & 0 \\
	0 & \left(\delta_A^{0}\right)^{\frac{1}{2}} e^{\frac{t}{2} \Phi_{21}\left( k_0\right)} & 0 \\
	0 & 0 & 1
\end{array}\right),
$$
and
$$
H(-k_0,t)=\left(\begin{array}{ccc}
	\left(\delta_B^{0}\right)^{\frac{1}{2}} e^{-\frac{t}{2} \Phi_{21}\left( -k_0\right)} & 0 & 0 \\
	0 & \left(\delta_B^{0}\right)^{-\frac{1}{2}} e^{\frac{t}{2} \Phi_{21}\left( -k_0\right)} & 0 \\
	0 & 0 & 1
\end{array}\right),
$$
so that we can derive the jump matrix on $\Sigma_A$ as follows

$$
\begin{aligned}
	v^{(3,\epsilon)}_1&=\left(\begin{array}{ccc}
		1 & 0 & 0 \\
		e^{-2i \nu_1 \log_0\left(z\right)}\delta_A^1 \rho^*_{1,a} e^{t \Phi_{21}^0(k_0,z)+\frac{iz^2}{2}} & 1 & 0 \\
		0 & 0 & 1
	\end{array}\right),\\
	v^{(3,\epsilon)}_2&=\left(\begin{array}{ccc}
		1 & e^{2i \nu_1 \log_0\left(z\right)}(\delta_A^1)^{-1} r_{1,a} e^{-t \Phi_{21}^0(k_0,z)-\frac{iz^2}{2}} & 0 \\
		0 & 1 & 0 \\
		0 & 0 & 1
	\end{array}\right),\\
	v^{(3,\epsilon)}_3&=\left(\begin{array}{ccc}
		1 & 0 & 0 \\
		-e^{-2i \nu_1 \log_0\left(z\right)}\delta_A^1r^*_{1,a} e^{t \Phi_{21}^0(k_0,z)+\frac{iz^2}{2}} & 1 & 0 \\
		0 & 0 & 1
	\end{array}\right),\\
	v^{(3,\epsilon)}_4&=\left(\begin{array}{ccc}
		1 & -e^{2i \nu_1 \log_0\left(z\right)}(\delta_A^1)^{-1}\rho_{1,a} e^{-t \Phi_{21}^0(k_0,z)-\frac{iz^2}{2}} & 0 \\
		0 & 1 & 0 \\
		0 & 0 & 1
	\end{array}\right).\\
\end{aligned}
$$
Moreover,

$$
\begin{aligned}
	v^{(3,\epsilon)}_7&=\left(\begin{array}{ccc}
		1 & 0 & 0 \\
		e^{2i \nu_4 \log_{\pi}\left(z\right)}\left(\delta_B^1\right)^{-1}\rho_{2,a} e^{t\Phi_{21}^0(-k_0,z)-\frac{iz^2}{2}} & 1 & 0 \\
		0 & 0 & 1
	\end{array}\right),\\
	v^{(3,\epsilon)}_8&=\left(\begin{array}{ccc}
		1 & -e^{-2i \nu_4 \log_{\pi}\left(z\right)}\delta_B^1\rho_{2,a}^{*} e^{-t\Phi_{21}^0(-k_0,z)+\frac{iz^2}{2}} & 0 \\
		0 & 1 & 0 \\
		0 & 0 & 1
	\end{array}\right),\\
	v^{(3,\epsilon)}_9&=\left(\begin{array}{ccc}
		1 & 0 & 0 \\
		-e^{2i \nu_4 \log_{\pi}\left(z\right)}\left(\delta_B^1\right)^{-1} r_{2,a} e^{t\Phi_{21}^0(-k_0,z)-\frac{iz^2}{2}} & 1 & 0 \\
		0 & 0 & 1
	\end{array}\right),\\
	v^{(3,\epsilon)}_{10}&=\left(\begin{array}{ccc}
		1 & e^{-2i \nu_4 \log_{\pi}\left(z\right)}\delta_B^1r_{2,a}^* e^{-t\Phi_{21}^0(-k_0,z)+\frac{iz^2}{2}} & 0 \\
		0 & 1 & 0 \\
		0 & 0 & 1
	\end{array}\right).\\
\end{aligned}
$$

When $z$ is fixed, we observe that $r_{j,a} \to r_j(k_0)$,$\rho_{j,a} \to \frac{r_j(k_0)}{1-|r_j(k_0)|^2}$ ,$\delta_A^1,\delta_B^1\to 1$ and $e^{\pm t\Phi_{21}^0(\pm k_0,z)}\to1$ as $t\to\infty$ so that the $v^{3,\epsilon} \to v^X_{A,B}$ as $t\to\infty$ where $v^{X}_{A,B}$ are jump matrix of model problem of $M^{X}_{A,B}$.

\par

\subsection{The model problem $M_{A,B}^X$}

Take $X_1=\{z\in\C:z=re^{\frac{\pi i}{4}},0\le r\le\infty\}$,$X_2=\{z\in\C:z=re^{\frac{3\pi i}{4}},0\le r\le\infty\}$ and $X_3=\{z\in\C:z=re^{\frac{5\pi i}{4}},0\le r\le\infty\}$, $X_4=\{z\in\C:z=re^{\frac{7\pi i}{4}},0\le r\le\infty\}$. Denote $X=\cup_{j=1}^4X_j$ and the function $\nu_A(y)=-\frac{1}{2 \pi} \ln \left(1-\left|y\right|^2\right)$ from $B(0,1)$ to $(0,\infty)$.  In what follows, define the model problem $M^X_{A,B}$ naturally.

\begin{figure}[!h]
	\centering
	\includegraphics[scale=0.5]{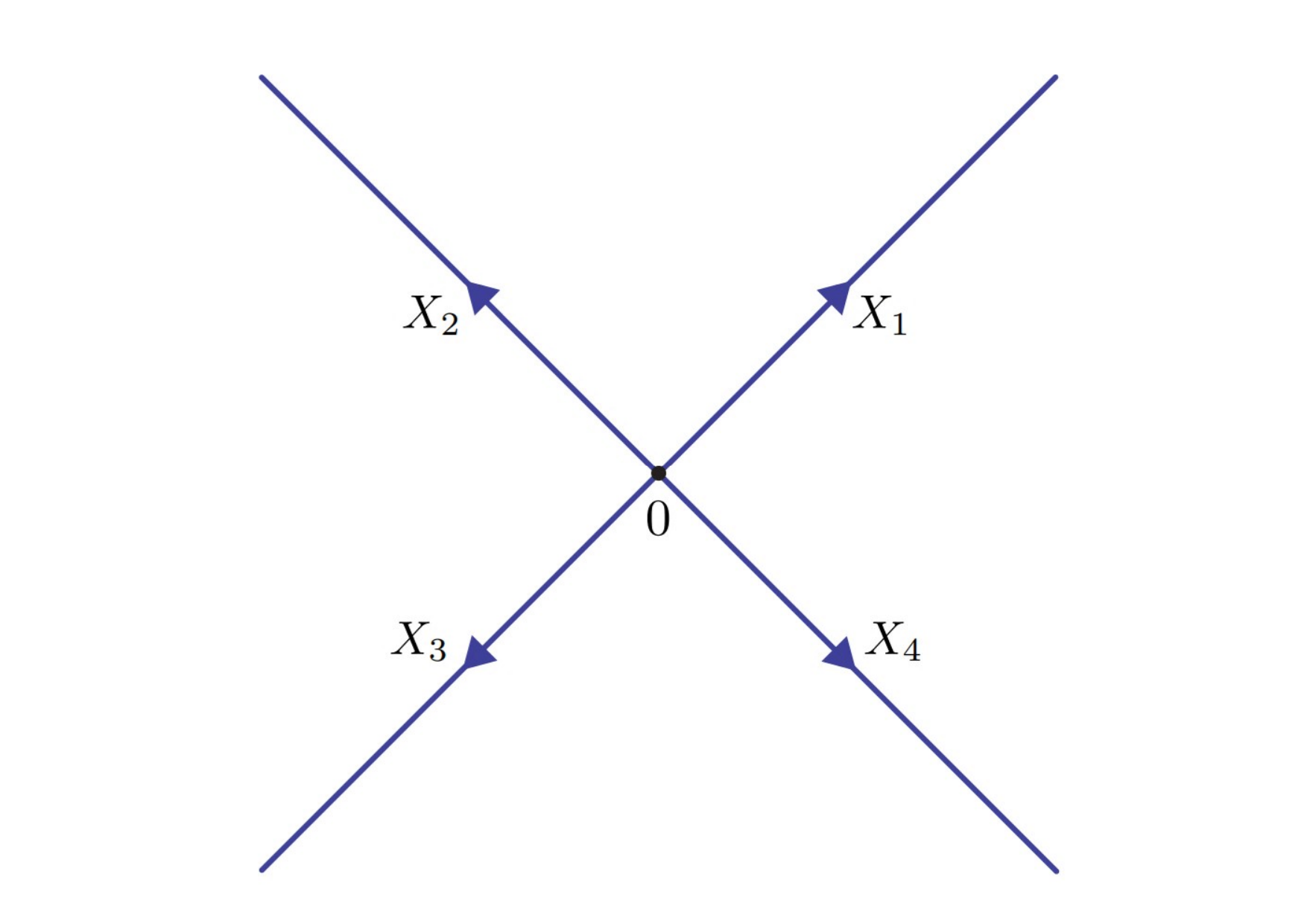}
	\caption{{\protect\small
			The contour.}}
	\label{model-X.pdf}
\end{figure}

\begin{proposition}
	The $3 \times 3$ matrix-valued function $M^X_A$ satisfies the following properties:	
	
 (1). $M^X_A(\cdot\ ,y):\C\setminus X\to\C^{3\times3}$  is analytic for $z \in\C\setminus X$.

 (2). $M^X_A(z,y)$ continuous to $X\setminus\{0\}$ and satisfy the jump condition below:
$$
(M^X_A(z,y))_+=(M^X_A(z,y))_-v^{X_A}_j(z,y),\quad z\in\C\setminus \{0\}.
$$
where the jump matrix $v^X_A(z,y)$ is defined as following:
$$
\begin{aligned}
	& \left(\begin{array}{ccc}
		1 & 0 & 0 \\
		{\frac{\bar{y}}{1-|y|^2}} z^{-2 i \nu_1(y)} e^{\frac{i z^2}{2}} & 1 & 0 \\
		0 & 0 & 1
	\end{array}\right) \quad \text { if } z \in X_1, \quad\left(\begin{array}{ccc}
		1 & y z^{2 i \nu_1(y)} e^{-\frac{i z^2}{2}} & 0 \\
		0 & 1 & 0 \\
		0 & 0 & 1
	\end{array}\right) \text { if } z \in X_2, \\
	& \left(\begin{array}{ccc}
		1 & 0 & 0 \\
		-\bar y z^{-2i \nu_1 } e^{\frac{iz^2}{2}} & 1 & 0 \\
		0 & 0 & 1
	\end{array}\right) \text { if } z \in X_3, \quad\left(\begin{array}{ccc}
		1 & -\frac{y}{1-|y|^2} z^{2i \nu_1 }  e^{-\frac{iz^2}{2}} & 0 \\
		0 & 1 & 0 \\
		0 & 0 & 1
	\end{array}\right) \text { if } z \in X_4,
\end{aligned}
$$
and $z^{2i\nu_1(y)}=e^{2i\nu_1(y){log_0(z)}}$ for choosing the branch cut running along $\R_+$.

(3). $M^X_A(z ,y)\to I$  as $z\to\infty$.

(4). $M^X_A(z ,y)\to O(1)$ as $z\to 0$.

For $|y|<1$, the RH problem $M^X_A$ satisfies the following expansion:
$$
M^X_A(y,z)=I+\frac{\left(M^X_A(y)\right)_1}{z}+O\left(\frac{1}{z^2}\right),
$$
where
$$
\left(M_A^X(y)\right)_1=\left(\begin{array}{ccc}
	0 & \beta_{12}^A & 0 \\
	\beta_{21}^A & 0 & 0 \\
	0 & 0 & 0
\end{array}\right), \quad y \in B(0,1),
$$
and
$$
\beta_{12}^A=\frac{\sqrt{2 \pi} e^{-\frac{\pi i}{4}} e^{-\frac{5 \pi \nu_1}{2}}}{\bar{y} \Gamma(-i \nu_1)}, \quad \beta_{21}^A=\frac{\sqrt{2 \pi} e^{\frac{\pi i}{4}} e^{\frac{3 \pi \nu_1}{2}}}{y \Gamma(i \nu_1)}.
$$
\end{proposition}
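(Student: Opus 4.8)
My plan is to solve this model RH problem in closed form and read off the subleading coefficient, since properties (1)--(4) together with the stated expansion characterize $M^X_A$ uniquely (uniqueness following from $\det M^X_A\equiv 1$ and Liouville's theorem applied to $M^X_A (\tilde M^X_A)^{-1}$).

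\emph{Reduction to a $2\times2$ problem.} Every jump matrix $v^{X_A}_j$ fixes the third row and the third column, so the third column of $M^X_A$ solves a scalar problem with no jump and normalization $(0,0,1)^T$, forcing it to equal $(0,0,1)^T$; the same holds for the third row. Hence $M^X_A$ has the block form $\diag(\widehat M,1)$ with $\widehat M$ a $2\times2$ matrix solving the RH problem on $X$ whose jumps are the upper-left blocks of the $v^{X_A}_j$. This is exactly the canonical parabolic-cylinder model problem of Deift--Zhou.

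\emph{Removing the $z$-dependence of the jumps.} Writing $\sigma_3=\diag(1,-1)$, I would set $\Psi(z)=\widehat M(z)\,z^{i\nu_1\sigma_3}e^{-\frac{i}{4}z^2\sigma_3}$, with the branch of $z^{i\nu_1}$ fixed by $\log_0$. A direct computation shows the power/Gaussian factors cancel the factors $z^{\pm2i\nu_1}e^{\mp iz^2/2}$ in each $v^{X_A}_j$, so $\Psi$ has constant jumps across all four rays (plus a constant diagonal jump across $\R_+$ from the branch cut of $z^{i\nu_1}$). Since $\Psi$ then has piecewise-constant jumps, $\Psi_z\Psi^{-1}$ is single-valued and entire; matching against $\widehat M=I+\widehat M_1/z+O(z^{-2})$ gives $\Psi_z\Psi^{-1}=-\tfrac{i}{2}z\sigma_3+B+O(1/z)$ with $B=\tfrac{i}{2}[\sigma_3,\widehat M_1]$ off-diagonal, so by Liouville $\Psi$ solves the linear ODE $\Psi_z=\bigl(-\tfrac{i}{2}z\sigma_3+B\bigr)\Psi$ exactly.

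\emph{Solving the ODE and extracting the constants.} Eliminating $\Psi_{21}$ turns this system into Weber's equation $\partial_z^2\Psi_{11}+\bigl(\tfrac{i}{2}+\tfrac{z^2}{4}-B_{12}B_{21}\bigr)\Psi_{11}=0$, whose solutions are parabolic cylinder functions $D_a$ of rotated arguments $e^{i\theta}z$. Matching the prescribed behavior $\Psi\sim z^{i\nu_1\sigma_3}e^{-\frac{i}{4}z^2\sigma_3}$ in each sector fixes the index through $B_{12}B_{21}=\beta_{12}^A\beta_{21}^A=\nu_1$, and the requirement that the Stokes multipliers of $D_a$ across the four rays reproduce the constant jumps pins down the ratios $\beta^A_{12}/\bar y$ and $\beta^A_{21}/y$. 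The Gamma functions enter through the standard connection coefficient $\sqrt{2\pi}/\Gamma(\cdot)$ relating $D_a(\zeta)$ and $D_{-a-1}(\pm i\zeta)$, while the factors $e^{\mp i\pi/4}$ and $e^{-5\pi\nu_1/2},\,e^{3\pi\nu_1/2}$ come from the rotations $\zeta=e^{i\theta}z$ and the branch choice, yielding the asserted $\beta^A_{12},\beta^A_{21}$. As a consistency check, $\beta^A_{12}\beta^A_{21}=\nu_1$ follows from $\Gamma(i\nu_1)\Gamma(-i\nu_1)=\pi/(\nu_1\sinh\pi\nu_1)$ and $|y|^2=1-e^{-2\pi\nu_1}$ (i.e.\ $\nu_1=-\tfrac{1}{2\pi}\ln(1-|y|^2)$), matching the ODE index.

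\emph{Main obstacle.} The genuinely delicate step is the last one: tracking the Stokes phenomenon of the parabolic cylinder functions across $X_1,\dots,X_4$ and bookkeeping the phase and exponential factors from $z\mapsto e^{i\theta}z$ and from $\log_0$, so the constant jumps match with the correct signs and the precise prefactors $e^{\mp i\pi/4}$, $e^{-5\pi\nu_1/2}$, $e^{3\pi\nu_1/2}$ emerge. Boundedness near $z=0$ (property (4)) is then automatic from the explicit $D_a$ representation, and uniqueness from Liouville.
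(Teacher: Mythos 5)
Your proposal is correct and is essentially the paper's own argument: remove the oscillatory factors $z^{\pm 2i\nu_1}e^{\mp iz^2/2}$ by a conjugation so that all jumps become piecewise constant, apply Liouville's theorem to $\Psi_z\Psi^{-1}$ to get the exact linear ODE $\Psi_z=\bigl(-\tfrac{i}{2}z\sigma_3+B\bigr)\Psi$ with $B=\tfrac{i}{2}[\sigma_3,\widehat M_1]$, reduce to Weber's equation, and identify $\beta^A_{12},\beta^A_{21}$ from the parabolic-cylinder connection coefficients, checking $\beta^A_{12}\beta^A_{21}=\nu_1$. The only organizational difference is that the paper conjugates by a piecewise-defined matrix $\mathcal{H}^A$ (so the ray jumps become trivial and a single constant jump $V^A$ sits on $\R$, the constants then coming from Wronskians of $D_a$), whereas you use the single global factor $z^{i\nu_1\sigma_3}e^{-\frac{i}{4}z^2\sigma_3}$ and keep constant triangular jumps on the four rays; this changes nothing in the ODE derivation or the connection-coefficient computation.
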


\begin{proposition}
	On the other hand, the $3 \times 3$ matrix valued function $M^X_B$ satisfies the following properties:

(1). $M^X_B(\cdot,y):\C\setminus X\to\C^{3\times3}$  is analytic for $z \in\C\setminus X$.

(2). $M^X_B(z,y)$ continuous to $X\setminus\{0\}$ and satisfy the jump condition below:
$$
(M^X_B(z,y))_+=(M^X_B(z,y))_-v^{X_B}_j(z,y),\quad z\in\C\setminus \{0\}.
$$
where the jump matrix $v^X_B(z,y)$ is defined as following:
$$
\begin{aligned}
	& \left(\begin{array}{ccc}
		1 & \bar{y} z^{-2 i \nu_4(y)} e^{\frac{i z^2}{2}} & 0 \\
		0 & 1 & 0 \\
		0 & 0 & 1
	\end{array}\right) \quad \text { if } z \in X_1, \quad\left(\begin{array}{ccc}
		1 & 0 & 0 \\
		\frac{y}{1-|y|^2} z^{2 i \nu_4(y)} e^{-\frac{i z^2}{2}} & 1 & 0 \\
		0 & 0 & 1
	\end{array}\right) \text { if } z \in X_2, \\
	& \left(\begin{array}{ccc}
		1 & -\frac{\bar y}{1-|y|^2} z^{-2i \nu_4 } e^{\frac{iz^2}{2}} & 0 \\
		0 & 1 & 0 \\
		0 & 0 & 1
	\end{array}\right) \text { if } z \in X_3, \quad\left(\begin{array}{ccc}
		1 & 0 & 0 \\
		-y z^{2i \nu_4 }  e^{-\frac{iz^2}{2}} & 1 & 0 \\
		0 & 0 & 1
	\end{array}\right) \text { if } z \in X_4,
\end{aligned}
$$
and $z^{2i\nu_4(y)}=e^{2i\nu_4(y)}e^{\log_{\pi}(z)}$ for choosing the branch cut running along $\R_-$.

(3). $M^X_B(z ,y)\to I$  as $z\to\infty$.

(4). $M^X_B(z ,y)\to O(1)$ as $z\to 0$.

For $|y|<1$, the RH problem $M^X_B$ satisfies the following expansion:
$$
M^X_B(y,z)=I+\frac{\left(M^X_B(y)\right)_1}{z}+O\left(\frac{1}{z^2}\right),
$$
where
$$
\left(M_B^X(y)\right)_1=\left(\begin{array}{ccc}
	0 & \beta_{12}^B & 0 \\
	\beta_{21}^B & 0 & 0 \\
	0 & 0 & 0
\end{array}\right), \quad y \in B(0,1),
$$
and
$$
\beta_{12}^B=\frac{\sqrt{2\pi}e^{\frac{\pi i}{4}}e^{-\frac{\pi\nu_4}{2}}}{ y\Gamma(i\nu_4)},\quad \beta_{21}^B=\frac{\sqrt{2\pi}e^{-\frac{\pi i}{4}}e^{-\frac{\pi\nu_4}{2}}}{\bar y\Gamma(-i\nu_4)}.
$$
\end{proposition}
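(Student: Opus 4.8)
The plan is to reduce the stated $3\times 3$ problem to the classical $2\times 2$ parabolic-cylinder model and read off the residue coefficients from the asymptotics of Weber functions. First I would note that every jump matrix $v^{X_B}_j$ leaves the third row and column untouched: the $(3,3)$ entry is $1$ and all remaining entries in that row and column vanish. Hence the RH problem decouples, the third row and column of $M^X_B$ are forced to equal $(0,0,1)$ by analyticity together with $M^X_B\to I$ as $z\to\infty$, and all nontrivial content lives in the upper-left $2\times 2$ block, which I denote $m(z,y)$. This block solves the standard $2\times 2$ RH problem on the cross $X=\cup_{j=1}^4 X_j$ whose entries carry the factors $z^{\pm 2i\nu_4(y)}e^{\mp \frac{iz^2}{2}}$, with the branch of $\log$ taken along $\R_-$ as prescribed by $\log_\pi$.

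Second, I would strip the $z$-dependence from the exponents by the usual substitution $\Psi(z)=m(z,y)\,z^{i\nu_4\sigma_3}e^{-\frac{iz^2}{4}\sigma_3}$, with $\sigma_3=\mathrm{diag}(1,-1)$ and signs chosen so that the four $z$-dependent jumps become constant (piecewise) matrices on the rays. A contour-independence and Liouville-type argument then shows that $\frac{d\Psi}{dz}\,\Psi^{-1}$ is entire and grows at most linearly, so $\Psi$ satisfies an ODE of the form $\Psi_z=\bigl(\tfrac{iz}{2}\sigma_3+B\bigr)\Psi$ with $B$ a constant off-diagonal matrix determined by $(m)_1$. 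This is exactly the Weber (parabolic-cylinder) equation, so each column of $\Psi$ is built from $D_a(\zeta)$ with the parameter $a$ fixed by $\nu_4$; matching the constant jumps across $X_1,\dots,X_4$ via the connection formulas for $D_a$ pins down $B$ and all integration constants.

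Third, I would extract $\bigl(M^X_B(y)\bigr)_1$ from the large-$z$ expansion $D_a(\zeta)\sim \zeta^a e^{-\zeta^2/4}\bigl(1+O(\zeta^{-2})\bigr)$ together with its subdominant term, which produces the $1/z$ correction; imposing $M^X_B\to I$ normalizes everything and forces the off-diagonal residue entries into the Gamma-function form, with the reflection identity for $\Gamma$ yielding $\beta^B_{12}$ and $\beta^B_{21}$. The proposition for $M^X_A$ proved just above serves as a cross-check: the $-k_0$ model differs from the $+k_0$ model by the opposite sign of the quadratic phase (the $-\frac{iz_2^2}{2}$ versus $+\frac{iz^2}{2}$ in the two expansions of $t\Phi_{21}$) and by the branch convention $\log_\pi$ rather than $\log_0$, and tracking how these alterations act on $\beta^A_{12},\beta^A_{21}$ should reproduce the symmetric weight $e^{-\pi\nu_4/2}$ appearing in $\beta^B_{12},\beta^B_{21}$ in place of the asymmetric $e^{-5\pi\nu_1/2},e^{3\pi\nu_1/2}$ of the $A$-case.

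I expect the main obstacle to be \emph{bookkeeping rather than conceptual}: one must carry the branch choice ($\log_\pi$ along $\R_-$) through each of the four rays and keep the constant phase factors $e^{\pm\pi i/4}$ and the exponential weight $e^{-\pi\nu_4/2}$ with the correct sign and argument. Precisely this branch convention is what converts the parabolic-cylinder connection coefficients into the \emph{symmetric} weights of the $B$-case, and an incorrect choice merely reshuffles the $e^{\pm\pi\nu_4/2}$ factors and phases while leaving the $\Gamma(\pm i\nu_4)$ structure intact; getting these constants exactly right so as to match $\beta^B_{12}$ and $\beta^B_{21}$ as stated is the delicate part.
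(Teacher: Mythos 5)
Your proposal is correct and follows essentially the same route as the paper's own proof: the paper likewise reduces to the $2\times 2$ parabolic-cylinder model (written in $3\times 3$ form via $\tilde\sigma_3$), strips the phases with a piecewise transformation $\mathcal{H}^B$ and the substitution $\Psi=P^B e^{\frac{iz^2}{4}\tilde\sigma_3}$ so the jump becomes constant, runs the Liouville argument to obtain the Weber ODE, solves it with parabolic cylinder functions $D_a$, and extracts $\beta^B_{12},\beta^B_{21}$ from Wronskian/connection formulas for $D_a$. The one bookkeeping caveat, which you yourself flag, is that for the $B$-case the correct diagonal factor carries the opposite signs to what you literally wrote, namely $z^{-i\nu_4\tilde\sigma_3}e^{+\frac{iz^2}{4}\tilde\sigma_3}$ rather than $z^{+i\nu_4\tilde\sigma_3}e^{-\frac{iz^2}{4}\tilde\sigma_3}$, and it is exactly this choice (together with the $\log_\pi$ branch) that produces the symmetric weights $e^{-\pi\nu_4/2}$ in both $\beta^B_{12}$ and $\beta^B_{21}$.
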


\begin{remark}
	 The model problem $X_A$ and $X_B$ has a relationship like mirror reflection, since the  orientation of original RH problem 1 are opposite and the $\delta$ function.
\end{remark}

Furthermore, factorize Model problem to the Parabolic problem so that we can get the leading term in long time.

At first, add jump contour $\R$ with jump function $I$ and inverse the orientation of $X_2$ and $X_3$, as shown in Fig. \ref{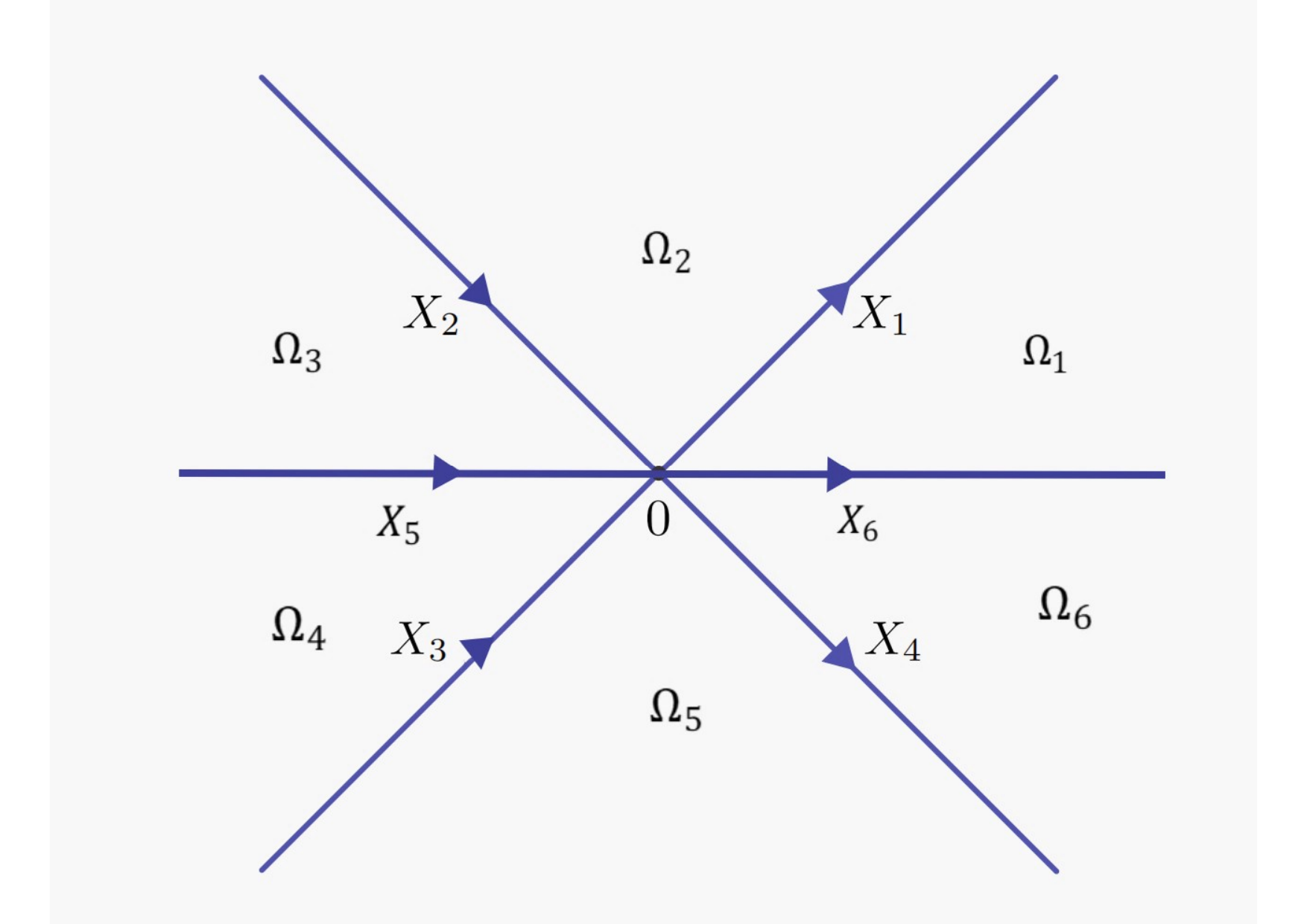}.
\begin{figure}[!h]
	\centering
	\includegraphics[scale=0.4]{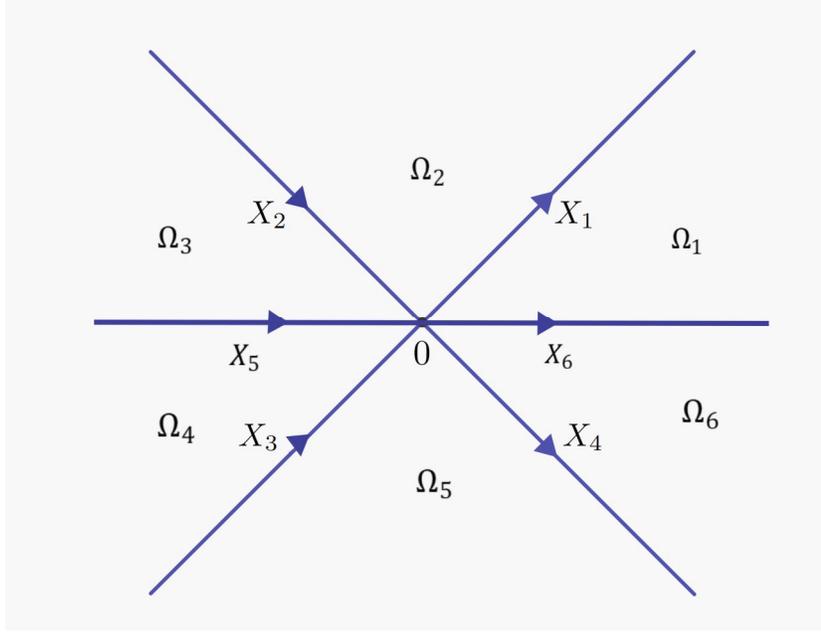}
    \caption{{\protect\small
			The new contour when adding jump contour $\R$ with jump function $I$.}}
	\label{model-X-R.pdf}
\end{figure}

Denote
$$
z^{i\nu_1\tilde\sigma_3}=
\left(\begin{array}{ccc}
	z^{i\nu_1\sigma_3} & 0 \\
	0 & 1\\
\end{array}\right)=
\left(\begin{array}{ccc}
	z^{i\nu_1} & 0 &0\\
	0 & z^{-i\nu_1} &0\\
	0 & 0 & 1
\end{array}\right),\quad
e^{\frac{iz^2}{4}\tilde\sigma_3}=
\left(\begin{array}{ccc}
	e^{\frac{iz^2}{4}\sigma_3} & 0 \\
	0 & 1\\
\end{array}\right)=
\left(\begin{array}{ccc}
	e^{\frac{iz^2}{4}} & 0 &0\\
	0 & e^{-\frac{iz^2}{4}}  &0\\
	0 & 0 & 1
\end{array}\right),
$$
and introduce the transform matrix $\mathcal{H}^A$ as follows
$$
\mathcal{H}^A(y, z)= \begin{cases}
	\left(\begin{array}{ccc}
		1 & 0 & 0 \\
		{\frac{\bar{y}}{1-|y|^2}} z^{-2 i \nu_1(y)} e^{\frac{i z^2}{2}} & 1 & 0 \\
		0 & 0 & 1
	\end{array}\right)z^{i\nu_1\tilde\sigma_3} , & z \in \Omega_1, \\
	
	z^{i\nu_1\tilde\sigma_3}, & z \in \Omega_2, \\
	
	\left(\begin{array}{ccc}
		1 & -y z^{2 i \nu_1(y)} e^{-\frac{i z^2}{2}} & 0 \\
		0 & 1 & 0 \\
		0 & 0 & 1
	\end{array}\right)z^{i\nu_1\tilde\sigma_3}, & z \in \Omega_3, \\
	
	\left(\begin{array}{ccc}
		1 & 0 & 0 \\
		-\bar y z^{-2i \nu_1 } e^{\frac{iz^2}{2}} & 1 & 0 \\
		0 & 0 & 1
	\end{array}\right)z^{i\nu_1\tilde\sigma_3}, & z \in \Omega_4, \\
	z^{i\nu_1\tilde\sigma_3}, & z\in \Omega_5,\\
	
	\left(\begin{array}{ccc}
		1 & \frac{y}{1-|y|^2} z^{2i \nu_1 }  e^{-\frac{iz^2}{2}} & 0 \\
		0 & 1 & 0 \\
		0 & 0 & 1
	\end{array}\right)z^{i\nu_1\tilde\sigma_3}, & z\in \Omega_6.\\
	
\end{cases}
$$
Further, take
$$
P^A(y,z)=M^X_A(y,z)\mathcal{H}^A(y,z).,\\
$$
By the directly computation, one has
$$
P^A_+(y,z)=P^A_-(y,z)v^{P^A}_j(y,z),\quad
v^{P^A}_j(y,z)=(\mathcal{H}^A_-(y,z))^{-1}v^{X_A}_j(y,z)\mathcal{H}^A_+(y,z).
$$
More specially, we have
$$
v^{P^A}_j(y,z)=\begin{cases}
	I,\quad z\in X_1,\\
	I,\quad z\in X_2,\\
	I,\quad z\in X_3,\\
	I,\quad z\in X_4,\\
	e^{-\frac{iz^2}{4}\operatorname{ad} \tilde\sigma_3}\left(\begin{array}{ccc}
		1& -y & 0\\
		\bar y & 1-|y|^2 &0\\
		0 & 0 & 1
	\end{array}
	\right),\quad z\in X_5,\\
	e^{-\frac{iz^2}{4}\operatorname{ad} \tilde\sigma_3}\left(\begin{array}{ccc}
		1& -y & 0\\
		\bar y & 1-|y|^2 &0\\
		0 & 0 & 1
	\end{array}
	\right),\quad z\in X_6.
\end{cases}
$$
There exists jump  for $z^{2iv_1(y)}=e^{2i\nu_1(y)\log_0(z)}$ in $z\in\R_+$, i.e.,
when $z\in X_6$
$$
\begin{aligned}
	v^{P^A}_6&=(\mathcal{H}^A_-(y,z))^{-1}v^{X_A}_6(y,z)\mathcal{H}^A_+(y,z)
	=(\mathcal{H}^A_{6,-}(y,z))^{-1}\mathcal{H}^A_{1,+}(y,z)\\
	&=z_-^{-i\nu_1\tilde\sigma_3}
	\left(\begin{array}{ccc}
		1 & -\frac{y}{1-|y|^2} z_-^{2i \nu_1 }  e^{-\frac{iz^2}{2}} & 0 \\
		0 & 1 & 0 \\
		0 & 0 & 1
	\end{array}\right)\left(\begin{array}{ccc}
		1 & 0 & 0 \\
		{\frac{\bar{y}}{1-|y|^2}} z_+^{-2 i \nu_1(y)} e^{\frac{i z^2}{2}} & 1 & 0 \\
		0 & 0 & 1
	\end{array}\right)z_+^{i\nu_1\tilde\sigma_3}\\
	&=z_-^{-i\nu_1\tilde\sigma_3}
	\left(\begin{array}{ccc}
		1-\frac{|y|^2}{(1-|y|^2)^2}z_-^{2i\nu_1}z_+^{-2i\nu_1} & -\frac{y}{1-|y|^2} z_-^{2i \nu_1 }  e^{-\frac{iz^2}{2}} & 0 \\
		\frac{\bar y}{1-|y|^2} z_+^{-2i \nu_1 }  e^{\frac{iz^2}{2}} & 1 & 0 \\
		0 & 0 & 1
	\end{array}\right)z_+^{i\nu_1\tilde\sigma_3}\\
	&=e^{-\frac{iz^2}{4}\operatorname{ad} \tilde\sigma_3}\left(\begin{array}{ccc}
		(1-|y|^2)z_-^{-i\nu_1}z_+^{i \nu_1 } & -\frac{y}{1-|y|^2} z_-^{i \nu_1 }z_+^{-i \nu_1 }   & 0 \\
		\frac{\bar y}{1-|y|^2} z_-^{i\nu_1}z_+^{-i \nu_1 }  &  z_-^{i\nu_1}z_+^{-i \nu_1 } & 0 \\
		0 & 0 & 1
	\end{array}\right)\\
	&=e^{-\frac{iz^2}{4}\operatorname{ad} \tilde\sigma_3}\left(\begin{array}{ccc}
		1 & -y   & 0 \\
		\bar y  &  1-|y|^2 & 0 \\
		0 & 0 & 1
	\end{array}\right).\\
\end{aligned}
$$
Since
$$
z_-^{2i\nu_1}z_+^{-2i\nu_1}=e^{2i\nu_1(\log_0(z_-)-\log_0(z_+))}=e^{2i\nu_1(\ln|z|+2\pi i-\ln|z|)}
=e^{-4\pi\nu_1}=(1-|y|^2)^2,
$$
we get a RH problem with jump contour on $\R$, in particular
$$
\begin{cases}
	P^A_+(y,z)=P^A_-(y,z)e^{-\frac{iz^2}{4}\operatorname{ad} \tilde\sigma_3}V^A(y,z),&z\in\R,\\
	P^A\to z^{i\nu_1\tilde\sigma_3} &\text{as}\quad z \to \infty,
\end{cases}
$$
where
$$
V^A=\left(\begin{array}{ccc}
	1 & -y   & 0 \\
	\bar y  &  1-|y|^2 & 0 \\
	0 & 0 & 1
\end{array}\right).
$$

Since $\mathcal H^A=\left(I+O\left(\frac{1}{z}\right)\right)z^{i\nu_1\tilde \sigma_3}$ and as the procedure in \cite{Deift-Zhou-1993}, one has
$$
\Psi=P^Ae^{-\frac{iz^2\tilde\sigma_3}{4}}=\hat{\Psi}z^{i\nu\tilde\sigma_3}e^{-\frac{iz^2\tilde\sigma_3}{4}},
$$
and
$$
P^A=\left(I+\frac{\left(M^X_A(y)\right)_1}{z}+O\left(\frac{1}{z^2}\right)\right)z^{i\nu_1\tilde\sigma_3},
$$
then we have
$$
\Psi_+(z)=\Psi_-(z)V^A(y),
$$
and further by direct computation
$$
\left(\partial_z\Psi+\frac{iz\tilde \sigma_3}{2}\Psi\right)_+=\left(\partial_z\Psi+\frac{iz\tilde \sigma_3}{2}\Psi\right)_-V^A(y).
$$
It is claimed that $\left(\partial_z\Psi+\frac{iz\tilde \sigma_3}{2}\Psi\right)\Psi^{-1}$ has no jump along the $\R$, so it is an entire function on the $\C$, then we have
$$
\left(\partial_z\Psi+\frac{iz\tilde \sigma_3}{2}\Psi\right)\Psi^{-1}=\frac{i}{2}\left[\tilde\sigma_3,\left(M^X_A\right)_1\right]+O\left(\frac{1}{z}\right).
$$
Let
$$
\frac{i}{2}\left[\tilde\sigma_3,\left(M^X_A\right)_1\right]:=\left(\begin{array}{ccc}
	0 & \tilde\beta_{12} & 0 \\
	\tilde\beta_{21} & 0 & 0 \\
	0 & 0 & 0
\end{array}\right),
$$
where $\beta_{12}=-i\tilde\beta_{12}$ and $\beta_{21}=i\tilde\beta_{21}$.

Rewrite the above ordinary differential equation as
$$
\frac{\partial \Psi_{11}}{\partial z}+\frac{1}{2} i z \Psi_{11}=\tilde\beta_{12} \Psi_{21},
$$
then we have
$$
\frac{\partial^2 \Psi_{11}}{\partial z^2}+\left(\frac{1}{2} i+\frac{1}{4} z^2-\tilde\beta_{12} \tilde\beta_{21}\right) \Psi_{11}=0,
$$
and
$$
\frac{\partial^2 \Psi_{22}}{\partial z^2}+\left(-\frac{1}{2} i+\frac{1}{4} z^2-\tilde\beta_{12} \tilde\beta_{21}\right) \Psi_{22}=0.
$$
Now focusing on the upper half plane and denoting $z=e^{\frac{3}{4} \pi i} \xi, \Psi_{11}^{+}(z)=\Psi_{11}^{+}\left(e^{\frac{3}{4} \pi i} \xi\right)=g(\xi)$, yields
$$
\frac{d^2 g(\xi)}{d \xi^2}+\left(\frac{1}{2}-\frac{1}{4} \xi^2+a\right) g(\xi)=0,
$$
which satisfies the Weber equation and by the standard analysis
$$
g(\xi)=c_1 D_a(\xi)+c_2 D_a(-\xi),
$$
where
$$
D_a(\xi)=\left\{\begin{array}{l}
	\xi^a e^{-\frac{1}{4} \xi^2}\left(1+O\left(\frac{1}{\xi^2}\right)\right), \quad|\arg \xi|<\frac{3}{4} \pi, \\
	\xi^a e^{-\frac{1}{4} \xi^2}\left(1+O\left(\frac{1}{\xi^2}\right)\right)-\frac{\sqrt{2 \pi}}{\Gamma(-a)} e^{a \pi i} \xi^{-a-1} e^{\frac{1}{4} \xi^2}\left(1+O\left(\frac{1}{\xi^2}\right)\right), \quad \frac{\pi}{4}<\arg \xi<\frac{5 \pi}{4}, \\
	\xi^a e^{-\frac{1}{4} \xi^2}\left(1+O\left(\frac{1}{\xi^2}\right)\right)-\frac{\sqrt{2 \pi}}{\Gamma(-a)} e^{-a \pi i} \xi^{-a-1} e^{\frac{1}{4} \xi^2}\left(1+O\left(\frac{1}{\xi^2}\right)\right), \quad-\frac{5 \pi}{4}<\arg \xi<-\frac{\pi}{4},
\end{array}\right.
$$
where $D_a(z)$ denotes the parabolic cylinder function.

Combining the behavior of $P^A$ as $z\to\infty$ above leads to
$$
a=i\nu_1,\quad c_1=(e^{\frac{3\pi i}{4}})^{i \nu_1}=e^{-\frac{3\pi \nu}{4}},\quad c_2=0,
$$
and in the other word
$$
\Psi_{11}^{+}(z)=e^{-\frac{3}{4} \pi \nu} D_a\left(e^{-\frac{3}{4} \pi i} z\right), \quad a=i \nu.
$$
For the $\operatorname{Im} z<0$, as the same procedure, let $z=e^{-\frac{1}{4} \pi i} \xi, \quad \Psi_{11}^{-}(z)=\Psi_{11}^{-}\left(e^{-\frac{1}{4} \pi i} \xi\right)=g(\xi)$ and one obtains that
$$
a=i\nu_1,\quad c_1=(e^{\frac{7\pi i}{4}})^{i \nu_1}=e^{-\frac{7\pi \nu_1}{4}},\quad c_2=0.
$$
On the other hand, when $\operatorname{Im} z>0$, let $z=e^{-\frac{1}{4} \pi i} \xi, \quad \Psi_{22}^{-}(z)=\Psi_{22}^{-}\left(e^{-\frac{1}{4} \pi i} \xi\right)=g(\xi)$ and we have
$$
a=-i\nu_1,\quad c_1=(e^{\frac{\pi i}{4}})^{-i \nu_1}=e^{\frac{\pi \nu_1}{4}},\quad c_2=0.
$$
When $\operatorname{Im} z<0$, let $z=e^{\frac{3}{4} \pi i} \xi, \quad \Psi_{22}^{-}(z)=\Psi_{22}^{-}\left(e^{\frac{3}{4} \pi i} \xi\right)=g(\xi)$ and we have
$$
a=-i\nu_1,\quad c_1=(e^{\frac{5\pi i}{4}})^{-i \nu_1}=e^{\frac{5\pi \nu_1}{4}},\quad c_2=0.
$$
Notice that $\operatorname{Im} z<0$, choose $e^{\frac{7}{4} \pi i}$ and $e^{\frac{5\pi i}{4}}$ since the branch cut from $0$ to $2\pi$.

In conclusion, we have
$$
\Psi_{11}(q, z)= \begin{cases}e^{\frac{-3 \pi \nu_1}{4}} D_{i \nu_1}\left(e^{-\frac{3 \pi i}{4}} z\right), & \operatorname{Im} z>0, \\ e^{\frac{-7 \pi \nu_1}{4}} D_{i \nu_1}\left(e^{\frac{\pi i}{4}} z\right), & \operatorname{Im} z<0,\end{cases}
$$
and
$$
\Psi_{22}(q, z)= \begin{cases}e^{\frac{\pi \nu_1}{4}} D_{-i \nu_1}\left(e^{-\frac{\pi i}{4}} z\right), & \operatorname{Im} z>0, \\ e^{\frac{5 \nu_1}{4}} D_{-i \nu_1}\left(e^{\frac{3 \pi i}{4}} z\right), & \operatorname{Im} z<0.\end{cases}
$$
Moreover, we have
$$
\begin{aligned}
	& \Psi_{21}^{+}(z)=e^{-\frac{3}{4} \pi \nu_1}\left(\tilde\beta_{12}\right)^{-1}\left[\partial_z D_a\left(e^{-\frac{3 \pi i}{4}} z\right)+\frac{i z}{2} D_a\left(e^{-\frac{3 \pi i}{4}} z\right)\right], \\
	& \Psi_{12}^{+}(z)=e^{\frac{1}{4} \pi \nu_1}\left(\tilde\beta_{21}\right)^{-1}\left[\partial_z D_{-a}\left(e^{-\frac{\pi i}{4}} k\right)-\frac{i z}{2} D_{-a}\left(e^{-\frac{\pi i}{4}} z\right)\right],
\end{aligned}
$$
and
$$
\begin{aligned}
	& \Psi_{21}^{-}(z)=e^{-\frac{7}{4} \pi \nu_1}\left(\tilde\beta_{12}\right)^{-1}\left[\partial_z D_a\left(e^{\frac{\pi i}{4}} z\right)+\frac{i k}{2} D_a\left(e^{\frac{\pi i}{4}} z\right)\right], \\
	& \Psi_{12}^{-}(z)=e^{\frac{5}{4} \pi \nu_1}\left(\tilde\beta_{21}\right)^{-1}\left[\partial_z D_{-a}\left(e^{\frac{3 \pi i}{4}} z\right)-\frac{i k}{2} D_{-a}\left(e^{\frac{3 \pi i}{4}} z\right)\right].
\end{aligned}
$$
Since $\left(\Psi_-\right)^{-1}\Psi_+=V^A(y)$ and we can obtain that
$$
\bar y=\Psi^-_{11}\Psi^+_{21}-\Psi^-_{21}\Psi^+_{11}=\frac{e^{-\frac{5\pi\nu_1}{2}}}{\tilde\beta_{12}}W\left(D_{i\nu_1}(e^{\frac{\pi i}{4}}z),D_{i\nu_1}(e^{-\frac{3\pi i}{4}}z)\right)=\frac{\sqrt{2\pi}e^{\frac{\pi i}{4}}e^{-\frac{5\pi\nu_1}{2}}}{\tilde\beta_{12}\Gamma(-i\nu_1)},
$$
and
$$
-y=\Psi^-_{22}\Psi^+_{12}-\Psi^-_{12}\Psi^+_{22}=\frac{e^{\frac{3\pi\nu_1}{2}}}{\tilde\beta_{21}}W\left(D_{-i\nu_1}(e^{\frac{3\pi i}{4}}z),D_{-i\nu_1}(e^{-\frac{\pi i}{4}}z)\right)=\frac{\sqrt{2\pi}e^{\frac{3\pi i}{4}}e^{\frac{3\pi\nu_1}{2}}}{\tilde\beta_{21}\Gamma(i\nu_1)}.
$$
Now, we have
$$
\beta_{12}^A=-i\tilde\beta_{12}=\frac{\sqrt{2\pi}e^{-\frac{\pi i}{4}}e^{-\frac{5\pi\nu_1}{2}}}{\bar y\Gamma(-i\nu_1)},\quad \beta_{21}^A=i\tilde\beta_{21}=\frac{\sqrt{2\pi}e^{\frac{\pi i}{4}}e^{\frac{3\pi\nu_1}{2}}}{ y\Gamma(i\nu_1)}.
$$

Again, introduce the matrix
$$
\mathcal{H}^B(y, z)= \begin{cases}
	\left(\begin{array}{ccc}
		1 & \bar{y} z^{-2 i \nu_4(y)} e^{\frac{i z^2}{2}} & 0 \\
		0 & 1 & 0 \\
		0 & 0 & 1
	\end{array}\right)z^{-i\nu_4\tilde\sigma_3} , & z \in \Omega_1, \\
	
	z^{-i\nu_4\tilde\sigma_3}, & z \in \Omega_2, \\
	
	\left(\begin{array}{ccc}
		1 & 0 & 0 \\
		-\frac{y}{1-|y|^2} z^{2 i \nu_4(y)} e^{-\frac{i z^2}{2}} & 1 & 0 \\
		0 & 0 & 1
	\end{array}\right)z^{-i\nu_4\tilde\sigma_3}, & z \in \Omega_3, \\
	
	\left(\begin{array}{ccc}
		1 & -\frac{\bar y}{1-|y|^2} z^{-2i \nu_4 } e^{\frac{iz^2}{2}} & 0 \\
		0 & 1 & 0 \\
		0 & 0 & 1
	\end{array}\right)z^{-i\nu_4\tilde\sigma_3}, & z \in \Omega_4, \\
	z^{-i\nu_4\tilde\sigma_3}, & z\in \Omega_5,\\
	
	\left(\begin{array}{ccc}
		1 & 0 & 0 \\
		y z^{2i \nu_4 }  e^{-\frac{iz^2}{2}} & 1 & 0 \\
		0 & 0 & 1
	\end{array}\right)z^{-i\nu_4\tilde\sigma_3}, & z\in \Omega_6.\\
	
\end{cases}
$$
and take the transformation
$$
P^B(y,z)=M^X_B(y,z)\mathcal{H}^B(y,z).\\
$$
The direct calculation yields
$$
P^B_+(y,z)=P^B_-(y,z)v^{P^B}_j(y,z),\quad
v^{P^B}_j(y,z)=(\mathcal{H}^B_-(y,z))^{-1}v^{X_B}_j(y,z)\mathcal{H}^B_+(y,z).
$$
In particular, we have
$$
v^{P^B}_j(y,z)=\begin{cases}
	I,\quad z\in X_1,\\
	I,\quad z\in X_2,\\
	I,\quad z\in X_3,\\
	I,\quad z\in X_4,\\
	e^{\frac{iz^2}{4}\operatorname{ad} \tilde\sigma_3}\left(\begin{array}{ccc}
		1& \bar y & 0\\
		- y & 1-|y|^2 &0\\
		0 & 0 & 1
	\end{array}
	\right),\quad z\in X_5,\\
	e^{\frac{iz^2}{4}\operatorname{ad} \tilde\sigma_3}\left(\begin{array}{ccc}
		1& \bar y & 0\\
		- y & 1-|y|^2 &0\\
		0 & 0 & 1
	\end{array}
	\right),\quad z\in X_6.
\end{cases}
$$
Like the situation in $P^A$, there exits branch cut on $\R_-$ for $z^{2i\nu_4}=e^{2i\nu_4\log_{\pi}z}$, so that the jump matrix on $X_5$ is just as follows

\begin{align*}
	V_5^B(y,z)&=z_-^{i\nu_4\tilde\sigma_3}	
	\left(\begin{array}{ccc}
		1 & \frac{\bar y}{1-|y|^2} z_-^{-2i \nu_4 } e^{\frac{iz^2}{2}} & 0 \\
		0 & 1 & 0 \\
		0 & 0 & 1
	\end{array}\right)\left(\begin{array}{ccc}
		1 & 0 & 0 \\
		-\frac{y}{1-|y|^2} z_+^{2 i \nu_4} e^{-\frac{i z^2}{2}} & 1 & 0 \\
		0 & 0 & 1
	\end{array}\right)	
	z_+^{-i\nu_4\tilde\sigma_3},\\	
	&=z_-^{i\nu_4\tilde\sigma_3}	
	\left(\begin{array}{ccc}
		1-\frac{|y|^2}{(1-|y|^2)^2}z_-^{-2i \nu_4 }z_+^{2 i \nu_4} & \frac{\bar y}{1-|y|^2} z_-^{-2i \nu_4 } e^{\frac{iz^2}{2}} & 0 \\
		-\frac{y}{1-|y|^2} z_+^{2 i \nu_4} e^{-\frac{i z^2}{2}} & 1 & 0 \\
		0 & 0 & 1
	\end{array}\right)	
	z_+^{-i\nu_4\tilde\sigma_3},\\	
	&=e^{\frac{iz^2}{4}\operatorname{ad} \tilde\sigma_3}
	\left(\begin{array}{ccc}
		(1-|y|^2)z_-^{i \nu_4 }z_+^{-i \nu_4 } & \frac{\bar y}{1-|y|^2} z_-^{-i \nu_4 }z_+^{i \nu_4 }  & 0 \\
		-\frac{y}{1-|y|^2} z_-^{-i \nu_4 }z_+^{i \nu_4 }  & z_-^{-i \nu_4 }z_+^{i \nu_4 } & 0 \\
		0 & 0 & 1
	\end{array}\right),\\
	&=e^{\frac{iz^2}{4}\operatorname{ad} \tilde\sigma_3}\left(\begin{array}{ccc}
		1 & {\bar y}  & 0 \\
		-{y}  & 1-|y|^2 & 0 \\
		0 & 0 & 1
	\end{array}\right),
\end{align*}
where
$$
z_-^{-2i \nu_4 }z_+^{2 i \nu_4}=e^{2i\nu_4(-\log_{\pi}(z_-)+\log_{\pi}(z_+))}=e^{-4\pi\nu_4}=(1+|r_2(-k_0)|^2)^2.
$$
\par
Thus the RH problem is
$$
\begin{cases}
	P^B_+(y,z)=P^B_-(y,z)e^{\frac{iz^2}{4}\operatorname{ad} \tilde\sigma_3}V^B(y,z),&z\in\R,\\
	P^B\to z^{-i\nu_4\tilde\sigma_3} &\text{as}\quad z \to \infty,
\end{cases}
$$
where
$$
V^B=\left(\begin{array}{ccc}
	1 & \bar y   & 0 \\
	- y  &  1-|y|^2 & 0 \\
	0 & 0 & 1
\end{array}\right).
$$

Since $\mathcal H^B=\left(I+O\left(\frac{1}{z}\right)\right)z^{-i\nu_1\tilde \sigma_3}$ and by the procedure in \cite{Deift-Zhou-1993}, we have
$$
\Psi=P^Be^{\frac{iz^2\tilde\sigma_3}{4}}=\hat{\Psi}z^{-i\nu_4\tilde\sigma_3}e^{\frac{iz^2\tilde\sigma_3}{4}},
$$
and
$$
P^B=\left(I+\frac{\left(M^X_B(y)\right)_1}{z}+O\left(\frac{1}{z^2}\right)\right)z^{-i\nu_1\tilde\sigma_3},
$$
then we have
$$
\Psi_+(z)=\Psi_-(z)V^B(y),
$$
and by direct computation
$$
\left(\partial_z\Psi-\frac{iz\tilde \sigma_3}{2}\Psi\right)_+=\left(\partial_z\Psi-\frac{iz\tilde \sigma_3}{2}\Psi\right)_-V^B(y).
$$
We claim that $\left(\partial_z\Psi-\frac{iz\tilde \sigma_3}{2}\Psi\right)\Psi^{-1}$ has no jump along the $\R$, so it is an entire function on the $\C$, then we have
$$
\begin{aligned}
	\left(\partial_z\Psi-\frac{iz\tilde \sigma_3}{2}\Psi\right)\Psi^{-1}&=(\partial_z\hat\Psi)\hat\Psi^{-1}+\hat\Psi(-i\nu_4\tilde\sigma_3z^{-1})\hat\Psi^{-1}+\hat\Psi\left(\frac{iz}{2}\tilde\sigma_3\right)\hat\Psi^{-1}-\left(\frac{iz}{2}\tilde\sigma_3\hat\Psi\right)\hat\Psi^{-1}\\
	&=-\frac{iz}{2}\left[\tilde\sigma_3,\hat\Psi\right]\hat\Psi^{-1}+O\left(\frac{1}{z}\right)\\
	&=-\frac{i}{2}\left[\tilde\sigma_3,\left(M^X_B\right)_1\right]+O\left(\frac{1}{z}\right).
\end{aligned}
$$
Let
$$
-\frac{i}{2}\left[\tilde\sigma_3,\left(M^X_B\right)_1\right]:=\left(\begin{array}{ccc}
	0 & \tilde\beta_{12} & 0 \\
	\tilde\beta_{21} & 0 & 0 \\
	0 & 0 & 0
\end{array}\right),
$$
where $\beta_{12}^B=i\tilde\beta_{12}$ and $\beta_{21}^B=-i\tilde\beta_{21}$.

Rewrite the above ODE as
$$
\frac{\partial \Psi_{11}}{\partial z}-\frac{1}{2} i z \Psi_{11}=\tilde\beta_{12} \Psi_{21}.
$$
Then we have
$$
\frac{\partial^2 \Psi_{11}}{\partial z^2}+\left(-\frac{1}{2} i+\frac{1}{4} z^2-\tilde\beta_{12} \tilde\beta_{21}\right) \Psi_{11}=0,
$$
and
$$
\frac{\partial^2 \Psi_{22}}{\partial z^2}+\left(\frac{1}{2} i+\frac{1}{4} z^2-\tilde\beta_{12} \tilde\beta_{21}\right) \Psi_{22}=0.
$$
Now focusing on the upper half plane and denoting $z=e^{\frac{1}{4} \pi i} \xi, \Psi_{11}^{+}(z)=\Psi_{11}^{+}\left(e^{\frac{1}{4} \pi i} \xi\right)=g(\xi)$, yield
$$
\frac{d^2 g(\xi)}{d \xi^2}+\left(\frac{1}{2}-\frac{1}{4} \xi^2+a\right) g(\xi)=0,
$$
which satisfies the Weber equation and by the standard analysis
$$
g(\xi)=c_1 D_a(\xi)+c_2 D_a(-\xi).
$$

Combine the behavior of $P^B$ as $z\to\infty$ before, we have
$$
a=-i\nu_4,\quad c_1=(e^{\frac{\pi i}{4}})^{-i \nu_4}=e^{\frac{\pi \nu}{4}},\quad c_2=0,
$$
in other word,
$$
\Psi_{11}^{+}(z)=e^{\frac{1}{4} \pi \nu} D_a\left(e^{\frac{1}{4} \pi i} z\right), \quad a=-i \nu_4.
$$
For the $\operatorname{Im} z<0$, as the same procedure, let $z=e^{-\frac{3}{4} \pi i} \xi, \quad \Psi_{11}^{-}(z)=\Psi_{11}^{-}\left(e^{-\frac{3}{4} \pi i} \xi\right)=g(\xi)$ and get that
$$
a=i\nu_1,\quad c_1=(e^{-\frac{3\pi i}{4}})^{-i \nu_4}=e^{-\frac{3\pi \nu_4}{4}},\quad c_2=0.
$$

On the other hand, when $\operatorname{Im} z>0$, let $z=e^{\frac{3}{4} \pi i} \xi, \Psi_{22}^{+}(z)=\Psi_{22}^{+}\left(e^{\frac{3}{4} \pi i} \xi\right)=g(\xi)$ and we have
$$
a=i\nu_4,\quad c_1=(e^{\frac{3\pi i}{4}})^{i \nu_4}=e^{-\frac{3\pi \nu_1}{4}},\quad c_2=0.
$$
When $\operatorname{Im} z<0$, let $z=e^{\frac{1}{4} \pi i} \xi, \quad \Psi_{22}^{-}(z)=\Psi_{22}^{-}\left(e^{\frac{1}{4} \pi i} \xi\right)=g(\xi)$ and we have
$$
a=i\nu_4,\quad c_1=(e^{-\frac{\pi i}{4}})^{i \nu_4}=e^{\frac{\pi \nu_1}{4}},\quad c_2=0.
$$
Notice that $\operatorname{Im} z<0$, we choose $e^{-\frac{3}{4} \pi i}$ and $e^{-\frac{\pi i}{4}}$ since the branch cut from $-\pi$ to $\pi$.

In conclusion, we have
$$
\Psi_{11}(q, z)= \begin{cases}e^{\frac{\pi \nu_4}{4}} D_{-i \nu_4}\left(e^{-\frac{\pi i}{4}} z\right), & \operatorname{Im} z>0, \\ e^{-\frac{3 \nu_4}{4}} D_{-i \nu_4}\left(e^{\frac{3 \pi i}{4}} z\right), & \operatorname{Im} z<0,\end{cases}
$$
and
$$
\Psi_{22}(q, z)= \begin{cases}e^{\frac{-3 \pi \nu_1}{4}} D_{i \nu_4}\left(e^{-\frac{3 \pi i}{4}} z\right), & \operatorname{Im} z>0, \\ e^{\frac{ \pi \nu_1}{4}} D_{i \nu_4}\left(e^{\frac{\pi i}{4}} z\right), & \operatorname{Im} z<0.\end{cases}
$$
Moreover, we further have
$$
\begin{aligned}
	& \Psi_{12}^{+}(z)=e^{-\frac{3}{4} \pi \nu_4}\left(\tilde\beta_{12}\right)^{-1}\left[\partial_z D_{i\nu_4}\left(e^{-\frac{3 \pi i}{4}} z\right)+\frac{i z}{2} D_{i\nu_4}\left(e^{-\frac{3 \pi i}{4}} z\right)\right], \\
	& \Psi_{21}^{+}(z)=e^{\frac{1}{4} \pi \nu_4}\left(\tilde\beta_{21}\right)^{-1}\left[\partial_z D_{-i\nu_4}\left(e^{-\frac{\pi i}{4}} k\right)-\frac{i z}{2} D_{-i\nu_4}\left(e^{-\frac{\pi i}{4}} z\right)\right],
\end{aligned}
$$
and
$$
\begin{aligned}
	& \Psi_{12}^{-}(z)=e^{\frac{1}{4} \pi \nu_1}\left(\tilde\beta_{21}\right)^{-1}\left[\partial_z D_{i\nu_4}\left(e^{\frac{\pi i}{4}} z\right)+\frac{i k}{2} D_{i\nu_4}\left(e^{\frac{\pi i}{4}} z\right)\right], \\
	& \Psi_{21}^{-}(z)=e^{-\frac{3}{4} \pi \nu_1}\left(\tilde\beta_{21}\right)^{-1}\left[\partial_z D_{-i\nu_4}\left(e^{\frac{3 \pi i}{4}} z\right)-\frac{i k}{2} D_{-i\nu_4}\left(e^{\frac{3 \pi i}{4}} z\right)\right].
\end{aligned}
$$
Since $\left(\Psi_-\right)^{-1}\Psi_+=V^B(y)$ and we can obtain that
$$
- y=\Psi^-_{11}\Psi^+_{21}-\Psi^-_{21}\Psi^+_{11}=\frac{e^{-\frac{\pi\nu_4}{2}}}{\tilde\beta_{12}}W\left(D_{-i\nu_4}(e^{\frac{3\pi i}{4}}z),D_{-i\nu_4}(e^{-\frac{\pi i}{4}}z)\right)=\frac{\sqrt{2\pi}e^{\frac{3\pi i}{4}}e^{-\frac{\pi\nu_4}{2}}}{\tilde\beta_{12}\Gamma(i\nu_4)},
$$
and
$$
\bar y=\Psi^-_{22}\Psi^+_{12}-\Psi^-_{12}\Psi^+_{22}=\frac{e^{-\frac{\pi\nu_4}{2}}}{\tilde\beta_{21}}W\left(D_{i\nu_4}(e^{\frac{\pi i}{4}}z),D_{i\nu_4}(e^{-\frac{3\pi i}{4}}z)\right)=\frac{\sqrt{2\pi}e^{\frac{\pi i}{4}}e^{-\frac{\pi\nu_4}{2}}}{\tilde\beta_{21}\Gamma(-i\nu_4)}.
$$
Now, we have
$$
\beta_{12}^B=i\tilde\beta_{12}=\frac{\sqrt{2\pi}e^{\frac{\pi i}{4}}e^{-\frac{\pi\nu_4}{2}}}{ y\Gamma(i\nu_4)},\quad \beta_{21}^B=-i\tilde\beta_{21}=\frac{\sqrt{2\pi}e^{-\frac{\pi i}{4}}e^{-\frac{\pi\nu_4}{2}}}{\bar y\Gamma(-i\nu_4)}.
$$

Next an explicit proof of our observation to illustrate the relation between $M^{(3,\epsilon)}$ RH problem and the model problem $M^X_{A,B}$ is given.

\begin{lemma}
	 The matrix function $H(\pm k_0,t)$ is uniformly bounded:
$$
\sup_{t\ge t_0}|\partial_x^lH(\pm k_0,t)|\le C, \quad 0<k_0<M
$$
for $l=0,1$.

The function $\delta_{A,B}^0,\delta_{A,B}^1$ and $e^{ t\Phi_{21}^0(\pm k_0)}$ satisfy the following properties:
$$
|\delta_A^0|=e^{2\pi\nu},\quad |\delta_B^0|=1, \quad 0<k_0<M\ \text{and}\ t \ge t_0.\\
|\partial_x\delta_A^0|\le \frac{C\ln t}{t},\quad |\delta_B^0|\le \frac{C\ln t}{t}, \quad 0<k_0<M\ \text{and}\ t \ge t_0.
$$
Moreover, we have
$$
|\delta_{A}^1(k)-1|\le C|k-k_0|(1+|\ln(|k- k_0|)|),\\
|\partial_x\delta_{A}^1(k)|\le\frac{C}{t}|\ln(|k-k_0|)|,
$$
and
$$
|\delta_{B}^1(k)-1|\le C|k+k_0|(1+|\ln(|k+ k_0|)|),\\
|\partial_x\delta_{B}^1(k)|\le\frac{C}{t}|\ln(|k+k_0|)|,
$$
$$
|\partial_x(e^{t\Phi_{21}^0(\pm k_0,z)}-1)|\le C\frac{k_0^2z^3}{t^{\frac{3}{2}}}e^{t\operatorname{Re} \Phi_{21}}.
$$
\end{lemma}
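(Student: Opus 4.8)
The plan is to treat the lemma as a bookkeeping exercise resting on three ingredients already in hand: the explicit formulas for $\delta^0_{A,B}$ and $\delta^1_{A,B}$ in terms of $\chi_{1,4}$, the scaling constant $a=(3^{5/4}2\sqrt{5t}\,k_0^{3/2})^{-1}$, and the auxiliary products $\tilde\delta_{v_1}=\delta_3\delta_4^2\delta_5/(\delta_6\delta_2)$, $\tilde\delta_{v_4}=\delta_1^2\delta_2\delta_6/(\delta_5\delta_3)$; the estimates $|\chi_1(k)-\chi_1(k_0)|\le C|k-k_0|(1+|\ln|k-k_0||)$, $|\partial_x(\chi_1(k)-\chi_1(k_0))|\le \frac{C}{t}(1+|\ln|k-k_0||)$ and $|\partial_x\chi_1(k_0)|\le C/t$ from the earlier Proposition; and the boundedness plus the conjugate symmetry $\delta_1(k)=\overline{\delta_1(\bar k)}^{-1}$ of the scalar factors. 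I work in the regime $0<c_0\le k_0\le M$, so that $\partial_x k_0=\tfrac14(x/45t)^{-3/4}/(45t)=O(1/t)$.

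I would first dispatch the modulus identities. Since $a>0$ and $\nu\in\R$, one has $|a^{\mp 2i\nu}|=1$, so the moduli of $\delta^0_A,\delta^0_B$ are governed entirely by $\operatorname{Re}\chi_1(k_0)$, $\operatorname{Re}\chi_4(-k_0)$ and by $|\tilde\delta_{v_1}(k_0)|$, $|\tilde\delta_{v_4}(-k_0)|$. Evaluating $\log_0(k_0-s)=\ln(s-k_0)+i\pi$ on the negative reals and telescoping $\int_{k_0}^\infty d\ln(1-|r_1|^2)=-\ln(1-|r_1(k_0)|^2)=2\pi\nu$ pins down $\operatorname{Re}\chi_1(k_0)$ and hence the claimed $|\delta^0_A|=e^{2\pi\nu}$; the analogue for $\chi_4$ uses $\log_\pi$ on the \emph{positive} reals, where the argument is $0$, so its real part vanishes. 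For the auxiliary products I invoke the symmetry: since $\overline{\omega^2 k_0}=\omega k_0$, the relation $\delta_1(k)=\overline{\delta_1(\bar k)}^{-1}$ forces $|\delta_3(k_0)\delta_5(k_0)|=1$ and $|\delta_2(k_0)\delta_6(k_0)|=1$, while $\delta_4(k_0)$ is unimodular because its defining integral over $(-\infty,-k_0]$ is real at $k=k_0$; thus $|\tilde\delta_{v_1}(k_0)|=1$, and the mirror argument gives $|\tilde\delta_{v_4}(-k_0)|=1$, so $|\delta^0_B|=1$. The $l=0$ bound on $H$ then follows because $\Phi_{21}(\pm k_0)$ is purely imaginary (at the saddle $\theta_{21}(k_0)=i\sqrt3(9k_0^5t-k_0x)$), giving $|e^{\mp\frac t2\Phi_{21}(\pm k_0)}|=1$, and $\nu$ is bounded on $[c_0,M]$.

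For the $x$-derivatives I differentiate logarithmically. Writing $\ln\delta^0_A=-2i\nu\ln a-2\chi_1(k_0)-\ln\tilde\delta_{v_1}(k_0)$, the only term not already $O(1/t)$ is $-2i(\partial_x\nu)\ln a$: here $\partial_x\nu=O(1/t)$ but $\ln a=-\tfrac12\ln t+O(1)$, so this term is $O(\ln t/t)$, which is precisely the source of the stated rate $|\partial_x\delta^0_{A,B}|\le C\ln t/t$; the remaining contributions $\nu\partial_x\ln a$, $\partial_x\chi_1(k_0)$ and $\partial_x\ln\tilde\delta_{v_1}(k_0)$ are all $O(1/t)$. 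The $l=1$ bound on $H$ then combines this with a cancellation at the saddle: differentiating $\tfrac t2\Phi_{21}(k_0)=\tfrac{i\sqrt3}{2}(9k_0^5t-k_0x)$ and using $45k_0^4t=x$ kills the $O(\partial_x k_0)$ terms and leaves $\partial_x(\tfrac t2\Phi_{21}(k_0))=-\tfrac{i\sqrt3}{2}k_0$, which is bounded and purely imaginary, so $\partial_x H$ stays bounded.

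Finally I estimate the $k$-dependent factors. For $\delta^1_A=e^{2(\chi_1(k_0)-\chi_1(k))}\tilde\delta_{v_1}(k_0)/\tilde\delta_{v_1}(k)$ the exponent is small near $k_0$, so $|\delta^1_A-1|\le C|\chi_1(k)-\chi_1(k_0)|+|\tilde\delta_{v_1}(k_0)/\tilde\delta_{v_1}(k)-1|$; the first piece is $\le C|k-k_0|(1+|\ln|k-k_0||)$ by the Proposition, and the second is $O(|k-k_0|)$ by Lipschitz continuity, since $\tilde\delta_{v_1}$ contains none of the factors whose branch cut passes through $k_0$ (only $\delta_1$, which is absent, cuts along $[k_0,\infty)$). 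The bound $|\partial_x\delta^1_A|\le\frac Ct|\ln|k-k_0||$ follows identically from $|\partial_x(\chi_1(k)-\chi_1(k_0))|\le\frac Ct(1+|\ln|k-k_0||)$ together with $\partial_x\ln\tilde\delta_{v_1}=O(1/t)$, and the $B$-case is the mirror image. For $e^{t\Phi^0_{21}(\pm k_0,z)}$ I use $t\Phi^0_{21}(k_0,z)=9\sqrt3\,i\,ta^3[a^2z^5+5ak_0z^4+10k_0^2z^3]$ with $ta^3=O(t^{-1/2})$, $\partial_x(ta^3)=O(t^{-3/2})$ and $\partial_x a,\partial_x k_0=O(1/t)$; differentiating, the dominant pieces $ta^3\,\partial_x(10k_0^2z^3)$ and $\partial_x(ta^3)\cdot 10k_0^2z^3$ are each $O(k_0^2z^3t^{-3/2})$, and multiplying by $|e^{t\Phi^0_{21}}|=e^{t\operatorname{Re}\Phi_{21}}$ (valid because $t\Phi_{21}-t\Phi^0_{21}$ is purely imaginary) yields the claimed estimate. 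The main obstacle is the logarithmic singularity at $k=k_0$: both $\chi_1(k)-\chi_1(k_0)$ and its $x$-derivative blow up like $\ln|k-k_0|$, so plain Lipschitz bounds are unavailable and the $(1+|\ln|k-k_0||)$ factors must be carried throughout; separately, retaining the $\ln t$ growth of $\ln a$ is exactly what forces the $\ln t/t$ rather than $1/t$ rate in $\partial_x\delta^0_{A,B}$. A secondary, purely organizational subtlety is tracking which of the six $\delta_j$ carries a jump or branch cut through the point in question, so as to know where analyticity and Lipschitz continuity may legitimately be invoked.
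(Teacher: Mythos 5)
Your proposal is correct and follows essentially the same route as the paper's proof: unimodularity of $a^{-2i\nu}$ and of the $\tilde\delta_{v_1},\tilde\delta_{v_4}$ products via the conjugate symmetry $\delta_{1,4}(k)=\overline{\delta_{1,4}(\bar k)}^{-1}$, the computation $\operatorname{Re}\chi_1(k_0)=\pi\nu_1$ (from the branch $\arg_0=\pi$ on the cut) versus $\operatorname{Re}\chi_4(-k_0)=0$, logarithmic differentiation with the $\ln t\,\partial_x\nu_1$ term producing the $\ln t/t$ rate, the earlier $\chi_1$-estimates plus analyticity of $\tilde\delta_{v_1}$ near $k_0$ for $\delta_A^1$, and the scaling $ta^3=O(t^{-1/2})$ for the $e^{t\Phi_{21}^0}$ bound. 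The only discrepancy is the sign in $|\delta_A^0|$: your computation (like the paper's own) actually yields $e^{-2\pi\nu_1}$ rather than the $e^{2\pi\nu}$ stated in the lemma, a slip already present in the paper itself and immaterial to the boundedness claims.
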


\begin{proof}
	Recall that $\delta_A^0=\frac{a^{-2i\nu_1}e^{-2\chi_1(k_0)}}{\tilde \delta_{v_1}(k_0)}$ and by the directly computation
$$
|a^{-2i\nu}|=|({3^{\frac{5}{4}}2\sqrt{5t}k_0^{\frac{3}{2}}})^{2i\nu_1}|=|e^{2i\nu_1 ln(a)}|=1,
$$
since the coefficients $\nu$ and $a$ are real valued, and
$$
|\tilde \delta_{v_1}(k_0)|=\left|\frac{\delta_3(k_0)\delta^2_4(k_0)\delta_5(k_0)}{\delta_6(k_0)\delta_2(k_0)}\right|=\left|\frac{\delta_1(\alpha^2k_0)\delta^2_4(k_0)\delta_1(\alpha k_0)}{\delta_4(\alpha k_0)\delta_2(\alpha^2 k_0)}\right|=1,
$$
where we using the fact that $\delta_{1,4}( k)={(\overline{\delta_{1,4}( \bar{k})})^{-1}}$ and the symmetric between $\delta_{1}$ (resp. $\delta_{4}$) and $\delta_{3,5}$ (resp. $\delta_{2,6}$).
\par
Furthermore, the real part of $\chi_1$ is that
$$
\operatorname{Re}\chi_1(k_0)=\frac{1}{2 \pi} \int_{k_0}^{\infty} \pi d \ln \left(1-\left|r_1(s)\right|^2\right)=-\frac{1}{2} \ln \left(1-\left|r_1\left(k_0\right)\right|^2\right)=\pi \nu_1,
$$
since we choose the branch cut from $0$ to $2\pi$.

So that
$$
|\delta_A^0|=\left|\frac{a^{-2i\nu}e^{-2\chi_1(k_0)}}{\delta_{\tilde v_1}(k_0)}\right|=e^{-2\pi\nu_1}.
$$
By the same procedure, we can get that
$$
\operatorname{Re}\chi_1(k_0)=\frac{1}{2 \pi} \int_{k_0}^{\infty} 0 d \ln \left(1-\left|r_1(s)\right|^2\right)=0,
$$
and
$$
|\delta_B^0|=\left|\frac{a^{-2i\nu}e^{-2\chi_4(k_0)}}{\tilde\delta_{ v_4}(k_0)}\right|=1.
$$

Moreover, using the before formula, we can get that

$$
\begin{aligned}
	\left|\partial_x \delta_A^0(\zeta, t)\right| & =\left|\delta_A^0(\zeta, t) \partial_x \ln \delta_A^0(\zeta, t)\right|=e^{-2 \pi \nu_1}\left|\partial_x \ln \delta_A^0(\zeta, t)\right| \\
	& \leq C\left(\left|\ln t \partial_x \nu_1\right|+\left|\partial_x \chi_1\left(k_0\right)\right|+\left|\partial_x \ln \tilde\delta_{v_1}\left( k_0\right) \right|\right),
\end{aligned}
$$
since $k_0=\sqrt[4]{\frac{x}{45t}}$ we can get that $\partial_x=\frac{1}{4k_0^3t}\partial_{k_0}$. Rewrite it as below
$$
|\partial_x\nu_1| \le C \frac{1}{t} \frac{\partial_k|r_1(k)|^2|_{k=k_0}}{1-|r_1(k_0)|^2}\le C\frac{1}{t},\\
\left|\partial_x \chi_1\left(k_0\right)\right|\le \frac{C}{t},\ \left|\partial_x \ln \tilde\delta_{v_1}\left( k_0\right) \right|\le \frac{C}{t} \left|\partial_{k_0} \ln \tilde\delta_{v_1}\left( k_0\right) \right|
$$
since the estimate about $\chi_1$ is above and $\tilde\delta_{v_1}$ is analytic near $k_0$.

Recall that $\delta_A^1=\frac{e^{2\chi_1(k_0)-2\chi_1(k)}\delta_{\tilde v_1}(k_0)}{\delta_{\tilde v_1}(k)}$, by the directly computation

$$
|e^{2\chi_1(k_0)-2\chi_1(k)}-1|\le C|\chi_1(k_0)-\chi_1(k)|\le C|k-k_0|(1+|\ln(|k-k_0|)|),
$$
and
$$
\partial_x \delta_A^1(k)=\delta_A^1( k) \partial_x \log \delta_A^1( k).
$$
Since the $\tilde \delta_{v_1}$ is analytic near $k_0$ and combine the estimate before, we can get that
$$
\left|\partial_x \delta_{A}^1(\zeta, k)\right| \leq C\left(\left|\partial_x\left(\chi_1( k)-\chi_1\left(k_0\right)\right)\right|+\frac{1}{t}\left|\partial_{k_0} \log \tilde\delta_{v_1}\right|\right)\le \frac{C |\ln(|k-k_0|)|}{t}.
$$

Finally, we have ${t\Phi_{21}^0}(k_0)=9\sqrt{3}ita^3[a^2z^5-5ak_0z^4+10k_0^2z^3]$ and taking the Taylor expansion yields
$$
|e^{t\Phi_{21}^0(\pm k_0,z)}-1|\le C\frac{k_0^2z^3}{t^{\frac{3}{2}}}e^{t\operatorname{Re} \Phi_{21}}.
$$
\end{proof}
In conclude, $M^{(2)}(x,t;k)=M^{(3,\epsilon)}(x,t;k)H(\pm k_0,t)^{-1} \to M^X_{A,B}(y,z)H(\pm k_0,t)^{-1}$ as $t\to\infty$ for $k\in\Sigma_{A,B}$. But on the boundary of $\partial B(\pm k_0,\epsilon)$ the RH problem $M^{X}_{A,B}H(\pm k_0,t)^{-1}$ not converge to the $I$ (as $t\to\infty$ the $z\to\infty$  )  this suggests that we need to introduce new RH problem defined by
$$
M^{(\pm k_0)}(x,t;k)=H(\pm k_0,t)M^X_{A,B}(y,z)H(\pm k_0,t)^{-1},\quad k\in B(\pm k_0,\epsilon),
$$
and the $H(\pm k_0,t)$ on the right hand does not change the jump matrix.

\begin{lemma}
	$M^{(\pm k_0)}$ is analytic for $k\in B(\pm k_0,\epsilon)\setminus\Sigma_{A,B}$ and satisfies the jump condition $M^{(\pm k_0)}_+=M^{(\pm k_0)}_-V^{(\pm k_0)}$ on $\Sigma_{A,B}$, respectively. Moreover, for $0<k_0<M$ and $t$ large enough we have the following estimate
$$
\|\partial_x^l(V^{(2)}-V^{(\pm k_0)})\|_{L^1(\Sigma_{A,B})}\le C\frac{\ln t}{t},
$$
and
$$
\|\partial_x^l(V^{(2)}-V^{(\pm k_0)})\|_{L^{\infty}(\Sigma_{A,B})}\le C\frac{\ln t}{t^{\frac{1}{2}}}.
$$
Furthermore,
$$
	\left\|\partial_x^l\left(M^{(\pm k_0)}(x, t, \cdot)^{-1}-I\right)\right\|_{L^{\infty}\left(\partial B\left(\pm k_0,\epsilon\right)\right)}  =O\left(t^{-1 / 2}\right), \\
$$
$$
\frac{1}{2 \pi i} \int_{\partial B_{\left(\pm k_0,\epsilon\right)}}\left(M^{(\pm k_0)}(x, t, k)^{-1}-I\right) d k  =-\frac{H(\pm k_0, t) \left(M^X_{A,B}(y)\right)^{(1)} H(\pm k_0, t)^{-1}}{a(t)}+O\left(t^{-1}\right).
$$
\end{lemma}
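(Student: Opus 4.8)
The plan is to regard $M^{(\pm k_0)}=H(\pm k_0,t)M^X_{A,B}(y,z)H(\pm k_0,t)^{-1}$ as the local parametrix that undoes the third transformation, so that every assertion reduces to the explicitly solved model problems $M^X_{A,B}$ of the preceding two propositions together with the bounds on $H$, $\delta^0_{A,B}$, $\delta^1_{A,B}$ and $e^{t\Phi_{21}^0}$ from the previous lemma. The analyticity and jump statements are then immediate: since $H(\pm k_0,t)$ does not depend on $k$, the function $M^{(\pm k_0)}$ is analytic on $B(\pm k_0,\epsilon)\setminus\Sigma_{A,B}$ wherever $M^X_{A,B}$ is, and across $\Sigma_{A,B}$ it satisfies $M^{(\pm k_0)}_+=M^{(\pm k_0)}_-V^{(\pm k_0)}$ with $V^{(\pm k_0)}=H\,v^X_{A,B}\,H^{-1}$, which needs only to be recorded.

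For the two jump estimates I would first note that $M^{(3,\epsilon)}=M^{(2)}H$ with $H$ constant in $k$ gives $V^{(2)}=H\,v^{(3,\epsilon)}H^{-1}$, hence $V^{(2)}-V^{(\pm k_0)}=H\big(v^{(3,\epsilon)}-v^X_{A,B}\big)H^{-1}$; as $H$ is uniformly bounded with bounded inverse, it suffices to control $v^{(3,\epsilon)}-v^X_{A,B}$ on $\Sigma_A$ (near $+k_0$), the part near $-k_0$ and the remaining sextants following by the $\mathcal A,\mathcal B$ symmetries. Comparing the explicit $v^{(3,\epsilon)}_{1,2,3,4}$ with $v^X_A$ on $X_{1,2,3,4}$, the difference comes from precisely three replacements that pass from the finite-$t$ jump to the model jump: $r_{1,a}(k),\rho_{1,a}(k)\mapsto r_1(k_0),\rho_1(k_0)$, then $\delta^1_A(k)\mapsto 1$, and finally $e^{t\Phi_{21}^0(k_0,z)}\mapsto 1$. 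Each is already estimated: $|r_{1,a}-r_1(k_0)|\le C|k-k_0|e^{-t\operatorname{Re}\Phi_{21}/2}$ (and the $\rho$ analogue), $|\delta^1_A-1|\le C|k-k_0|(1+|\ln|k-k_0||)$, and $|e^{t\Phi_{21}^0(k_0,z)}-1|\le C k_0^2|z|^3 t^{-3/2}e^{t\operatorname{Re}\Phi_{21}}$. Writing $z=a(t)(k-k_0)$ with $a(t)=3^{5/4}2\sqrt{5t}\,k_0^{3/2}\sim t^{1/2}$ turns $|k-k_0|$ into $\sim|z|t^{-1/2}$, while the oscillatory factors on the steepest-descent rays supply Gaussian decay $e^{-c|z|^2}$; multiplying the three bounds and taking the supremum in $z$ yields the pointwise estimate $|V^{(2)}-V^{(\pm k_0)}|\le C\,t^{-1/2}(\ln t)\,e^{-c|z|^2}$, which is the $L^\infty$ bound, and integrating with $|dk|=a(t)^{-1}|dz|\sim t^{-1/2}|dz|$ supplies the extra $t^{-1/2}$ and hence the $L^1$ bound $C\ln t/t$. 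The logarithm is forced by the $\ln|k-k_0|$ in the $\delta^1_A$ estimate and, for $l=1$, by $|\partial_x\delta^0_A|\le C\ln t/t$ and $|\partial_x\nu_1|\le C/t$; all $\partial_x$-derivatives are reduced to $k_0$-derivatives via $\partial_x=\frac1{4k_0^3t}\partial_{k_0}$ and do not degrade the powers of $t$.

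On $\partial B(\pm k_0,\epsilon)$ the radius $\epsilon$ is fixed, so $|z|=a(t)\epsilon\sim\epsilon\sqrt t\to\infty$, and I would insert the large-$z$ expansion $M^X_{A,B}=I+(M^X_{A,B})_1/z+O(z^{-2})$ from the model propositions. Conjugating by the bounded $H(\pm k_0,t)$ gives $M^{(\pm k_0)}-I=H(M^X_{A,B})_1H^{-1}/z+O(z^{-2})$, whence $(M^{(\pm k_0)})^{-1}-I=O(|z|^{-1})=O(t^{-1/2})$ uniformly on the circle, the $\partial_x$-version surviving because $\partial_xH$, $\partial_x z$ and $\partial_x(M^X_{A,B})_1$ (through $y=r_j(\pm k_0)$) are all $O(1)$; this is the first displayed estimate. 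For the residue identity I substitute $1/z=\big(a(t)(k\mp k_0)\big)^{-1}$ into the expansion, so that
$$
\frac1{2\pi i}\int_{\partial B(\pm k_0,\epsilon)}\!\big((M^{(\pm k_0)})^{-1}-I\big)\,dk
=-\frac{H(M^X_{A,B})_1H^{-1}}{a(t)}\,\frac1{2\pi i}\int_{\partial B(\pm k_0,\epsilon)}\!\frac{dk}{k\mp k_0}+\frac1{2\pi i}\int_{\partial B}O(z^{-2})\,dk .
$$
The Cauchy integral of $(k\mp k_0)^{-1}$ equals $1$, producing the claimed leading term $-H(M^X_{A,B})_1H^{-1}/a(t)$, and the remainder is bounded by the contour length times $\sup|O(z^{-2})|=O(t^{-1})$; the explicit $\beta_{12},\beta_{21}$ already computed then make $(M^X_{A,B})_1$ fully explicit.

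I expect the main obstacle to be the second step: organizing the three error sources uniformly over $k\in\Sigma_{A,B}$ and through the $x$-derivative, tracking precisely where the logarithm is unavoidable, so that the sharp rates $t^{-1/2}\ln t$ in $L^\infty$ and $t^{-1}\ln t$ in $L^1$ emerge without a spurious extra factor of $\ln t$ or $t^{-1/2}$.
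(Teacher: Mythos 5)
Your proposal follows essentially the same route as the paper's proof: conjugating by the bounded, $k$-independent $H(\pm k_0,t)$ to reduce everything to $v^{(3,\epsilon)}-v^{X_{A,B}}$, telescoping that difference into the same three error sources $(\delta_A^1-1)$, $(e^{t\Phi_{21}^0(\pm k_0,z)}-1)$, $(\rho_{1,a}^*(k)-\rho_{1,a}^*(k_0))$ with Gaussian decay along the rays, converting $|k\mp k_0|\sim |z|t^{-1/2}$ to get the $L^\infty$ and $L^1$ rates, and then using the large-$z$ expansion of $M^X_{A,B}$ together with Cauchy's formula on $\partial B(\pm k_0,\epsilon)$ for the last two assertions. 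One wording fix: the three error bounds are combined by the triangle inequality (each term multiplied by bounded factors and then summed), not by ``multiplying the three bounds''; with that correction your argument coincides with the paper's.
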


\begin{proof}
	Recall that
$$
M^{(k_0)}(x,t;k)=H(k_0,t)M^X_A(y,z)H(k_0,t)^{-1},\quad k\in B(k_0,\epsilon),
$$
where
$$
V^{(k_0)}(x,t;k)=H(k_0,t)V^{X_B}(y,z)H(k_0,t)^{-1},
$$
and
$$
V^{(2)}(x,t;k)=H(k_0,t)V^{(3,\epsilon)}(x,t;k)H(k_0,t)^{-1}.
$$
So that we get that
$$
V^{(2)}-V^{(k_0)}=H(k_0,t)\left(V^{(3,\epsilon)}-V^{X_B}\right)H(k_0,t).
$$
Since $H(k_0,t)^{\pm1}$ is bounded and this suggests us to show that
$$
\begin{aligned}
	& \left\|\partial_x^l\left[V^{(3,\epsilon)}(x, t; \cdot)-V^{X_A}(x, t, z(k_0, \cdot))\right]\right\|_{L^1\left(\mathcal{X}_j^\epsilon\right)} \leq C t^{-1} \ln t, \\
	& \left\|\partial_x^l\left[V^{(3,\epsilon)}(x, t; \cdot)-V^{X_A}(x, t, z(k_0, \cdot))\right]\right\|_{L^{\infty}\left(\mathcal{X}_j^\epsilon\right)} \leq C t^{-1 / 2} \ln t.
\end{aligned}
$$
Again, we have
$$
v^{(3,\epsilon)}_1=\left(\begin{array}{ccc}
	1 & 0 & 0 \\
	e^{-2i \nu_1 \log_0\left(z\right)}\delta_A^1 \rho^*_{1,a} e^{t \Phi_{21}^0(k_0,z)+\frac{iz^2}{2}} & 1 & 0 \\
	0 & 0 & 1
\end{array}\right),\quad
v^{X_A}_1=\left(\begin{array}{ccc}
	1 & 0 & 0 \\
	{\frac{\bar{y}}{1-|y|^2}} z^{-2 i \nu_1(y)} e^{\frac{i z^2}{2}} & 1 & 0 \\
	0 & 0 & 1
\end{array}\right).
$$
It suffices to show that
$$
\begin{aligned}
	&\left|e^{-2i \nu_1 \log_0\left(z\right)}\delta_A^1 \rho^*_{1,a} e^{t \Phi_{21}^0(k_0,z)+\frac{iz^2}{2}} -{\frac{\bar{y}}{1-|y|^2}} z^{-2 i \nu_1(y)} e^{\frac{i z^2}{2}}\right|\\
	&=|e^{-2i \nu_1 \log_0\left(z\right)}|\left|\delta_A^1 \rho^*_{1,a} e^{t \Phi_{21}^0(k_0,z)}-{\frac{\bar{y}}{1-|y|^2}}\right||e^{\frac{iz^2}{2}}|\\
	&\le C\left|(\delta_A^1-1)\rho^*_{1,a} e^{t \Phi_{21}^0(k_0,z)}+(e^{t \Phi_{21}^0(k_0,z)}-1) \rho^*_{1,a}+(\rho^*_{1,a}(k)-\rho^*_{1,a}(k_0))\right||e^{\frac{iz^2}{2}}|\\
	&\le C\left(|k-k_0|ln(|k-k_0|)+|k-k_0| \right)e^{t\operatorname{Re}\Phi_{21}}|e^{\frac{iz^2}{2}}|\\
	&\le C |k-k_0|(1+\ln(|k-k_0|))e^{-ct|k-k_0|^2},
\end{aligned}
$$
which implies that
$$
\left\|\left(v^{(3,\epsilon)}-v^{X_A}\right)_{21}\right\|_{L^1\left(\Sigma_A\right)} \leq C \int_0^{\infty} s(1+|\ln s|) e^{-c t u^2} d s \leq C t^{-1} \ln t,
$$
and
$$
\left\|\left(v^{(3,\epsilon)}-v^{X_A}\right)_{21}\right\|_{L^{\infty}\left(\Sigma_A\right)} \leq C \sup _{s \geq 0} s(1+|\ln s|) e^{-c t s^2} \leq C t^{-1 / 2} \ln t.
$$
For the $\partial_x\left(v^{(3,\epsilon)}-v^{X_A}\right)_{21}$, it follows that
$$
\begin{aligned}
	\partial_x\left(v^{(3,\epsilon)}-v^{X_A}\right)_{21}=&\partial_x(e^{-2i\nu_1\log_0(z)})\left((\delta_A^1-1)\rho^*_{1,a} e^{t \Phi_{21}^0(k_0,z)}+(e^{t \Phi_{21}^0(k_0,z)}-1) \rho^*_{1,a}+(\rho^*_{1,a}(k)-\rho^*_{1,a}(k_0)\right)e^{\frac{iz^2}{2}}\\
	&+e^{-2i\nu_1\log_0(z)}\partial_x\left((\delta_A^1-1)\rho^*_{1,a} e^{t \Phi_{21}^0(k_0,z)}+(e^{t \Phi_{21}^0(k_0,z)}-1) \rho^*_{1,a}+(\rho^*_{1,a}(k)-\rho^*_{1,a}(k_0)\right)e^{\frac{iz^2}{2}}\\
	&+e^{-2i\nu_1\log_0(z)}\left((\delta_A^1-1)\rho^*_{1,a} e^{t \Phi_{21}^0(k_0,z)}+(e^{t \Phi_{21}^0(k_0,z)}-1) \rho^*_{1,a}+(\rho^*_{1,a}(k)-\rho^*_{1,a}(k_0)\right)\partial_xe^{\frac{iz^2}{2}}\\
	&:=I +II+III.
\end{aligned}
$$
For the first part, using the fact that $|\partial_xe^{-2i\nu_1log_0(z)}|\le \frac{C}{t(k-k_0)}$, we have that
$$
\|I\|_{L^1(\Sigma_A)}\le C t^{-1} \int_0^{\infty}(1+\ln s) e^{-c t s^2} d s \leq C t^{-3 / 2} \ln t,\\
\|I\|_{L^1(\Sigma_A)} \leq C t^{-1} \sup _{u \geq 0}(1+\ln s) e^{-c t s^2} \leq C t^{-1} \ln t.
$$
For the second term and last term, using the lemma before, we have the same estimate.
\par
Since
$$
z_1={3^{\frac{5}{4}}2\sqrt{5t}k_0^{\frac{3}{2}}}(k-k_0),
$$
so for the $k\in\partial B(k_0,\epsilon)$, $z_1\to\infty$ as $t\to\infty$ and combining the WKB expansion of $M^{X_A}$, we have
$$
M^{X_A}(y,z)=I+\frac{M^{X_A}_1(y)}{{3^{\frac{5}{4}}2\sqrt{5t}k_0^{\frac{3}{2}}}(k-k_0)}+O(\frac{1}{t}),
$$
as $t\to\infty$, and
$$
\left(M^{(k_0)}\right)^{-1}-I=-\frac{H(k_0, t) M_1^{X_A}(y) H(k_0, t)^{-1}}{{3^{\frac{5}{4}}2\sqrt{5t}k_0^{\frac{3}{2}}}(k-k_0)}+O\left(t^{-1}\right), \quad t \rightarrow \infty.
$$
\end{proof}

\subsection{Small norm RH problem}

Now, using the symmetric property of RH problem, we make extension of $M^{(\pm k_0)}$ as follows
$$
\tilde M^{(\pm k_0)}=\mathcal{A}M^{(\pm k_0)}(x,t;\alpha k)\mathcal{A}^{-1}.
$$
Denote $\tilde B^{(\pm k_0)}_{\epsilon}=B(\pm k_0,\epsilon)\cup B(\pm \alpha k_0,\epsilon)\cup B(\pm \alpha^2 k_0,\epsilon)$ and introduce the solution $\tilde M(x,t;k)$ as following:
$$
\tilde M(x,t;k):=\begin{cases}\begin{aligned}
		&M^{(2)}\left(\tilde M^{(k_0)}\right)^{-1}, &&k\in \tilde B^{(k_0)}_{\epsilon}\\
		&M^{(2)}\left(\tilde M^{(-k_0)}\right)^{-1},&&k\in \tilde B^{(-k_0)}_{\epsilon},\\
		&M^{(2)}, && \text{otherelse}.
	\end{aligned}
\end{cases}
$$
Moreover, the jump  contour denoted as $\tilde{\Sigma}:=\Sigma^{(2)}\cup\partial \tilde B^{(k_0)}_{\epsilon}\cup \partial \tilde B^{(-k_0)}_{\epsilon}$ (see Fig. \ref{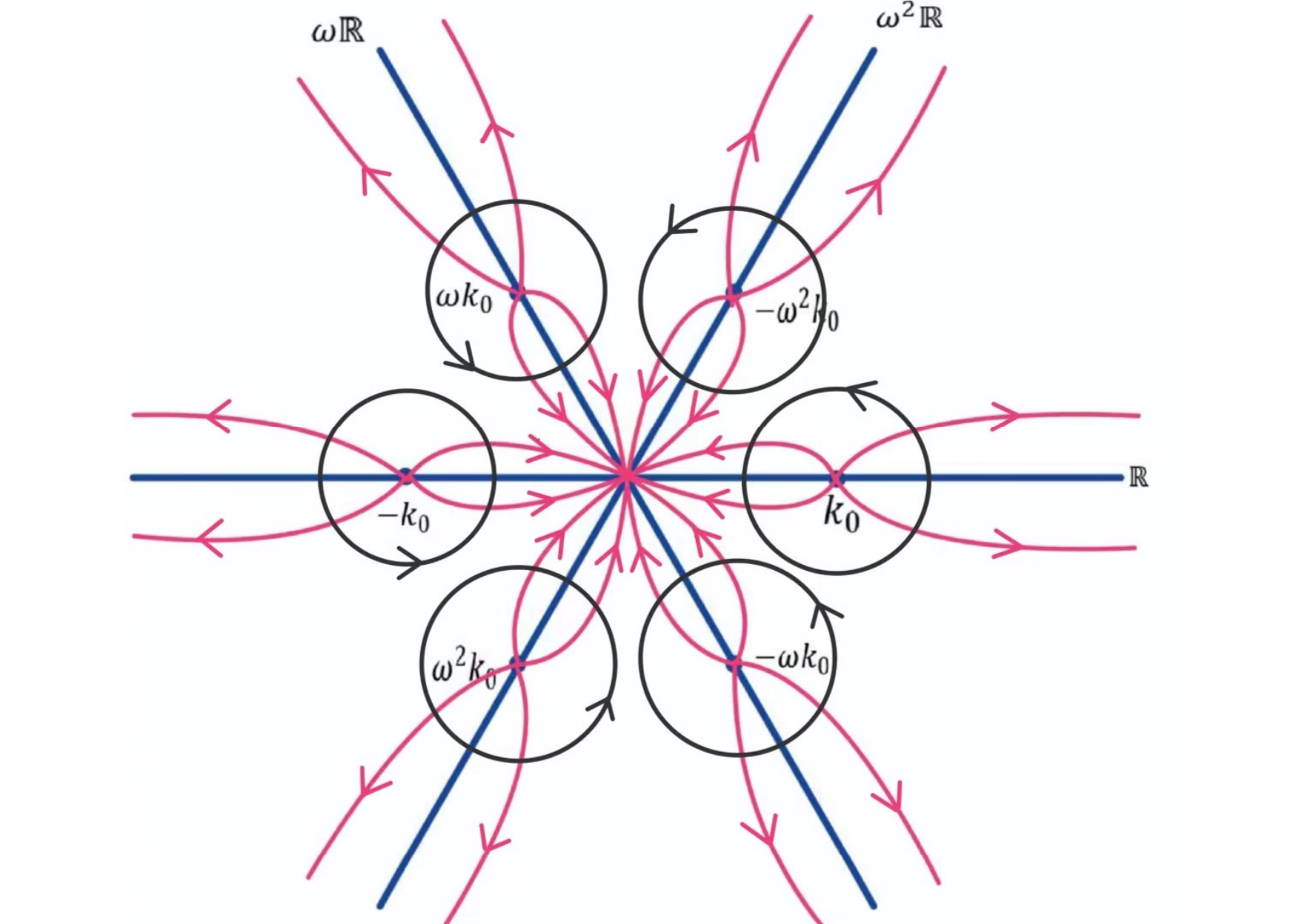})
and the jump matrix as follows
$$
\tilde{V}:=\begin{cases}\begin{aligned}
		&V^{(2)}, && k \in \tilde{\Sigma}\setminus \overline {\left(\tilde B^{(\pm k_0)}_{\epsilon}\right)},\\
		&(\tilde M^{(k_0)})^{-1}, && k \in \partial \tilde B^{(k_0)}_{\epsilon},\\
		&(\tilde M^{(-k_0)})^{-1}, && k \in \partial \tilde B^{(-k_0)}_{\epsilon},\\
		&\tilde M^{(k_0)}_{-}V^{(2)}(\tilde M^{(k_0)}_+)^{-1}, && k \in \tilde B^{(k_0)}_{\epsilon}\cap \tilde{\Sigma},\\
		&\tilde M^{(-k_0)}_{-}V^{(2)}(\tilde M^{(-k_0)}_+)^{-1}, && k \in \tilde B^{(-k_0)}_{\epsilon}\cap \tilde{\Sigma}.
		
\end{aligned}\end{cases}
$$
In conclusion, construct a RH problem that satisfies $\tilde M_+=\tilde M_-\tilde V$ for $k\in\tilde \Sigma$ and is analytic in $\C\setminus \tilde\Sigma$.

Suppose $\tilde\Sigma_{A,B}:=\Sigma_{A,B}\cup\alpha\Sigma_{A,B}\cup\alpha^2\Sigma_{A,B}$ and denote
$$
\Sigma':=\tilde{\Sigma}\setminus\left(\Sigma\cup\tilde\Sigma_{A,B}\cup\partial \tilde{B}^{(\pm k_0)}_{\epsilon}\right).
$$

\begin{figure}[!h]
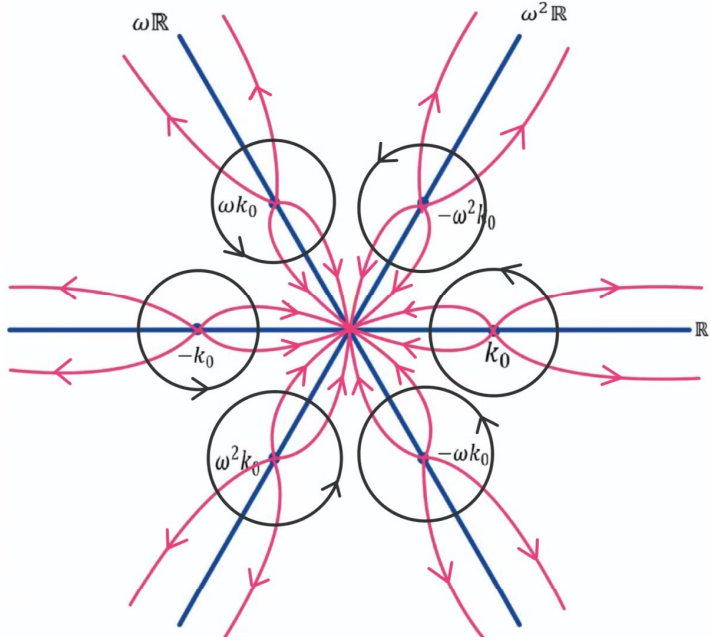

	\centering
	\begin{overpic}[width=.85\textwidth]{Partial-B.pdf}
	\end{overpic}
	\caption{{\protect\small
		The jump  contour $\tilde{\Sigma}:=\Sigma^{(2)}\cup\partial \tilde B^{(k_0)}_{\epsilon}\cup \partial \tilde B^{(-k_0)}_{\epsilon}$.}}
	\label{Partial-B.pdf}
\end{figure}

\begin{lemma}
	 Let $W=\tilde V-I$ and the estimate of jump matrix as followd is uniformly for $t$ large enough and $0<k_0<M$
$$
\begin{aligned}
	& \left\|(1+|\cdot|) \partial_x^l {W}\right\|_{\left(L^1 \cap L^{\infty}\right)(\Sigma)} \leq \frac{C}{k_0^3 t^{\frac{3}{2}}}, \\
	& \left\|(1+|\cdot|) \partial_x^l W\right\|_{\left(L^1 \cap L^{\infty}\right)\left(\Sigma^{\prime}\right)} \leq C e^{-c t}, \\
	& \left\|\partial_x^l W\right\|_{\left(L^1 \cap L^{\infty}\right)(\partial \tilde{B}^{(\pm k_0)})} \leq C t^{-1 / 2}, \\
	& \left\|\partial_x^l W\right\|_{L^1\left(\tilde{\Sigma}_{A,B}\right)} \leq C t^{-1} \ln t, \\
	& \left\|\partial_x^l W\right\|_{L^{\infty}\left(\tilde{\Sigma}_{A,B}\right)} \leq C t^{-1 / 2} \ln t.
\end{aligned}
$$
\end{lemma}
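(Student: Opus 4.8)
The plan is to estimate $W=\tilde V-I$ arc by arc on $\tilde\Sigma$, reducing everything to the contours attached to $\pm k_0$ by means of the six-fold symmetry $\tilde M^{(\pm k_0)}(x,t;k)=\mathcal{A}M^{(\pm k_0)}(x,t;\alpha k)\mathcal{A}^{-1}$, so that the bounds on the $\alpha$- and $\alpha^2$-rotated copies follow at once from those on the real-axis copies. Accordingly I would split $\tilde\Sigma$ into four types of arcs: the straight portions where $\tilde V=V^{(2)}$ (producing the first bound on $\Sigma$), the remaining lens boundaries $\Sigma'$ lying away from the saddle disks, the circles $\partial\tilde B^{(\pm k_0)}_\epsilon$, and the model contours $\tilde\Sigma_{A,B}$ inside the disks. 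On each type I would simply identify the explicit form of $W$ and invoke the relevant estimate already established, so the argument is mostly bookkeeping glued together by the uniform boundedness of the $\delta_j$ and of $H(\pm k_0,t)^{\pm1}$.

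On $\Sigma$ one has $W=v^{(2)}-I$, and the bound $\|(1+|\cdot|)\partial_x^l W\|_{(L^1\cap L^\infty)(\Sigma)}\le Ck_0^{-3}t^{-3/2}$ is exactly the content of the second-transformation lemma, the factor $k_0^{-3}$ being picked up when $x$-derivatives are converted to $k_0$-derivatives via $\partial_x=(4k_0^3t)^{-1}\partial_{k_0}$. On $\Sigma'$ the phase $\operatorname{Re}\Phi_{21}$ is sign-definite and bounded away from zero, since these arcs avoid the saddle points; hence every entry of $W$ carries a factor $e^{-t\operatorname{Re}\Phi_{21}}$ with $|\operatorname{Re}\Phi_{21}|\ge c>0$, and together with the boundedness of $r_{j,a},\rho_{j,a}$ from the decomposition lemma this yields $\|(1+|\cdot|)\partial_x^lW\|_{(L^1\cap L^\infty)(\Sigma')}\le Ce^{-ct}$. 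On $\partial\tilde B^{(\pm k_0)}_\epsilon$ we have $W=(\tilde M^{(\pm k_0)})^{-1}-I$, which by the previous lemma is $O(t^{-1/2})$ in $L^\infty$; as the circles have finite length the $L^1$ bound follows, and the symmetry transfers the estimate from $M^{(\pm k_0)}$ to $\tilde M^{(\pm k_0)}$.

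The crux is the model-contour estimate on $\tilde\Sigma_{A,B}$, where $W=\tilde M^{(\pm k_0)}_-V^{(2)}(\tilde M^{(\pm k_0)}_+)^{-1}-I$. Writing $V^{(2)}=H V^{(3,\epsilon)}H^{-1}$ and $\tilde M^{(\pm k_0)}=HM^X_{A,B}H^{-1}$ (with the $\mathcal A$-conjugation absorbed by symmetry), and using the model jump relation $(M^X_{A,B})_+=(M^X_{A,B})_-V^{X_{A,B}}$, I would telescope
$$
W=H(M^X_{A,B})_-\big[V^{(3,\epsilon)}(V^{X_{A,B}})^{-1}-I\big](M^X_{A,B})_-^{-1}H^{-1}.
$$
Since $H^{\pm1}$, $M^X_{A,B}$ and $(V^{X_{A,B}})^{\pm1}$ are uniformly bounded for $0<k_0<M$ and $t\ge t_0$, the bracket is controlled by $V^{(3,\epsilon)}-V^{X_{A,B}}$, and the two bounds $\|\partial_x^l(V^{(3,\epsilon)}-V^{X_{A,B}})\|_{L^1}\le Ct^{-1}\ln t$ and $\|\partial_x^l(V^{(3,\epsilon)}-V^{X_{A,B}})\|_{L^\infty}\le Ct^{-1/2}\ln t$ established in the previous lemma give the last two displayed estimates.

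The main obstacle is precisely this last step: one must check that the conjugation telescopes exactly so that no $t$- or $k_0$-growth survives the cancellation. This requires the uniform boundedness of $H$ together with the derivative bound $|\partial_x^lH(\pm k_0,t)|\le C$, and the derivative estimates for $\delta^0_{A,B},\delta^1_{A,B}$ and $e^{t\Phi^0_{21}(\pm k_0,z)}$, so that when $\partial_x^l$ is distributed by the Leibniz rule over the factors $H$, $M^X_{A,B}$ and the bracket, each term still inherits the $\ln t/\sqrt t$ decay rather than spoiling it. Once these boundedness and derivative inputs are in hand, assembling the five estimates over the partition of $\tilde\Sigma$ completes the proof.
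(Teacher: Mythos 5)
Your proposal follows essentially the same route as the paper's proof: an arc-by-arc estimation of $W$ over the partition $\Sigma$, $\Sigma'$, $\partial\tilde B^{(\pm k_0)}_{\epsilon}$, $\tilde\Sigma_{A,B}$, with the rotational symmetry reducing everything to the disks at $\pm k_0$, and with each piece controlled by the previously established lemmas (the $t^{-3/2}$ bound on $v^{(2)}-I$, the sign-definiteness of $\operatorname{Re}\Phi_{21}$ away from the saddle points, the $O(t^{-1/2})$ bound for $(M^{(\pm k_0)})^{-1}-I$ on the circles, and the $L^1/L^{\infty}$ bounds associated with the disks). Your telescoping identity on $\tilde\Sigma_{A,B}$ is algebraically correct and is just the paper's identity $W=\tilde M^{(k_0)}_-\bigl(V^{(2)}-V^{(k_0)}\bigr)(\tilde M^{(k_0)}_+)^{-1}$ unpacked one level further via $V^{(2)}=HV^{(3,\epsilon)}H^{-1}$, $V^{(k_0)}=HV^{X_A}H^{-1}$ and $\tilde M^{(k_0)}=HM^X_AH^{-1}$; both versions rest on the same boundedness and derivative inputs for $H$, $\delta^0_{A,B}$, $\delta^1_{A,B}$ and $M^X_{A,B}$.

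The one inaccuracy is in your first type of arc: you assert that on $\Sigma$ one has $\tilde V=V^{(2)}$, but by the definition of $\tilde V$ this holds only on $\tilde\Sigma\setminus\overline{\tilde B^{(\pm k_0)}_{\epsilon}}$. On the portions of $\Sigma$ lying inside the disks (the pieces of $\Sigma^{(2)}_{5,6}$ in $B(\pm k_0,\epsilon)$) one instead has $\tilde V=\tilde M^{(\pm k_0)}_-V^{(2)}(\tilde M^{(\pm k_0)}_+)^{-1}$, so $W\neq v^{(2)}-I$ there, and as written your argument leaves this part of $\Sigma$ unestimated. The fix is exactly the conjugation device you already use on $\tilde\Sigma_{A,B}$: since the jump contour of $\tilde M^{(\pm k_0)}$ is the cross $\tilde\Sigma_{A,B}$, the function $\tilde M^{(\pm k_0)}$ is analytic (hence has equal boundary values) across $\Sigma\cap B(\pm k_0,\epsilon)$ and is uniformly bounded, so that $W=\tilde M^{(\pm k_0)}\bigl(v^{(2)}-I\bigr)(\tilde M^{(\pm k_0)})^{-1}$ and the $Ck_0^{-3}t^{-3/2}$ bound carries over; this is precisely how the paper proves the first inequality. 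With that correction, your argument coincides with the paper's proof.
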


\begin{proof}
	(a) For the first inequality, recall the jump matrix on $\Sigma$ involves $r_{j,r},\rho_{j,r}$ and $(\tilde M^{(\pm k_0)})^{\pm1}$ is bounded, i.e.,
$$
\|(1+|\cdot|)\partial_x^l(v^{(2)}-I)\|_{(L^1\cap L^{\infty})(\Sigma_{5,6}^{(2)})}\le Ct^{-\frac{3}{2}}.
$$
So that on $\Sigma^{(2)}_{5,6}\cap B(k_0,\epsilon)$,
$$
W=\tilde V-I=m^{(k_0)}_-v^{(2)}\left(m^{(k_0)}_+\right)^{-1}-I=m^{(k_0)}_-\left(v^{(2)}-I\right)\left(m^{(k_0)}_+\right)^{-1}
$$
since the jump contour of $\tilde M^{(k_0)}$ is on $\tilde\Sigma_{A,B}$, so that $m^{(k_0)}$ is analytic on $\Sigma^{(2)}_{5,6}\cap B(k_0,\epsilon)$ and is bounded, we have
$$
\left\|(1+|\cdot|) \partial_x^l {W}\right\|_{\left(L^1 \cap L^{\infty}\right)(\Sigma)} \leq \frac{C}{k_0^3 t^{\frac{3}{2}}}. \\
$$

(b) For the second inequality, the $\Sigma'$ means $\Sigma^{(2)}\setminus\overline{B^{\pm k_0}_\epsilon}$, we would like to focus on $\Sigma^{(2)}\setminus B(k_0,\epsilon)$ and the $W$ on that contour only involves $(v^{(2)}_1)_{21}$ is not zero, i.e.,
$$
\frac{\delta_{1+}^2 }{\tilde\delta_{v_1}} \rho^*_{1,a} e^{t \Phi_{21}}.
$$
Because the $\partial_x^l\delta$ function is bounded, the estimate of $\rho^*_{1,a}$ is
$$
\left|\partial_x \rho_{1, a}^*(x, t, k)\right| \leq \frac{Ce^{t\operatorname{Re} \Phi_{21}( k)}}{1+|k|}.
$$
Moreover, the $\operatorname{Re}\Phi_{21}<-c$ for $|k-k_0|>\epsilon$ so that we have
$$
\left\|(1+|\cdot|) \partial_x^l W\right\|_{\left(L^1 \cap L^{\infty}\right)\left(\Sigma^{\prime}\right)} \leq C e^{-c t}. \\
$$

(c) The third estimate is the direct outcome of lemma above.

(d) For $k\in\tilde \Sigma_{A,B}$, in particular,
$$
W=\tilde M^{(k_0)}_{-}(V^{(2)}-V^{(k_0)})(\tilde M^{(k_0)}_+)^{-1},
$$
and combining the estimate above, it is found that $M^{(k_0)}$ is bounded uniformly for $0<k_0<m$.
\end{proof}

Now, introduce the Cauchy operator
$$
\left(\mathrm{C} f\right)(z)=\int_{\tilde\Sigma} \frac{f(\zeta)}{\zeta-z} \frac{\mathrm{d} \zeta}{2 \pi i}, \quad z \in \C\setminus \tilde \Sigma.
$$
If $(1+|z|)^{\frac{1}{3}}f(z)\in L^3(\tilde\Sigma)$, then $(Cf)(z)$ is analytic from $\C\setminus\tilde\Sigma$ to $\C$ with property that for any component $D$ in $\C\setminus \tilde \Sigma$, there are curves $\{C \}_{n=1}^\infty$ which surround each compact subset of $D$ satisfy
$$
\sup_{n\ge1}\int_{C_n}(1+|z|)|f(z)|^3|dz|<\infty,
$$
moreover, $C_{\pm}f$ exist a.e. $z\in\tilde\Sigma$ and $(1+|z|)^{\frac{1}{3}}C_{\pm}f(z)\in L^3(\tilde\Sigma)$.

On one hand, the $C_{\pm}$ are bounded operator from weighted $L^3(\tilde\Sigma)$ to itself (denote it as $\dot L^{3}(\tilde\Sigma)$) and satisfy $C_+-C_-=I$.

On the other hand, recall the estimate before
$$
\begin{cases}\begin{aligned}
		&\|(1+|\cdot|)\partial_x^lW\|_{L^1(\tilde\Sigma)}\le C t^{-\frac{1}{2}},\\
		&\|(1+|\cdot|)\partial_x^lW\|_{L^{\infty}(\tilde\Sigma)}\le C t^{-\frac{1}{2}}\ln t.\\
\end{aligned}\end{cases}
$$
By the Riesz in interpolation inequality, we have
$$
\|(1+|\cdot|)\partial_x^lW\|_{L^p(\tilde\Sigma)}\le C t^{-\frac{1}{2}}(\ln t)^{\frac{1}{p}},
$$
so that $W$ belong to the weighted $L^3(\tilde\Sigma)$ and $L^{\infty}(\tilde\Sigma)$.

Define
$$
C_{W} f=C_{+}\left(fW_{-}\right)+C_{-}\left(fW_{+}\right)
$$
and $C_W:\dot L^3(\tilde\Sigma)+ L^{\infty}(\tilde\Sigma)\to \dot L^3(\tilde\Sigma)$.

\begin{lemma}
	 For $t$ large enough and $0<k_0<M$, the operator $I-C_w$ is invertible and $(I-C_w)^{-1}$is a bounded linear operator from $\dot L^3(\tilde\Sigma)$ to itself.
\end{lemma}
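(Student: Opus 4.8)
The plan is to prove invertibility by showing that $C_W$ has operator norm strictly less than $1$ on $\dot L^3(\tilde\Sigma)$ once $t$ is large, and then to invert $I-C_W$ through the Neumann series $\sum_{n\ge 0}C_W^n$. The starting point is the boundedness of the Cauchy projections recorded just above the statement: $C_+$ and $C_-$ are bounded on $\dot L^3(\tilde\Sigma)$ with $C_+-C_-=I$. First I would fix a constant $c>0$ with $\|C_\pm\|_{\dot L^3\to\dot L^3}\le c$ that is \emph{uniform} in $0<k_0<M$ and $t\ge t_0$; this uniformity is available because $\tilde\Sigma$ has a fixed piecewise-smooth shape (the rays of $\Sigma^{(2)}$ together with the finitely many circles $\partial\tilde B^{(\pm k_0)}_{\epsilon}$), and the saddle points $\pm k_0,\pm\alpha k_0,\pm\alpha^2 k_0$ stay in a bounded set, so the Cauchy operator bound does not degenerate as the contour moves.

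Next I would estimate $\|C_W\|$ directly from the definition $C_Wf=C_+(fW_-)+C_-(fW_+)$. Since $W_\pm\in L^\infty(\tilde\Sigma)$, multiplication by $W_\pm$ maps $\dot L^3(\tilde\Sigma)$ into itself without affecting the weight, and on each arc of the triangular factorization only one of $W_\pm$ is nonzero, so $\|W_\pm\|_{L^\infty}\le\|W\|_{L^\infty}$. Hence
\begin{equation*}
\|C_Wf\|_{\dot L^3}\le \|C_+(fW_-)\|_{\dot L^3}+\|C_-(fW_+)\|_{\dot L^3}\le c\bigl(\|W_-\|_{L^\infty}+\|W_+\|_{L^\infty}\bigr)\|f\|_{\dot L^3},
\end{equation*}
so that $\|C_W\|_{\dot L^3\to\dot L^3}\le 2c\,\|W\|_{L^\infty(\tilde\Sigma)}$. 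The preceding lemma supplies the estimates of $W=\tilde V-I$ on each part of $\tilde\Sigma$; collecting them gives $\|W\|_{L^\infty(\tilde\Sigma)}\le C\,t^{-1/2}\ln t$, the dominant contribution coming from $\tilde\Sigma_{A,B}$ since the pieces on $\Sigma$, $\Sigma'$ and $\partial\tilde B^{(\pm k_0)}$ decay faster. Consequently $\|C_W\|\le 2cC\,t^{-1/2}\ln t$.

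Because $t^{-1/2}\ln t\to 0$ as $t\to\infty$, I would choose $T\ge t_0$ so large that $2cC\,t^{-1/2}\ln t\le \tfrac12$ for all $t\ge T$ and all $0<k_0<M$; then $\|C_W\|\le \tfrac12<1$. The Neumann series $\sum_{n\ge 0}C_W^n$ converges in operator norm and yields a bounded inverse $(I-C_W)^{-1}$ with $\|(I-C_W)^{-1}\|_{\dot L^3\to\dot L^3}\le (1-\|C_W\|)^{-1}\le 2$, uniformly in $k_0$ and $t\ge T$.

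I expect the only genuine subtlety to be the uniformity claims rather than the Neumann-series mechanism, which is routine. The hard part will be to ensure that the constant $c$ bounding $\|C_\pm\|$ is truly independent of $k_0$ and $t$ as the contour $\tilde\Sigma$ deforms (in particular across the disks $\partial\tilde B^{(\pm k_0)}_{\epsilon}$ and the opened lenses), and correspondingly that the $L^\infty$-bounds on $W_\pm$ hold uniformly in $0<k_0<M$. Once the uniform Cauchy bound is secured, the smallness $\|W\|_{L^\infty}=O(t^{-1/2}\ln t)$ from the previous lemma closes the argument.
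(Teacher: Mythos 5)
Your proposal is correct and follows essentially the same route as the paper: bound $\|C_W\|_{\dot L^3\to\dot L^3}$ by $\left(\|C_+\|+\|C_-\|\right)\|W\|_{L^\infty(\tilde\Sigma)}$, invoke the estimate $\|W\|_{L^\infty(\tilde\Sigma)}\le Ct^{-1/2}\ln t$ from the preceding lemma, and conclude invertibility via the Neumann series once $t$ is large. Your additional attention to the uniformity of the Cauchy-operator bound in $k_0$ and $t$ is a sensible refinement, but it does not change the argument, which matches the paper's proof.
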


\begin{proof}
	Since  $C_{\pm}$ are bounded operator from weighted $L^3(\tilde\Sigma)$ to itself then for any $f\in\dot L^3(\tilde\Sigma)$, we have
$$
\begin{aligned}
	C_Wf&=\mathrm{C}_{+}\left(fW_{-}\right)+\mathrm{C}_{-}\left(fW_{+}\right)\\
	&\le \|C_+\|_{\dot L^3(\tilde\Sigma)\to\dot L^3(\tilde\Sigma)}\|W\|_{L^\infty(\tilde\Sigma)}\|f\|_{\dot L^3(\tilde\Sigma)}+\|C_-\|_{\dot L^3(\tilde\Sigma)\to\dot L^3(\tilde\Sigma)}\|W\|_{L^\infty(\tilde\Sigma)}\|f\|_{\dot L^3(\tilde\Sigma)}\\
	&\le \left(\|C_+\|_{\dot L^3(\tilde\Sigma)\to\dot L^3(\tilde\Sigma)}+\|C_-\|_{\dot L^3(\tilde\Sigma)\to\dot L^3(\tilde\Sigma)}\right)\|W\|_{L^\infty(\tilde\Sigma)}\|f\|_{\dot L^3(\tilde\Sigma)},\\
\end{aligned}
$$
so that $\|C_W\|_{\dot L^3(\tilde\Sigma)\to\dot L^3(\tilde\Sigma)}\le \left(\|C_+\|_{\dot L^3(\tilde\Sigma)\to\dot L^3(\tilde\Sigma)}+\|C_-\|_{\dot L^3(\tilde\Sigma)\to\dot L^3(\tilde\Sigma)}\right)\|W\|_{L^\infty(\tilde\Sigma)}$ , and by the estimate before
$$
\|(1+|\cdot|)\partial_x^lW\|_{L^{\infty}(\tilde\Sigma)}\le C t^{-\frac{1}{2}}\ln t,\\
$$
as $t$ large enough to satisfy $\|W\|_{L^\infty(\tilde\Sigma)}<\frac{1}{\left(\|C_+\|_{\dot L^3(\tilde\Sigma)\to\dot L^3(\tilde\Sigma)}+\|C_-\|_{\dot L^3(\tilde\Sigma)\to\dot L^3(\tilde\Sigma)}\right)}$, then the operator $I-C_w$ is invertible.
\end{proof}

Let $\mu\in I+\dot L^3(\tilde \Sigma)$ satisfy the following equation
$$
\mu=I+C_W\mu,
$$
furthermore, one has $\mu=I+(I-C_W)^{-1}C_wI$.

\begin{lemma}
	For $t$ large enough and $0<k_0<M$, the RH problem $\tilde M$ has a unique solution as following:
$$
\tilde M(x,t;k)=I+C(\mu W)=I+\int_{\tilde\Sigma} \frac{\mu(x,t;\zeta)W(x,t;\zeta)}{\zeta-z} \frac{\mathrm{d} \zeta}{2 \pi i}, \quad z \in \C\setminus \tilde \Sigma.
$$
\end{lemma}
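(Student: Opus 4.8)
The plan is to recognize the statement as the standard small-norm (Beals--Coifman) reconstruction, for which the two preceding lemmas have already supplied all the analytic input. Since $I-C_W$ is invertible on $\dot L^3(\tilde\Sigma)$ for $t$ large and $0<k_0<M$, the singular integral equation $\mu=I+C_W\mu$ has the unique solution $\mu=I+(I-C_W)^{-1}C_WI\in I+\dot L^3(\tilde\Sigma)$. Taking this $\mu$, I would simply \emph{define} $\tilde M$ by the Cauchy integral appearing in the statement and then check the three defining conditions of the RH problem: analyticity on $\C\setminus\tilde\Sigma$, the jump $\tilde M_+=\tilde M_-\tilde V$ on $\tilde\Sigma$, and $\tilde M\to I$ as $k\to\infty$.

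First I would verify the integrability needed to make sense of $C(\mu W)$: combining $\mu\in I+\dot L^3(\tilde\Sigma)$ with the weighted $(L^1\cap L^\infty)$ and $L^3$ bounds on $W$ from the previous lemma (via H\"older), one gets $\mu W\in L^1(\tilde\Sigma)$ and $(1+|\cdot|)^{1/3}\mu W\in L^3(\tilde\Sigma)$. The analyticity of $\tilde M=I+C(\mu W)$ on each component of $\C\setminus\tilde\Sigma$ is then immediate from the mapping properties of the Cauchy operator recalled just above the statement, and the $L^1$ bound gives $\tilde M(z)=I+O(1/z)\to I$ as $z\to\infty$.

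The jump relation is the heart of the argument. Writing the triangular splitting $W=W_++W_-$ adapted to each arc of $\tilde\Sigma$ and setting $b_\pm=I\pm W_\pm$ so that $\tilde V=b_-^{-1}b_+$, I would record the non-tangential boundary values $\tilde M_\pm=I+C_\pm(\mu W)$. Using $C_+-C_-=I$ inside the definition $C_W\mu=C_+(\mu W_-)+C_-(\mu W_+)$ and the fixed-point identity $\mu=I+C_W\mu$, one obtains the Beals--Coifman relations $\tilde M_+=\mu\,b_+$ and $\tilde M_-=\mu\,b_-$; then
$$
\tilde M_+=\mu\,b_+=\mu\,b_-\bigl(b_-^{-1}b_+\bigr)=\tilde M_-\,\tilde V,\qquad k\in\tilde\Sigma,
$$
which is the desired jump. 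The care-intensive part is the bookkeeping of which entries of each jump matrix go into $W_+$ versus $W_-$ on the different pieces of $\tilde\Sigma$ (the residual contour, the local crosses $\tilde\Sigma_{A,B}$, and the circles $\partial\tilde B^{(\pm k_0)}_\epsilon$), since their triangular structures differ; this is where I expect the main obstacle.

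Finally, for uniqueness I would observe that each matrix assembled into $\tilde V$ has determinant one, so $\det\tilde M\equiv1$ and every solution is invertible. Given two solutions $\tilde M^{(1)},\tilde M^{(2)}$, the quotient $\tilde M^{(1)}(\tilde M^{(2)})^{-1}$ has no jump across $\tilde\Sigma$, extends analytically to all of $\C$, and tends to $I$ at infinity, so Liouville's theorem forces it to be $I$. Granted the weighted estimates on $W$ from the earlier lemmas, both the integrability step and this Liouville step are routine, and the genuine work lies in the contour-by-contour factorization of $\tilde V$.
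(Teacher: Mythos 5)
Your proposal is correct, and it supplies precisely the argument that the paper leaves implicit: the paper states this lemma with no proof at all, treating it as the standard Beals--Coifman/small-norm fact that follows from the machinery set up in the preceding paragraphs (boundedness of $C_\pm$ on $\dot L^3(\tilde\Sigma)$, invertibility of $I-C_W$, and the definition $\mu=I+(I-C_W)^{-1}C_WI$). So there is nothing in the paper to deviate from; your write-up is the canonical proof that the authors are relying on (it is the content of the Carleson-contour RH theory they cite).

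One simplification worth noting: the step you flag as the main obstacle --- the contour-by-contour bookkeeping of which entries of $\tilde V$ go into $W_+$ versus $W_-$ --- essentially disappears in the paper's setup. The paper defines a single matrix $W=\tilde V-I$ and estimates it as a whole, which corresponds to the one-sided (trivial) factorization $W_-=0$, $W_+=W$, i.e.\ $b_-=I$, $b_+=\tilde V$. With that choice your Beals--Coifman identities collapse to $\tilde M_-=\mu$ and $\tilde M_+=\mu\tilde V$, and the jump relation is immediate; no triangular splitting adapted to the individual arcs, crosses, and circles is needed at this stage (that splitting was already consumed in the earlier transformations $M\mapsto M^{(1)}\mapsto M^{(2)}$). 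Two minor points of rigor that both you and the paper gloss over, and which a complete proof should mention: the Liouville step requires removability of the isolated singularities of $\tilde M^{(1)}(\tilde M^{(2)})^{-1}$ at the self-intersection points of $\tilde\Sigma$ (handled by local $L^p$ bounds on the boundary values), and uniqueness should be phrased within the solution class $I+\dot L^3$, where it follows either from your Liouville argument or from the bijection between RH solutions and fixed points of $\mu=I+C_W\mu$, the latter being unique because $I-C_W$ is invertible.
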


\begin{lemma}
	For $t$ large enough, we have
$$
\|\partial_x^l(\mu-I)\|_{L^p(\tilde \Sigma)}\le  {\frac{C(\ln t)^{\frac{1}{p}}}{t^{\frac{1}{2}}}}.
$$
\end{lemma}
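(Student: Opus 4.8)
The plan is to read the bound off the explicit Beals--Coifman representation of $\mu$ together with the resolvent estimate from the previous two lemmas. Since $\mu$ solves $\mu = I + C_W\mu$, it satisfies $\mu = (I - C_W)^{-1}I$, and hence
$$
\mu - I = (I - C_W)^{-1} C_W I .
$$
Because the prior lemma guarantees that $I - C_W$ is invertible on $\dot L^3(\tilde\Sigma)$ with uniformly bounded inverse for $t$ large and $0<k_0<M$, the first thing I would record is the $L^3$ estimate
$$
\|\mu - I\|_{\dot L^3(\tilde\Sigma)} \le \|(I-C_W)^{-1}\|_{\dot L^3 \to \dot L^3}\,\|C_W I\|_{\dot L^3(\tilde\Sigma)} \le C\,\|W\|_{L^3(\tilde\Sigma)},
$$
where I used $C_W I = C_+(W_-) + C_-(W_+)$ and the boundedness of $C_\pm$ on $\dot L^3(\tilde\Sigma)$. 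Feeding in the interpolated bound $\|(1+|\cdot|)\partial_x^l W\|_{L^3(\tilde\Sigma)} \le C t^{-1/2}(\ln t)^{1/3}$ from the lemma above gives the $p=3$ case with $l=0$.

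To reach a general exponent $p$, I would isolate the leading contribution by writing
$$
\mu - I = C_W I + (I - C_W)^{-1} (C_W)^2 I .
$$
For the first term, $\|C_W I\|_{L^p(\tilde\Sigma)} \le C\|W\|_{L^p(\tilde\Sigma)}$ follows from the $L^p$-boundedness of the Cauchy projections $C_\pm$ on the contour $\tilde\Sigma$ (valid for $1<p<\infty$, with the endpoints $p=1,\infty$ handled through the built-in weight $(1+|\cdot|)$ together with the $L^1\cap L^\infty$ bounds on $W$). For the remainder I would estimate, via Hölder's inequality and the $\dot L^3$ resolvent bound,
$$
\|(I - C_W)^{-1}(C_W)^2 I\|_{\dot L^3(\tilde\Sigma)} \le C\,\|(C_W)^2 I\|_{\dot L^3} \le C\,\|W\|_{L^\infty(\tilde\Sigma)}\,\|W\|_{L^3(\tilde\Sigma)} = O(t^{-1}(\ln t)^{4/3}),
$$
which is strictly of lower order than the leading term. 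Combining these with $\|(1+|\cdot|)W\|_{L^p(\tilde\Sigma)} \le C t^{-1/2}(\ln t)^{1/p}$ yields the asserted bound for $l=0$.

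For the $x$-derivative ($l=1$) I would differentiate the resolvent identity $\mu = (I-C_W)^{-1} I$, using $\partial_x(I-C_W)^{-1} = (I-C_W)^{-1}(\partial_x C_W)(I-C_W)^{-1}$, to obtain
$$
\partial_x \mu = (I - C_W)^{-1}(\partial_x C_W)\,\mu,
$$
where $\partial_x C_W$ is the operator $f \mapsto C_+(f\,\partial_x W_-) + C_-(f\,\partial_x W_+)$. Running the previous estimates with $\partial_x W$ in place of $W$, and using that the bounds on $W$ from the lemma above hold uniformly after applying $\partial_x^l$, reproduces the same $t^{-1/2}(\ln t)^{1/p}$ decay; the Leibniz expansion for higher $l$ is identical in structure.

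The step I expect to be the main obstacle is the bookkeeping needed to combine the Beals--Coifman resolvent bound, which is available only on the weighted space $\dot L^3(\tilde\Sigma)$, with the full range of $L^p$ norms requested in the statement, while tracking the different decay rates carried by the several pieces of $\tilde\Sigma$. The dominant contribution comes from the crosses $\tilde\Sigma_{A,B}$ near the saddle points, where $W$ decays only like $t^{-1/2}$ in $L^\infty$ and like $t^{-1}\ln t$ in $L^1$; it is precisely the interpolation between these two rates (already performed in the lemma above) that produces both the logarithmic factor $(\ln t)^{1/p}$ and the power $t^{-1/2}$ in the final estimate, and one must check that the contributions from $\Sigma$, $\Sigma'$ and $\partial\tilde B^{(\pm k_0)}_{\epsilon}$ are all subordinate to this term.
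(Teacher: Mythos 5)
Your overall strategy---representing $\mu-I$ through the resolvent of $C_W$ and feeding in the interpolated bounds on $W$---is the same as the paper's, but as written your argument does not actually produce an $L^p$ bound, and the ``main obstacle'' you flag at the end is precisely the gap that is left unresolved. In your decomposition $\mu - I = C_W I + (I-C_W)^{-1}(C_W)^2 I$, the first term is correctly estimated in $L^p$, but the remainder is estimated only in $\dot L^3(\tilde\Sigma)$: an $O(t^{-1}(\ln t)^{4/3})$ bound in the weighted $L^3$ norm does not control the $L^p(\tilde\Sigma)$ norm of that piece for general $p$ (for $p$ near $1$, and certainly for $p=\infty$, a weighted-$L^3$ bound on an unbounded contour gives no control at all), so ``combining these'' does not yield the asserted estimate. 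The same mismatch recurs in your $l=1$ step, where you write $\partial_x\mu=(I-C_W)^{-1}(\partial_x C_W)\mu$ and then need boundedness of $(I-C_W)^{-1}$ on $L^p$, which the previous lemma supplies only on $\dot L^3(\tilde\Sigma)$.

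The missing observation---and it is exactly how the paper proceeds---is that the $\dot L^3$ resolvent lemma is not needed here at all: since $C_\pm$ are bounded on $L^p(\tilde\Sigma)$ for $1<p<\infty$ and $\|C_W\|_{L^p\to L^p}\le \|C_\pm\|_p\,\|W\|_{L^\infty(\tilde\Sigma)}\le C t^{-1/2}\ln t$, the operator $C_W$ has norm strictly less than $1$ on $L^p$ once $t$ is large, so $(I-C_W)^{-1}=\sum_{j\ge 0}(C_W)^j$ converges in $L^p$ with uniformly bounded norm. The paper then writes $\mu-I=\sum_{j\ge 1}(C_W)^j I$ and sums the geometric series to get $\|\mu-I\|_{L^p}\le \|C_\pm\|_p\|W\|_{L^p}\big/\bigl(1-\|C_\pm\|_p\|W\|_{L^\infty}\bigr)$, which together with $\|W\|_{L^p}\le Ct^{-1/2}(\ln t)^{1/p}$ gives the claim for $l=0$; for $l=1$ it differentiates the series term by term (Leibniz), obtaining the bound $C\bigl(\|\partial_x W\|_{L^\infty}\|W\|_{L^p}+\|\partial_x W\|_{L^p}\bigr)\big/\bigl(1-\|C_\pm\|_p\|W\|_{L^\infty}\bigr)$, which decays at the same rate. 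If you replace your use of the $\dot L^3$ lemma by this smallness-in-$L^p$ Neumann argument (all of whose ingredients you already cite), your proof closes and coincides with the paper's.
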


\begin{proof}
	Denote $\|C_{\pm}\|_p:=\left(\|C_+\|_{L^p(\tilde\Sigma)\to L^p(\tilde\Sigma)}+\|C_-\|_{ L^p(\tilde\Sigma)\to L^p(\tilde\Sigma)}\right)$ and assume $t$ large enough to satisfy $\|W\|_{L^{\infty}(\tilde\Sigma)}<\|C_{\pm}\|_p^{-1}$.

When $l=0$, we have
$$
\begin{aligned}
	\|\mu-I\|_{L^P(\tilde\Sigma)}&=\|(I-C_W)^{-1}C_wI\|_{L^P(\tilde\Sigma)}
	\le \sum_{j=1}^\infty\|C_W\|^{j-1}_{L^P(\tilde\Sigma)\to L^P(\tilde\Sigma)}\|C_wI\|_{L^p(\tilde\Sigma)}\\
	&\le \sum_{j=1}^\infty\|C_{\pm}\|^{j}_{p}\|W\|^{j-1}_{L^\infty(\tilde \Sigma)}\|w\|_{L^p(\tilde\Sigma)}=\frac{\|C_{\pm}\|_p \|w\|_{L^p(\tilde\Sigma)}}{1-\|C_{\pm}\|_p\|w\|_{L^{\infty}(\tilde\Sigma)}}.
\end{aligned}
$$
So that combine the estimate of $\|w\|_{L^p(\tilde\Sigma)}$ we get the aimed estimate.

When $l=1$, we get
$$
\partial_x(\mu-I)=\partial_x\sum_{j=1}^\infty(C_W)^jI.
$$
Since the series is uniformly bounded and we can change the order of sum and derivative, then we have
$$
\begin{aligned}
	\|\partial_x(\mu-I)\|_{L^P(\tilde\Sigma)}&\leq \sum_{j=2}^{\infty}(j-1)\left\|{{C}}_W\right\|_{L^p(\tilde{\Sigma})\to L^p(\tilde{\Sigma}}^{j-2}\left\|\partial_x C_W\right\|_{L^p(\tilde{\Sigma})\to L^p(\tilde{\Sigma}}\left\|C_W I\right\|_{L^p(\tilde{\Sigma})}\\
	&+\sum_{j=1}^{\infty}\left\|C_W\right\|_{L^p(\tilde{\Sigma})\to L^p(\tilde{\Sigma}}^{j-1}\left\|\partial_x {C_W} I\right\|_{L^p(\tilde{\Sigma})}\\
	&\le C\sum_{j=2}^\infty j\|C_{\pm}\|_p^j\|W\|_{L^\infty(\tilde\Sigma)}^{j-2}\|\partial_xW\|_{L^\infty(\tilde\Sigma)}\|W\|_{L^p(\tilde\Sigma)}+\sum_{j=1}^\infty\|C_{\pm}\|_p^j\|W\|_{L^\infty(\tilde\Sigma)}^{j-1}\|\partial_xW\|_{L^p(\tilde\Sigma)}\\
	&\le\frac{C\left(\|\partial_xW\|_{L^\infty(\tilde\Sigma)}\|W\|_{L^p(\tilde\Sigma)}+ \|\partial_xW\|_{L^p(\tilde\Sigma)}\right)}{1-\|C_{\pm}\|_p\|W\|_{L^\infty(\tilde\Sigma)}}.
\end{aligned}
$$
\end{proof}

Now, we can get the following non-tangential limit as $k\to\infty$
$$
Q(x,t):=\lim_{k\to\infty}k(\tilde M(x,t;k)-I)=-\frac{1}{2\pi i}\int_{\tilde\Sigma}\mu (x,t;k)W(x,t;k)dk.
$$

\begin{lemma}
	 When $t\to\infty$, we have
$$
Q(x,t)=-\frac{1}{2\pi i}\int_{\partial\tilde B^{(k_0)}_\epsilon\cup\partial\tilde B^{(-k_0)}_\epsilon}W(x,t;k)dk+O\left(\frac{\ln t}{t}\right).
$$
\end{lemma}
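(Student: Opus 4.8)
The plan is to start from the integral representation
$$
Q(x,t) = -\frac{1}{2\pi i}\int_{\tilde\Sigma}\mu(x,t;k)\,W(x,t;k)\,dk
$$
established immediately above the statement, and to reach the claimed formula through two reductions: first replacing $\mu$ by $I$ inside the integral, and then discarding every portion of $\tilde\Sigma$ except the two small circles $\partial\tilde B^{(\pm k_0)}_\epsilon$. Both reductions are controlled by the norm estimates for $W$ and for $\mu-I$ proved in the preceding lemmas, so the argument is essentially bookkeeping of $L^p$ norms rather than new analysis.

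First I would write $\mu = I + (\mu-I)$ and split
$$
Q(x,t) = -\frac{1}{2\pi i}\int_{\tilde\Sigma}W\,dk \;-\; \frac{1}{2\pi i}\int_{\tilde\Sigma}(\mu-I)\,W\,dk .
$$
For the second integral I apply Cauchy--Schwarz on $\tilde\Sigma$,
$$
\left|\int_{\tilde\Sigma}(\mu-I)\,W\,dk\right| \le \|\mu-I\|_{L^2(\tilde\Sigma)}\,\|W\|_{L^2(\tilde\Sigma)} .
$$
By the lemma bounding $\mu-I$, taking $p=2$ and $l=0$ gives $\|\mu-I\|_{L^2(\tilde\Sigma)}\le C(\ln t)^{1/2}t^{-1/2}$. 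For $W$, I interpolate the $L^1$ and $L^\infty$ bounds of the $W$-estimate lemma on each piece of $\tilde\Sigma$: on $\partial\tilde B^{(\pm k_0)}_\epsilon$ one gets $\|W\|_{L^2}\le Ct^{-1/2}$, while on $\Sigma$, $\Sigma'$ and $\tilde\Sigma_{A,B}$ the $L^2$ norms are $O(t^{-3/2})$, $O(e^{-ct})$ and $O(t^{-3/4}\ln t)$, all negligible by comparison, so $\|W\|_{L^2(\tilde\Sigma)}\le Ct^{-1/2}$. The product is therefore $O((\ln t)^{1/2}t^{-1})=O(\ln t/t)$, which is absorbed into the error term.

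Next I decompose the remaining integral according to $\tilde\Sigma = \partial\tilde B^{(\pm k_0)}_\epsilon \cup \Sigma \cup \Sigma' \cup \tilde\Sigma_{A,B}$ and estimate each contribution in $L^1$. The $W$-estimate lemma yields $\|W\|_{L^1(\Sigma)}=O(t^{-3/2})$, $\|W\|_{L^1(\Sigma')}=O(e^{-ct})$, and $\|W\|_{L^1(\tilde\Sigma_{A,B})}=O(t^{-1}\ln t)$, each of which is $O(\ln t/t)$ or smaller, so only the integral over the two circles survives at the stated order. Combining the two reductions gives
$$
Q(x,t) = -\frac{1}{2\pi i}\int_{\partial\tilde B^{(k_0)}_\epsilon\cup\partial\tilde B^{(-k_0)}_\epsilon}W(x,t;k)\,dk + O\!\left(\frac{\ln t}{t}\right),
$$
as claimed.

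The main obstacle I anticipate is the accounting of the logarithmic factors, which enter simultaneously through $\|\mu-I\|_{L^2}$ and through $\|W\|_{L^1(\tilde\Sigma_{A,B})}$. One must verify that in every term they combine to give at worst $O(\ln t/t)$, and in particular that the Cauchy--Schwarz product does not sacrifice a power of $t$ relative to the $O(t^{-1/2})$ size of the surviving circle integral, which is what fixes the true leading order of $Q$. Keeping the $\partial\tilde B^{(\pm k_0)}_\epsilon$ contribution explicit (rather than folding it into the error) is precisely what distinguishes the genuine leading behavior from the subleading pieces.
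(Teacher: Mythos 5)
Your proposal is correct and takes essentially the same route as the paper: the same splitting $\mu = I + (\mu-I)$ combined with decomposing $\tilde\Sigma$ into the circles $\partial\tilde B^{(\pm k_0)}_\epsilon$ and the remainder, with both error terms controlled by the same lemmas on $W$ and $\mu-I$ (the paper bounds the $(\mu-I)W$ term by H\"older's inequality with conjugate exponents $p,q$, of which your Cauchy--Schwarz step is the special case $p=q=2$). Your bookkeeping of the logarithmic factors reproduces the paper's estimates $|Q_1|,|Q_2|\le C\ln t/t$, so nothing is missing.
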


\begin{proof}
	Decompose $Q(x,t)$ as follows
$$
Q(x,t)=-\frac{1}{2\pi i}\int_{\partial\tilde B^{(k_0)}_\epsilon\cup\partial\tilde B^{(-k_0)}_\epsilon}\mu (x,t;k)W(x,t;k)dk+Q_1(x,t)+Q_2(x,t),
$$
where
$$
Q_1(x,t):=-\frac{1}{2\pi i}\int_{\tilde\Sigma}(\mu (x,t;k)-I)W(x,t;k)dk,
$$
and
$$
Q_2(x,t):=-\frac{1}{2\pi i}\int_{\tilde\Sigma\setminus{(\partial\tilde B^{(k_0)}_\epsilon}\cup \partial\tilde B^{(k_0)}_\epsilon)}W(x,t;k)dk.
$$
For $Q_1(x,t)$, by the H$ \ddot o$lder inequality, we have that
$$
|Q_1(x,t)|\le C\|(\mu (x,t;\cdot)-I)W(x,t;\cdot)\|_{L^1(\tilde\Sigma)}\le C \|\mu (x,t;\cdot)-I\|_{L^p(\tilde\Sigma)}\|W(x,t;\cdot)\|_{L^q(\tilde\Sigma)}\le\frac{C\ln t}{t},
$$
where $\frac{1}{p}+\frac{1}{q}=1$.

For $Q_2(x,t)$, we have that
$$
|Q_2(x,t)|\le C\|W(x,t;\cdot)\|_{L^1(\tilde \Sigma \setminus(\partial\tilde B^{(k_0)}_\epsilon\cup\partial\tilde B^{(-k_0)}_\epsilon))}\le\frac{C\ln t}{t}.
$$

Now, suppose
$$
R(x,t;k_0):=-\frac{1}{2\pi i}\int_{\partial B(k_0,\epsilon)}W(x,t;k)dk=-\frac{1}{2\pi i}\int_{\partial B(k_0,\epsilon)}((M^{(k_0)})^{-1}-I) dk,
$$
and
$$
R(x,t;-k_0):=-\frac{1}{2\pi i}\int_{\partial B(k_0,\epsilon)}W(x,t;k)dk=-\frac{1}{2\pi i}\int_{\partial B(-k_0,\epsilon)}((M^{(-k_0)})^{-1}-I) dk,
$$
moreover, $R(x,t;\pm k_0)$ satisfy
$$
R(x,t;k_0)=\frac{H(k_0, t) M_1^{X_A}(y(k_0)) H(k_0, t)^{-1}}{{3^{\frac{5}{4}}2\sqrt{5t}k_0^{\frac{3}{2}}}}+O\left(t^{-1}\right),
$$
and
$$
R(x,t;-k_0)=\frac{H(-k_0, t) M_1^{X_B}(y(-k_0)) H(-k_0, t)^{-1}}{{3^{\frac{5}{4}}2\sqrt{5t}k_0^{\frac{3}{2}}}}+O\left(t^{-1}\right).
$$

By the symmetry of $\tilde M(x,t;k)$, we have
$$
\tilde M(x,t;k)=\mathcal{A}\tilde M(x,t;\alpha k)\mathcal{A}^{-1}, \quad k\in\C\setminus\tilde \Sigma.
$$
Then $\mu$ and $W$ also satisfy the above symmetric and we can find that
$$
\begin{aligned}
	-\frac{1}{2\pi i}\int_{\partial\tilde B^{(k_0)}_\epsilon\cup\partial\tilde B^{(-k_0)}_\epsilon}W(x,t;k)dk&=-\frac{1}{2\pi i}
	\left(\int_{\partial B(k_0,\epsilon)}+\int_{\partial B(\alpha k_0,\epsilon)}+\int_{\partial B(\alpha^2k_0,\epsilon)}\right)
	W(x,t;k)dk\\
	&-\frac{1}{2\pi i}
	\left(\int_{\partial B(-k_0,\epsilon)}+\int_{\partial B(-\alpha k_0,\epsilon)}+\int_{\partial B(-\alpha^2k_0,\epsilon)}\right)
	W(x,t;k)dk\\
	&=R(x,t;k_0)+\alpha \mathcal{A}^{-1}R(x,t;k_0)\mathcal{A}+\alpha^2 \mathcal{A}^{-2}R(x,t;k_0)\mathcal{A}^2\\
	&+R(x,t;-k_0)+\alpha \mathcal{A}^{-1}R(x,t;-k_0)\mathcal{A}+\alpha^2 \mathcal{A}^{-2}R(x,t;-k_0)\mathcal{A}^2.
\end{aligned}
$$
Therefore, we get that

\begin{align*}
	\partial_x\lim_{k\to\infty}k(\tilde M(x,t;k)-I)=\partial_x\left(\frac{\sum_{j=0}^2\alpha^j\mathcal{A}^{-j}H(k_0, t) M_1^{X_A}(y(k_0)) H(k_0, t)^{-1}\mathcal{A}^{j}}{{3^{\frac{5}{4}}2\sqrt{5t}k_0^{\frac{3}{2}}}}\right)\\
+\partial_x\left(\frac{\sum_{j=0}^2\alpha^j\mathcal{A}^{-j}H(-k_0, t) M_1^{X_B}(y(-k_0)) H(-k_0, t)^{-1}\mathcal{A}^{j}}{{3^{\frac{5}{4}}2\sqrt{5t}k_0^{\frac{3}{2}}}}\right).
\end{align*}

\end{proof}

\subsection{The long-time asymptotics of solutions for the KK equation and SK equation}

Since the SK equation (\ref{SK}) and KK equation (\ref{KK}) have the same reconstruction formula, the asymptotic solution $u(x,t)$ is
$$
u(x,t)=-\frac{1}{2}\partial_x\left(\lim_{k\to\infty}k(N_3(x,t;k)-1) \right),
$$
where $N(x,t;k)=\left(N_1,N_2,N_3\right)=(\alpha,\alpha^2,1)M(x,t;k)$.
\par
Recall that for $k\in\C\setminus{(\tilde B^{(k_0)}_{\epsilon}\cup\tilde B^{(-k_0)}_{\epsilon})}$, the $M(x,t;k)$ has the relationship with $\tilde M(x,t;k)$ as follows
$$
M=\tilde MG^{-1}\Delta^{-1}.
$$
Then it follows
$$
\begin{aligned}
	u(x,t)&=-\frac{1}{2}\partial_x\left(\lim_{k\to\infty}k[((\alpha,\alpha^2,1)\tilde MG^{-1}\Delta^{-1})_3-1] \right)\\
	&=-\frac{1}{2}\partial_x\left(\lim_{k\to\infty}k[((\alpha,\alpha^2,1)\tilde M)_3-1] \right)+O\left(\frac{\ln t}{t}\right).
\end{aligned}
$$
For the second equality, since the $G^{-1}\Delta^{-1}$ tends to $I$ as $k\to\infty$ and their derivatives are dominated by $\frac{\ln t}{t}$, it is concluded that
\begin{align*}
	u(x,t)=&-\frac{1}{2}\partial_x\left(\left(\begin{array}{lll}
		\alpha & \alpha^2 & 1
	\end{array}\right) \frac{\sum_{j=0}^2 \alpha^j \mathcal{A}^{-j} H(k_0, t) M_1^{X_A}(y(k_0)) H(k_0, t)^{-1} \mathcal{A}^j}{{3^{\frac{5}{4}}2\sqrt{5t}k_0^{\frac{3}{2}}}}\right)\\
	&-\frac{1}{2}\partial_x\left(\left(\begin{array}{lll}
		\alpha & \alpha^2 & 1
	\end{array}\right) \frac{\sum_{j=0}^2 \alpha^j \mathcal{A}^{-j} H(-k_0, t) M_1^{X_B}(y(-k_0)) H(-k_0, t)^{-1} \mathcal{A}^j}{{3^{\frac{5}{4}}2\sqrt{5t}k_0^{\frac{3}{2}}}}\right)+O\left(\frac{\ln t}{t}\right)\\
	=&-\frac{1}{2}\partial_x\left(\frac{\alpha^2\beta^A_{21}\delta^0_Ae^{t\Phi_{21}(k_0)}+\alpha\beta^A_{12}(\delta^0_A)^{-1}e^{-t\Phi_{21}(k_0)}
		+\alpha^2\beta^B_{21}(\delta^0_B)^{-1}e^{t\Phi_{21}(-k_0)}+\alpha\beta^B_{12}\delta^0_Be^{-t\Phi_{21}(-k_0)}}{{3^{\frac{5}{4}}2\sqrt{5t}k_0^{\frac{3}{2}}}}\right)+O\left(\frac{\ln t}{t}\right)\\
\end{align*}
\begin{align*}	=&-\frac{1}{3^{\frac{5}{4}}2\sqrt{5t}k_0^{\frac{3}{2}}}\left(\partial_x\operatorname{Re}\left(\alpha^2\beta^A_{21}\delta^0_Ae^{t\Phi_{21}(k_0)}\right)+\partial_x\operatorname{Re}\left(\alpha\beta^B_{12}\delta^0_Be^{-t\Phi_{21}(-k_0)}\right)\right)+O\left(\frac{\ln t}{t}\right)\\
	=&-\frac{1}{3^{\frac{5}{4}}2\sqrt{5t}k_0^{\frac{3}{2}}}(\sqrt{\nu_1}\partial_x\cos \left(\frac{19\pi }{12}-\left(\arg y_1+\arg\Gamma(i\nu_1)\right)-(36\sqrt{3}tk_0^5)+\nu_1\ln({{3^{\frac{7}{2}}20tk_0^{5}}})\right.\\
	&\left.+\nu_4\ln(4)+\frac{1}{\pi }\int_{-k_0}^{-\infty}\log_{\pi}\frac{|s-\omega k_0|}{|s- k_0|}d\ln(1-|r_2(s)|^2)+\frac{1}{\pi }\int_{k_0}^\infty\log_0\frac{|s-k_0|}{|s-\omega k_0|}d\ln(1-|r_1(s)|^2)\right)\\
	&+\sqrt{\nu_4}\partial_x\cos \left(\frac{11\pi }{12}-\left(\arg y_4+\arg\Gamma(i\nu_4)\right)-(36\sqrt{3}tk_0^5)+\nu_4\ln({{3^{\frac{7}{2}}20tk_0^{5}}})\right.\\
	&\left.+\nu_1\ln(4)+\frac{1}{\pi }\int_{k_0}^{\infty}\log_{0}\frac{|s+\omega k_0|}{|s+ k_0|}d\ln(1-|r_1(s)|^2)+\frac{1}{\pi }\int_{-k_0}^{-\infty}\log_{\pi}\frac{|s+ k_0|}{|s+\omega k_0|}d\ln(1-|r_2(s)|^2)\right))+O\left(\frac{\ln t}{t}\right)\\
\end{align*}
\begin{align*}
	=&-\frac{1}{3^{\frac{3}{4}}2\sqrt{5t}k_0^{\frac{1}{2}}}(\sqrt{\nu_1}\sin \left(\frac{19\pi }{12}-\left(\arg y_1+\arg\Gamma(i\nu_1)\right)-(36\sqrt{3}tk_0^5)+\nu_1\ln({{3^{\frac{7}{2}}20tk_0^{5}}})\right.\\
	&\left.+\nu_4\ln(4)+\frac{1}{\pi }\int_{-k_0}^{-\infty}\log_{\pi}\frac{|s-\omega k_0|}{|s- k_0|}d\ln(1-|r_2(s)|^2)+\frac{1}{\pi }\int_{k_0}^\infty\log_0\frac{|s-k_0|}{|s-\omega k_0|}d\ln(1-|r_1(s)|^2)\right)\\	
	&+\sqrt{\nu_4}\sin \left(\frac{11\pi }{12}-\left(\arg y_4+\arg\Gamma(i\nu_4)\right)-(36\sqrt{3}tk_0^5)+\nu_4\ln({{3^{\frac{7}{2}}20tk_0^{5}}})\right.\\
	&\left.+\nu_1\ln(4)+\frac{1}{\pi }\int_{k_0}^{\infty}\log_{0}\frac{|s+\omega k_0|}{|s+ k_0|}d\ln(1-|r_1(s)|^2)+\frac{1}{\pi }\int_{-k_0}^{-\infty}\log_{\pi}\frac{|s+ k_0|}{|s+\omega k_0|}d\ln(1-|r_2(s)|^2)\right))+O\left(\frac{\ln t}{t}\right),\\
\end{align*}
where we have used the fact that $\overline {\beta^{A,B}_{12}}=\beta^{A,B}_{21}$, $\overline{\delta^0_{A,B}}=(\delta^0_{A,B})^{-1}$ and $\frac{d}{dx}=\frac{d}{dk_0}\frac{1}{180tk_0^3}$.

\begin{theorem}
	 Suppose $u(x,t)$ is a Schwartz class solution of $SK$ equation (\ref{SK}) (or $KK$ equation (\ref{KK})) with initial data $u_0(x)$ in Schwartz space, then in the generic case and $\frac{x}{t}$ in a compact subsets of $(0,\infty)$, the solution has the following asymptotics as $t\to\infty$

\begin{equation}
	\begin{split}
			\begin{aligned}\label{longtime-solution}
				u(x,t)=&-\frac{1}{3^{\frac{3}{4}}2\sqrt{5t}k_0^{\frac{1}{2}}}(\sqrt{\nu_1}\sin \left(\frac{19\pi }{12}-\left(\arg y_1+\arg\Gamma(i\nu_1)\right)-(36\sqrt{3}tk_0^5)+\nu_1\ln({{3^{\frac{7}{2}}20tk_0^{5}}})\right.\\
		&\left.+\nu_4\ln(4)+\frac{1}{\pi }\int_{-k_0}^{-\infty}\log_{\pi}\frac{|s-\omega k_0|}{|s- k_0|}d\ln(1-|r_2(s)|^2)+\frac{1}{\pi }\int_{k_0}^\infty\log_0\frac{|s-k_0|}{|s-\omega k_0|}d\ln(1-|r_1(s)|^2)\right)\\	
		&+\sqrt{\nu_4}\sin \left(\frac{11\pi }{12}-\left(\arg y_4+\arg\Gamma(i\nu_4)\right)-(36\sqrt{3}tk_0^5)+\nu_4\ln({{3^{\frac{7}{2}}20tk_0^{5}}})\right.\\
		&\left.+\nu_1\ln(4)+\frac{1}{\pi }\int_{k_0}^{\infty}\log_{0}\frac{|s+\omega k_0|}{|s+ k_0|}d\ln(1-|r_1(s)|^2)+\frac{1}{\pi }\int_{-k_0}^{-\infty}\log_{\pi}\frac{|s+ k_0|}{|s+\omega k_0|}d\ln(1-|r_2(s)|^2)\right))\\&+O\left(\frac{\ln t}{t}\right).\\
			\end{aligned}
	\end{split}
\end{equation}
\end{theorem}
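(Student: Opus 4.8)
The plan is to assemble the theorem from the chain of transformations and the small-norm analysis developed above, since the analytic heavy lifting is already complete. I would begin from the reconstruction formula $u(x,t)=-\frac12\partial_x\lim_{k\to\infty}k\,(N_3(x,t,k)-1)$ with $N=(\alpha,\alpha^2,1)M$, and trace how $M$ relates to the error matrix $\tilde M$ through $M=\tilde M G^{-1}\Delta^{-1}$. Because $G^{-1}\Delta^{-1}\to I$ as $k\to\infty$ and its $k\to\infty$ coefficients have $x$-derivatives controlled by $\ln t/t$, the large-$k$ coefficient of $N_3$ is, up to $O(\ln t/t)$, governed entirely by $Q(x,t):=\lim_{k\to\infty}k(\tilde M-I)$.

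Next I would invoke the representation $\tilde M=I+C(\mu W)$ and the identity $Q(x,t)=-\frac{1}{2\pi i}\int_{\tilde\Sigma}\mu W\,dk$, together with the lemma that reduces this integral, modulo $O(\ln t/t)$, to the contributions of the six boundary circles $\partial\tilde B^{(\pm k_0)}_\epsilon$. Using the rotational symmetry $\tilde M(x,t,k)=\mathcal A\tilde M(x,t,\alpha k)\mathcal A^{-1}$, these six contributions collapse to the two residue expressions $R(x,t,\pm k_0)$ together with their conjugations by $\alpha^j\mathcal A^{-j}$. Substituting the explicit local parametrix data — the matrices $H(\pm k_0,t)$ carrying $(\delta^0_{A,B})^{\pm1/2}$ and $e^{\pm\frac t2\Phi_{21}(\pm k_0)}$, and the off-diagonal entries $\beta^{A,B}_{12},\beta^{A,B}_{21}$ computed from the Weber-equation model problems $M^X_{A,B}$ — then yields an explicit expression for $k(\tilde M-I)_3$ to leading order $t^{-1/2}$.

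The remaining work is the arithmetic bookkeeping of phases. I would extract the relevant entry, use the conjugacy relations $\overline{\beta^{A,B}_{12}}=\beta^{A,B}_{21}$ and $\overline{\delta^0_{A,B}}=(\delta^0_{A,B})^{-1}$ to write the result as a sum of two real cosines, and substitute the computed moduli of $\delta^0_{A,B}$ together with $\operatorname{Re}\chi_1(k_0)=\pi\nu_1$. Collecting the phase of each cosine amounts to adding the model-problem constants ($19\pi/12$, $11\pi/12$ and the $\arg\Gamma(i\nu_{1,4})$, $\arg y$ terms), the stationary-phase contribution $-36\sqrt3\,tk_0^5$ arising from $t\Phi_{21}(\pm k_0)$, the logarithmic frequency $\nu_{1,4}\ln(3^{7/2}20\,tk_0^5)$ from $(\delta^0_{A,B})^{\pm1}$ and $a^{\mp2i\nu}$, and the two $\chi$-integrals written as $\int\log_0\frac{|s-k_0|}{|s-\omega k_0|}\,d\ln(1-|r_1|^2)$ and its $r_2$-analogue produced by the symmetry sum. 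Finally I would apply $\partial_x=\frac{1}{180 tk_0^3}\partial_{k_0}$; since the dominant term comes from differentiating the $O(t)$ phase $-36\sqrt3\,tk_0^5$, each cosine becomes a sine and the prefactor $t^{-1/2}k_0^{-3/2}$ improves to $t^{-1/2}k_0^{-1/2}$, giving exactly \eqref{longtime-solution}. The main obstacle is precisely this phase accounting: every transformation (the factor $\Theta$, the lens factors $G$, the scaling $z_1=3^{5/4}2\sqrt{5t}k_0^{3/2}(k-k_0)$, and the $\delta$-function reparametrisations $\delta^0_{A,B}$) contributes an additive term to the argument, and one must verify that the error from each approximation ($r_{j,a}\to r_j(k_0)$, $\delta^1_{A,B}\to1$, $e^{t\Phi^0_{21}}\to1$, and $\mu-I$) is uniformly $O(\ln t/t)$, so that all such errors are absorbed into the stated remainder.
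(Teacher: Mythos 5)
Your proposal follows essentially the same route as the paper's own proof: the reconstruction formula applied to $N=(\alpha,\alpha^2,1)M$, the passage $M=\tilde M G^{-1}\Delta^{-1}$ with the $O(\ln t/t)$ control, the small-norm representation $\tilde M=I+C(\mu W)$ reduced to the circles $\partial\tilde B^{(\pm k_0)}_{\epsilon}$, the $\mathcal{A}$-symmetry collapse to $R(x,t;\pm k_0)$, the substitution of $H(\pm k_0,t)$ and the parabolic-cylinder coefficients $\beta^{A,B}_{12},\beta^{A,B}_{21}$, and the phase bookkeeping (including $\overline{\beta^{A,B}_{12}}=\beta^{A,B}_{21}$, $\overline{\delta^0_{A,B}}=(\delta^0_{A,B})^{-1}$, $\partial_x=\frac{1}{180tk_0^3}\partial_{k_0}$) that turns each cosine into the stated sine with prefactor $3^{-3/4}(2\sqrt{5t})^{-1}k_0^{-1/2}$. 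The plan is correct and matches the paper's argument step for step.
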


\begin{proof}
	To be specific, denote
$$
y_1=r_1(k_0),\ y_4=r_2(-k_0),\ \nu_1=-\frac{1}{2 \pi} \ln \left(1-\left|r_1\left(k_0\right)\right|^2\right),\ \nu_4=-\frac{1}{2 \pi} \ln \left(1-\left|r_2\left(-k_0\right)\right|^2\right),
$$

$$
\beta^A_{21}=\sqrt{\nu_1}\exp i\left(\frac{\pi}{4}-\arg y-\arg\Gamma(i\nu_1)\right),\quad e^{t\Phi_{21}(k_0)}=\exp-i(36\sqrt{3}tk_0^5).
$$

Recall
$$
\delta^0_A=\frac{a^{-2i\nu_1}e^{-2\chi_1(k_0)}}{\tilde\delta_{v_1}(k_0)},
$$
and
$$
a^{-2i\nu_1}=\exp\left(i\nu_1\ln({{3^{\frac{5}{2}}20tk_0^{3}}})\right),\quad
e^{-2\chi_1(k_0)}=\exp\left(-\frac{1}{\pi i}\int_{k_0}^\infty\log_0|s-k_0|d\ln(1-|r_1(s)|^2)\right),
$$
otherwise,

\begin{align*}
	(\tilde \delta_{v_1})^{-1}=\frac{\delta_2(k_0)\delta_6(k_0)}{\delta_3(k_0)\delta_5(k_0)\delta_4^2(k_0)}=\exp\left(i\nu_4\ln(4)-\frac{1}{\pi i}\int_{-k_0}^{-\infty}\log_{\pi}\frac{|s-\omega k_0|}{|s- k_0|}d\ln(1-|r_2(s)|^2)\right)\\
*\exp\left(i\nu_1\ln(3k_0^2)+\frac{1}{\pi i}\int_{k_0}^\infty\log_0|s-\omega k_0|d\ln(1-|r_1(s)|^2)\right),
\end{align*}
with
\begin{align*}
	&\delta_3(k_0)\delta_5(k_0)=\delta_1(\omega^2k_0)\delta_1(\omega k_0)=\exp\left(-i\nu_1\ln(3k_0^2)-\frac{1}{\pi i}\int_{k_0}^\infty\log_0|s-\omega k_0|d\ln(1-|r_1(s)|^2)\right),\\
	&\delta_2(k_0)\delta_6(k_0)=\delta_4(\omega^2k_0)\delta_4(\omega k_0)=\exp\left(-i\nu_4\ln(k_0^2)-\frac{1}{\pi i}\int_{-k_0}^{-\infty}\log_{\pi}|s-\omega k_0|d\ln(1-|r_2(s)|^2)\right),\\
	&\delta_4^2(k_0)=\exp\left(-i\nu_4\ln(4k_0^2)-\frac{1}{\pi i}\int_{-k_0}^{-\infty}\log_{\pi}|s- k_0|d\ln(1-|r_2(s)|^2)\right).
\end{align*}

In conclusion, we have

\begin{align*}
	\alpha^2\beta^A_{21}\delta^0_Ae^{t\Phi_{21}(k_0)}=\sqrt{\nu_1}\exp \left(\frac{4\pi i}{3}+i\left(\frac{\pi}{4}-\arg y-\arg\Gamma(i\nu_1)\right)-i(36\sqrt{3}tk_0^5)+i\nu_1\ln({{3^{\frac{7}{2}}20tk_0^{5}}})\right.\\
\left.+i\nu_4\ln(4)-\frac{1}{\pi i}\int_{-k_0}^{-\infty}\log_{\pi}\frac{|s-\omega k_0|}{|s- k_0|}d\ln(1-|r_2(s)|^2)-\frac{1}{\pi i}\int_{k_0}^\infty\log_0\frac{|s-k_0|}{|s-\omega k_0|}d\ln(1-|r_1(s)|^2)\right)\\
=\sqrt{\nu_1}\exp \left(\frac{19\pi i}{12}-i\left(\arg y+\arg\Gamma(i\nu_1)\right)-i(36\sqrt{3}tk_0^5)+i\nu_1\ln({{3^{\frac{7}{2}}20tk_0^{5}}})\right.\\
\left.+i\nu_4\ln(4)-\frac{1}{\pi i}\int_{-k_0}^{-\infty}\log_{\pi}\frac{|s-\omega k_0|}{|s- k_0|}d\ln(1-|r_2(s)|^2)-\frac{1}{\pi i}\int_{k_0}^\infty\log_0\frac{|s-k_0|}{|s-\omega k_0|}d\ln(1-|r_1(s)|^2)\right).
\end{align*}

On the other hand, we have

\begin{align*}
	\alpha\beta^B_{12}\delta^0_Be^{-t\Phi_{21}(-k_0)}=\sqrt{\nu_4}\exp \left(\frac{2\pi i}{3}+i\left(\frac{\pi}{4}-\arg y-\arg\Gamma(i\nu_4)\right)-i(36\sqrt{3}tk_0^5)+i\nu_4\ln({{3^{\frac{7}{2}}20tk_0^{5}}})\right.\\
\left.+i\nu_1\ln(4)-\frac{1}{\pi i}\int_{k_0}^{\infty}\log_{0}\frac{|s+\omega k_0|}{|s+ k_0|}d\ln(1-|r_1(s)|^2)-\frac{1}{\pi i}\int_{-k_0}^{-\infty}\log_{\pi}\frac{|s+k_0|}{|s+\omega k_0|}d\ln(1-|r_2(s)|^2)\right)\\
=\sqrt{\nu_4}\exp \left(\frac{11\pi i}{12}-i\left(\arg y+\arg\Gamma(i\nu_4)\right)-i(36\sqrt{3}tk_0^5)+i\nu_4\ln({{3^{\frac{7}{2}}20tk_0^{5}}})\right.\\
\left.+i\nu_1\ln(4)-\frac{1}{\pi i}\int_{k_0}^{\infty}\log_{0}\frac{|s+\omega k_0|}{|s+ k_0|}d\ln(1-|r_1(s)|^2)-\frac{1}{\pi i}\int_{-k_0}^{-\infty}\log_{\pi}\frac{|s+ k_0|}{|s+\omega k_0|}d\ln(1-|r_2(s)|^2)\right),
\end{align*}
where
$$
\beta^B_{12}=\sqrt{\nu_4}\exp i\left(\frac{\pi}{4}-\arg y-\arg\Gamma(i\nu_4)\right),\quad e^{-t\Phi_{21}(-k_0)}=\exp-i(36\sqrt{3}tk_0^5),
$$
again,
$$
\delta^0_B=\frac{a^{-2i\nu_4}e^{-2\chi_4(-k_0)}}{\tilde\delta_{v_4}(k_0)},
$$
and
$$
a^{-2i\nu_4}=\exp\left(i\nu_4\ln({{3^{\frac{5}{2}}20tk_0^{3}}})\right),\quad
e^{-2\chi_4(-k_0)}=\exp\left(-\frac{1}{\pi i}\int_{-k_0}^{-\infty}\log_{\pi}|s+k_0|d\ln(1-|r_2(s)|^2)\right),
$$
and
\begin{align*}
	(\tilde \delta_{v_4})^{-1}=\frac{\delta_3(-k_0)\delta_5(-k_0)}{\delta_2(-k_0)\delta_6(-k_0)\delta_1^2(-k_0)}=\exp\left(i\nu_1\ln(4)-\frac{1}{\pi i}\int_{k_0}^{\infty}\log_{\pi}\frac{|s+\omega k_0|}{|s+ k_0|}d\ln(1-|r_1(s)|^2)\right)\\
*\exp\left(i\nu_4\ln(3k_0^2)+\frac{1}{\pi i}\int_{-k_0}^{-\infty}\log_0|s+\omega k_0|d\ln(1-|r_2(s)|^2)\right),
\end{align*}
with
\begin{align*}
	&\delta_3(-k_0)\delta_5(-k_0)=\delta_1(-\omega^2k_0)\delta_1(-\omega k_0)=\exp\left(-i\nu_1\ln(k_0^2)-\frac{1}{\pi i}\int_{k_0}^\infty\log_0|s+\omega k_0|d\ln(1-|r_1(s)|^2)\right),\\
	&\delta_2(-k_0)\delta_6(-k_0)=\delta_4(-\omega^2k_0)\delta_4(-\omega k_0)=\exp\left(-i\nu_4\ln(3k_0^2)-\frac{1}{\pi i}\int_{-k_0}^{-\infty}\log_{\pi}|s+\omega k_0|d\ln(1-|r_2(s)|^2)\right),\\
	&\delta_1^2(-k_0)=\exp\left(-i\nu_1\ln(4k_0^2)-\frac{1}{\pi i}\int_{k_0}^{\infty}\log_{\pi}|s+ k_0|d\ln(1-|r_1(s)|^2)\right).
\end{align*}
\end{proof}

\subsection{The long-time asymptotics of solution for the modified SK-KK equation}

The modified SK-KK equation (\ref{msk-equation}) has Lax pair
\begin{equation}\label{msk-lax-pair}
\left\{\begin{array}{l}
	\Phi_x=\mathcal{M} \Phi, \\
	\Phi_t=\mathcal{N} \Phi,
\end{array}\right.
\end{equation}
where
\begin{align*}
	\mathcal{M} & =\left(\begin{array}{ccc}
		0 & 1 & 0 \\
		0 & -w & 1 \\
		\lambda & 0 & w
	\end{array}\right), \\
	\mathcal{N} & =\left(\begin{array}{lll}
		-6 \lambda w^2+6 \lambda w_x & N_{12} & -3 w_{x x}+9 \lambda+6 w_x w \\
		-6 \lambda w w_x+9 \lambda^2+3 \lambda w_{x x} & N_{22} & N_{23} \\
		N_{31} & 9 \lambda^2 & N_{33}
	\end{array}\right),
\end{align*}
with $N_{12}=-3w^2_x-w^4+w_{xxx}-9\lambda w-4w_x w^2+w_{xx}w, N_{22}=-3\lambda w_{x}+3\lambda w^2+w_{xxxx}w+w^5-5w_{xx}w^2-5w_xw_{xx}-5w_{x}^2w, N_{23}=-w^4+3w_{x}^2-2w_{xxx}+2w_{x}w^2+4w_{xx}w, N_{31}=-3\lambda w_{x}^2-\lambda w^4+\lambda w_{xxx}+9w^2\lambda-4\lambda w_{x}w^2+\lambda w_{xx}w$, $N_{33}=-w_{xxxx}+3\lambda w^2-3\lambda w_{x} -w^5+5w_{xx}w^2+5w_{x}w_{xx}+5w_x^2w$.
\par
By means of the same procedure as before, it is found that the Riemann-Hilbert problem associated with the modified SK-KK equation (\ref{msk-equation}) is just the one in Subsection \ref{Riemann-Hilbert-Problem} and the reconstruction formula for potential function $w(x,t)$ is
\begin{equation}
	w(x, t)=3 \lim _{k \rightarrow \infty}(k M(x, t, k))_{13}.
\end{equation}
\par
Following the similar way of Deift-Zhou steepest-descent method \cite{Deift-Zhou-1993}, the long-time asymptotics of the modified SK-KK equation (\ref{msk-equation}) is formulated below
\begin{equation} 
	\begin{split}
		\begin{aligned}\label{longtime-solution-msk}
			w(x,t)=&-\frac{1}{3^{\frac{1}{4}}2\sqrt{5t}k_0^{\frac{3}{2}}}[\sqrt{\nu_1}\cos \left(\frac{19\pi }{12}-\left(\arg y_1+\arg\Gamma(i\nu_1)\right)-(36\sqrt{3}tk_0^5)+\nu_1\ln({{3^{\frac{7}{2}}20tk_0^{5}}})\right.\\
			&\left.+\nu_4\ln(4)+\frac{1}{\pi }\int_{-k_0}^{-\infty}\log_{\pi}\frac{|s-\omega k_0|}{|s- k_0|}d\ln(1-|r_2(s)|^2)+\frac{1}{\pi }\int_{k_0}^\infty\log_0\frac{|s-k_0|}{|s-\omega k_0|}d\ln(1-|r_1(s)|^2)\right)\\	
			&+\sqrt{\nu_4}\cos \left(\frac{11\pi }{12}-\left(\arg y_4+\arg\Gamma(i\nu_4)\right)-(36\sqrt{3}tk_0^5)+\nu_4\ln({{3^{\frac{7}{2}}20tk_0^{5}}})\right.\\
			&\left.+\nu_1\ln(4)+\frac{1}{\pi }\int_{k_0}^{\infty}\log_{0}\frac{|s+\omega k_0|}{|s+ k_0|}d\ln(1-|r_1(s)|^2)+\frac{1}{\pi }\int_{-k_0}^{-\infty}\log_{\pi}\frac{|s+ k_0|}{|s+\omega k_0|}d\ln(1-|r_2(s)|^2)\right)]\\&+O\left(\frac{\ln t}{t}\right).\\
		\end{aligned}
	\end{split}
\end{equation}
\par
Moreover, the Miura transformations in (\ref{miura-SK}) can recover the long-time asymptotics of the SK equation (\ref{SK}) and KK equation (\ref{KK}) once again.

\begin{figure}
\centering
\includegraphics[width=12cm]{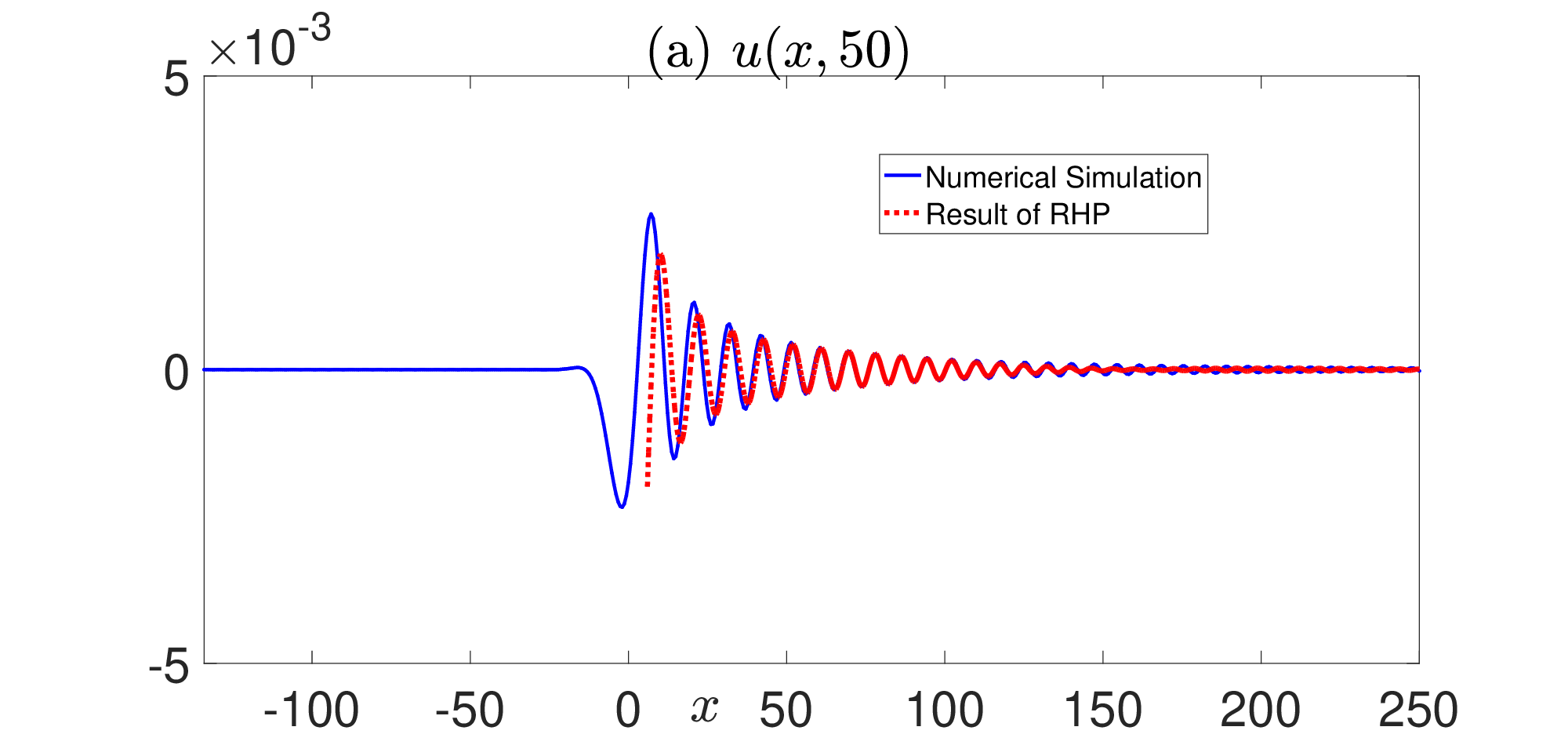}
\includegraphics[width=12cm]{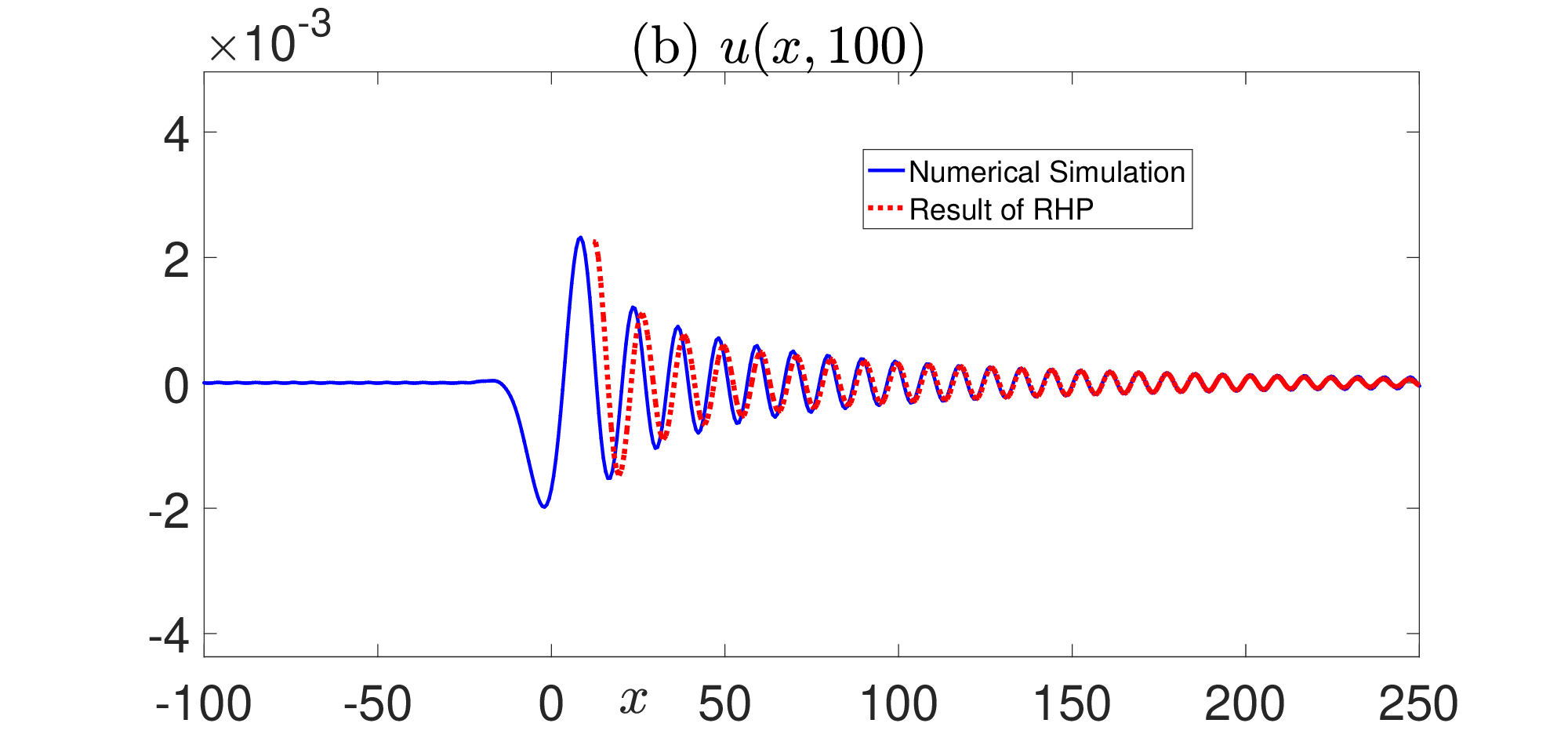}
\caption{{\protect\small The comparisons of the leading-order asymptotic approximation from Riemann-Hilbert problem and direct numerical simulations of the SK equation (\ref{SK}) with initial data (\ref{initial-SK-equation}) at time $t= 50$ and $t=100$, respectively. }}
\label{SK-comparisons}
\end{figure}

\begin{figure}
\centering
\includegraphics[width=12cm]{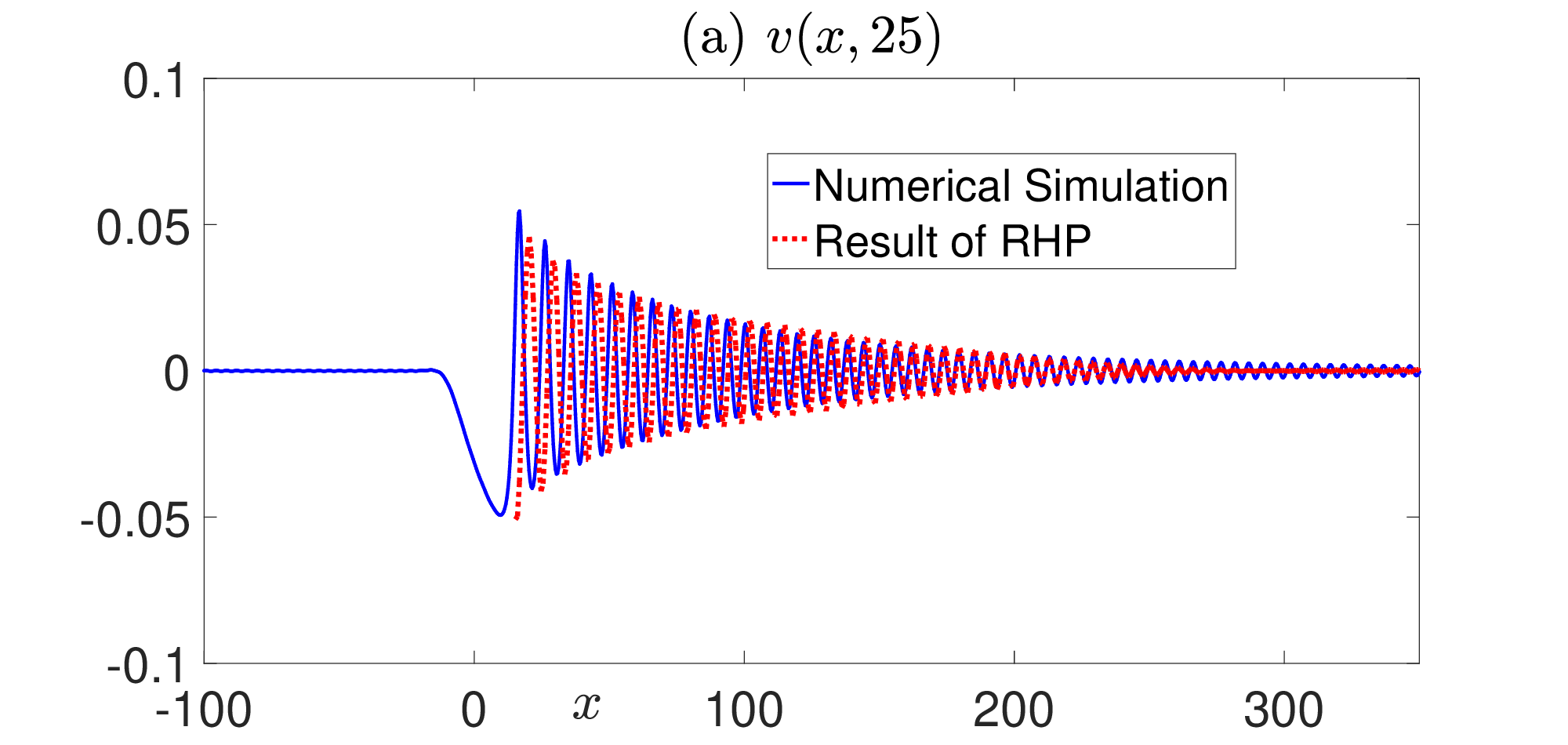}
\includegraphics[width=12cm]{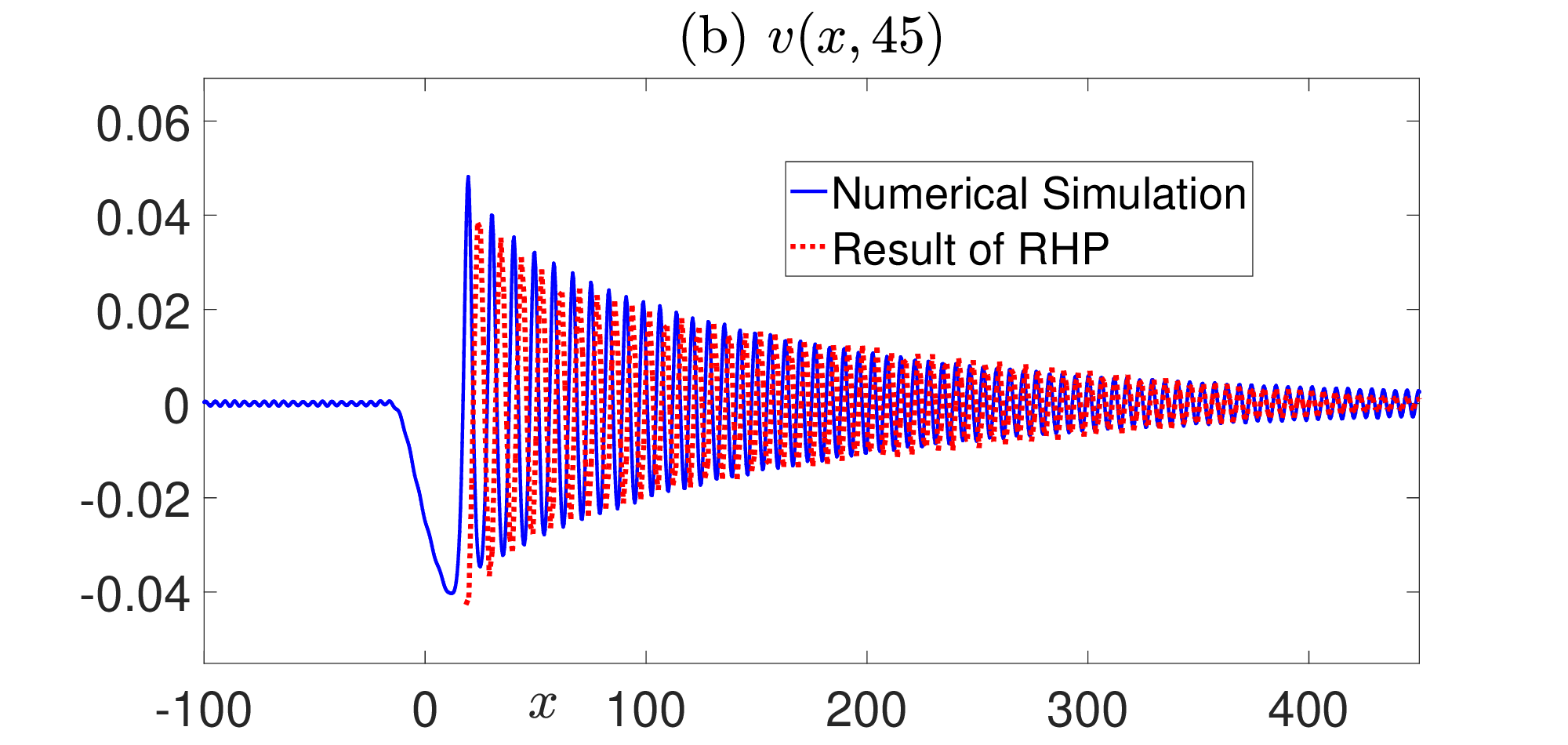}
\caption{{\protect\small The comparisons of the leading-order asymptotic approximation from Riemann-Hilbert problem and direct numerical simulations of the KK equation (\ref{KK}) with initial gaussian wavepacket (\ref{initial-KK-equation}) at time $t=25$ and $t=45$, respectively. }}
\label{KK-comparisons}
\end{figure}

\section{The verification of the theoretical results: Numerical simulations}

Now it is time to verify the theoretical results of the long-time asymptotics for the SK equation (\ref{SK}) by direct numerical simulations. To do so, for the SK equation (\ref{SK}), take the initial-value condition of the form
\begin{equation}\label{initial-SK-equation}
u(x,0)=u_0(x)=\frac{1}{600}(xe^{-\frac{x^2}{20}}-e^{-\frac{x^2}{10}}).
\end{equation}
\par
Fig. \ref{SK-comparisons} demonstrates the evolutions of the solution $u(x, t)$ to the SK equation (\ref{SK}) with initial data (\ref{initial-SK-equation}) at time $t= 50$ and $t=100$ by two different ways, where the dashed red line shows the leading-order asymptotics from the Riemann-Hilbert formulation and the solid blue line shows the wave profile obtained by numerical simulation. It is seen that the theoretical
results agree very well with the direct numerical simulation, which determines the reliability of the
Deift-Zhou steepest-descent method \cite{Deift-Zhou-1993}. From Fig. \ref{SK-comparisons}(a) and Fig. \ref{SK-comparisons}(b), it is expected that the asymptotic formula provides a better and better approximation as $t$ increases. The convergence is weak for small values of $x$, which is consistent with the fact that the asymptotic estimate (\ref{longtime-solution}) is not uniform near $x=0$. 
\par
For the KK equation (\ref{KK}), consider the initial gaussian wavepacket of the form
\begin{equation}\label{initial-KK-equation}
v(x,0)=v_0(x)=-\frac{1}{10}e^{-\frac{x^2}{20}}.
\end{equation}
\par
Fig. \ref{KK-comparisons} displays the evolutions of the solution $v(x, t)$ to the KK equation (\ref{KK}) with initial gaussian wavepacket (\ref{initial-KK-equation}) at time $t=25$ and $t=45$ by two different ways, where the dashed red line shows the leading-order asymptotics from the Riemann-Hilbert formulation and the solid blue line shows the wave profile obtained by numerical simulation. It is also observed that the theoretical results agree very well with the direct numerical simulation. From Fig. \ref{KK-comparisons}(a) and Fig. \ref{KK-comparisons}(b), it is also expected that the asymptotic formula provides a better and better approximation as $t$ increases. 
\par
For the modified SK-KK equation (\ref{msk-equation}), take the initial gaussian wavepacket of the form
\begin{equation}\label{initial-msk-equation}
w(x,0)=w_0(x)=-\frac{1}{10}e^{-\frac{x^2}{20}}.
\end{equation}
\par

\begin{figure}
\centering
\includegraphics[width=12cm]{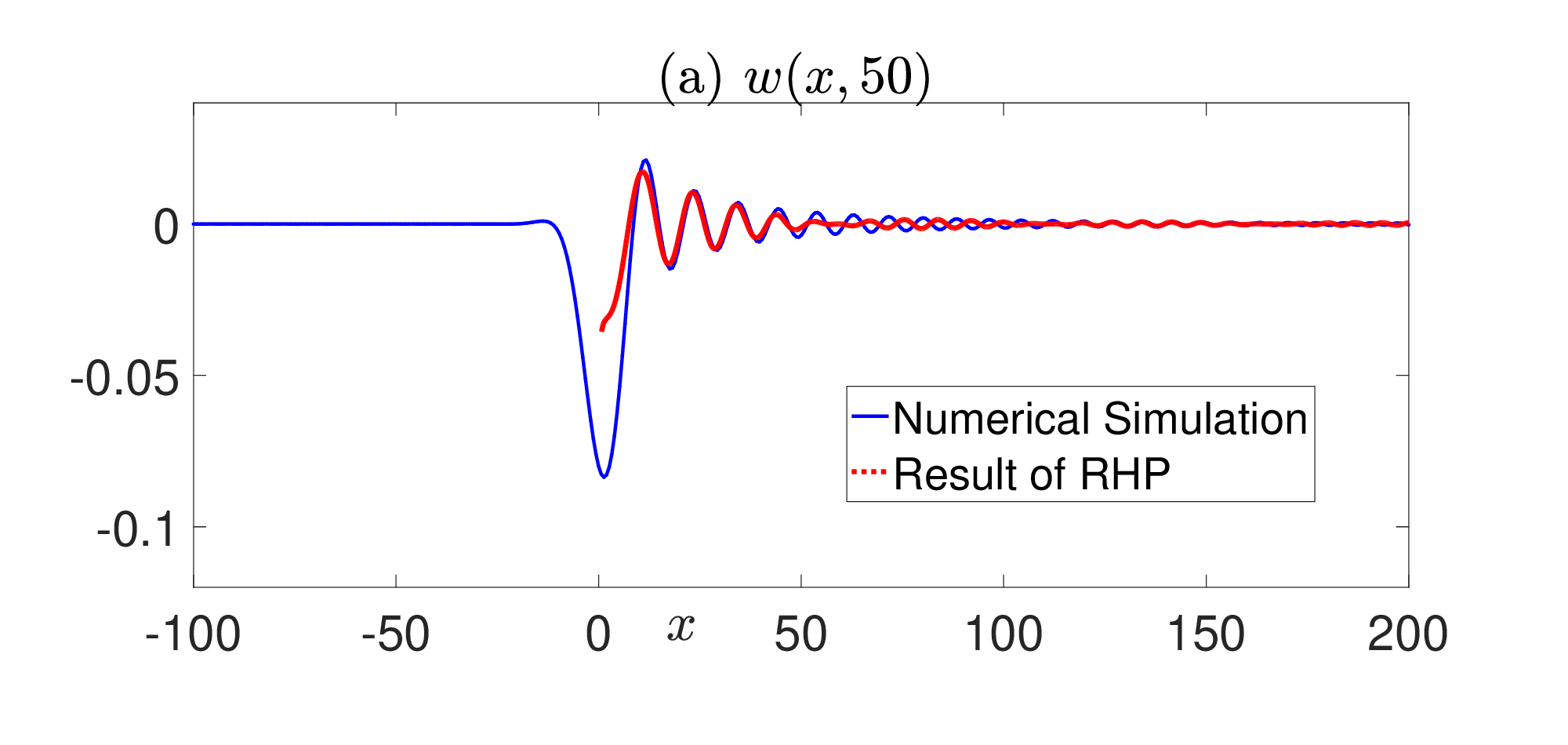}
\includegraphics[width=12cm]{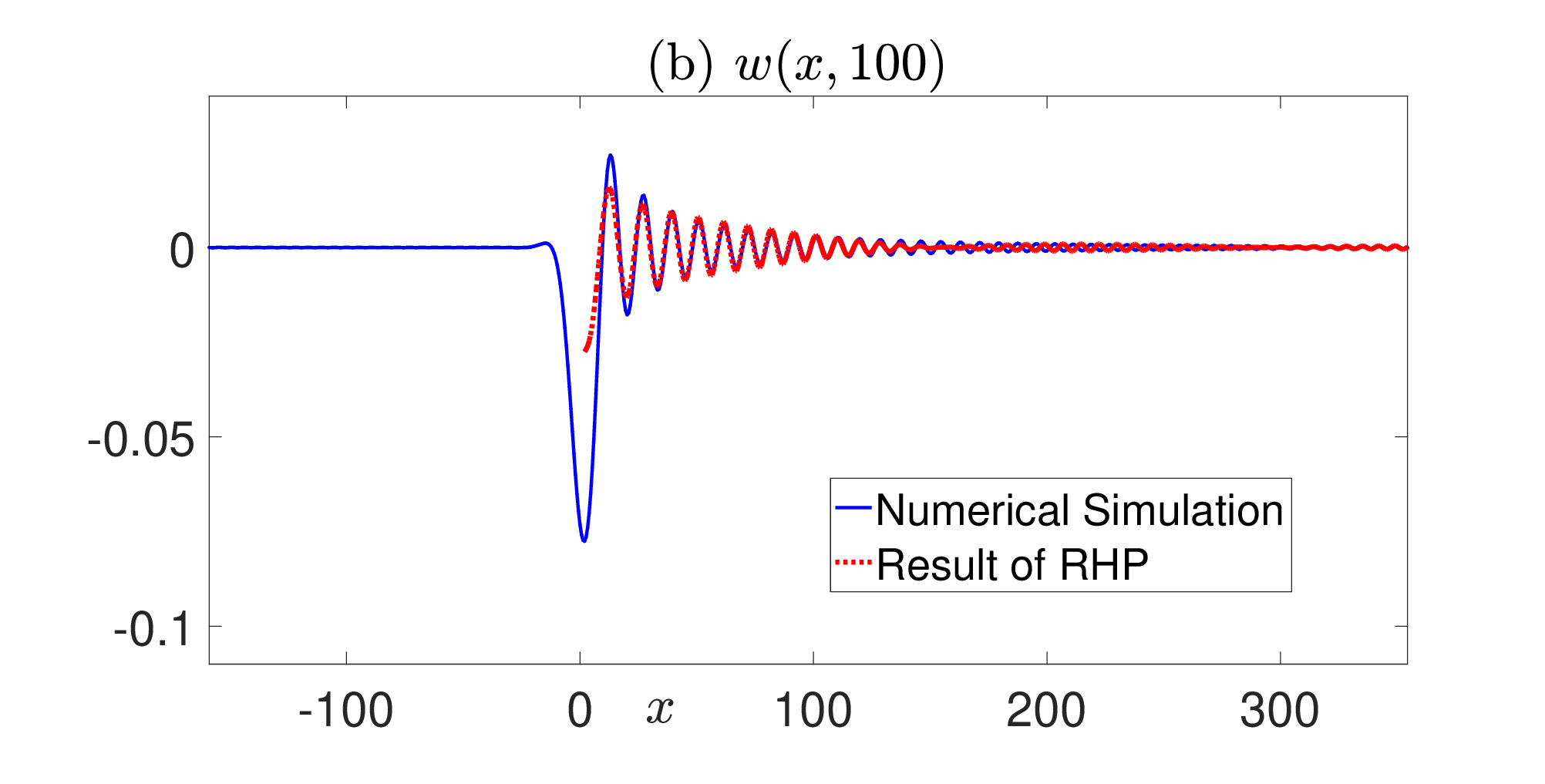}
\caption{{\protect\small The comparisons of the leading-order asymptotic approximation from Riemann-Hilbert problem and direct numerical simulations of the modified SK-KK equation (\ref{msk-equation}) with initial gaussian wavepacket (\ref{initial-msk-equation}) at time $t=50$ and $t=100$, respectively. }}
\label{msk-comparisons}
\end{figure}
\par
\par
Fig. \ref{msk-comparisons} displays the evolutions of the solution $w(x, t)$ to the modified SK-KK equation (\ref{msk-equation}) with initial gaussian wavepacket (\ref{initial-msk-equation}) at time $t=50$ and $t=100$ by two different ways, where the dashed red line shows the leading-order asymptotics from the Riemann-Hilbert formulation and the solid blue line shows the wave profile obtained by numerical simulation.  
\par

\begin{figure}
\centering
\includegraphics[width=12cm]{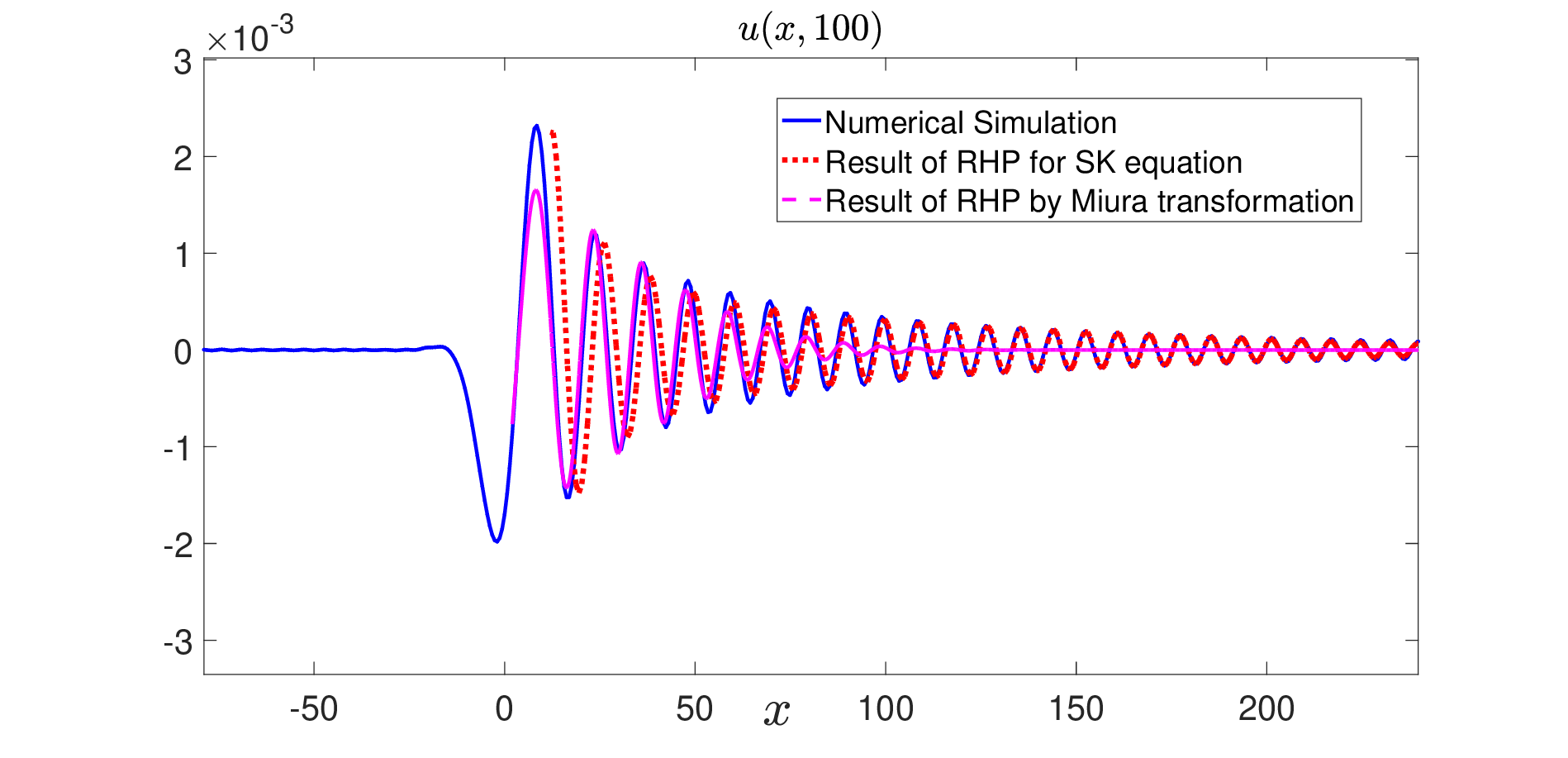}
\caption{{\protect\small The comparisons of the leading-order asymptotic approximation from Riemann-Hilbert problem, asymptotic solution from the Miura transformation $u=(w_x-w^2)/6$ and direct numerical simulations of the SK equation (\ref{SK}) with initial data (\ref{initial-SK-equation}) at time $t=100$. }}
\label{SK-msk-comparisons}
\end{figure}
\par
It is obvious that the Miura transformation $u=(w_x-w^2)/6$ in (\ref{miura-SK}) can also recover the solution of the SK equation (\ref{SK}). Thus we give the comparisons of direct numerical simulation and  
the leading-order asymptotic approximation (\ref{longtime-solution}) along with the long-time asymptotics for modified SK-KK equation (\ref{msk-equation}) and the Miura transformation in (\ref{miura-SK}). 
Fig. \ref{SK-msk-comparisons} shows these comparisons by considering the initial data of $u(x, 0)$ in (\ref{initial-SK-equation}), where the solid blue line shows the wave profile obtained by numerical simulation, the dashed red line shows the leading-order asymptotics (\ref{longtime-solution}) from the Riemann-Hilbert formulation of the SK equation (\ref{SK}), while the dotted purple line displays the solution of SK equation (\ref{SK}) from the Miura transformation $u=(w_x-w^2)/6$ in (\ref{miura-SK}) and leading-order asymptotic approximation (\ref{longtime-solution-msk}) of the modified SK-KK equation (\ref{msk-equation}). It is seen that the theoretical results agree very well with the direct numerical simulation, which determines the reliability of the Deift-Zhou steepest-descent method \cite{Deift-Zhou-1993}.

\section{The Painlev${\rm\acute{e}}$ Region}

It is seen from Figs. \ref{SK-comparisons}-\ref{SK-msk-comparisons} that the long-time asymptotic solutions are invalid near $x=0$. It is conjectured that this region can be expressed by the solution of the fourth-order Painlev${\rm\acute{e}}$ I equation. The self-similar transformation motivates this conjecture.
\par
For the region $|\frac{x}{t^{1/5}}|\le C$, letting $k\to\frac{k}{t^{1/5}}$ and denoting $\tau=tk_0^5$, the $\theta_{21}$ becomes into
$$
\theta_{21}(k)=9\sqrt{3}i(k^5-5k_0^4t^{4/5}k)\\
	=9\sqrt{3}i(k^5-5\tau^{4/5}k)
$$
Take the self-similar transformation $w(x,t)=(5t)^{-\frac{1}{5}}y(s)$ with $s=\frac{x}{\sqrt[5]{5} \sqrt[5]{t}}$, then one can get the equation
$$
y^{(5)}-5 y^{(3)} y^2-5 y''^2
+5 y^4 y'-5 y'^3
+\left(-5 y^{(3)}-s\right) y'-y \left(20 y' y''+1\right)
$$
$$
=y^{(5)}-5 y'^3-10y y' y''-5 y''^2-5 y' y^{(3)}-s y'-y-5 y^{(3)} y^2-10y y' y''+5 y^4 y'.
$$
\par
Integrating this equation yields
\begin{equation}
y^{(4)}=5y(y')^2+5y'y''+sy+5y^2y''-y^5,
\end{equation}
which is just the first Painlev${\rm\acute{e}}$ transcendence according to the fourth-order Painlev${\rm\acute{e}}$ I equation in \cite{Painleve-2006}.

\bibliographystyle{amsplain}

\end{document}